\documentclass[12pt]{article}
\usepackage[a4paper,tmargin=2cm,bmargin=2cm,lmargin=2cm,rmargin=2cm]{geometry}
\usepackage[T1]{fontenc}
\usepackage[utf8]{inputenc}
\usepackage{amssymb}
\usepackage{amsmath}
\usepackage{times}
\usepackage{graphicx}
\usepackage{mathtools}
\usepackage{multicol}
\usepackage[font=small]{caption} 
\usepackage{abraces} 
\usepackage[english]{babel}
\usepackage[colorlinks=true,linkcolor=blue,urlcolor=blue,citecolor=blue]{hyperref}

\newlength{\highlightlength}
\newlength{\highlightlengthbis}

\newcommand{\bchi}{{{\boldsymbol{\chi}}}}

\newcommand{\bea}{\begin{eqnarray}}
\newcommand{\eea}{\end{eqnarray}}
\newcommand{\be}{\begin{equation}}
\newcommand{\ee}{\end{equation}}
\newcommand{\Eq}[1]{Eq.~(\ref{#1})}
\newcommand{\Eqs}[1]{Eqs.~(\ref{#1})}
\newcommand{\eq}[1]{(\ref{#1})}
\newcommand{\rme}{\mathrm{e}}

\newcommand{\nn}{\nonumber}
\newcommand{\ca}[1]{\mathcal{#1}}
\graphicspath{{./figures/},{./},{../figures/}}

\newcommand{\fig}[2]{\includegraphics[width=#1\columnwidth]{./#2}}

\newcommand{\ds}{\displaystyle}
\renewcommand{\a}{\alpha}

\newcommand{\xt}{{\mathfrak{t}}}

\newcommand{\xr}{{\mathfrak{r}}}

\newcommand{\cc}[1]{\mathcal{#1}}

\newcommand{\LE}{\operatorname{LE}}

\newcounter{theorem}

\usepackage{array}
\usepackage{multirow}
\usepackage{amsthm}
\newtheorem{theoremm}{Theorem}[section]

\newtheorem{proposition}[theoremm]{Proposition}
\newtheorem{definition}[theoremm]{Definition}
\newtheorem{checking}[theoremm]{Check}

\usepackage{tikz}
\usetikzlibrary{scopes}
\usetikzlibrary{calc}
\usetikzlibrary{arrows,shapes,positioning}
\usetikzlibrary{decorations.markings}
\tikzstyle arrowstyle=[scale=1]
\tikzstyle directed=[postaction={decorate,decoration={markings,
 mark=at position .65 with {\arrow[arrowstyle]{stealth}}}}]
\tikzstyle endreversedirected=[postaction={decorate,decoration={markings,
 mark=at position 1.0 with {\arrow[arrowstyle]{stealth}}}}]
\tikzset{
 enddirected/.style={-stealth}
} 
\tikzstyle reverse directed=[postaction={decorate,decoration={markings,
 mark=at position .65 with {\arrowreversed[arrowstyle]{stealth};}}}]
\newcommand{\drawarc}[4][]{%
 \draw[#1]
 let
 \p1 = ($(#3)-(#2)$),
 \p2 = ($(#4)-(#2)$),
 \n1 = {atan2(\y1,\x1)},
 \n2 = {atan2(\y2,\x2)},
 \n3 = {veclen(\x1,\y1)},
 \n4 = {mod(\n2-\n1+360,360)}
 in
 (#2) ++(\n1:\n3)
 arc[start angle=\n1, delta angle=\n4, radius=\n3];
}

\newcommand{\ifTikzExternal}[1]{}
\usepackage{tikz}
\ifTikzExternal{\usetikzlibrary{external}\tikzexternalize[prefix=tikz/,up to date check=md5]}
\usetikzlibrary{arrows,shapes,positioning,decorations.markings}
\usetikzlibrary{arrows.meta}
\usetikzlibrary{decorations.markings}
\tikzstyle arrowstyle=[scale=1]
\tikzstyle directed=[postaction={decorate,decoration={markings,
    mark=at position .65 with {\arrow[arrowstyle]{stealth}}}}]
\tikzstyle endreversedirected=[postaction={decorate,decoration={markings,
    mark=at position 1.0 with {\arrow[arrowstyle]{stealth}}}}]

\tikzset{
  enddirected/.style={-stealth}
} 
\tikzstyle reverse directed=[postaction={decorate,decoration={markings,
    mark=at position .65 with {\arrowreversed[arrowstyle]{stealth};}}}]
\tikzset{enddirected double/.style={
shorten >=2.8pt,
postaction={line width=\pgflinewidth,enddirected},
postaction={double,draw}}}  
\tikzset{double enddirected/.style={
shorten >=2.8pt,
postaction={line width=\pgflinewidth,enddirected},
postaction={double,draw}}}  
    
\usepackage[dvipsnames]{xcolor}
\usetikzlibrary{calc}

\newcommand{\vertexIntPsiChi}{\ifTikzExternal{\tikzsetnextfilename{vertexIntPsiChi}}{\vcenter{\hbox{{\begin{tikzpicture}[scale=1]
\coordinate (x1a) at (0,0) ;
\coordinate (x1b) at  (0,0.5) ;
\coordinate (x1c) at  (0,1) ;
\coordinate (x2a) at  (0.5,0.5) ;
\coordinate (x2c) at  (-0.5,.5) ;
\draw [thick,blue,enddirected] (x1b)  --(x1c);
\draw [thick,blue] (x1a)  --(x1b);
\draw [thick,Green,enddirected] (x2a) -- (x2c);
\fill ([xshift=-2pt,yshift=-1.5pt]x1b) rectangle ++ (4pt,3pt);
\end{tikzpicture}}}}}}
\newcommand{\vertexIntPhiPsi}{\ifTikzExternal{\tikzsetnextfilename{vertexIntPhiPsi}}{\vcenter{\hbox{{\begin{tikzpicture}[scale=1]
\coordinate (x1a) at (0,0) ;
\coordinate (x1b) at  (0,0.5) ;
\coordinate (x1c) at  (0,1) ;
\coordinate (x2a) at  (-.5,0) ;
\coordinate (x2c) at  (0.5,1) ;
\draw [thick,blue,enddirected] (x1b)  --(x1c);
\draw [thick,blue] (x1a)  --(x1b);
\draw [thick,red,enddirected] (x2a) -- (x2c);
\fill ([xshift=-2pt,yshift=-1.5pt]x1b) rectangle ++ (4pt,3pt);
\end{tikzpicture}}}}}}
\newcommand{\vertexIntPhiChi}{\ifTikzExternal{\tikzsetnextfilename{vertexIntPhiChi}}{\vcenter{\hbox{{\begin{tikzpicture}[scale=1]
\coordinate (x1a) at (-.5,0) ;
\coordinate (x1b) at  (0,0.5) ;
\coordinate (x1c) at  (0.5,1) ;
\coordinate (x2a) at  (0.5,0.5) ;
\coordinate (x2c) at  (-0.5,.5) ;
\draw [thick,red,enddirected] (x1b)  --(x1c);
\draw [thick,red] (x1a)  --(x1b);
\draw [thick,Green,enddirected] (x2a) -- (x2c);
\fill ([xshift=-2pt,yshift=-1.5pt]x1b) rectangle ++ (4pt,3pt);
\end{tikzpicture}}}}}}
\newcommand{\vertexIntChi}{\ifTikzExternal{\tikzsetnextfilename{vertexIntChi}}{\vcenter{\hbox{{\begin{tikzpicture}[scale=1]
\coordinate (x1a) at (0,0) ;
\coordinate (x1b) at  (0,0.5) ;
\coordinate (x1c) at  (0,1) ;
\coordinate (x2a) at  (0.5,0) ;
\coordinate (x2b) at  (0.5,.5) ;
\coordinate (x2c) at  (0.5,1) ;
\coordinate (blank1) at (-0.07,0) ;
\coordinate (blank2) at  (0.57,0) ;
\draw[white] (blank1)--(blank2);
\draw [thick,Green,enddirected] (x1a)  --(x1c);
\draw [thick,dotted, dash expand off] (x1b) -- (x2b);
\draw [thick,Green,enddirected] (x2a) -- (x2c);
\fill ([xshift=-2pt,yshift=-1.5pt]x1b) rectangle ++ (4pt,3pt);
\fill ([xshift=-2pt,yshift=-1.5pt]x2b) rectangle ++ (4pt,3pt);
\end{tikzpicture}}}}}}
\newcommand{\vertexIntPhi}{\ifTikzExternal{\tikzsetnextfilename{vertexIntPhi}}{\vcenter{\hbox{{\begin{tikzpicture}[scale=1]
\coordinate (x1a) at (0,0) ;
\coordinate (x1b) at  (0,0.5) ;
\coordinate (x1c) at  (0,1) ;
\coordinate (x2a) at  (0.5,0) ;
\coordinate (x2b) at  (0.5,.5) ;
\coordinate (x2c) at  (0.5,1) ;
\coordinate (blank1) at (-0.07,0) ;
\coordinate (blank2) at  (0.57,0) ;
\draw[white] (blank1)--(blank2);
\draw [thick,red,enddirected] (x1a)  --(x1c);
\draw [thick,dotted, dash expand off] (x1b) -- (x2b);
\draw [thick,red,enddirected] (x2a) -- (x2c);
\fill ([xshift=-2pt,yshift=-1.5pt]x1b) rectangle ++ (4pt,3pt);
\fill ([xshift=-2pt,yshift=-1.5pt]x2b) rectangle ++ (4pt,3pt);
\end{tikzpicture}}}}}}
\newcommand{\vertexIntPsiChiRel}{\ifTikzExternal{\tikzsetnextfilename{vertexIntPsiChiRel}}{\vcenter{\hbox{{\begin{tikzpicture}[scale=1]
\coordinate (x1a) at (0,0) ;
\coordinate (x1b) at  (0,0.5) ;
\coordinate (x1c) at  (0,1) ;
\coordinate (x2a) at  (0.5,0.5) ;
\coordinate (x2c) at  (-0.5,.5) ;
\coordinate (blank1) at (-0.07,0) ;
\coordinate (blank2) at  (0.57,0) ;
\draw[white] (blank1)--(blank2);
\draw [thick,white] (x1b)  --(x1c);
\draw [thick,blue] (x1a)  --(x1b);
\draw [thick,Green,enddirected] (x2a) -- (x2c);
\fill ([xshift=-2pt,yshift=-1.5pt]x1b) rectangle ++ (4pt,3pt);
\end{tikzpicture}}}}}}

\newcommand{\vertexOneDashed}{\ifTikzExternal{\tikzsetnextfilename{vertexOneDashed}}{\vcenter{\hbox{{\begin{tikzpicture}[scale=1]
\coordinate (x1) at (0,0) ;
\coordinate (x2) at  (0,0.5) ;
\coordinate (x2a) at  (.4,0.9) ;
\coordinate (x2b) at  (0,1) ;
\coordinate (x2c) at  (0.5,.5) ;
\coordinate (blank1) at (-0.07,0) ;
\coordinate (blank2) at  (0.5,0) ;
\draw[white] (blank1)--(blank2);
\draw [thick,red] (x1) -- (x2) ;
\draw [thick,blue,densely dashed,enddirected]  (x2) -- (x2b);
\draw [thick,red,enddirected] (x2) -- (x2a);
\draw [thick,Green,enddirected] (x2) -- (x2c);
\fill (x2) circle (1.5pt);
\end{tikzpicture}}}}}}

\newcommand{\vertexOneDashedmean}{\ifTikzExternal{\tikzsetnextfilename{vertexOneDashedmean}}{\vcenter{\hbox{{\begin{tikzpicture}[scale=1]
\coordinate (x1) at (0,0) ;
\coordinate (blank) at (-0.1,0) ;
\coordinate (x2) at  (0,0.5) ;
\coordinate (x2a) at  (.4,0.9) ;
\coordinate (x2b) at  (0,1) ;
\fill[white] (blank) circle (0.1pt);
\draw [thick,red] (x1) -- (x2) ;
\draw [thick,blue,densely dashed,enddirected]  (x2) -- (x2b);
\draw [thick,red,enddirected] (x2) -- (x2a);
\fill (x2) circle (1.5pt);
\end{tikzpicture}}}}}}
\newcommand{\vertexOneDashedFluctuating}{\ifTikzExternal{\tikzsetnextfilename{vertexOneDashedFluctuating}}{\vcenter{\hbox{{\begin{tikzpicture}[scale=1]
\coordinate (x1) at (0,0) ;
\coordinate (blank) at (-0.1,0) ;
\coordinate (x2) at  (0,0.5) ;
\coordinate (x2a) at  (.4,0.9) ;
\coordinate (x2b) at  (0,1) ;
\coordinate (x2c) at  (0.5,.5) ;
\fill[white] (blank) circle (0.1pt);
\draw [thick,red] (x1) -- (x2) ;
\draw [thick,blue,densely dashed,enddirected]  (x2) -- (x2b);
\draw [thick,red,enddirected] (x2) -- (x2a);
\draw [thick,Green,enddirected,densely dashed] (x2) -- (x2c);
\fill (x2) circle (1.5pt);
\end{tikzpicture}}}}}}

\newcommand{\vertexTwoRel}{\ifTikzExternal{\tikzsetnextfilename{vertexTwoRel}}{\vcenter{\hbox{{\begin{tikzpicture}[scale=1]
\coordinate (x1) at (0,0) ;
\coordinate (x2) at  (0,0.5) ;
\coordinate (x2a) at  (0,1) ;
\coordinate (x3) at  (0.5,0.5) ;
\coordinate (x4) at  (1,.5) ;
\coordinate (x3a) at  (0.5,1) ;
\draw [thick,red] (x1) -- (x2);
\draw [thick,red,enddirected]  (x3) -- (x3a);
\draw [thick,dotted, dash expand off] (x2) -- (x3) node [pos=0.5, sloped, font=\scriptsize\bfseries, black ] {||} ;
\draw [thick,Green,enddirected,densely dashed] (x3) -- (x4);
\fill (x2) circle (1.5pt);
\fill (x3) circle (1.5pt);
\end{tikzpicture}}}}}}

\newcommand{\vertexTwoMinus}{\ifTikzExternal{\tikzsetnextfilename{vertexTwoMinus}}{\vcenter{\hbox{{\begin{tikzpicture}[scale=1]
\coordinate (x1) at (0,0) ;
\coordinate (x2) at  (0,0.5) ;
\coordinate (x2a) at  (0,1) ;
\coordinate (x3) at  (0.5,0.5) ;
\coordinate (x4) at  (1,.5) ;
\coordinate (x3a) at  (0.5,1) ;
\coordinate (blank1) at (-0.07,0) ;
\coordinate (blank2) at  (0.57,0) ;
\draw[white] (blank1)--(blank2);
\draw [thick,red] (x1) -- (x2);
\draw [thick,red,enddirected]  (x2) -- (x2a);
\draw [thick,dotted, dash expand off] (x2) -- (x3) node [pos=0.5, sloped, font=\scriptsize\bfseries, black ] {||} ;
\draw [thick,Green,enddirected,densely dashed] (x3) -- (x4);
\fill  (x2) circle (1.5pt);
\fill (x3) circle (1.5pt);
\end{tikzpicture}}}}}}

\newcommand{\vertexTwoBred}{\ifTikzExternal{\tikzsetnextfilename{vertexTwoBred}}{\vcenter{\hbox{\!\!\begin{tikzpicture}
\coordinate (x1) at (0,0) ;
\coordinate (x2) at  (0,0.5) ;
\coordinate (x4) at  (0.5,.5) ;
\coordinate (x5) at  (0,1) ;
\draw [thick,red,enddirected]  (x1) -- (x5) node [pos=0.7, sloped, font=\scriptsize\bfseries, black ] {||} ;
\draw [thick,Green,enddirected,densely dashed] (x2) -- (x4);
\fill (x2) circle (1.5pt);
\end{tikzpicture}}}}}

\newcommand{\vertexTwoCred}{\ifTikzExternal{\tikzsetnextfilename{vertexTwoCred}}{\vcenter{\hbox{{\!\!\begin{tikzpicture}[scale=1]
\coordinate (x1) at (0,0) ;
\coordinate (x2) at  (0,0.5) ;
\coordinate (x3) at  (-.5,1) ;
\coordinate (x4) at  (0.5,.5) ;
\coordinate (x5) at  (0,1) ;
\draw [thick,red,enddirected]  (x1) -- (x5)  node [pos=0.7, sloped, font=\scriptsize\bfseries, black ] {|} ;
\draw [thick,Green,enddirected,densely dashed] (x2) -- (x4)  node [pos=0.45, sloped, font=\scriptsize\bfseries, black ] {|} ;
\fill (x2) circle (1.5pt);
\end{tikzpicture}}}}}}

\newcommand{\vertexIntPsi}{\ifTikzExternal{\tikzsetnextfilename{vertexIntPsi}}{\vcenter{\hbox{{\begin{tikzpicture}[scale=1]
\coordinate (x1a) at (0,0) ;
\coordinate (x1b) at  (0,0.5) ;
\coordinate (x1c) at  (0,1) ;
\coordinate (x2a) at  (0.5,0) ;
\coordinate (x2b) at  (0.5,.5) ;
\coordinate (x2c) at  (0.5,1) ;
\coordinate (blank1) at (-0.07,0) ;
\coordinate (blank2) at  (0.57,0) ;
\draw[white] (blank1)--(blank2);
\draw [thick,blue,enddirected] (x1a)  --(x1c);
\draw [thick,dotted, dash expand off] (x1b) -- (x2b);
\draw [thick,blue,enddirected] (x2a) -- (x2c);
\fill ([xshift=-2pt,yshift=-1.5pt]x1b) rectangle ++ (4pt,3pt);
\fill ([xshift=-2pt,yshift=-1.5pt]x2b) rectangle ++ (4pt,3pt);
\end{tikzpicture}}}}}}
\newcommand{\propRed}{\ifTikzExternal{\tikzsetnextfilename{propRed}}{\vcenter{\hbox{{\begin{tikzpicture}[scale=1]
\coordinate (x1) at (0,0) ;
\coordinate (x2) at  (0.5,0.5);
\draw [thick,red,enddirected] (x1) -- (x2) ;
\end{tikzpicture}}}}}}
\newcommand{\propBlue}{\ifTikzExternal{\tikzsetnextfilename{propBlue}}{\vcenter{\hbox{{\begin{tikzpicture}[scale=1]
\coordinate (x1) at (0,0) ;
\coordinate (x2) at  (0,0.5);
\coordinate (x3a) at  (-0.07,0);
\coordinate (x3b) at  (0.07,0);
\draw [white] (x3a)--(x3b);
\draw [thick,blue,enddirected] (x1) -- (x2) ;
\end{tikzpicture}}}}}}
\newcommand{\propGreen}{\ifTikzExternal{\tikzsetnextfilename{propGreen}}{\vcenter{\hbox{{\begin{tikzpicture}[scale=1]
\coordinate (x1) at (0,0) ;
\coordinate (x2) at  (0.5,0);
\coordinate (x3a) at  (0,-0.07);
\coordinate (x3b) at  (0,0.07);
\draw [white] (x3a)--(x3b);
\draw [thick,Green,enddirected] (x1) -- (x2) ;
\end{tikzpicture}}}}}}
\newcommand{\diagOne}{\ifTikzExternal{\tikzsetnextfilename{diagOne}}{\vcenter{\hbox{{\begin{tikzpicture}[scale=1]
\coordinate (x1) at (-.25,-.25) ;
\coordinate (x2) at  (0,0) ;
\coordinate (x2b) at  (0,1) ;
\coordinate (x2c) at  (-.3,0) ;
\coordinate (x3) at  (1,1) ;
\coordinate (x3b) at (1.25,1.25) ;
\coordinate (x3c) at  (1.0,1.3) ;
\coordinate (x4) at  (0,1) ;
\coordinate (x4b) at  (-.3,1) ;
\draw [thick,red] (x1) -- (x2) ;
\draw [thick,blue,directed]  (x2) -- (x2b);
\draw [thick,red,directed] (x2) -- (x3);
\draw [thick,red,enddirected] (x3) -- (x3b);
\draw [thick,blue,densely dashed,enddirected] (x3) -- (x3c);
\draw [thick,Green,enddirected] (x2) -- (x2c);
\draw [thick,Green,directed] (x3) -- (x4);
\draw [thick,Green,enddirected] (x4) -- (x4b);
\fill (x2) circle (1.5pt);
\fill (x3) circle (1.5pt);
\fill ([xshift=-2pt,yshift=-1.5pt]x2b) rectangle ++ (4pt,3pt);
\end{tikzpicture}}}}}}

\newcommand{\diagOnesimp}{\ifTikzExternal{\tikzsetnextfilename{diagOnesimp}}{\vcenter{\hbox{{\begin{tikzpicture}[scale=1]
\coordinate (x1) at (-.25,-.25) ;
\coordinate (x2) at  (0,0) ;
\coordinate (x2b) at  (0,1) ;
\coordinate (x2c) at  (-.3,0) ;
\coordinate (x3) at  (1,1) ;
\coordinate (x3b) at  (1.25,1.25) ;
\coordinate (x3c) at  (1.0,1.3) ;
\coordinate (x4) at  (0,1) ;
\coordinate (x4a) at  (0,1.3) ;
\coordinate (x4b) at  (-.3,1) ;
\draw [thick,red] (x1) -- (x2) ;
\draw [thick,blue,directed]  (x2) -- (x2b);
\draw [thick,red,directed] (x2) -- (x3);
\draw [thick,red,enddirected] (x3) -- (x3b);
\draw [thick,blue,densely dashed,enddirected] (x3) -- (x3c);
\draw [thick,Green,directed] (x3) -- (x4);
\draw [thick,Green,enddirected] (x4) -- (x4b);
\fill (x2) circle (1.5pt);
\fill (x3) circle (1.5pt);
\fill ([xshift=-2pt,yshift=-1.5pt]x2b) rectangle ++ (4pt,3pt);
\end{tikzpicture}}}}}}

\newcommand{\diagOneC}{\ifTikzExternal{\tikzsetnextfilename{diagOneC}}{\vcenter{\hbox{{\begin{tikzpicture}[scale=1]
\coordinate (x1) at (-.25,-.25) ;
\coordinate (x2) at  (0,0) ;
\coordinate (x2b) at  (0,1) ;
\coordinate (x2c) at  (-.3,0) ;
\coordinate (x3) at  (1,1) ;
\coordinate (x3b) at (1.25,1.25) ;
\coordinate (x3m1u) at  (.85,1.1) ;
\coordinate (x3m1d) at  (.85,0.9) ;
\coordinate (x3m2u) at  (.8,1.1) ;
\coordinate (x3m2d) at  (.8,0.9) ;
\coordinate (x4) at  (0,1) ;
\coordinate (x4a) at  (0,1.3) ;
\coordinate (x4b) at  (-.3,1) ;
\coordinate (x5) at  (.5,.5) ;
\coordinate (x5b) at  (.5,.8) ;
\coordinate (x5c) at  (.2,.5) ;
\draw [thick,red] (x1) -- (x2) ;
\draw [thick,blue,directed]  (x2) -- (x2b);
\draw [thick,red,directed] (x2) -- (x5);
\draw [thick,red,directed] (x5) -- (x3);
\draw [thick,red,enddirected] (x3) -- (x3b);
\draw [thick,Green,enddirected] (x2) -- (x2c);
\draw [thick,Green,directed] (x3) -- (x4);
\draw [thick,Green,enddirected] (x4) -- (x4b);
\draw [thick,black] (x3m1d)--(x3m1u);
\draw [thick,black] (x3m2d)--(x3m2u);
\draw [thick,blue,densely dashed,enddirected] (x5) -- (x5b);
\draw [thick,Green,enddirected] (x5) -- (x5c);
\fill (x2) circle (1.5pt);
\fill (x3) circle (1.5pt);
\fill ([xshift=-2pt,yshift=-1.5pt]x2b) rectangle ++ (4pt,3pt);
\fill (x5) circle (1.5pt);
\end{tikzpicture}}}}}}
\newcommand{\diagOneCsimp}{\ifTikzExternal{\tikzsetnextfilename{diagOneCsimp}}{\vcenter{\hbox{{\begin{tikzpicture}[scale=1]
\coordinate (x1) at (-.25,-.25) ;
\coordinate (x2) at  (0,0) ;
\coordinate (x2b) at  (0,1) ;
\coordinate (x2c) at  (-.3,0) ;
\coordinate (x3) at  (1,1) ;
\coordinate (x3b) at (1.25,1.25) ;
\coordinate (x3c) at  (1.0,1.3) ;
\coordinate (x3m1u) at  (.85,1.1) ;
\coordinate (x3m1d) at  (.85,0.9) ;
\coordinate (x3m2u) at  (.8,1.1) ;
\coordinate (x3m2d) at  (.8,0.9) ;
\coordinate (x4) at  (0,1) ;
\coordinate (x4a) at  (0,1.3) ;
\coordinate (x4b) at  (-.3,1) ;
\coordinate (x5) at  (.5,.5) ;
\coordinate (x5b) at  (.5,.8) ;
\draw [thick,red] (x1) -- (x2) ;
\draw [thick,blue,directed]  (x2) -- (x2b);
\draw [thick,red,directed] (x2) -- (x5);
\draw [thick,red,directed] (x5) -- (x3);
\draw [thick,red,enddirected] (x3) -- (x3b);
\draw [thick,Green,directed] (x3) -- (x4);
\draw [thick,black] (x3m1d)--(x3m1u);
\draw [thick,black] (x3m2d)--(x3m2u);
\draw [thick,blue,densely dashed,enddirected] (x5) -- (x5b);
\draw [thick,Green,enddirected] (x4) -- (x4b);
\fill (x2) circle (1.5pt);
\fill (x3) circle (1.5pt);
\fill ([xshift=-2pt,yshift=-1.5pt]x2b) rectangle ++ (4pt,3pt);
\fill (x5) circle (1.5pt);
\end{tikzpicture}}}}}}
\newcommand{\diagOneCsimpSimp}{\ifTikzExternal{\tikzsetnextfilename{diagOneCsimpSimp}}{\vcenter{\hbox{{\begin{tikzpicture}[scale=1]
\coordinate (x1) at (-.25,-.25) ;
\coordinate (x2) at  (0,0) ;
\coordinate (x2b) at  (0,1) ;
\coordinate (x4) at  (-.3,1) ;
\coordinate (x2c) at  (-.3,0) ;
\coordinate (x3b) at (0.25,1.25) ;
\coordinate (x3c) at  (1.0,1.3) ;
\coordinate (x5) at  (.5,.5) ;
\coordinate (x5b) at  (.5,.8) ;
\draw [thick,red] (x1) -- (x2) ;
\draw [thick,blue,directed]  (x2) -- (x2b);
\draw [thick,red,directed] (x2) -- (x5);
\draw [thick,red,directed] (x5) -- (x2b);
\draw [thick,red,enddirected] (x2b) -- (x3b);
\draw [thick,blue,densely dashed,enddirected] (x5) -- (x5b);
\draw [thick,Green,enddirected] (x2b) -- (x4);
\fill (x2) circle (1.5pt);
\fill ([xshift=-2pt,yshift=-1.5pt]x2b) rectangle ++ (4pt,3pt);
\fill (x5) circle (1.5pt);
\end{tikzpicture}}}}}}

\newcommand{\diagOneB}{\ifTikzExternal{\tikzsetnextfilename{diagOneB}}{\vcenter{\hbox{{\begin{tikzpicture}[scale=1]
\coordinate (x1) at (-.25,-.25) ;
\coordinate (x2) at  (0,0) ;
\coordinate (x2b) at  (0,1) ;
\coordinate (x2c) at  (-.3,0) ;
\coordinate (x3) at  (1,1) ;
\coordinate (x3b) at (1.25,1.25) ;
\coordinate (x3m1u) at  (.84,.96) ;
\coordinate (x3m1d) at  (0.96,0.84) ;
\coordinate (x3m2u) at  (.81,.93) ;
\coordinate (x3m2d) at  (.93,0.81) ;
\coordinate (x4) at  (0,1) ;
\coordinate (x4b) at  (-.3,1) ;
\coordinate (x5) at  (.5,.5) ;
\coordinate (x5b) at  (.5,.8) ;
\coordinate (x5c) at  (.2,.5) ;
\draw [thick,red] (x1) -- (x2) ;
\draw [thick,blue,directed]  (x2) -- (x2b);
\draw [thick,red,directed] (x2) -- (x5);
\draw [thick,red,directed] (x5) -- (x3);
\draw [thick,red,enddirected] (x3) -- (x3b);
\draw [thick,Green,enddirected] (x2) -- (x2c);
\draw [thick,Green,directed] (x3) -- (x4);
\draw [thick,Green,enddirected] (x4) -- (x4b);
\draw [thick,black] (x3m1d)--(x3m1u);
\draw [thick,black] (x3m2d)--(x3m2u);
\draw [thick,blue,densely dashed,enddirected] (x5) -- (x5b);
\draw [thick,Green,enddirected] (x5) -- (x5c);
\fill (x2) circle (1.5pt);
\fill (x3) circle (1.5pt);
\fill ([xshift=-2pt,yshift=-1.5pt]x2b) rectangle ++ (4pt,3pt);
\fill (x5) circle (1.5pt);
\end{tikzpicture}}}}}}
\newcommand{\diagOneBsimp}{\ifTikzExternal{\tikzsetnextfilename{diagOneBsimp}}{\vcenter{\hbox{{\begin{tikzpicture}[scale=1]
\coordinate (x1) at (-.25,-.25) ;
\coordinate (x2) at  (0,0) ;
\coordinate (x2b) at  (0,1) ;
\coordinate (x2c) at  (-.3,0) ;
\coordinate (x3) at  (1,1) ;
\coordinate (x3b) at (1.25,1.25) ;
\coordinate (x3c) at  (1.0,1.3) ;
\coordinate (x3m1u) at  (.84,.96) ;
\coordinate (x3m1d) at  (0.96,0.84) ;
\coordinate (x3m2u) at  (.81,.93) ;
\coordinate (x3m2d) at  (.93,0.81) ;
\coordinate (x4) at  (0,1) ;
\coordinate (x4a) at  (0,1.3) ;
\coordinate (x4b) at  (-.3,1) ;
\coordinate (x5) at  (.5,.5) ;
\coordinate (x5b) at  (.5,.8) ;
\draw [thick,red] (x1) -- (x2) ;
\draw [thick,blue,directed]  (x2) -- (x2b);
\draw [thick,red,directed] (x2) -- (x5);
\draw [thick,red,directed] (x5) -- (x3);
\draw [thick,red,enddirected] (x3) -- (x3b);
\draw [thick,Green,directed] (x3) -- (x4);
\draw [thick,Green,enddirected] (x4) -- (x4b);
\draw [thick,black] (x3m1d)--(x3m1u);
\draw [thick,black] (x3m2d)--(x3m2u);
\draw [thick,blue,densely dashed,enddirected] (x5) -- (x5b);
\fill (x2) circle (1.5pt);
\fill (x3) circle (1.5pt);
\fill ([xshift=-2pt,yshift=-1.5pt]x2b) rectangle ++ (4pt,3pt);
\fill (x5) circle (1.5pt);
\end{tikzpicture}}}}}}

\newcommand{\propdiagExt}{\ifTikzExternal{\tikzsetnextfilename{propdiagExt}}{\vcenter{\hbox{{\begin{tikzpicture}[scale=1]
\coordinate (x1) at (0,0) ;
\coordinate (x2) at  (0.6,0) ;
\coordinate (x1p) at  (-.2,0) ;
\coordinate (x2p) at  (0.8,0) ;
\fill (x1) circle (1.5pt);
\fill (x2) circle (1.5pt);
\draw [thick] (.3,0) circle (3mm);
\draw [black] (x1) -- (x1p);
\draw [black] (x2) -- (x2p);
\end{tikzpicture}}}}}}

\newcommand{\LHnoExternal}{\ifTikzExternal{\tikzsetnextfilename{LHnoExternal}}{\vcenter{\hbox{{\begin{tikzpicture}[scale=1]
\coordinate (x1) at (0,0) ;
\coordinate (x2) at  (0.6,0) ;
\coordinate (x3) at  (0.3,0.5) ;
\coordinate (x2p) at  (0.8,0) ;
\fill (x1) circle (1.5pt);
\fill (x2) circle (1.5pt);
\fill (x3) circle (1.5pt);
\draw [thick] (x2) arc (60:120:0.6) ;
\draw [thick] (x1) arc (-120:-60:0.6) ;
\draw [black,thick] (x1)  -- (x3) -- (x2);
\end{tikzpicture}}}}}}
\newcommand{\propdiagTwo}{\ifTikzExternal{\tikzsetnextfilename{propdiagTwo}}{\vcenter{\hbox{{\begin{tikzpicture}[scale=1]
\coordinate (x1) at (0,0) ;
\coordinate (x2) at  (0.6,0) ;
\coordinate (x3) at  (1.2,0) ;
\fill (x1) circle (1.5pt);
\fill (x2) circle (1.5pt);
\fill (x3) circle (1.5pt);
\draw [thick] (.3,0) circle (3mm);
\draw [thick] (.9,0) circle (3mm);
\end{tikzpicture}}}}}}

\newcommand{\LERWdiagOneBlack}{\ifTikzExternal{\tikzsetnextfilename{LERWdiagOneBlack}}{\vcenter{\hbox{{\begin{tikzpicture}[scale=1]
\coordinate (x) at  (1.5,0);    
\coordinate (x2) at (2,0) ;
\coordinate (y) at  (2.5,0.5); 
\coordinate (y2) at  (2,1);
\coordinate (z) at  (1.5,1) ;
\draw [thick,red] (x) -- (x2);
\draw [thick,red,enddirected]  (y2) -- (z);
\draw [thick,red](2.5,0.5) arc (0:90:0.5);
\draw [thick,red](2,0) arc (-90:0:0.5);
\draw [densely dashed,thick,blue] (x2) -- (y2);
\fill (y) circle (1.5pt);
\fill (x2) circle (1.5pt);
\fill (y2) circle (1.5pt);
\end{tikzpicture}}}}}}

\usepackage{tikz}
\ifTikzExternal{\usetikzlibrary{external}\tikzexternalize[prefix=tikz/,up to date check=md5]}
\usetikzlibrary{arrows,shapes,positioning,decorations.markings}
\usetikzlibrary{arrows.meta}
\usetikzlibrary{decorations.markings}
\tikzstyle arrowstyle=[scale=1]
\tikzstyle directed=[postaction={decorate,decoration={markings,
    mark=at position .65 with {\arrow[arrowstyle]{stealth}}}}]
\tikzstyle endreversedirected=[postaction={decorate,decoration={markings,
    mark=at position 1.0 with {\arrow[arrowstyle]{stealth}}}}]
\tikzset{  enddirected/.style={-stealth}
} 
\tikzstyle reverse directed=[postaction={decorate,decoration={markings,
    mark=at position .65 with {\arrowreversed[arrowstyle]{stealth};}}}]   
\tikzset{enddirected double/.style={
shorten >=2.8pt,
postaction={line width=\pgflinewidth,enddirected},
postaction={double,draw}}}  
\tikzset{double enddirected/.style={
shorten >=2.8pt,
postaction={line width=\pgflinewidth,enddirected},
postaction={double,draw}}}  
    
\usepackage[dvipsnames]{xcolor}
\usetikzlibrary{calc}

\newcommand{\bubble}{\vcenter{\hbox{\ifTikzExternal{\tikzsetnextfilename{bubble}}
{\begin{tikzpicture}[scale=1]
                \coordinate (x1) at (0,0) ;
                \draw[] (x1) arc (0:360: 0.3);
            \end{tikzpicture}}}}%
}
\newcommand{\tadpole}{\vcenter{\hbox{\ifTikzExternal{\tikzsetnextfilename{tadpole}}\begin{tikzpicture}[scale=1]
                \coordinate (x1) at (0,0) ;
                \draw[] (x1) arc (-90:270: 0.3);
                 \fill (x1) circle (1.5pt);
            \end{tikzpicture}\,}}%
}
\newcommand{\banana}{\ifTikzExternal{\tikzsetnextfilename{banana}}\vcenter{\hbox{\begin{tikzpicture}[scale=1]
                \coordinate (origin) at (0,0) ;
                \coordinate (end) at (0.6,0) ;
                \draw[] (origin) arc (180:-180: 0.3);
                 \fill (origin) circle (1.5pt);
                 \fill (end) circle (1.5pt);
            \end{tikzpicture}\,}}%
}
\newcommand{\RedBlueInt}{\vcenter{\hbox{\ifTikzExternal{\tikzsetnextfilename{RedBlueInt}}
        \begin{tikzpicture}[scale=1,x={(0.8,0)},y={(0,1)}]
                \coordinate (origin) at (0,0) ;
                \coordinate (Red-In) at (-0.5,-0.5) ;
                \coordinate (Red-Out) at (0.5,0.5) ;
                \coordinate (Blue-In) at (0,-0.55) ;
                \coordinate (Blue-Out) at (0,0.55) ;
                \draw [thick, red] (Red-In) -- (origin) ;
                \draw [thick, red, enddirected] (origin) -- (Red-Out);
                \draw [thick, blue] (Blue-In) -- (origin) ;
                \draw [thick, blue, enddirected] (origin) -- (Blue-Out);
                \fill ([xshift=-2pt,yshift=-1.5pt]origin) rectangle ++ (4pt,3pt);
            \end{tikzpicture}}}}%
\newcommand{\RedBlueIntChainOne}{\vcenter{\hbox{\ifTikzExternal{\tikzsetnextfilename{RedBlueIntChainOne}}
            \begin{tikzpicture}[scale=1,x={(0.8,0)},y={(0,1)}]
                    \path 
                     (0,-0.35) coordinate (Blue-In)
                     ++(0,0.3) coordinate (Red-Blue-int1)  
                     ++(0,0.4) coordinate (Red-Blue-int2)
                     ++(0,0.35) coordinate (Blue-Out);
                    \coordinate (Red-In) at ($(Red-Blue-int1) + (-0.3,-0.3)$) ;
                    \coordinate (Red-Out) at ($(Red-Blue-int2) + (0.3,0.3)$) ;
                    \draw [thick, red] (Red-In) -- (Red-Blue-int1) ;
                    \draw [thick, red] (Red-Blue-int1) .. controls ($($(Red-Blue-int1)!0.33!(Red-Blue-int2)$) + (0.3,0)$) and ($($(Red-Blue-int1)!0.67!(Red-Blue-int2)$) + (0.3,0)$)  .. (Red-Blue-int2) ;
                    \draw [thick, red, enddirected] (Red-Blue-int2) -- (Red-Out);
                    \draw [thick, blue] (Blue-In) -- (Red-Blue-int1) ;
                    \draw [thick, blue] (Red-Blue-int1) -- (Red-Blue-int2) ;
                    \draw [thick, blue, enddirected] (Red-Blue-int2) -- (Blue-Out);
                    \fill ([xshift=-2pt,yshift=-1.5pt]Red-Blue-int1) rectangle ++ (4pt,3pt);
                    \fill ([xshift=-2pt,yshift=-1.5pt]Red-Blue-int2) rectangle ++ (4pt,3pt);
                \end{tikzpicture}}}%
}
\newcommand{\RedBlueIntChainTwo}{\vcenter{\hbox{\ifTikzExternal{\tikzsetnextfilename{RedBlueIntChainTwo}}
            \begin{tikzpicture}[scale=1,x={(0.8,0)},y={(0,1)}]
                    \path 
                     (0,-0.35) coordinate (Blue-In)
                     ++(0,0.3) coordinate (Red-Blue-int1)  
                     ++(0,0.4) coordinate (Red-Blue-int2)
                     ++(0,0.4) coordinate (Red-Blue-int3)
                     ++(0,0.35) coordinate (Blue-Out);
                    \coordinate (Red-In) at ($(Red-Blue-int1) + (-0.3,-0.3)$) ;
                    \coordinate (Red-Out) at ($(Red-Blue-int3) + (0.3,0.3)$) ;
                    \draw [thick, red] (Red-In) -- (Red-Blue-int1) ;
                    \draw [thick, red] (Red-Blue-int1) .. controls ($($(Red-Blue-int1)!0.33!(Red-Blue-int2)$) + (0.3,0)$) and ($($(Red-Blue-int1)!0.67!(Red-Blue-int2)$) + (0.3,0)$)  .. (Red-Blue-int2) ;
                    \draw [thick, red] (Red-Blue-int2) .. controls ($($(Red-Blue-int2)!0.33!(Red-Blue-int3)$) + (-0.3,0)$) and ($($(Red-Blue-int2)!0.67!(Red-Blue-int3)$) + (-0.3,0)$)  .. (Red-Blue-int3) ;
                    \draw [thick, red, enddirected] (Red-Blue-int3) -- (Red-Out);
                    \draw [thick, blue] (Blue-In) -- (Red-Blue-int1)
                                                  -- (Red-Blue-int2) 
                                                  -- (Red-Blue-int3) ;
                    \draw [thick, blue, enddirected] (Red-Blue-int3) -- (Blue-Out);
                    \fill ([xshift=-2pt,yshift=-1.5pt]Red-Blue-int1) rectangle ++ (4pt,3pt);
                    \fill ([xshift=-2pt,yshift=-1.5pt]Red-Blue-int2) rectangle ++ (4pt,3pt);
                    \fill ([xshift=-2pt,yshift=-1.5pt]Red-Blue-int3) rectangle ++ (4pt,3pt);
                \end{tikzpicture}}}%
}
\newcommand{\BlueBlueInt}{\vcenter{\hbox{\ifTikzExternal{\tikzsetnextfilename{blue-blue}}
            \begin{tikzpicture}[scale=1,x={(0.8,0)},y={(0,1)}]
                   \path 
                    (0,0) coordinate (Blue-In1)
                    ++(0,0.5) coordinate (Blue-Blue-int1)  
                    ++(0.5,0) coordinate (Blue-Blue-int2)
                    ++(0,0.5) coordinate (Blue-Out2);
                   \coordinate (Blue-In2) at ($(Blue-Blue-int2) + (0,-0.5)$) ;
                   \coordinate (Blue-Out1) at ($(Blue-Blue-int1) + (0,0.5)$) ;
                   \draw [thick, blue] (Blue-In1) -- (Blue-Blue-int1) ;
                   \draw [thick, blue, enddirected] (Blue-Blue-int1) -- (Blue-Out1);
                   \draw [thick, blue] (Blue-In2) -- (Blue-Blue-int2) ;
                   \draw [thick, blue, enddirected] (Blue-Blue-int2) -- (Blue-Out2);
                   \draw [thick, black,densely dotted] (Blue-Blue-int1) -- (Blue-Blue-int2) ;
                   \fill ([xshift=-2pt,yshift=-1.5pt]Blue-Blue-int1) rectangle ++ (4pt,3pt);
                   \fill ([xshift=-2pt,yshift=-1.5pt]Blue-Blue-int2) rectangle ++ (4pt,3pt);
                \end{tikzpicture}}}%
}
\newcommand{\BlueBlueIntChainOne}{\vcenter{\hbox{\ifTikzExternal{\tikzsetnextfilename{BlueBlueIntChainOne}}
            \begin{tikzpicture}[scale=1,x={(0.8,0)},y={(0,1)}]
                   \path 
                    (0,0) coordinate (Blue-In1)
                    ++(0,0.3) coordinate (Blue-Blue-int11)  
                    ++(0,0.35) coordinate (Blue-Blue-int12)
                    ++(0,0.35) coordinate (Blue-Out1);
                   \path 
                    ($(0,0)+(0.5,0)$) coordinate (Blue-In2)
                    ++(0,0.3) coordinate (Blue-Blue-int21)  
                    ++(0,0.35) coordinate (Blue-Blue-int22)
                    ++(0,0.35) coordinate (Blue-Out2);
                   \draw [thick, blue] (Blue-In1) -- (Blue-Blue-int11) 
                                                  -- (Blue-Blue-int12) ;
                   \draw [thick, blue, enddirected] (Blue-Blue-int12) -- (Blue-Out1);
                   \draw [thick, blue] (Blue-In2) -- (Blue-Blue-int21) 
                                                  -- (Blue-Blue-int22) ;
                   \draw [thick, blue, enddirected] (Blue-Blue-int22) -- (Blue-Out2);
                   \draw [thick, black,densely dotted] (Blue-Blue-int11) -- (Blue-Blue-int21) ;
                   \draw [thick, black,densely dotted] (Blue-Blue-int12) -- (Blue-Blue-int22) ;
                   \fill ([xshift=-2pt,yshift=-1.5pt]Blue-Blue-int11) rectangle ++ (4pt,3pt);
                   \fill ([xshift=-2pt,yshift=-1.5pt]Blue-Blue-int21) rectangle ++ (4pt,3pt);
                   \fill ([xshift=-2pt,yshift=-1.5pt]Blue-Blue-int12) rectangle ++ (4pt,3pt);
                   \fill ([xshift=-2pt,yshift=-1.5pt]Blue-Blue-int22) rectangle ++ (4pt,3pt);
                \end{tikzpicture}}}%
}
\newcommand{\BlueBlueIntChainTwo}{\vcenter{\hbox{\ifTikzExternal{\tikzsetnextfilename{BlueBlueIntChainTwo}}
            \begin{tikzpicture}[scale=1,x={(0.8,0)},y={(0,1)}]
                   \path 
                    (0,0) coordinate (Blue-In1)
                    ++(0,0.3) coordinate (Blue-Blue-int11)  
                    ++(0,0.35) coordinate (Blue-Blue-int12)
                    ++(0,0.35) coordinate (Blue-Blue-int13)
                    ++(0,0.35) coordinate (Blue-Out1);
                   \path 
                    ($(0,0)+(0.5,0)$) coordinate (Blue-In2)
                    ++(0,0.3) coordinate (Blue-Blue-int21)  
                    ++(0,0.35) coordinate (Blue-Blue-int22)
                    ++(0,0.35) coordinate (Blue-Blue-int23)
                    ++(0,0.35) coordinate (Blue-Out2);
                   \draw [thick, blue] (Blue-In1) -- (Blue-Blue-int11) 
                                                  -- (Blue-Blue-int12) 
                                                  -- (Blue-Blue-int13) ;
                   \draw [thick, blue, enddirected] (Blue-Blue-int13) -- (Blue-Out1);
                   \draw [thick, blue] (Blue-In2) -- (Blue-Blue-int21) 
                                                  -- (Blue-Blue-int22) 
                                                  -- (Blue-Blue-int23) ;
                   \draw [thick, blue, enddirected] (Blue-Blue-int23) -- (Blue-Out2);
                   \draw [thick, black,densely dotted] (Blue-Blue-int11) -- (Blue-Blue-int21) ;
                   \draw [thick, black,densely dotted] (Blue-Blue-int12) -- (Blue-Blue-int22) ;
                   \draw [thick, black,densely dotted] (Blue-Blue-int13) -- (Blue-Blue-int23) ;
                   \fill ([xshift=-2pt,yshift=-1.5pt]Blue-Blue-int11) rectangle ++ (4pt,3pt);
                   \fill ([xshift=-2pt,yshift=-1.5pt]Blue-Blue-int21) rectangle ++ (4pt,3pt);
                   \fill ([xshift=-2pt,yshift=-1.5pt]Blue-Blue-int12) rectangle ++ (4pt,3pt);
                   \fill ([xshift=-2pt,yshift=-1.5pt]Blue-Blue-int22) rectangle ++ (4pt,3pt);
                   \fill ([xshift=-2pt,yshift=-1.5pt]Blue-Blue-int13) rectangle ++ (4pt,3pt);
                   \fill ([xshift=-2pt,yshift=-1.5pt]Blue-Blue-int23) rectangle ++ (4pt,3pt);
                \end{tikzpicture}}}%
}

\newcommand{\interactingVertexExample}{\ifTikzExternal{\tikzsetnextfilename{interactingVertexExample}}\vcenter{\hbox{
\begin{tikzpicture}[scale=1,x={(-1,0)},y={(0,1)}]
		\path 
         (0,0) coordinate (origin)%
         +(-0.07,0) coordinate (blank)%
         +(0,-0.5) coordinate (In-Red)%
         +(0,0.5) coordinate (End-Red)%
         ++(0.4,0) coordinate (Int1)%
         +(0,-0.3) coordinate (Blue1)%
         +(0,-0.5) coordinate (Blue-Dots1)%
         ++(0.4,0) coordinate (Int2)%
         +(0,-0.3) coordinate (Blue2)%
         +(0,-0.5) coordinate (Blue-Dots2)%
         +(0.4,0) coordinate (End-Green)%
        ;
         \draw [thick ,red,directed] (In-Red) -- (origin) ;
         \draw [thick,red,enddirected]  (origin) -- (End-Red);
         \draw [thick,blue,directed] (Blue1) -- (Int1);
         \draw [thick,blue, densely dash dot dot] (Blue-Dots1) -- (Blue1);
         \draw [thick,blue,directed] (Blue2) -- (Int2);
         \draw [thick,blue, densely dash dot dot] (Blue-Dots2) -- (Blue2);
         \draw [thick,Green,directed] (origin) -- (Int1) node [pos=0.5, sloped, font=\footnotesize\bfseries, black ] {||};
         \draw [thick,Green,directed] (Int1) -- (Int2);
         \draw [thick,Green,enddirected] (Int2) -- (End-Green);
         \fill (origin) circle (1.5pt);
     	\fill ([xshift=-2pt,yshift=-1.5pt]Int1) rectangle ++ (4pt,3pt);
     	\fill ([xshift=-2pt,yshift=-1.5pt]Int2) rectangle ++ (4pt,3pt);
    \end{tikzpicture}}}%
}
\newcommand{\interactingVertexGAmmaPlusExample}{\ifTikzExternal{\tikzsetnextfilename{interactingVertexGAmmaPlusExample}}\vcenter{\hbox{
\begin{tikzpicture}[scale=1,x={(-1,0)},y={(0,1)}]
		\path 
         (0,0) coordinate (origin)%
         +(-0.07,0) coordinate (blank)%
         +(0,-0.5) coordinate (In-Red)%
         +(0,0.5) coordinate (End-Red)%
         ++(0.4,0) coordinate (Int1)%
         +(0,-0.3) coordinate (Blue1)%
         +(0,-0.5) coordinate (Blue-Dots1)%
         ++(0.4,0) coordinate (Int2)%
         +(0,-0.3) coordinate (Blue2)%
         +(0,-0.5) coordinate (Blue-Dots2)%
         +(0.4,0) coordinate (End-Green)%
        ;
         \draw [thick ,red,directed] (In-Red) -- (origin)  node [pos=0.5, sloped, font=\footnotesize\bfseries, black ] {||} ;
         \draw [thick,red,enddirected]  (origin) -- (End-Red);
         \draw [thick,blue,directed] (Blue1) -- (Int1);
         \draw [thick,blue, densely dash dot dot] (Blue-Dots1) -- (Blue1);
         \draw [thick,blue,directed] (Blue2) -- (Int2);
         \draw [thick,blue, densely dash dot dot] (Blue-Dots2) -- (Blue2);
         \draw [thick,Green,directed] (origin) -- (Int1);
         \draw [thick,Green,directed] (Int1) -- (Int2);
         \draw [thick,Green,enddirected] (Int2) -- (End-Green);
         \fill (origin) circle (1.5pt);
     	\fill ([xshift=-2pt,yshift=-1.5pt]Int1) rectangle ++ (4pt,3pt);
     	\fill ([xshift=-2pt,yshift=-1.5pt]Int2) rectangle ++ (4pt,3pt);
    \end{tikzpicture}}}%
}
\newcommand{\LplusExample}{\ifTikzExternal{\tikzsetnextfilename{LplusExample}}\vcenter{\hbox{
\begin{tikzpicture}[scale=1,x={(1,0)},y={(0,0.3)}]
		\path 
         (0,0) coordinate (end)%
          +(1,0) coordinate (fantom1)%
         ++(0,-1) coordinate (down)%
         ++(-2.7,0) coordinate (left)%
         ++(0,1) coordinate (beginning)%
          +(-0.1,0) coordinate (fantom2)%
            ;
         \draw[black,line width=0.6pt,->] (end) -- (down) -- (left) -- (beginning) ;
         \fill[white] (fantom1) circle (0.1pt);
         \fill[white] (fantom2) circle (0.1pt);
    \end{tikzpicture}}}%
}
\newcommand{\independetLoopEG}{\vcenter{\hbox{\ifTikzExternal{\tikzsetnextfilename{independetLoopEG}}
\begin{tikzpicture}[scale=1,x={(1,0)},y={(0,1)}]
		\path 
         (0,0) coordinate (origin)  
         ++(-0.4,-0.4) coordinate (In-Red)
         +(-0.2,-0.2) coordinate (In-Dots)
         (origin)++(0.5,0.5) coordinate (Emit-Green1)
         ++(0.6,0) coordinate (interaction1)
         +(0,-0.3) coordinate (In-Blue1)
         +(0,-0.5) coordinate (In-Blue-Dots1)
         (Emit-Green1)++(0.5,0.5) coordinate (Emit-Green2)
         ++(0.6,0) coordinate (interaction2)
         +(0,-0.3) coordinate (In-Blue2)
         +(0,-0.5) coordinate (In-Blue-Dots2)
         (Emit-Green2)++(0.5,0.5) coordinate (Emit-Green3)
         ++(-1.1,0) coordinate (interaction3)
         +(0,-0.3) coordinate (In-Blue3)
         +(0,-0.5) coordinate (In-Blue-Dots3)
         let \p1=(origin),\p2=(Emit-Green3) in (\x1,\y2) coordinate  (interaction4)
         +(-0.3,0) coordinate (Out-Green)
         (Emit-Green3)++(0.3,0.3) coordinate (Out-Red)
         +(0.25,0.25) coordinate (Out-Dots)
            ;
        \coordinate (Der1m) at ($ ($(Emit-Green2)!0.6!(Emit-Green3)$) -  (-0.1,0.1) $) ;
        \coordinate (Der1p) at ($ ($(Emit-Green2)!0.6!(Emit-Green3)$) +  (-0.1,0.1) $) ;
        \coordinate (Der2p) at ($ ($(Emit-Green2)!0.7!(Emit-Green3)$) + (-0.1,0.1) $) ;
        \coordinate (Der2m) at ($ ($(Emit-Green2)!0.7!(Emit-Green3)$) - (-0.1,0.1) $) ;
        \draw [thick,densely dash dot dot, red] (In-Dots) -- (In-Red) ; 
        \draw [thick, red, directed] (In-Red) -- (origin) ; 
        \draw [thick, red, directed] (origin) -- (Emit-Green1) ; 
        \draw [thick, red, directed] (Emit-Green1) -- (Emit-Green2) ; 
        \draw [thick, red, directed] (Emit-Green2) -- (Emit-Green3) ; 
        \draw [thick, red, directed] (Emit-Green3) -- (Out-Red) ; 
        \draw [thick,densely dash dot dot, red] (Out-Red) -- (Out-Dots) ; 
        \draw [thick, blue, directed] (origin) -- (interaction4)  ; 
        \draw [thick, blue, directed] (In-Blue1) -- (interaction1)  ; 
        \draw [thick, blue, densely dash dot dot] (In-Blue-Dots1) -- (In-Blue1)  ; 
        \draw [thick, blue, directed] (In-Blue2) -- (interaction2)  ; 
        \draw [thick, blue, densely dash dot dot] (In-Blue-Dots2) -- (In-Blue2)  ; 
        \draw [thick, blue, directed] (In-Blue3) -- (interaction3)  ; 
        \draw [thick, blue, densely dash dot dot] (In-Blue-Dots3) -- (In-Blue3)  ; 
        \draw [thick, Green, directed] (Emit-Green1) -- (interaction1)  node [pos=0.7, sloped, font=\footnotesize\bfseries, black ] {||} ; 
        \draw [thick, Green, directed] (Emit-Green2) -- (interaction2)  node [pos=0.7, sloped, font=\footnotesize\bfseries, black ] {||} ; 
        \draw [thick, Green, directed] (Emit-Green3) -- (interaction3)  ; 
        \draw [thick, Green, directed] (interaction3) -- (interaction4)  ; 
        \draw [thick, Green, enddirected] (interaction4) -- (Out-Green)  ; 
        \draw [thick,black] (Der1p) -- (Der1m) ;
        \draw [thick,black] (Der2p) -- (Der2m) ;
        \fill (origin) circle (1.2pt);
        \fill (Emit-Green1) circle (1.2pt);
        \fill (Emit-Green2) circle (1.2pt);
        \fill (Emit-Green3) circle (1.2pt);
     	\fill ([xshift=-2pt,yshift=-1.5pt]interaction1) rectangle ++ (4pt,3pt);
     	\fill ([xshift=-2pt,yshift=-1.5pt]interaction2) rectangle ++ (4pt,3pt);
     	\fill ([xshift=-2pt,yshift=-1.5pt]interaction3) rectangle ++ (4pt,3pt);
     	\fill ([xshift=-2pt,yshift=-1.5pt]interaction4) rectangle ++ (4pt,3pt);
     	\node[below right,inner sep=0pt] at (origin) {${}^{\gamma_1^1}$};
     	\node[below right=-0.01 and -0.09,inner sep=0pt] at (Emit-Green1) {${}_{\gamma_2^-}$};
     	\node[below right=-0.01 and -0.09,inner sep=0pt] at (Emit-Green2) {${}_{\gamma_2^-}$};
     	\node[below right=-0.1 and 0.03,inner sep=-1pt] at (Emit-Green3) {${}^{\gamma_2^{+(2)}}$};
    \end{tikzpicture}}}%
}
\newcommand{\independetLoopMinusEG}{\vcenter{\hbox{\ifTikzExternal{\tikzsetnextfilename{independetLoopMinusEG}}
\begin{tikzpicture}[scale=1,x={(1,0)},y={(0,1)}]
		\path 
         (0,0) coordinate (origin)  
         ++(-0.4,-0.4) coordinate (In-Red)
         +(-0.2,-0.2) coordinate (In-Dots)
         (origin)++(0.5,0.5) coordinate (Emit-Green1)
         ++(0.6,0) coordinate (interaction1)
         +(0,-0.3) coordinate (In-Blue1)
         +(0,-0.5) coordinate (In-Blue-Dots1)
         (Emit-Green1)++(0.5,0.5) coordinate (Emit-Green2)
         ++(0.6,0) coordinate (interaction2)
         +(0,-0.3) coordinate (In-Blue2)
         +(0,-0.5) coordinate (In-Blue-Dots2)
         (Emit-Green2)++(0.5,0.5) coordinate (Emit-Green3)
         ++(-1.1,0) coordinate (interaction3)
         +(0,-0.3) coordinate (In-Blue3)
         +(0,-0.5) coordinate (In-Blue-Dots3)
         let \p1=(origin),\p2=(Emit-Green3) in (\x1,\y2) coordinate  (interaction4)
         +(-0.3,0) coordinate (Out-Green)
         (Emit-Green3)++(0.3,0.3) coordinate (Out-Red)
         +(0.25,0.25) coordinate (Out-Dots)
            ;
        \draw [thick,densely dash dot dot, red] (In-Dots) -- (In-Red) ; 
        \draw [thick, red, directed] (In-Red) -- (origin) ; 
        \draw [thick, red, directed] (origin) -- (Emit-Green1) ; 
        \draw [thick, red, directed] (Emit-Green1) -- (Emit-Green2) ; 
        \draw [thick, red, directed] (Emit-Green2) -- (Emit-Green3) ; 
        \draw [thick, red, directed] (Emit-Green3) -- (Out-Red) ; 
        \draw [thick,densely dash dot dot, red] (Out-Red) -- (Out-Dots) ; 
        \draw [thick, blue, directed] (origin) -- (interaction4)  ; 
        \draw [thick, blue, directed] (In-Blue1) -- (interaction1)  ; 
        \draw [thick, blue, densely dash dot dot] (In-Blue-Dots1) -- (In-Blue1)  ; 
        \draw [thick, blue, directed] (In-Blue2) -- (interaction2)  ; 
        \draw [thick, blue, densely dash dot dot] (In-Blue-Dots2) -- (In-Blue2)  ; 
        \draw [thick, blue, directed] (In-Blue3) -- (interaction3)  ; 
        \draw [thick, blue, densely dash dot dot] (In-Blue-Dots3) -- (In-Blue3)  ; 
        \draw [thick, Green, directed] (Emit-Green1) -- (interaction1)  node [pos=0.7, sloped, font=\footnotesize\bfseries, black ] {||} ; 
        \draw [thick, Green, directed] (Emit-Green2) -- (interaction2)  node [pos=0.7, sloped, font=\footnotesize\bfseries, black ] {||} ; 
        \draw [thick, Green, directed] (Emit-Green3) -- (interaction3) node [pos=0.2, sloped, font=\footnotesize\bfseries, black ] {||} ; 
        \draw [thick, Green, directed] (interaction3) -- (interaction4)  ; 
        \draw [thick, Green, enddirected] (interaction4) -- (Out-Green)  ; 
        \fill (origin) circle (1.2pt);
        \fill (Emit-Green1) circle (1.2pt);
        \fill (Emit-Green2) circle (1.2pt);
        \fill (Emit-Green3) circle (1.2pt);
     	\fill ([xshift=-2pt,yshift=-1.5pt]interaction1) rectangle ++ (4pt,3pt);
     	\fill ([xshift=-2pt,yshift=-1.5pt]interaction2) rectangle ++ (4pt,3pt);
     	\fill ([xshift=-2pt,yshift=-1.5pt]interaction3) rectangle ++ (4pt,3pt);
     	\fill ([xshift=-2pt,yshift=-1.5pt]interaction4) rectangle ++ (4pt,3pt);
     	\node[below right,inner sep=0pt] at (origin) {${}^{\gamma_1^1}$};
     	\node[below right=-0.01 and -0.09,inner sep=0pt] at (Emit-Green1) {${}_{\gamma_2^-}$};
     	\node[below right=-0.01 and -0.09,inner sep=0pt] at (Emit-Green2) {${}_{\gamma_2^-}$};
     	\node[below right=-0.1 and 0.03,inner sep=-1pt] at (Emit-Green3) {${}^{\gamma_2^{-(2)}}$};
    \end{tikzpicture}}}%
}
\newcommand{\independetLoopEGlabeled}{\vcenter{\hbox{\ifTikzExternal{\tikzsetnextfilename{independetLoopEGlabeled}}
\begin{tikzpicture}[scale=1,x={(1.1,0)},y={(0,1.1)}]
		\path 
         (0,0) coordinate (origin)  
         ++(-0.3,-0.3) coordinate (In-Red)
         +(-0.2,-0.2) coordinate (In-Dots)
         (origin)++(0.65,0.65) coordinate (Emit-Green1)
         ++(0.4,0) coordinate (interaction1)
         +(0,-0.3) coordinate (In-Blue1)
         +(0,-0.5) coordinate (In-Blue-Dots1)
         (Emit-Green1)++(0.65,0.65) coordinate (Emit-Green2)
         ++(0.4,0) coordinate (interaction2)
         +(0,-0.3) coordinate (In-Blue2)
         +(0,-0.5) coordinate (In-Blue-Dots2)
         (Emit-Green2)++(0.7,0.7) coordinate (Emit-Green3)
         ++(-0.65,0) coordinate (interaction3)
         +(0,-0.3) coordinate (In-Blue3)
         +(0,-0.5) coordinate (In-Blue-Dots3)
         ++(-0.7,0) coordinate (interaction4)
         +(0,-0.3) coordinate (In-Blue4)
         +(0,-0.5) coordinate (In-Blue-Dots4)
          let \p1=(origin),\p2=(Emit-Green3) in (\x1,\y2) coordinate  (interaction5)
         +(-0.3,0) coordinate (Out-Green)
         (Emit-Green3)++(0.3,0.3) coordinate (Out-Red)
         +(0.25,0.25) coordinate (Out-Dots)
            ;
        \coordinate (Der5p) at ($ ($(interaction1)!0.4!(Emit-Green1)$) + (-0.0,0.12) $) ;
        \coordinate (Der5m) at ($ ($(interaction1)!0.4!(Emit-Green1)$) - (-0.0,0.12) $) ;
        \coordinate (Der6p) at ($ ($(interaction1)!0.5!(Emit-Green1)$) + (-0.0,0.12) $) ;
        \coordinate (Der6m) at ($ ($(interaction1)!0.5!(Emit-Green1)$) - (-0.0,0.12) $) ;
        \coordinate (Der3p) at ($ ($(interaction2)!0.4!(Emit-Green2)$) + (-0.0,0.12) $) ;
        \coordinate (Der3m) at ($ ($(interaction2)!0.4!(Emit-Green2)$) - (-0.0,0.12) $) ;
        \coordinate (Der4p) at ($ ($(interaction2)!0.5!(Emit-Green2)$) + (-0.0,0.12) $) ;
        \coordinate (Der4m) at ($ ($(interaction2)!0.5!(Emit-Green2)$) - (-0.0,0.12) $) ;
        \coordinate (Der1m) at ($ ($(Emit-Green2)!0.6!(Emit-Green3)$) -  (-0.1,0.1) $) ;
        \coordinate (Der1p) at ($ ($(Emit-Green2)!0.6!(Emit-Green3)$) +  (-0.1,0.1) $) ;
        \coordinate (Der2p) at ($ ($(Emit-Green2)!0.7!(Emit-Green3)$) + (-0.1,0.1) $) ;
        \coordinate (Der2m) at ($ ($(Emit-Green2)!0.7!(Emit-Green3)$) - (-0.1,0.1) $) ;
        \draw [thick,densely dash dot dot, red] (In-Dots) -- (In-Red) ; 
        \draw [thick, red, directed] (In-Red) -- (origin) ; 
        \draw [thick, red, directed] (origin) -- (Emit-Green1) ; 
        \draw [thick, red,  dash dot dot] (Emit-Green1) -- (Emit-Green2) ; 
        \draw [thick, red, directed] (Emit-Green2) -- (Emit-Green3) ; 
        \draw [thick, red, directed] (Emit-Green3) -- (Out-Red) ; 
        \draw [thick,densely dash dot dot, red] (Out-Red) -- (Out-Dots) ; 
        \draw [thick, blue, directed] (origin) -- (interaction5)  ; 
        \draw [thick, blue, directed] (In-Blue1) -- (interaction1)  ; 
        \draw [thick, blue, densely dash dot dot] (In-Blue-Dots1) -- (In-Blue1)  ; 
        \draw [thick, blue, directed] (In-Blue2) -- (interaction2)  ; 
        \draw [thick, blue, densely dash dot dot] (In-Blue-Dots2) -- (In-Blue2)  ; 
        \draw [thick, blue, directed] (In-Blue3) -- (interaction3)  ; 
        \draw [thick, blue, densely dash dot dot] (In-Blue-Dots3) -- (In-Blue3)  ; 
        \draw [thick, blue, directed] (In-Blue4) -- (interaction4)  ; 
        \draw [thick, blue, densely dash dot dot] (In-Blue-Dots4) -- (In-Blue4)  ; 
        \draw [thick, Green, directed] (Emit-Green1) -- (interaction1)  ; 
        \draw [thick, Green, directed] (Emit-Green2) -- (interaction2)  ; 
        \draw [thick, Green, directed] (Emit-Green3) -- (interaction3)  ; 
        \draw [thick, Green, densely dash dot dot] (interaction3) -- (interaction4)  ; 
        \draw [thick, Green, directed] (interaction4) -- (interaction5)  ; 
        \draw [thick, Green, enddirected] (interaction5) -- (Out-Green)  ; 
        \draw [thick,black] (Der1p) -- (Der1m) ;
        \draw [thick,black] (Der2p) -- (Der2m) ;
        \draw [thick,black] (Der3p) -- (Der3m) ;
        \draw [thick,black] (Der4p) -- (Der4m) ;
        \draw [thick,black] (Der5p) -- (Der5m) ;
        \draw [thick,black] (Der6p) -- (Der6m) ;
        \fill (origin) circle (1.2pt);
        \fill (Emit-Green1) circle (1.2pt);
        \fill (Emit-Green2) circle (1.2pt);
        \fill (Emit-Green3) circle (1.2pt);
     	\fill ([xshift=-2pt,yshift=-1.5pt]interaction1) rectangle ++ (4pt,3pt);
     	\fill ([xshift=-2pt,yshift=-1.5pt]interaction2) rectangle ++ (4pt,3pt);
     	\fill ([xshift=-2pt,yshift=-1.5pt]interaction3) rectangle ++ (4pt,3pt);
     	\fill ([xshift=-2pt,yshift=-1.5pt]interaction4) rectangle ++ (4pt,3pt);
     	\fill ([xshift=-2pt,yshift=-1.5pt]interaction5) rectangle ++ (4pt,3pt);
     	\node[left,inner sep=0.2pt] at (In-Blue-Dots1) {${}^1$};
     	\node[left,inner sep=0.2pt] at (In-Blue-Dots2) {${}^k$};
     	\node[left,inner sep=0.2pt] at (In-Blue-Dots3) {${}^{k{+}1}$};
     	\node[left,inner sep=0.2pt] at (In-Blue-Dots4) {${}^{k{+}n}$};
    \end{tikzpicture}}}%
}
\newcommand{\independetLoopEGsimp}{\vcenter{\hbox{\ifTikzExternal{\tikzsetnextfilename{independetLoopEGsimp}}
\begin{tikzpicture}[scale=1,x={(1,0)},y={(0,1)}]
		\path 
         (0,0) coordinate (origin)  
         ++(-0.3,-0.3) coordinate (In-Red)
         +(-0.2,-0.2) coordinate (In-Dots)
         (origin)++(0.9,0.9) coordinate (interaction1)
         +(0,-0.3) coordinate (In-Blue1)
         +(0,-0.5) coordinate (In-Blue-Dots1)
         (interaction1)++(1,1) coordinate (interaction2)
         +(0,-0.3) coordinate (In-Blue2)
         +(0,-0.5) coordinate (In-Blue-Dots2)
         (interaction2)++(-0.6,0) coordinate (interaction3)
         +(0,-0.3) coordinate (In-Blue3)
         +(0,-0.5) coordinate (In-Blue-Dots3)
         ++(-0.65,0) coordinate (interaction4)
         +(0,-0.3) coordinate (In-Blue4)
         +(0,-0.5) coordinate (In-Blue-Dots4)
         ++(-0.65,0) coordinate (interaction5)
         +(-0.3,0) coordinate (Out-Green)
         (interaction2)++(0.3,0.3) coordinate (Out-Red)
         +(0.25,0.25) coordinate (Out-Dots)
            ;
        \draw [thick,densely dash dot dot, red] (In-Dots) -- (In-Red) ; 
        \draw [thick, red, directed] (In-Red) -- (origin)  ;
        \draw [thick, red, directed] (origin) -- (interaction1); 
        \draw [thick, red, dash dot dot] (interaction1) -- (interaction2) ; 
        \draw [thick, red, directed] (interaction2) -- (Out-Red) ; 
        \draw [thick,densely dash dot dot, red] (Out-Red) -- (Out-Dots) ; 
        \draw [thick, blue, directed] (origin) -- (interaction5)  ; 
        \draw [thick, blue, directed] (In-Blue1) -- (interaction1)  ; 
        \draw [thick, blue, densely dash dot dot] (In-Blue-Dots1) -- (In-Blue1)  ; 
        \draw [thick, blue, directed] (In-Blue2) -- (interaction2)  ; 
        \draw [thick, blue, densely dash dot dot] (In-Blue-Dots2) -- (In-Blue2)  ; 
        \draw [thick, blue, directed] (In-Blue3) -- (interaction3)  ; 
        \draw [thick, blue, densely dash dot dot] (In-Blue-Dots3) -- (In-Blue3)  ; 
        \draw [thick, blue, directed] (In-Blue4) -- (interaction4)  ; 
        \draw [thick, blue, densely dash dot dot] (In-Blue-Dots4) -- (In-Blue4)  ; 
        \draw [thick, Green, directed] (interaction2) -- (interaction3)  ; 
        \draw [thick, Green, densely dash dot dot] (interaction3) -- (interaction4)  ; 
        \draw [thick, Green, directed] (interaction4) -- (interaction5)  ; 
        \draw [thick, Green, enddirected] (interaction5) -- (Out-Green)  ; 
        \fill (origin) circle (1.2pt);
     	\fill ([xshift=-2pt,yshift=-1.5pt]interaction1) rectangle ++ (4pt,3pt);
     	\fill ([xshift=-2pt,yshift=-1.5pt]interaction2) rectangle ++ (4pt,3pt);
     	\fill ([xshift=-2pt,yshift=-1.5pt]interaction3) rectangle ++ (4pt,3pt);
     	\fill ([xshift=-2pt,yshift=-1.5pt]interaction4) rectangle ++ (4pt,3pt);
     	\fill ([xshift=-2pt,yshift=-1.5pt]interaction5) rectangle ++ (4pt,3pt);
     	\node[left,inner sep=0.2pt] at (In-Blue-Dots1) {${}^{1}$};
     	\node[left,inner sep=0.2pt] at (In-Blue-Dots2) {${}^{k}$};
     	\node[left,inner sep=0.2pt] at (In-Blue-Dots3) {${}^{k{+}1}$};
     	\node[left,inner sep=0.2pt] at (In-Blue-Dots4) {${}^{k{+}n}$};
    \end{tikzpicture}}}%
}
\newcommand{\independetLoopEGsimpIntegrated}{\vcenter{\hbox{\ifTikzExternal{\tikzsetnextfilename{independetLoopEGsimpIntegrated}}
\begin{tikzpicture}[scale=1,x={(1,0)},y={(0,1)}]
		\path 
        (0,0) coordinate (origin)  
        ++(-0.3,-0.3) coordinate (In-Red)
         +(-0.2,-0.2) coordinate (In-Dots)
        (origin)++(45:0.6) coordinate (center)
        ++ (-45:0.6) coordinate (interaction1)
         +(0,-0.3) coordinate (In-Blue1)
         +(0,-0.5) coordinate (In-Blue-Dots1)
        (center) ++ (45:0.6) coordinate (interaction2)
         +(0,-0.3) coordinate (In-Blue2)
         +(0,-0.5) coordinate (In-Blue-Dots2)
        (center) ++ (85:0.6)  coordinate (interaction3)
         +(0,-0.3) coordinate (In-Blue3)
         +(0,-0.5) coordinate (In-Blue-Dots3)
        (center) ++ (140:0.6) coordinate (interaction4)
        ++ (-0.3,0) coordinate (interaction4bis)
         +(0,-0.3) coordinate (In-Blue4)
         +(0,-0.5) coordinate (In-Blue-Dots4)
        (origin) + (-0.3,0) coordinate (Out-Green)
        (interaction2)++(0.3,0.3) coordinate (Out-Red)
         +(0.25,0.25) coordinate (Out-Dots)
            ;
        \draw [thick,densely dash dot dot, red] (In-Dots) -- (In-Red) ; 
        \draw [thick, red, directed] (In-Red) -- (origin)  ;
        \drawarc[thick, red, directed]{center}{origin}{interaction1} ;
        \drawarc[thick, red, dash dot dot]{center}{interaction1}{interaction2} ;
        \draw [thick, red, directed] (interaction2) -- (Out-Red) ; 
        \draw [thick,densely dash dot dot, red] (Out-Red) -- (Out-Dots) ; 
        \draw [thick, blue, directed] (In-Blue1) -- (interaction1)  ; 
        \draw [thick, blue, densely dash dot dot] (In-Blue-Dots1) -- (In-Blue1)  ; 
        \draw [thick, blue, directed] (In-Blue2) -- (interaction2)  ; 
        \draw [thick, blue, densely dash dot dot] (In-Blue-Dots2) -- (In-Blue2)  ; 
        \draw [thick, blue, directed] (In-Blue3) -- (interaction3)  ; 
        \draw [thick, blue, densely dash dot dot] (In-Blue-Dots3) -- (In-Blue3)  ; 
        \draw [thick, blue, directed] (In-Blue4) -- (interaction4bis)  ; 
        \draw [thick, blue, densely dash dot dot] (In-Blue-Dots4) -- (In-Blue4)  ; 
        \drawarc [thick, Green, directed] {center}{interaction2}{interaction3} ;  
        \drawarc [thick, Green, dash dot dot] {center}{interaction3}{interaction4}  ; 
        \drawarc [thick, Green, directed] {center}{interaction4}{origin} ;  
        \draw [thick, Green, enddirected] (origin) -- (Out-Green)  ; 
        \draw [thick,black,densely dotted] (interaction4) -- (interaction4bis) ;
     	\fill ([xshift=-2pt,yshift=-1.5pt]interaction1) rectangle ++ (4pt,3pt);
     	\fill ([xshift=-2pt,yshift=-1.5pt]interaction2) rectangle ++ (4pt,3pt);
     	\fill ([xshift=-2pt,yshift=-1.5pt]interaction3) rectangle ++ (4pt,3pt);
     	\fill ([xshift=-2pt,yshift=-1.5pt]interaction4) rectangle ++ (4pt,3pt);
     	\fill ([xshift=-2pt,yshift=-1.5pt]interaction4bis) rectangle ++ (4pt,3pt);
     	\fill ([xshift=-2pt,yshift=-1.5pt]origin) rectangle ++ (4pt,3pt);
     	\node[left,inner sep=0.2pt] at (In-Blue-Dots1) {${}^{1}$};
     	\node[left,inner sep=0.2pt] at (In-Blue-Dots2) {${}^{k}$};
     	\node[left,inner sep=0.2pt] at (In-Blue-Dots3) {${}^{k{+}1}$};
     	\node[left,inner sep=0.2pt] at (In-Blue-Dots4) {${}^{k{+}n}$}; 
    \end{tikzpicture}}}%
}
\newcommand{\independetLoopMinusEGlabeled}{\vcenter{\hbox{\ifTikzExternal{\tikzsetnextfilename{independetLoopMinusEGlabeled}}
\begin{tikzpicture}[scale=1,x={(1,0)},y={(0,1)}]
		\path 
         (0,0) coordinate (origin)  
         ++(-0.3,-0.3) coordinate (In-Red)
         +(-0.2,-0.2) coordinate (In-Dots)
         (origin)++(0.6,0.6) coordinate (Emit-Green1)
         ++(0.4,0) coordinate (interaction1)
         +(0,-0.3) coordinate (In-Blue1)
         +(0,-0.5) coordinate (In-Blue-Dots1)
         (Emit-Green1)++(0.7,0.7) coordinate (Emit-Green2)
         ++(0.4,0) coordinate (interaction2)
         +(0,-0.3) coordinate (In-Blue2)
         +(0,-0.5) coordinate (In-Blue-Dots2)
         (Emit-Green2)++(0.7,0.7) coordinate (Emit-Green3)
         ++(-0.75,0) coordinate (interaction3)
         +(0,-0.3) coordinate (In-Blue3)
         +(0,-0.5) coordinate (In-Blue-Dots3)
         ++(-0.55,0) coordinate (interaction4)
         +(0,-0.3) coordinate (In-Blue4)
         +(0,-0.5) coordinate (In-Blue-Dots4)
         ++(-0.7,0) coordinate (interaction5)
         +(-0.3,0) coordinate (Out-Green)
         (Emit-Green3)++(0.3,0.3) coordinate (Out-Red)
         +(0.25,0.25) coordinate (Out-Dots)
            ;
        \coordinate (Der1p) at ($ ($(interaction1)!0.4!(Emit-Green1)$) + (-0.0,0.12) $) ;
        \coordinate (Der1m) at ($ ($(interaction1)!0.4!(Emit-Green1)$) - (-0.0,0.12) $) ;
        \coordinate (Der2p) at ($ ($(interaction1)!0.5!(Emit-Green1)$) + (-0.0,0.12) $) ;
        \coordinate (Der2m) at ($ ($(interaction1)!0.5!(Emit-Green1)$) - (-0.0,0.12) $) ;
        \coordinate (Der3p) at ($ ($(interaction2)!0.4!(Emit-Green2)$) + (-0.0,0.12) $) ;
        \coordinate (Der3m) at ($ ($(interaction2)!0.4!(Emit-Green2)$) - (-0.0,0.12) $) ;
        \coordinate (Der4p) at ($ ($(interaction2)!0.5!(Emit-Green2)$) + (-0.0,0.12) $) ;
        \coordinate (Der4m) at ($ ($(interaction2)!0.5!(Emit-Green2)$) - (-0.0,0.12) $) ;
        \coordinate (Der5p) at ($ ($(interaction3)!0.6!(Emit-Green3)$) + (-0.0,0.12) $) ;
        \coordinate (Der5m) at ($ ($(interaction3)!0.6!(Emit-Green3)$) - (-0.0,0.12) $) ;
        \coordinate (Der6p) at ($ ($(interaction3)!0.5!(Emit-Green3)$) + (-0.0,0.12) $) ;
        \coordinate (Der6m) at ($ ($(interaction3)!0.5!(Emit-Green3)$) - (-0.0,0.12) $) ;
        \draw [thick,densely dash dot dot, red] (In-Dots) -- (In-Red) ; 
        \draw [thick, red, directed] (In-Red) -- (origin) ; 
        \draw [thick, red, directed] (origin) -- (Emit-Green1) ; 
        \draw [thick, red, dash dot dot] (Emit-Green1) -- (Emit-Green2) ; 
        \draw [thick, red, directed] (Emit-Green2) -- (Emit-Green3) ; 
        \draw [thick, red, directed] (Emit-Green3) -- (Out-Red) ; 
        \draw [thick,densely dash dot dot, red] (Out-Red) -- (Out-Dots) ; 
        \draw [thick, blue, directed] (origin) -- (interaction5)  ; 
        \draw [thick, blue, directed] (In-Blue1) -- (interaction1)  ; 
        \draw [thick, blue, densely dash dot dot] (In-Blue-Dots1) -- (In-Blue1)  ; 
        \draw [thick, blue, directed] (In-Blue2) -- (interaction2)  ; 
        \draw [thick, blue, densely dash dot dot] (In-Blue-Dots2) -- (In-Blue2)  ; 
        \draw [thick, blue, directed] (In-Blue3) -- (interaction3)  ; 
        \draw [thick, blue, densely dash dot dot] (In-Blue-Dots3) -- (In-Blue3)  ; 
        \draw [thick, blue, directed] (In-Blue4) -- (interaction4)  ; 
        \draw [thick, blue, densely dash dot dot] (In-Blue-Dots4) -- (In-Blue4)  ; 
        \draw [thick, Green, directed] (Emit-Green1) -- (interaction1)  ; 
        \draw [thick, Green, directed] (Emit-Green2) -- (interaction2)  ; 
        \draw [thick, Green, directed] (Emit-Green3) -- (interaction3)  ; 
        \draw [thick, Green, dash dot dot] (interaction3) -- (interaction4)  ; 
        \draw [thick, Green, directed] (interaction4) -- (interaction5)  ; 
        \draw [thick, Green, enddirected] (interaction5) -- (Out-Green)  ; 
        \draw [thick,black] (Der1p) -- (Der1m) ;
        \draw [thick,black] (Der2p) -- (Der2m) ;
        \draw [thick,black] (Der3p) -- (Der3m) ;
        \draw [thick,black] (Der4p) -- (Der4m) ;
        \draw [thick,black,double] (Der5p) -- (Der5m) ;
        \fill (origin) circle (1.2pt);
        \fill (Emit-Green1) circle (1.2pt);
        \fill (Emit-Green2) circle (1.2pt);
        \fill (Emit-Green3) circle (1.2pt);
     	\fill ([xshift=-2pt,yshift=-1.5pt]interaction1) rectangle ++ (4pt,3pt);
     	\fill ([xshift=-2pt,yshift=-1.5pt]interaction2) rectangle ++ (4pt,3pt);
     	\fill ([xshift=-2pt,yshift=-1.5pt]interaction3) rectangle ++ (4pt,3pt);
     	\fill ([xshift=-2pt,yshift=-1.5pt]interaction4) rectangle ++ (4pt,3pt);
     	\fill ([xshift=-2pt,yshift=-1.5pt]interaction5) rectangle ++ (4pt,3pt);
     	\node[left,inner sep=0.2pt] at (In-Blue-Dots1) {${}^{1}$};
     	\node[left,inner sep=0.2pt] at (In-Blue-Dots2) {${}^{k{-}1}$};
     	\node[left,inner sep=0.2pt] at (In-Blue-Dots3) {${}^{k}$};
     	\node[left,inner sep=0.2pt] at (In-Blue-Dots4) {${}^{k{+}n}$};
    \end{tikzpicture}}}%
}
\newcommand{\independetLoopMinusEGsimp}{\vcenter{\hbox{\ifTikzExternal{\tikzsetnextfilename{independetLoopMinusEGsimp}}
\begin{tikzpicture}[scale=1,x={(1,0)},y={(0,1)}]
		\path 
        (0,0) coordinate (origin)  %
        ++(-0.3,-0.3) coordinate (In-Red)%
         +(-0.2,-0.2) coordinate (In-Dots)%
        (origin)++(0.5,0.5) coordinate (interaction1)%
         +(0,-0.3) coordinate (In-Blue1)%
         +(0,-0.5) coordinate (In-Blue-Dots1)%
        (interaction1)++(0.4,0.4) coordinate (interaction2)%
         +(0,-0.3) coordinate (In-Blue2)%
         +(0,-0.5) coordinate (In-Blue-Dots2)%
        (interaction2)++(0.5,0.5) coordinate (interaction3)%
         +(0,-0.3) coordinate (In-Blue3)%
         +(0,-0.5) coordinate (In-Blue-Dots3)%
        ++(-0.7,0) coordinate (interaction4)%
         +(0,-0.3) coordinate (In-Blue4)%
         +(0,-0.5) coordinate (In-Blue-Dots4)%
        ++(-0.7,0) coordinate (interaction5)%
         +(-0.3,0) coordinate (Out-Green)%
        (interaction3)++(0.3,0.3) coordinate (Out-Red)%
         +(0.25,0.25) coordinate (Out-Dots)%
        ;
        \draw [thick,densely dash dot dot, red] (In-Dots) -- (In-Red) ; 
        \draw [thick, red, directed] (In-Red) -- (origin) ;
        \draw [thick, red, directed] (origin) -- (interaction1); 
        \draw [thick, red, dash dot dot] (interaction1) -- (interaction2) ; 
        \draw [thick, red, directed] (interaction2) -- (interaction3) ; 
        \draw [thick, red, directed] (interaction3) -- (Out-Red) ; 
        \draw [thick,densely dash dot dot, red] (Out-Red) -- (Out-Dots) ; 
        \draw [thick, blue, directed] (origin) -- (interaction5)  ; 
        \draw [thick, blue, directed] (In-Blue1) -- (interaction1)  ; 
        \draw [thick, blue, densely dash dot dot] (In-Blue-Dots1) -- (In-Blue1)  ; 
        \draw [thick, blue, directed] (In-Blue2) -- (interaction2)  ; 
        \draw [thick, blue, densely dash dot dot] (In-Blue-Dots2) -- (In-Blue2)  ; 
        \draw [thick, blue, directed] (In-Blue3) -- (interaction3)  ; 
        \draw [thick, blue, densely dash dot dot] (In-Blue-Dots3) -- (In-Blue3)  ; 
        \draw [thick, blue, directed] (In-Blue4) -- (interaction4)  ; 
        \draw [thick, blue, densely dash dot dot] (In-Blue-Dots4) -- (In-Blue4)  ; 
        \draw [thick, Green, dash dot dot] (interaction3) -- (interaction4)  ; 
        \draw [thick, Green, directed] (interaction4) -- (interaction5)  ; 
        \draw [thick, Green, enddirected] (interaction5) -- (Out-Green)  ; 
        \fill (origin) circle (1.2pt);
     	\fill ([xshift=-2pt,yshift=-1.5pt]interaction1) rectangle ++ (4pt,3pt);
     	\fill ([xshift=-2pt,yshift=-1.5pt]interaction2) rectangle ++ (4pt,3pt);
     	\fill ([xshift=-2pt,yshift=-1.5pt]interaction3) rectangle ++ (4pt,3pt);
     	\fill ([xshift=-2pt,yshift=-1.5pt]interaction4) rectangle ++ (4pt,3pt);
     	\fill ([xshift=-2pt,yshift=-1.5pt]interaction5) rectangle ++ (4pt,3pt);
     	\node[left,inner sep=0.2pt] at (In-Blue-Dots1) {${}^{1}$};
     	\node[right,inner sep=0pt] at (In-Blue-Dots2) {$\,{}^{k{-}1}$};
     	\node[left,inner sep=0.2pt] at (In-Blue-Dots3) {${}^{k}$};
     	\node[left,inner sep=0.2pt] at (In-Blue-Dots4) {${}^{k{+}n}$};
    \end{tikzpicture}}}%
}
\newcommand{\independetLoopMinusEGsimpIntegrated}{\vcenter{\hbox{\ifTikzExternal{\tikzsetnextfilename{independetLoopMinusEGsimpIntegrated}}
\begin{tikzpicture}[scale=1,x={(1,0)},y={(0,1)}]
		\path 
        (0,0) coordinate (origin)  
        ++(-0.3,-0.3) coordinate (In-Red)
         +(-0.2,-0.2) coordinate (In-Dots)
        (origin)++(45:0.6) coordinate (center)
        ++ (-90:0.6) coordinate (interaction1)
         +(0,-0.3) coordinate (In-Blue1)
         +(0,-0.5) coordinate (In-Blue-Dots1)
        (center) ++ (-45:0.6) coordinate (interaction2)
         +(0,-0.3) coordinate (In-Blue2)
         +(0,-0.5) coordinate (In-Blue-Dots2)
        (center) ++ (45:0.6)  coordinate (interaction3)
         +(0,-0.3) coordinate (In-Blue3)
         +(0,-0.5) coordinate (In-Blue-Dots3)
        (center) ++ (135:0.6) coordinate (interaction4)
        ++ (-0.3,0) coordinate (interaction4bis)
         +(0,-0.3) coordinate (In-Blue4)
         +(0,-0.5) coordinate (In-Blue-Dots4)
        (origin) + (-0.3,0) coordinate (Out-Green)
        (interaction3)++(0.3,0.3) coordinate (Out-Red)
         +(0.25,0.25) coordinate (Out-Dots)
            ;
        \draw [thick,densely dash dot dot, red] (In-Dots) -- (In-Red) ; 
        \draw [thick, red, directed] (In-Red) -- (origin)  ;
        \drawarc[thick, red, directed]{center}{origin}{interaction1} ;
        \drawarc[thick, red, dash dot dot]{center}{interaction1}{interaction2} ;
        \drawarc[thick, red, directed]{center}{interaction2}{interaction3} ;
        \draw [thick, red, directed] (interaction3) -- (Out-Red) ; 
        \draw [thick,densely dash dot dot, red] (Out-Red) -- (Out-Dots) ; 
        \draw [thick, blue, directed] (In-Blue1) -- (interaction1)  ; 
        \draw [thick, blue, densely dash dot dot] (In-Blue-Dots1) -- (In-Blue1)  ; 
        \draw [thick, blue, directed] (In-Blue2) -- (interaction2)  ; 
        \draw [thick, blue, densely dash dot dot] (In-Blue-Dots2) -- (In-Blue2)  ; 
        \draw [thick, blue, directed] (In-Blue3) -- (interaction3)  ; 
        \draw [thick, blue, densely dash dot dot] (In-Blue-Dots3) -- (In-Blue3)  ; 
        \draw [thick, blue, directed] (In-Blue4) -- (interaction4bis)  ; 
        \draw [thick, blue, densely dash dot dot] (In-Blue-Dots4) -- (In-Blue4)  ; 
        \drawarc [thick, Green, dash dot dot] {center}{interaction3}{interaction4}  ; 
        \drawarc [thick, Green, directed] {center}{interaction4}{origin} ;  
        \draw [thick, Green, enddirected] (origin) -- (Out-Green)  ; 
        \draw [thick,black,densely dotted] (interaction4) -- (interaction4bis) ;
     	\fill ([xshift=-2pt,yshift=-1.5pt]interaction1) rectangle ++ (4pt,3pt);
     	\fill ([xshift=-2pt,yshift=-1.5pt]interaction2) rectangle ++ (4pt,3pt);
     	\fill ([xshift=-2pt,yshift=-1.5pt]interaction3) rectangle ++ (4pt,3pt);
     	\fill ([xshift=-2pt,yshift=-1.5pt]interaction4) rectangle ++ (4pt,3pt);
     	\fill ([xshift=-2pt,yshift=-1.5pt]interaction4bis) rectangle ++ (4pt,3pt);
     	\fill ([xshift=-2pt,yshift=-1.5pt]origin) rectangle ++ (4pt,3pt);
     	\node[left,inner sep=0.2pt] at (In-Blue-Dots1) {${}^{1}$};
     	\node[right,inner sep=0pt] at (In-Blue-Dots2) {$\,{}^{k{-}1}$};
     	\node[left,inner sep=0.2pt] at (In-Blue-Dots3) {${}^{k}$};
     	\node[left,inner sep=0.2pt] at (In-Blue-Dots4) {${}^{k{+}n}$};
    \end{tikzpicture}}}%
}
\newcommand{\GammaTwointeractingPrecedingGammaOnebis}{\vcenter{\hbox{\ifTikzExternal{\tikzsetnextfilename{Gamma2interactingPrecedingGamma1bis}}
\begin{tikzpicture}[scale=1,x={(1,0)},y={(0,1)}]
		\path 
         (0,0) coordinate (origin)  
         ++(-0.4,-0.4) coordinate (In-Red)
         +(-0.2,-0.2) coordinate (In-Dots)
         (origin)++(1.5,1.5) coordinate (Emit-Green)
         ++(-0.6,0) coordinate (interaction1)
         +(0,-0.3) coordinate (In-Blue1)
         +(0,-0.5) coordinate (In-Blue-Dots1)
         ++(-0.45,0) coordinate (interaction2)
         +(0,-0.3) coordinate (In-Blue2)
         +(0,-0.5) coordinate (In-Blue-Dots2)
         ++(-0.45,0) coordinate (interaction3)
         +(-0.3,0) coordinate (Out-Green)
         (Emit-Green)++(0.3,0.3) coordinate (Out-Red)
         +(0.25,0.25) coordinate (Out-Dots)
            ;
        \coordinate (Der1p) at ($ ($(origin)!0.85!(Emit-Green)$) + (-0.07,0.07) $) ;
        \coordinate (Der1m) at ($ ($(origin)!0.85!(Emit-Green)$) - (-0.07,0.07) $) ;
        \coordinate (Der2p) at ($ ($(origin)!0.9!(Emit-Green)$) + (-0.07,0.07) $) ;
        \coordinate (Der2m) at ($ ($(origin)!0.9!(Emit-Green)$) - (-0.07,0.07) $) ;
        \draw [thick,densely dash dot dot, red] (In-Dots) -- (In-Red) ; 
        \draw [thick, red, directed] (In-Red) -- (origin) ; 
        \draw [thick, red, directed] (origin) -- (Emit-Green) ; 
        \draw [thick, red, directed] (Emit-Green) -- (Out-Red) ; 
        \draw [thick,densely dash dot dot, red] (Out-Red) -- (Out-Dots) ; 
        \draw [thick, blue, directed] (origin) -- (interaction3)  ; 
        \draw [thick, blue, directed] (In-Blue1) -- (interaction1)  ; 
        \draw [thick, blue, densely dash dot dot] (In-Blue-Dots1) -- (In-Blue1)  ; 
        \draw [thick, blue, directed] (In-Blue2) -- (interaction2)  ; 
        \draw [thick, blue, densely dash dot dot] (In-Blue-Dots2) -- (In-Blue2)  ; 
        \draw [thick, Green, directed] (Emit-Green) -- (interaction1)  ; 
        \draw [thick, dash dot dot, Green] (interaction1) -- (interaction2)  ; 
        \draw [thick, Green, directed] (interaction2) -- (interaction3)  ; 
        \draw [thick, Green, enddirected] (interaction3) -- (Out-Green)  ; 
        \draw [thick,black] (Der1p) -- (Der1m) ;
        \draw [thick,black] (Der2p) -- (Der2m) ;
        \fill (origin) circle (1.2pt);
        \fill (Emit-Green) circle (1.2pt);
     	\fill ([xshift=-2pt,yshift=-1.5pt]interaction1) rectangle ++ (4pt,3pt);
     	\fill ([xshift=-2pt,yshift=-1.5pt]interaction2) rectangle ++ (4pt,3pt);
     	\fill ([xshift=-2pt,yshift=-1.5pt]interaction3) rectangle ++ (4pt,3pt);
     	\node[left,inner sep=0.2pt] at (In-Blue-Dots1) {${}^{1}$};
     	\node[left,inner sep=0.2pt] at (In-Blue-Dots2) {${}^{n}$};
    \end{tikzpicture}}}%
}
\newcommand{\GammaTwointeractingPrecedingGammaOneIntegratedbis}{\vcenter{\hbox{\ifTikzExternal{\tikzsetnextfilename{Gamma2interactingPrecedingGamma1integratedbis}}
\begin{tikzpicture}[scale=1,x={(1,0)},y={(0,1)}]
		\path 
         (0,0) coordinate (origin)%
         ++(0.4,-0.4) coordinate (In-Red)%
         +(0.2,-0.2) coordinate (In-Dots)%
         ($(origin)+(-0.5,0)$) ++(120:0.5) coordinate (interaction1)%
          +(0,-0.3) coordinate (In-Blue1)%
          +(0,-0.5) coordinate (In-Blue-Dots1)%
          +(0,0.5) coordinate (blank)%
         ($(origin)+(-0.5,0)$) ++(240:0.5) coordinate (interaction2)%
         +(0,-0.3) coordinate (In-Blue2)%
         +(0,-0.5) coordinate (In-Blue-Dots2)%
         (origin) +(-0.3,0) coordinate (Out-Green)%
         (origin) ++(0.4,0.4) coordinate (Out-Red)%
         +(0.25,0.25) coordinate (Out-Dots)%
        ;
        \draw [thick,densely dash dot dot, red] (In-Dots) -- (In-Red) ; 
        \draw [thick, red, directed] (In-Red) -- (origin) ; 
        \draw [thick, red, directed] (origin) -- (Out-Red) ; 
        \draw [thick,densely dash dot dot, red] (Out-Red) -- (Out-Dots) ; 
        \draw [thick, blue, directed] (In-Blue1) -- (interaction1)  ; 
        \draw [thick, blue, densely dash dot dot] (In-Blue-Dots1) -- (In-Blue1)  ; 
        \draw [thick, blue, directed] (In-Blue2) -- (interaction2)  ; 
        \draw [thick, blue, densely dash dot dot] (In-Blue-Dots2) -- (In-Blue2)  ; 
        \draw [thick, Green, directed] (origin) arc (0:120:0.5) ; 
        \draw [thick, Green, dash dot dot] (interaction1) arc (120:240:0.5) ; 
        \draw [thick, Green, directed] (interaction2) arc (240:360:0.5) ; 
        \draw [thick, Green, enddirected] (origin) -- (Out-Green)  ; 
        \fill (origin) circle (1.2pt);
     	\fill ([xshift=-2pt,yshift=-1.5pt]origin) rectangle ++ (4pt,3pt);
     	\fill ([xshift=-2pt,yshift=-1.5pt]interaction1) rectangle ++ (4pt,3pt);
     	\fill ([xshift=-2pt,yshift=-1.5pt]interaction2) rectangle ++ (4pt,3pt);
     	\node[right,inner sep=1pt] at (In-Blue-Dots1) {${}^{1}$};
     	\node[right,inner sep=0pt] at (In-Blue-Dots2) {$\,{}^{n}$};
    \end{tikzpicture}}}%
}
\newcommand{\GammaTwointeractingPrecedingGammaOneSubbedbis}{\vcenter{\hbox{\ifTikzExternal{\tikzsetnextfilename{Gamma2interactingPrecedingGamma1Subbedbis}}
\begin{tikzpicture}[scale=1,x={(1,0)},y={(0,1)}]
		\path 
         (0,0) coordinate (origin)  
         ++(-0.4,-0.4) coordinate (In-Red)
         +(-0.2,-0.2) coordinate (In-Dots)
         (origin)++(0.5,0.5) coordinate (Emit-Green1)
         ++(0.4,0) coordinate (interaction1)
         +(0,-0.3) coordinate (In-Blue1)
         +(0,-0.5) coordinate (In-Blue-Dots1)
         (Emit-Green1)++(0.5,0.5) coordinate (Emit-Green2)
         ++(0.4,0) coordinate (interaction2)
         +(0,-0.3) coordinate (In-Blue2)
         +(0,-0.5) coordinate (In-Blue-Dots2)
         (Emit-Green2)++(0.5,0.5) coordinate (Emit-Green3)
         ++(-1.5,0) coordinate (interaction3)
         +(-0.3,0) coordinate (Out-Green)
         (Emit-Green3)++(0.3,0.3) coordinate (Out-Red)
         +(0.25,0.25) coordinate (Out-Dots)
            ;
        \coordinate (Der5p) at ($ ($(interaction1)!0.4!(Emit-Green1)$) + (-0.0,0.12) $) ;
        \coordinate (Der5m) at ($ ($(interaction1)!0.4!(Emit-Green1)$) - (-0.0,0.12) $) ;
        \coordinate (Der6p) at ($ ($(interaction1)!0.5!(Emit-Green1)$) + (-0.0,0.12) $) ;
        \coordinate (Der6m) at ($ ($(interaction1)!0.5!(Emit-Green1)$) - (-0.0,0.12) $) ;
        \coordinate (Der3p) at ($ ($(interaction2)!0.4!(Emit-Green2)$) + (-0.0,0.12) $) ;
        \coordinate (Der3m) at ($ ($(interaction2)!0.4!(Emit-Green2)$) - (-0.0,0.12) $) ;
        \coordinate (Der4p) at ($ ($(interaction2)!0.5!(Emit-Green2)$) + (-0.0,0.12) $) ;
        \coordinate (Der4m) at ($ ($(interaction2)!0.5!(Emit-Green2)$) - (-0.0,0.12) $) ;
        \coordinate (Der1p) at ($ ($(interaction3)!0.9!(Emit-Green3)$) + (-0.0,0.12) $) ;
        \coordinate (Der1m) at ($ ($(interaction3)!0.9!(Emit-Green3)$) - (-0.0,0.12) $) ;
        \coordinate (Der2p) at ($ ($(interaction3)!0.85!(Emit-Green3)$) + (-0.0,0.12) $) ;
        \coordinate (Der2m) at ($ ($(interaction3)!0.85!(Emit-Green3)$) - (-0.0,0.12) $) ;
        \draw [thick,densely dash dot dot, red] (In-Dots) -- (In-Red) ; 
        \draw [thick, red, directed] (In-Red) -- (origin) ; 
        \draw [thick, red, directed] (origin) -- (Emit-Green1) ; 
        \draw [thick, red, dash dot dot] (Emit-Green1) -- (Emit-Green2) ; 
        \draw [thick, red, directed] (Emit-Green2) -- (Emit-Green3) ; 
        \draw [thick, red, directed] (Emit-Green3) -- (Out-Red) ; 
        \draw [thick,densely dash dot dot, red] (Out-Red) -- (Out-Dots) ; 
        \draw [thick, blue, directed] (origin) -- (interaction3)  ; 
        \draw [thick, blue, directed] (In-Blue1) -- (interaction1)  ; 
        \draw [thick, blue, densely dash dot dot] (In-Blue-Dots1) -- (In-Blue1)  ; 
        \draw [thick, blue, directed] (In-Blue2) -- (interaction2)  ; 
        \draw [thick, blue, densely dash dot dot] (In-Blue-Dots2) -- (In-Blue2)  ; 
        \draw [thick, Green, directed] (Emit-Green1) -- (interaction1)  ; 
        \draw [thick, Green, directed] (Emit-Green2) -- (interaction2)  ; 
        \draw [thick, Green, directed] (Emit-Green3) -- (interaction3)  ; 
        \draw [thick, Green, enddirected] (interaction3) -- (Out-Green)  ; 
        \draw [thick,black] (Der1p) -- (Der1m) ;
        \draw [thick,black] (Der2p) -- (Der2m) ;
        \draw [thick,black] (Der3p) -- (Der3m) ;
        \draw [thick,black] (Der4p) -- (Der4m) ;
        \draw [thick,black] (Der5p) -- (Der5m) ;
        \draw [thick,black] (Der6p) -- (Der6m) ;
        \fill (origin) circle (1.2pt);
        \fill (Emit-Green1) circle (1.2pt);
        \fill (Emit-Green2) circle (1.2pt);
        \fill (Emit-Green3) circle (1.2pt);
     	\fill ([xshift=-2pt,yshift=-1.5pt]interaction1) rectangle ++ (4pt,3pt);
     	\fill ([xshift=-2pt,yshift=-1.5pt]interaction2) rectangle ++ (4pt,3pt);
     	\fill ([xshift=-2pt,yshift=-1.5pt]interaction3) rectangle ++ (4pt,3pt);
     	\node[left,inner sep=0.2pt] at (In-Blue-Dots1) {${}^{1}$};
     	\node[left,inner sep=0.2pt] at (In-Blue-Dots2) {${}^{n}$};
    \end{tikzpicture}}}%
}
\newcommand{\GammaTwointeractingPrecedingGammaOneSubbedIntegratedbis}{\vcenter{\hbox{\ifTikzExternal{\tikzsetnextfilename{Gamma2interactingPrecedingGamma1SubbedIntegratedbis}}
\begin{tikzpicture}[scale=1,x={(1,0)},y={(0,1)}]
		\path 
         (0,0) coordinate (origin)%
         ++(0.4,-0.4) coordinate (In-Red)%
         +(0.2,-0.2) coordinate (In-Dots)%
         ($(origin)+(-0.5,0)$) ++(120:0.5) coordinate (interaction1)%
          +(0,-0.3) coordinate (In-Blue1)%
          +(0,-0.5) coordinate (In-Blue-Dots1)%
          +(0,0.5) coordinate (blank)%
         ($(origin)+(-0.5,0)$) ++(240:0.5) coordinate (interaction2)%
         +(0,-0.3) coordinate (In-Blue2)%
         +(0,-0.5) coordinate (In-Blue-Dots2)%
         (origin) +(-0.3,0) coordinate (Out-Green)%
         (origin) ++(0.4,0.4) coordinate (Out-Red)%
         +(0.25,0.25) coordinate (Out-Dots)%
        ;
        \draw [thick,densely dash dot dot, red] (In-Dots) -- (In-Red) ; 
        \draw [thick, red, directed] (In-Red) -- (origin) ; 
        \draw [thick, red, directed] (origin) arc (0:120:0.5) ; 
        \draw [thick, red, dash dot dot] (interaction1) arc (120:240:0.5) ; 
        \draw [thick, red, directed] (interaction2) arc (240:360:0.5) ; 
        \draw [thick, red, directed] (origin) -- (Out-Red) ; 
        \draw [thick,densely dash dot dot, red] (Out-Red) -- (Out-Dots) ; 
        \draw [thick, blue, directed] (In-Blue1) -- (interaction1)  ; 
        \draw [thick, blue, densely dash dot dot] (In-Blue-Dots1) -- (In-Blue1)  ; 
        \draw [thick, blue, directed] (In-Blue2) -- (interaction2)  ; 
        \draw [thick, blue, densely dash dot dot] (In-Blue-Dots2) -- (In-Blue2)  ; 
        \draw [thick, Green, enddirected] (origin) -- (Out-Green)  ; 
        \fill (origin) circle (1.2pt);
     	\fill ([xshift=-2pt,yshift=-1.5pt]origin) rectangle ++ (4pt,3pt);
     	\fill ([xshift=-2pt,yshift=-1.5pt]interaction1) rectangle ++ (4pt,3pt);
     	\fill ([xshift=-2pt,yshift=-1.5pt]interaction2) rectangle ++ (4pt,3pt);
     	\node[right,inner sep=1pt] at (In-Blue-Dots1) {${}^{1}$};
     	\node[right,inner sep=0pt] at (In-Blue-Dots2) {$\,{}^{n}$};
    \end{tikzpicture}}}%
}
\newcommand{\interactingGammaOne}{\vcenter{\hbox{\ifTikzExternal{\tikzsetnextfilename{interactingGamma1}}%
\begin{tikzpicture}[scale=1,x={(1,0)},y={(0,1)}]
		\path 
         (0,0) coordinate (origin)  
         +(-0.65,-0.65) coordinate (In-Dots)
         +(-0.4,-0.4) coordinate (In-Red)
         +(0.4,0.4) coordinate (Out-Red)
         +(0.65,0.65) coordinate (Out-Dots)
         +(0,0.5) coordinate (Out-Blue)
         ++(-0.7,0) coordinate (interaction)
         +(-0.3,0) coordinate (Out-Green)
         ++(0,-0.3) coordinate (In-Blue)
         ++(0,-0.2) coordinate (In-Blue-Dots)
            ;
        \draw [thick, densely dash dot dot, red] (In-Red) -- (In-Dots) ; 
        \draw [thick, red, directed] (In-Red) -- (origin) ; 
        \draw [thick, red, directed] (origin) -- (Out-Red) ; 
        \draw [thick,densely dash dot dot, red] (Out-Red) -- (Out-Dots) ; 
        \draw [thick, densely dashed, blue, enddirected] (origin) -- (Out-Blue)  ; 
        \draw [thick, blue, directed] (In-Blue) -- (interaction)  ; 
        \draw [thick, blue, densely dash dot dot] (In-Blue-Dots) -- (In-Blue)  ; 
        \draw [thick, Green, directed] (origin) -- (interaction)  ; 
        \draw [thick, Green, enddirected] (interaction) -- (Out-Green)  ; 
        \fill (origin) circle (1.2pt);
     	\fill ([xshift=-2pt,yshift=-1.5pt]interaction) rectangle ++ (4pt,3pt);
    \end{tikzpicture}}}%
}
\newcommand{\interactingGammaOneBis}{\vcenter{\hbox{\ifTikzExternal{\tikzsetnextfilename{interactingGamma1bis}}
\begin{tikzpicture}[scale=1,x={(1,0)},y={(0,1)}]
		\path 
         (0,0) coordinate (origin)  
         +(-0.65,-0.65) coordinate (In-Dots)
         +(-0.4,-0.4) coordinate (In-Red)
         +(0.4,0.4) coordinate (Out-Red)
         +(0.65,0.65) coordinate (Out-Dots)
         +(0,0.5) coordinate (Out-Blue)
         ++(-0.7,0) coordinate (interaction)
         +(-0.3,0) coordinate (Out-Green)
         ++(0,-0.3) coordinate (In-Blue)
         ++(0,-0.2) coordinate (In-Blue-Dots)
            ;
        \draw [thick,densely dash dot dot, red] (In-Red) -- (In-Dots) ; 
        \draw [thick, red, directed] (In-Red) -- (origin) ; 
        \draw [thick, red, directed] (origin) -- (Out-Red) ; 
        \draw [thick,densely dash dot dot, red] (Out-Red) -- (Out-Dots) ; 
        \draw [thick, densely dashed, blue, enddirected] (origin) -- (Out-Blue)  ; 
        \draw [thick, blue, directed] (In-Blue) -- (interaction)  ; 
        \draw [thick, blue, densely dash dot dot] (In-Blue-Dots) -- (In-Blue)  ; 
        \draw [thick, Green, directed] (origin) -- (interaction)  ; 
        \draw [thick, Green, enddirected] (interaction) -- (Out-Green)  ; 
        \fill (origin) circle (1.2pt);
     	\fill ([xshift=-2pt,yshift=-1.5pt]interaction) rectangle ++ (4pt,3pt);
     	\node[below right,inner sep=-1pt] (origin) {\scalebox{0.8}{$\gamma_1^{\delta \chi}$}};
    \end{tikzpicture}}}%
}
\newcommand{\GammaOneSubbedByGammaTwopbis}{\vcenter{\hbox{\ifTikzExternal{\tikzsetnextfilename{Gamma1SubbedByGamma2pbis}}%
\begin{tikzpicture}[scale=1,x={(1,0)},y={(0,1)}]
		\path 
         (0,0) coordinate (origin)  
         +(-0.65,-0.65) coordinate (In-Dots)
         +(-0.4,-0.4) coordinate (In-Red)
         +(0,0.3) coordinate (Out-Blue)
         ++(0.4,0.4) coordinate (Emit-Green)
         +(0.3,0.3) coordinate (Out-Red)
         +(0.55,0.55) coordinate (Out-Dots)
         ++(-0.7,0) coordinate (interaction)
         +(-0.3,0) coordinate (Out-Green)
         ++(0,-0.3) coordinate (In-Blue)
         ++(0,-0.2) coordinate (In-Blue-Dots)
            ;
        \coordinate (Der1p) at ($ ($(origin)!0.7!(Emit-Green)$) + (-0.07,0.07) $) ;
        \coordinate (Der1m) at ($ ($(origin)!0.7!(Emit-Green)$) - (-0.07,0.07) $) ;
        \coordinate (Der2p) at ($ ($(origin)!0.6!(Emit-Green)$) + (-0.07,0.07) $) ;
        \coordinate (Der2m) at ($ ($(origin)!0.6!(Emit-Green)$) - (-0.07,0.07) $) ;
        \draw [thick,densely dash dot dot, red] (In-Dots) -- (In-Red) ; 
        \draw [thick, red, directed] (In-Red) -- (origin) ; 
        \draw [thick, red, directed] (origin) -- (Emit-Green) ; 
        \draw [thick, red, directed] (Emit-Green) -- (Out-Red) ; 
        \draw [thick,densely dash dot dot, red] (Out-Red) -- (Out-Dots) ; 
        \draw [thick, densely dashed, blue, enddirected] (origin) -- (Out-Blue)  ; 
        \draw [thick, blue, directed] (In-Blue) -- (interaction)  ; 
        \draw [thick, blue, densely dash dot dot] (In-Blue-Dots) -- (In-Blue)  ; 
        \draw [thick, Green, directed] (Emit-Green) -- (interaction)  ; 
        \draw [thick, Green, enddirected] (interaction) -- (Out-Green)  ; 
        \draw [thick,black] (Der1p) -- (Der1m) ;
        \draw [thick,black] (Der2p) -- (Der2m) ;
        \fill (origin) circle (1.2pt);
        \fill (Emit-Green) circle (1.2pt);
     	\fill ([xshift=-2pt,yshift=-1.5pt]interaction) rectangle ++ (4pt,3pt);
     	\node[below right,inner sep=-0.5pt] at (origin) {\scalebox{0.8}{$\gamma_1^{1}$}};
     	\node[below right,inner sep=-0.5pt] at (Emit-Green) {\scalebox{0.8}{$\gamma_2^{+}$}};
    \end{tikzpicture}}}%
}
\newcommand{\GammaOneSubbedByGammaTwop}{\vcenter{\hbox{\ifTikzExternal{\tikzsetnextfilename{Gamma1SubbedByGamma2p}}
\begin{tikzpicture}[scale=1,x={(1,0)},y={(0,1)}]
		\path 
         (0,0) coordinate (origin)  
         +(-0.65,-0.65) coordinate (In-Dots)
         +(-0.4,-0.4) coordinate (In-Red)
         +(0,0.3) coordinate (Out-Blue)
         ++(0.4,0.4) coordinate (Emit-Green)
         +(0.3,0.3) coordinate (Out-Red)
         +(0.55,0.55) coordinate (Out-Dots)
         ++(-0.7,0) coordinate (interaction)
         +(-0.3,0) coordinate (Out-Green)
         ++(0,-0.3) coordinate (In-Blue)
         ++(0,-0.2) coordinate (In-Blue-Dots)
            ;
        \coordinate (Der1p) at ($ ($(origin)!0.7!(Emit-Green)$) + (-0.07,0.07) $) ;
        \coordinate (Der1m) at ($ ($(origin)!0.7!(Emit-Green)$) - (-0.07,0.07) $) ;
        \coordinate (Der2p) at ($ ($(origin)!0.6!(Emit-Green)$) + (-0.07,0.07) $) ;
        \coordinate (Der2m) at ($ ($(origin)!0.6!(Emit-Green)$) - (-0.07,0.07) $) ;
        \draw [thick,densely dash dot dot, red] (In-Dots) -- (In-Red) ; 
        \draw [thick, red, directed] (In-Red) -- (origin) ; 
        \draw [thick, red, directed] (origin) -- (Emit-Green) ; 
        \draw [thick, red, directed] (Emit-Green) -- (Out-Red) ; 
        \draw [thick,densely dash dot dot, red] (Out-Red) -- (Out-Dots) ; 
        \draw [thick, densely dashed, blue, enddirected] (origin) -- (Out-Blue)  ; 
        \draw [thick, blue, directed] (In-Blue) -- (interaction)  ; 
        \draw [thick, blue, densely dash dot dot] (In-Blue-Dots) -- (In-Blue)  ; 
        \draw [thick, Green, directed] (Emit-Green) -- (interaction)  ; 
        \draw [thick, Green, enddirected] (interaction) -- (Out-Green)  ; 
        \draw [thick,black] (Der1p) -- (Der1m) ;
        \draw [thick,black] (Der2p) -- (Der2m) ;
        \fill (origin) circle (1.2pt);
        \fill (Emit-Green) circle (1.2pt);
     	\fill ([xshift=-2pt,yshift=-1.5pt]interaction) rectangle ++ (4pt,3pt);
     	\node[below right,inner sep=1pt] at (origin) {${}^{\gamma_1^1}$};
     	\node[below right,inner sep=0.2pt] at (Emit-Green) {${}^{\gamma_2^+}$};
    \end{tikzpicture}}}%
}
\newcommand{\interactingGvertex}{\vcenter{\hbox{\ifTikzExternal{\tikzsetnextfilename{interactingGvertex}}%
\begin{tikzpicture}[scale=1,x={(1,0)},y={(0,1)}]
		\path 
         (0,0) coordinate (origin)  
         +(-0.65,-0.65) coordinate (In-Dots)
         +(-0.4,-0.4) coordinate (In-Red)
         +(0.4,0.4) coordinate (Out-Red)
         +(0.65,0.65) coordinate (Out-Dots)
         +(0,0.5) coordinate (Out-Blue)
         ++(-0.7,0) coordinate (interaction1)
         +(0,-0.3) coordinate (In-Blue1)
         +(0,-0.5) coordinate (In-Blue-Dots1)
         ++(-0.4,0) coordinate (interaction2)
         +(-0.3,0) coordinate (Out-Green)
         ++(0,-0.3) coordinate (In-Blue2)
         ++(0,-0.2) coordinate (In-Blue-Dots2)
            ;
        \coordinate (Der1p) at ($ ($(origin)!0.3!(interaction1)$) + (0.0,0.12) $) ;
        \coordinate (Der1m) at ($ ($(origin)!0.3!(interaction1)$) - (0.0,0.12) $) ;
        \coordinate (Der2p) at ($ ($(origin)!0.4!(interaction1)$) + (-0.0,0.12) $) ;
        \coordinate (Der2m) at ($ ($(origin)!0.4!(interaction1)$) - (-0.0,0.12) $) ;    
        \draw [thick,densely dash dot dot, red] (In-Red) -- (In-Dots) ; 
        \draw [thick, red, directed] (In-Red) -- (origin) ; 
        \draw [thick, red, directed] (origin) -- (Out-Red) ; 
        \draw [thick,densely dash dot dot, red] (Out-Red) -- (Out-Dots) ; 
        \draw [thick, blue, directed] (In-Blue1) -- (interaction1)  ; 
        \draw [thick, blue, densely dash dot dot] (In-Blue-Dots1) -- (In-Blue1)  ; 
        \draw [thick, blue, directed] (In-Blue2) -- (interaction2)  ; 
        \draw [thick, blue, densely dash dot dot] (In-Blue-Dots2) -- (In-Blue2)  ; 
        \draw [thick, Green, directed] (origin) -- (interaction1)  ; 
        \draw [thick, Green, directed] (interaction1) -- (interaction2)  ; 
        \draw [thick, Green, enddirected] (interaction2) -- (Out-Green)  ; 
        \draw [thick,black] (Der1p) -- (Der1m) ;
        \draw [thick,black] (Der2p) -- (Der2m) ;
        \fill (origin) circle (1.2pt);
     	\fill ([xshift=-2pt,yshift=-1.5pt]interaction1) rectangle ++ (4pt,3pt);
     	\fill ([xshift=-2pt,yshift=-1.5pt]interaction2) rectangle ++ (4pt,3pt);
     	\node[left,inner sep=0.2pt] at (In-Blue-Dots1) {${}^{1}$};
     	\node[left,inner sep=0.2pt] at (In-Blue-Dots2) {${}^{2}$};
     	\node[below right,inner sep=.5pt] at (origin) {${}^{{\gamma_2^-}}$};
    \end{tikzpicture}}}%
}
\newcommand{\interactingGvertexbis}{\vcenter{\hbox{\ifTikzExternal{\tikzsetnextfilename{interactingGvertexbis}}
\begin{tikzpicture}[scale=1,x={(1,0)},y={(0,1)}]
\path 
(0,0) coordinate (origin)  
+(-0.65,-0.65) coordinate (In-Dots)
+(-0.4,-0.4) coordinate (In-Red)
+(0.4,0.4) coordinate (Out-Red)
+(0.65,0.65) coordinate (Out-Dots)
+(0,0.5) coordinate (Out-Blue)
++(-0.7,0) coordinate (interaction1)
+(0,-0.3) coordinate (In-Blue1)
+(0,-0.5) coordinate (In-Blue-Dots1)
++(-0.4,0) coordinate (interaction2)
+(-0.3,0) coordinate (Out-Green)
++(0,-0.3) coordinate (In-Blue2)
++(0,-0.2) coordinate (In-Blue-Dots2)
;
\coordinate (Der1p) at ($ ($(origin)!0.3!(interaction1)$) + (0.0,0.12) $) ;
\coordinate (Der1m) at ($ ($(origin)!0.3!(interaction1)$) - (0.0,0.12) $) ;
\coordinate (Der2p) at ($ ($(origin)!0.4!(interaction1)$) + (-0.0,0.12) $) ;
\coordinate (Der2m) at ($ ($(origin)!0.4!(interaction1)$) - (-0.0,0.12) $) ;    
\draw [thick,densely dash dot dot, red] (In-Red) -- (In-Dots) ; 
\draw [thick, red, directed] (In-Red) -- (origin) ; 
\draw [thick, red, directed] (origin) -- (Out-Red) ; 
\draw [thick,densely dash dot dot, red] (Out-Red) -- (Out-Dots) ; 
\draw [thick, blue, directed] (In-Blue1) -- (interaction1)  ; 
\draw [thick, blue, densely dash dot dot] (In-Blue-Dots1) -- (In-Blue1)  ; 
\draw [thick, blue, directed] (In-Blue2) -- (interaction2)  ; 
\draw [thick, blue, densely dash dot dot] (In-Blue-Dots2) -- (In-Blue2)  ; 
\draw [thick, Green, directed] (origin) -- (interaction1)  ; 
\draw [thick, Green, directed] (interaction1) -- (interaction2)  ; 
\draw [thick, Green, enddirected] (interaction2) -- (Out-Green)  ; 
\draw [thick,black] (Der1p) -- (Der1m) ;
\draw [thick,black] (Der2p) -- (Der2m) ;
\fill (origin) circle (1.2pt);
\fill ([xshift=-2pt,yshift=-1.5pt]interaction1) rectangle ++ (4pt,3pt);
\fill ([xshift=-2pt,yshift=-1.5pt]interaction2) rectangle ++ (4pt,3pt);
\node[below right,inner sep=-0.5pt] at (origin) {\scalebox{0.8}{$\gamma_2^{-}$}};
\node[left,inner sep=0.2pt] at (In-Blue-Dots1) {${}^{1}$};
\node[left,inner sep=0.2pt] at (In-Blue-Dots2) {${}^{2}$};
\end{tikzpicture}}}%
}
\newcommand{\interactingGvertexIntegrated}{\vcenter{\hbox{\ifTikzExternal{\tikzsetnextfilename{interactingGvertexIntegrated}}
\begin{tikzpicture}[scale=1,x={(1,0)},y={(0,1)}]
\path 
(0,0) coordinate (origin)  
++(-0.7,0) coordinate (interaction1)        
+(0.65,-0.65) coordinate (In-Dots)
+(0.4,-0.4) coordinate (In-Red)
+(0.4,0.4) coordinate (Out-Red)
+(0.65,0.65) coordinate (Out-Dots)
+(0,-0.3) coordinate (In-Blue1)
+(0,-0.5) coordinate (In-Blue-Dots1)
++(-0.4,0) coordinate (interaction2)
+(-0.3,0) coordinate (Out-Green)
++(0,-0.3) coordinate (In-Blue2)
++(0,-0.2) coordinate (In-Blue-Dots2)
;
\coordinate (Der1p) at ($ ($(origin)!0.3!(interaction1)$) + (0.0,0.12) $) ;
\coordinate (Der1m) at ($ ($(origin)!0.3!(interaction1)$) - (0.0,0.12) $) ;
\coordinate (Der2p) at ($ ($(origin)!0.4!(interaction1)$) + (-0.0,0.12) $) ;
\coordinate (Der2m) at ($ ($(origin)!0.4!(interaction1)$) - (-0.0,0.12) $) ;    
\draw [thick,densely dash dot dot, red] (In-Red) -- (In-Dots) ; 
\draw [thick, red, directed] (In-Red) -- (interaction1) ; 
\draw [thick, red, directed] (interaction1) -- (Out-Red) ; 
\draw [thick,densely dash dot dot, red] (Out-Red) -- (Out-Dots) ; 
\draw [thick, blue, directed] (In-Blue1) -- (interaction1)  ; 
\draw [thick, blue, densely dash dot dot] (In-Blue-Dots1) -- (In-Blue1)  ; 
\draw [thick, blue, directed] (In-Blue2) -- (interaction2)  ; 
\draw [thick, blue, densely dash dot dot] (In-Blue-Dots2) -- (In-Blue2)  ; 
\draw [thick, Green, directed] (interaction1) -- (interaction2)  ; 
\draw [thick, Green, enddirected] (interaction2) -- (Out-Green)  ; 
\fill ([xshift=-2pt,yshift=-1.5pt]interaction1) rectangle ++ (4pt,3pt);
\fill ([xshift=-2pt,yshift=-1.5pt]interaction2) rectangle ++ (4pt,3pt);
\node[left,inner sep=0.2pt] at (In-Blue-Dots1) {${}^{1}$};
\node[left,inner sep=0.2pt] at (In-Blue-Dots2) {${}^{2}$};
\end{tikzpicture}}}%
}
\newcommand{\GvertexSubbedbyGammaTwo}{\vcenter{\hbox{\ifTikzExternal{\tikzsetnextfilename{GvertexSubbedbyGamma2}}
\begin{tikzpicture}[scale=1,x={(1,0)},y={(0,1)}]
        \path 
         (0,0) coordinate (origin)  
         +(-0.65,-0.65) coordinate (In-Dots)
         +(-0.4,-0.4) coordinate (In-Red)
         ++(-0.7,0) coordinate (interaction1)
            +(0,-0.3) coordinate (In-Blue1)
            +(0,-0.5) coordinate (In-Blue-Dots1)
         (origin)++(0.4,0.4) coordinate (Emit-Green)
         +(0.4,0.4) coordinate (Out-Red)
         +(0.65,0.65) coordinate (Out-Dots)
         ++(-1.4,0) coordinate (interaction2)
         +(-0.3,0) coordinate (Out-Green)
         ++(0,-0.3) coordinate (In-Blue2)
         ++(0,-0.2) coordinate (In-Blue-Dots2)
            ;
        \coordinate (Der1p) at ($ ($(origin)!0.3!(interaction1)$) + (0.0,0.12) $) ;
        \coordinate (Der1m) at ($ ($(origin)!0.3!(interaction1)$) - (0.0,0.12) $) ;
        \coordinate (Der2p) at ($ ($(origin)!0.4!(interaction1)$) + (-0.0,0.12) $) ;
        \coordinate (Der2m) at ($ ($(origin)!0.4!(interaction1)$) - (-0.0,0.12) $) ;    
        \coordinate (Der3p) at ($ ($(origin)!0.7!(Emit-Green)$) + (0.07,-0.07) $) ;
        \coordinate (Der3m) at ($ ($(origin)!0.7!(Emit-Green)$) - (0.07,-0.07) $) ;
        \coordinate (Der4p) at ($ ($(origin)!0.6!(Emit-Green)$) + (0.07,-0.07) $) ;
        \coordinate (Der4m) at ($ ($(origin)!0.6!(Emit-Green)$) - (0.07,-0.07) $) ;    
        \draw [thick,densely dash dot dot, red] (In-Red) -- (In-Dots) ; 
        \draw [thick, red, directed] (In-Red) -- (origin) ; 
        \draw [thick, red, directed] (origin) -- (Emit-Green) ; 
        \draw [thick, red, directed] (Emit-Green) -- (Out-Red) ; 
        \draw [thick,densely dash dot dot, red] (Out-Red) -- (Out-Dots) ; 
        \draw [thick, blue, directed] (In-Blue1) -- (interaction1)  ; 
        \draw [thick, blue, densely dash dot dot] (In-Blue-Dots1) -- (In-Blue1)  ; 
        \draw [thick, blue, directed] (In-Blue2) -- (interaction2)  ; 
        \draw [thick, blue, densely dash dot dot] (In-Blue-Dots2) -- (In-Blue2)  ; 
        \draw [thick, Green, directed] (origin) -- (interaction1)  ; 
        \draw [thick, Green, directed] (Emit-Green) -- (interaction2)  ; 
        \draw [thick, Green, enddirected] (interaction2) -- (Out-Green)  ; 
        \draw [thick,black] (Der1p) -- (Der1m) ;
        \draw [thick,black] (Der2p) -- (Der2m) ;
        \draw [thick,black] (Der3p) -- (Der3m) ;
        \draw [thick,black] (Der4p) -- (Der4m) ;
        \fill (origin) circle (1.2pt);
        \fill (Emit-Green) circle (1.2pt);
     	\fill ([xshift=-2pt,yshift=-1.5pt]interaction1) rectangle ++ (4pt,3pt);
     	\fill ([xshift=-2pt,yshift=-1.5pt]interaction2) rectangle ++ (4pt,3pt);
     	\node[left,inner sep=0.2pt] at (In-Blue-Dots1) {${}^{1}$};
     	\node[left,inner sep=0.2pt] at (In-Blue-Dots2) {${}^{2}$};
     	\node[below right,inner sep=.5pt] at (origin) {${}^{{\gamma_2^-}}$};
     	\node[below right,inner sep=.5pt] at (Emit-Green) {${}^{{\gamma_2^+}}$};
    \end{tikzpicture}}}%
}
\newcommand{\GvertexSubbedbyGammaTwoBis}{\vcenter{\hbox{\ifTikzExternal{\tikzsetnextfilename{GvertexSubbedbyGamma2bis}}
\begin{tikzpicture}[scale=1,x={(1,0)},y={(0,1)}]
\path 
(0,0) coordinate (origin)  
+(-0.65,-0.65) coordinate (In-Dots)
+(-0.4,-0.4) coordinate (In-Red)
++(-0.7,0) coordinate (interaction1)
+(0,-0.3) coordinate (In-Blue1)
+(0,-0.5) coordinate (In-Blue-Dots1)
(origin)++(0.4,0.4) coordinate (Emit-Green)
+(0.4,0.4) coordinate (Out-Red)
+(0.65,0.65) coordinate (Out-Dots)
++(-1.4,0) coordinate (interaction2)
+(-0.3,0) coordinate (Out-Green)
++(0,-0.3) coordinate (In-Blue2)
++(0,-0.2) coordinate (In-Blue-Dots2)
;
\coordinate (Der1p) at ($ ($(origin)!0.3!(interaction1)$) + (0.0,0.12) $) ;
\coordinate (Der1m) at ($ ($(origin)!0.3!(interaction1)$) - (0.0,0.12) $) ;
\coordinate (Der2p) at ($ ($(origin)!0.4!(interaction1)$) + (-0.0,0.12) $) ;
\coordinate (Der2m) at ($ ($(origin)!0.4!(interaction1)$) - (-0.0,0.12) $) ;    
\coordinate (Der3p) at ($ ($(origin)!0.7!(Emit-Green)$) + (0.07,-0.07) $) ;
\coordinate (Der3m) at ($ ($(origin)!0.7!(Emit-Green)$) - (0.07,-0.07) $) ;
\coordinate (Der4p) at ($ ($(origin)!0.6!(Emit-Green)$) + (0.07,-0.07) $) ;
\coordinate (Der4m) at ($ ($(origin)!0.6!(Emit-Green)$) - (0.07,-0.07) $) ;    
\draw [thick,densely dash dot dot, red] (In-Red) -- (In-Dots) ; 
\draw [thick, red, directed] (In-Red) -- (origin) ; 
\draw [thick, red, directed] (origin) -- (Emit-Green) ; 
\draw [thick, red, directed] (Emit-Green) -- (Out-Red) ; 
\draw [thick,densely dash dot dot, red] (Out-Red) -- (Out-Dots) ; 
\draw [thick, blue, directed] (In-Blue1) -- (interaction1)  ; 
\draw [thick, blue, densely dash dot dot] (In-Blue-Dots1) -- (In-Blue1)  ; 
\draw [thick, blue, directed] (In-Blue2) -- (interaction2)  ; 
\draw [thick, blue, densely dash dot dot] (In-Blue-Dots2) -- (In-Blue2)  ; 
\draw [thick, Green, directed] (origin) -- (interaction1)  ; 
\draw [thick, Green, directed] (Emit-Green) -- (interaction2)  ; 
\draw [thick, Green, enddirected] (interaction2) -- (Out-Green)  ; 
\draw [thick,black] (Der1p) -- (Der1m) ;
\draw [thick,black] (Der2p) -- (Der2m) ;
\draw [thick,black] (Der3p) -- (Der3m) ;
\draw [thick,black] (Der4p) -- (Der4m) ;
\fill (origin) circle (1.2pt);
\fill (Emit-Green) circle (1.2pt);
\fill ([xshift=-2pt,yshift=-1.5pt]interaction1) rectangle ++ (4pt,3pt);
\fill ([xshift=-2pt,yshift=-1.5pt]interaction2) rectangle ++ (4pt,3pt);
\node[left,inner sep=0.2pt] at (In-Blue-Dots1) {${}^{1}$};
\node[left,inner sep=0.2pt] at (In-Blue-Dots2) {${}^{2}$};
\node[below right,inner sep=-0.5pt] at (origin) {\scalebox{0.8}{$\gamma_2^{-}$}};
\node[below right,inner sep=-0.5pt] at (Emit-Green) {\scalebox{0.8}{$\gamma_2^{+}$}};
\end{tikzpicture}}}%
}
\newcommand{\GammaTwominusMakingLoop}{\vcenter{\hbox{\ifTikzExternal{\tikzsetnextfilename{Gamma2minusMakingLoop}}
\begin{tikzpicture}[scale=1,x={(1,0)},y={(0,1)}]
\path 
(0,0) coordinate (origin)  
++(-0.4,-0.4) coordinate (In-Red)
+(-0.2,-0.2) coordinate (In-Dots)
(origin)++(0.5,0.5) coordinate (Emit-Green1)
++(0.4,0) coordinate (interaction1)
+(0,-0.3) coordinate (In-Blue1)
+(0,-0.5) coordinate (In-Blue-Dots1)
(Emit-Green1)++(0.5,0.5) coordinate (Emit-Green2)
++(0.4,0) coordinate (interaction2)
+(0,-0.3) coordinate (In-Blue2)
+(0,-0.5) coordinate (In-Blue-Dots2)
(Emit-Green2)++(0.5,0.5) coordinate (Emit-Green3)
++(-1.5,0) coordinate (interaction3)
+(-0.3,0) coordinate (Out-Green)
(Emit-Green3)++(0.3,0.3) coordinate (Out-Red)
+(0.25,0.25) coordinate (Out-Dots)
;
\coordinate (Der5p) at ($ ($(interaction1)!0.4!(Emit-Green1)$) + (-0.0,0.12) $) ;
\coordinate (Der5m) at ($ ($(interaction1)!0.4!(Emit-Green1)$) - (-0.0,0.12) $) ;
\coordinate (Der6p) at ($ ($(interaction1)!0.5!(Emit-Green1)$) + (-0.0,0.12) $) ;
\coordinate (Der6m) at ($ ($(interaction1)!0.5!(Emit-Green1)$) - (-0.0,0.12) $) ;
\coordinate (Der3p) at ($ ($(interaction2)!0.4!(Emit-Green2)$) + (-0.0,0.12) $) ;
\coordinate (Der3m) at ($ ($(interaction2)!0.4!(Emit-Green2)$) - (-0.0,0.12) $) ;
\coordinate (Der4p) at ($ ($(interaction2)!0.5!(Emit-Green2)$) + (-0.0,0.12) $) ;
\coordinate (Der4m) at ($ ($(interaction2)!0.5!(Emit-Green2)$) - (-0.0,0.12) $) ;
\coordinate (Der1p) at ($ ($(interaction3)!0.9!(Emit-Green3)$) + (-0.0,0.12) $) ;
\coordinate (Der1m) at ($ ($(interaction3)!0.9!(Emit-Green3)$) - (-0.0,0.12) $) ;
\coordinate (Der2p) at ($ ($(interaction3)!0.85!(Emit-Green3)$) + (-0.0,0.12) $) ;
\coordinate (Der2m) at ($ ($(interaction3)!0.85!(Emit-Green3)$) - (-0.0,0.12) $) ;
\draw [thick,densely dash dot dot, red] (In-Dots) -- (In-Red) ; 
\draw [thick, red, directed] (In-Red) -- (origin) ; 
\draw [thick, red, directed] (origin) -- (Emit-Green1) ; 
\draw [thick, red, dash dot dot] (Emit-Green1) -- (Emit-Green2) ; 
\draw [thick, red, directed] (Emit-Green2) -- (Emit-Green3) ; 
\draw [thick, red, directed] (Emit-Green3) -- (Out-Red) ; 
\draw [thick,densely dash dot dot, red] (Out-Red) -- (Out-Dots) ; 
\draw [thick, blue, directed] (origin) -- (interaction3)  ; 
\draw [thick, blue, directed] (In-Blue1) -- (interaction1)  ; 
\draw [thick, blue, densely dash dot dot] (In-Blue-Dots1) -- (In-Blue1)  ; 
\draw [thick, blue, directed] (In-Blue2) -- (interaction2)  ; 
\draw [thick, blue, densely dash dot dot] (In-Blue-Dots2) -- (In-Blue2)  ; 
\draw [thick, Green, directed] (Emit-Green1) -- (interaction1)  ; 
\draw [thick, Green, directed] (Emit-Green2) -- (interaction2)  ; 
\draw [thick, Green, directed] (Emit-Green3) -- (interaction3)  ; 
\draw [thick,black] (Der1p) -- (Der1m) ;
\draw [thick,black] (Der2p) -- (Der2m) ;
\draw [thick,black] (Der3p) -- (Der3m) ;
\draw [thick,black] (Der4p) -- (Der4m) ;
\draw [thick,black] (Der5p) -- (Der5m) ;
\draw [thick,black] (Der6p) -- (Der6m) ;
\fill (origin) circle (1.2pt);
\fill (Emit-Green1) circle (1.2pt);
\fill (Emit-Green2) circle (1.2pt);
\fill (Emit-Green3) circle (1.2pt);
	\fill ([xshift=-2pt,yshift=-1.5pt]interaction1) rectangle ++ (4pt,3pt);
	\fill ([xshift=-2pt,yshift=-1.5pt]interaction2) rectangle ++ (4pt,3pt);
	\fill ([xshift=-2pt,yshift=-1.5pt]interaction3) rectangle ++ (4pt,3pt);
	\node[below right,inner sep=1pt] at (origin) {\scalebox{0.8}{$\gamma_1^{1}$}};
	\node[above left,inner sep=0.1pt] at (Emit-Green1) {\scalebox{0.8}{$\gamma_2^{-}$}};
	\node[above left,inner sep=0.1pt] at (Emit-Green2) {\scalebox{0.8}{$\gamma_2^{-}$}};
	\node[below right,inner sep=0.2pt] at (Emit-Green3) {\scalebox{0.8}{$\gamma_2^{-}$}};
	\node[above left,inner sep=0.1pt] at (interaction3) {\scalebox{0.8}{$g^{1}$}};
\node[left,inner sep=0.2pt] at (In-Blue-Dots1) {${}^{1}$};
\node[left,inner sep=0.2pt] at (In-Blue-Dots2) {${}^{2}$};
\end{tikzpicture}}}%
}
\newcommand{\GammaTwoplusMakingLoop}{\vcenter{\hbox{\ifTikzExternal{\tikzsetnextfilename{Gamma2plusMakingLoop}}
\begin{tikzpicture}[scale=1,x={(1,0)},y={(0,1)}]
\path 
(0,0) coordinate (origin)  
++(-0.4,-0.4) coordinate (In-Red)
+(-0.2,-0.2) coordinate (In-Dots)
(origin)++(1.5,1.5) coordinate (Emit-Green)
++(-0.6,0) coordinate (interaction1)
+(0,-0.3) coordinate (In-Blue1)
+(0,-0.5) coordinate (In-Blue-Dots1)
++(-0.45,0) coordinate (interaction2)
+(0,-0.3) coordinate (In-Blue2)
+(0,-0.5) coordinate (In-Blue-Dots2)
++(-0.45,0) coordinate (interaction3)
+(-0.3,0) coordinate (Out-Green)
(Emit-Green)++(0.3,0.3) coordinate (Out-Red)
+(0.25,0.25) coordinate (Out-Dots)
;
\coordinate (Der1p) at ($ ($(origin)!0.85!(Emit-Green)$) + (-0.07,0.07) $) ;
\coordinate (Der1m) at ($ ($(origin)!0.85!(Emit-Green)$) - (-0.07,0.07) $) ;
\coordinate (Der2p) at ($ ($(origin)!0.9!(Emit-Green)$) + (-0.07,0.07) $) ;
\coordinate (Der2m) at ($ ($(origin)!0.9!(Emit-Green)$) - (-0.07,0.07) $) ;
\draw [thick,densely dash dot dot, red] (In-Dots) -- (In-Red) ; 
\draw [thick, red, directed] (In-Red) -- (origin) ; 
\draw [thick, red, directed] (origin) -- (Emit-Green) ; 
\draw [thick, red, directed] (Emit-Green) -- (Out-Red) ; 
\draw [thick,densely dash dot dot, red] (Out-Red) -- (Out-Dots) ; 
\draw [thick, blue, directed] (origin) -- (interaction3)  ; 
\draw [thick, blue, directed] (In-Blue1) -- (interaction1)  ; 
\draw [thick, blue, densely dash dot dot] (In-Blue-Dots1) -- (In-Blue1)  ; 
\draw [thick, blue, directed] (In-Blue2) -- (interaction2)  ; 
\draw [thick, blue, densely dash dot dot] (In-Blue-Dots2) -- (In-Blue2)  ; 
\draw [thick, Green, directed] (Emit-Green) -- (interaction1)  ; 
\draw [thick, densely dash dot dot, Green] (interaction1) -- (interaction2)  ; 
\draw [thick, Green, directed] (interaction2) -- (interaction3)  ; 
\draw [thick,black] (Der1p) -- (Der1m) ;
\draw [thick,black] (Der2p) -- (Der2m) ;
\fill (origin) circle (1.2pt);
\fill (Emit-Green) circle (1.2pt);
 	\fill ([xshift=-2pt,yshift=-1.5pt]interaction1) rectangle ++ (4pt,3pt);
 	\fill ([xshift=-2pt,yshift=-1.5pt]interaction2) rectangle ++ (4pt,3pt);
 	\fill ([xshift=-2pt,yshift=-1.5pt]interaction3) rectangle ++ (4pt,3pt);
 	\node[below right,inner sep=0pt] at (origin) {\scalebox{0.8}{$\gamma_1^{1}$}};
 	\node[below right,inner sep=0pt] at (Emit-Green) {\scalebox{0.8}{$\gamma_2^{+}$}};
 	\node[above, inner sep=3pt] at (interaction1) {\scalebox{0.8}{$~g^{\delta \chi}$}};
 	\node[above ,inner sep=3pt] at (interaction2) {\scalebox{0.8}{$g^{\delta \chi}$}};
 	\node[above left,inner sep=0pt] at (interaction3) {\scalebox{0.8}{$g^{1}$}};
\node[left,inner sep=0.2pt] at (In-Blue-Dots1) {${}^{1}$};
\node[left,inner sep=0.2pt] at (In-Blue-Dots2) {${}^{2}$};
\end{tikzpicture}}}%
}
\newcommand{\NOTallowedTwoLoop}{\vcenter{\hbox{\ifTikzExternal{\tikzsetnextfilename{NOTallowedTwoLoop}}
\begin{tikzpicture}[scale=1,x={(1,0)},y={(0,1)}]
\path 
(0,0) coordinate (origin)  
+(-0.3,-0.3) coordinate (In-Red)
++(.3,.3) coordinate (Emit-Blue1)
+(0,.3) coordinate (Out-Blue)
++(.3,.3) coordinate (Emit-Blue2)
++(.4,.4) coordinate (Emit-Green1)
++(.3,.3) coordinate (Emit-Green2)
++(0.3,0.3) coordinate (Out-Red)
;
\path
let \p1 = (origin), \p2 = (Emit-Green1) in  (\x1,\y2) coordinate (interaction1)
;
\path
let \p1 = (Emit-Blue2), \p2 = (Emit-Green2) in  (\x1,\y2) coordinate (interaction2)
;
\path 
($ (Emit-Green1) + (-0.2,0.1) $) coordinate (Der1p)
++(0,-0.2) coordinate (Der1m)
++(0.05,0) coordinate (Der2m)
++(0,0.2) coordinate (Der2p)
;
\path 
($ (Emit-Green2) + (-0.2,0.1) $) coordinate (Der3p)
++(0,-0.2) coordinate (Der3m)
++(0.05,0) coordinate (Der4m)
++(0,0.2) coordinate (Der4p)
;
\fill[yellow!30,fill opacity=1,shift={(Emit-Blue2)}] (-0.1,-0.07) rectangle (0.75,0.85);
\draw [thick, Green, directed] (Emit-Green1) -- (interaction1)  ; 
\draw [thick, Green, directed] (Emit-Green2) -- (interaction2)  ; 
\draw [thick, red, directed] (In-Red) -- (origin) ; 
\draw [thick, red, directed] (origin) -- (Emit-Blue1) ; 
\draw [thick, red, directed] (Emit-Blue1) -- (Emit-Blue2) ; 
\draw [thick, red, directed] (Emit-Blue2) -- (Emit-Green1) ; 
\draw [thick, red, directed] (Emit-Green1) -- (Emit-Green2) ; 
\draw [thick, red, enddirected] (Emit-Green2) -- (Out-Red) ; 
\draw [thick, blue, directed] (origin) -- (interaction1)  ; 
\draw [thick, blue, enddirected, densely dashed] (Emit-Blue1) -- (Out-Blue)  ; 
\draw [thick, blue, directed] (Emit-Blue2) -- (interaction2)  ; 
\draw [thick,black] (Der1p) -- (Der1m) ;
\draw [thick,black] (Der2p) -- (Der2m) ;
\draw [thick,black] (Der3p) -- (Der3m) ;
\draw [thick,black] (Der4p) -- (Der4m) ;
\fill (origin) circle (1.2pt);
\fill (Emit-Blue1) circle (1.2pt);
\fill (Emit-Blue2) circle (1.2pt);
\fill (Emit-Green1) circle (1.2pt);
\fill (Emit-Green2) circle (1.2pt);
\fill ([xshift=-2pt,yshift=-1.5pt]interaction1) rectangle ++ (4pt,3pt);
\fill ([xshift=-2pt,yshift=-1.5pt]interaction2) rectangle ++ (4pt,3pt);
\end{tikzpicture}}}%
}
\newcommand{\allowedTwoLoop}{\vcenter{\hbox{\ifTikzExternal{\tikzsetnextfilename{allowedTwoLoop}}
\begin{tikzpicture}[scale=1,x={(1,0)},y={(0,1)}]
		\path 
         (0,0) coordinate (origin)  
          +(-0.3,-0.3) coordinate (In-Red)
         ++(.3,.3) coordinate (Emit-Blue1)
         ++(.3,.3) coordinate (Emit-Blue2)
          +(0,.3) coordinate (Out-Blue)
         ++(.4,.4) coordinate (Emit-Green1)
         ++(.3,.3) coordinate (Emit-Green2)
         ++(0.3,0.3) coordinate (Out-Red)
        ;
        \path
         let \p1 = (origin), \p2 = (Emit-Green1) in  (\x1,\y2) coordinate (interaction1)
        ;
        \path
         let \p1 = (Emit-Blue1), \p2 = (Emit-Green2) in  (\x1,\y2) coordinate (interaction2)
        ;
        \path 
          ($ (Emit-Green1) + (-0.2,0.1) $) coordinate (Der1p)
          ++(0,-0.2) coordinate (Der1m)
          ++(0.05,0) coordinate (Der2m)
          ++(0,0.2) coordinate (Der2p)
        ;
        \path 
          ($ (Emit-Green2) + (-0.2,0.1) $) coordinate (Der3p)
          ++(0,-0.2) coordinate (Der3m)
          ++(0.05,0) coordinate (Der4m)
          ++(0,0.2) coordinate (Der4p)
        ;
        \draw [thick, Green, directed] (Emit-Green1) -- (interaction1)  ; 
        \draw [thick, Green, directed] (Emit-Green2) -- (interaction2)  ; 
        \draw [line width=3pt, white] (Emit-Blue1) -- (interaction2) ; 
        \draw [thick, red, directed] (In-Red) -- (origin) ; 
        \draw [thick, red, directed] (origin) -- (Emit-Blue1) ; 
        \draw [thick, red, directed] (Emit-Blue1) -- (Emit-Blue2) ; 
        \draw [thick, red, directed] (Emit-Blue2) -- (Emit-Green1) ; 
        \draw [thick, red, directed] (Emit-Green1) -- (Emit-Green2) ; 
        \draw [thick, red, enddirected] (Emit-Green2) -- (Out-Red) ; 
        \draw [thick, blue, directed] (origin) -- (interaction1)  ; 
        \draw [thick, blue, directed ] (Emit-Blue1) -- (interaction2)  ; 
        \draw [thick, blue,enddirected, densely dashed] (Emit-Blue2) -- (Out-Blue)  ; 
        \draw [thick,black] (Der1p) -- (Der1m) ;
        \draw [thick,black] (Der2p) -- (Der2m) ;
        \draw [thick,black] (Der3p) -- (Der3m) ;
        \draw [thick,black] (Der4p) -- (Der4m) ;
        \fill (origin) circle (1.2pt);
        \fill (Emit-Blue1) circle (1.2pt);
        \fill (Emit-Blue2) circle (1.2pt);
        \fill (Emit-Green1) circle (1.2pt);
        \fill (Emit-Green2) circle (1.2pt);
     	\fill ([xshift=-2pt,yshift=-1.5pt]interaction1) rectangle ++ (4pt,3pt);
     	\fill ([xshift=-2pt,yshift=-1.5pt]interaction2) rectangle ++ (4pt,3pt);
    \end{tikzpicture}}}%
}
\newcommand{\vertexOneDashedGreenLess}{\ifTikzExternal{\tikzsetnextfilename{vertexOneDashedGreenLess}}\vcenter{\hbox{\begin{tikzpicture}[scale=1,x={(1,0)},y={(0,1)}]
        \coordinate (x1) at (0,0) ;
        \coordinate (x1b) at (-0.05,0) ;
        \coordinate (x2) at  (0,0.5) ;
        \coordinate (x2a) at  (0.4,0.9) ;
        \coordinate (x2b) at  (0,1) ;
        \coordinate (x2c) at  (0.5,.5) ;
        \draw [thick ,red,directed] (x1) -- (x2) ;
        \draw [thick,red,enddirected]  (x2) -- (x2a);
        \draw [thick,blue,densely dashed,enddirected] (x2) -- (x2b);
        \fill (x2) circle (1.5pt);
        \fill[white] (x1b) circle (1pt);
	\end{tikzpicture}}}
}
\newcommand{\vertexGGammaChi}{\ifTikzExternal{\tikzsetnextfilename{vertexGGammaChi}}\vcenter{\hbox{\begin{tikzpicture}[scale=1,x={(1,0)},y={(0,1)}]
        \path 
         (0,0) coordinate (origin)
          +(0,-0.5) coordinate (In-Red) 
          +(0.45,0) coordinate (Out-Green-Dashed) 
          +(0,0.5) coordinate (Out-Blue) 
          +(0.4,0.4) coordinate (Out-Red) 
         ++(-0.3,0) coordinate (interaction) 
          +(-0.4,-0.2) coordinate (In-Green) 
          +(-0.4,0.2) coordinate (Out-Green) 
        ;
        \draw [thick ,red,directed] (In-Red) -- (origin) ;
        \draw [thick,red,enddirected]  (origin) -- (Out-Red);
        \draw [thick,blue,densely dashed,enddirected] (origin) -- (Out-Blue);
        \draw [thick,Green, enddirected, densely dashed ] (origin) -- (Out-Green-Dashed) ;
        \draw [thick,Green,directed] (In-Green) -- (interaction);
        \draw [thick,Green,enddirected] (interaction) -- (Out-Green);
        \draw [thick,black, densely dotted, dash expand off] ($(interaction)+(-0.05,0)$) -- (origin);
        \fill ([xshift=-1.75pt,yshift=-1.25pt]origin) rectangle ++ (3.5pt,2.5pt);
        \fill ([xshift=-1.75pt,yshift=-1.25pt]interaction) rectangle ++ (3.5pt,2.5pt);
     	\node[left,inner sep=0pt,Green] at (Out-Green) {${}^{_\alpha}$};
     	\node[below left,inner sep=0pt,Green] at (In-Green) {${}^{_\alpha}$};
	\end{tikzpicture}
    }}
}
\newcommand{\vertexGGammaPsi}{\ifTikzExternal{\tikzsetnextfilename{vertexGGammaPsi}}\vcenter{\hbox{\begin{tikzpicture}[scale=1,x={(1,0)},y={(0,1)}]
        \path 
         (0,0) coordinate (origin)
          +(0,-0.5) coordinate (In-Red) 
          +(0.45,0) coordinate (Out-Green-Dashed) 
          +(0,0.5) coordinate (Out-Blue) 
          +(0.4,0.4) coordinate (Out-Red) 
         ++(-0.4,0) coordinate (interaction) 
          +(0,-0.5) coordinate (In-Blue1) 
          +(0,0.5) coordinate (Out-Blue1) 
        ;
        \draw [thick ,red,directed] (In-Red) -- (origin) ;
        \draw [thick,red,enddirected]  (origin) -- (Out-Red);
        \draw [thick,blue,densely dashed,enddirected] (origin) -- (Out-Blue);
        \draw [thick,blue,directed] (In-Blue1) -- (interaction);
        \draw [thick,blue,enddirected] (interaction) -- (Out-Blue1);
        \draw [thick,Green, enddirected, densely dashed ] (origin) -- (Out-Green-Dashed) ;
        \draw [thick,black, densely dotted, dash expand off] ($(interaction)+(-0.05,0)$) -- (origin);
        \fill ([xshift=-1.75pt,yshift=-1.25pt]origin) rectangle ++ (3.5pt,2.5pt);
        \fill ([xshift=-1.75pt,yshift=-1.25pt]interaction) rectangle ++ (3.5pt,2.5pt);
	\end{tikzpicture}
    }}
}
\newcommand{\diagGGammaChi}{\ifTikzExternal{\tikzsetnextfilename{diagGGammaChi}}\vcenter{\hbox{\begin{tikzpicture}[scale=1,x={(1,0)},y={(0,1)}]
        \path 
         (0,0) coordinate (origin)
          +(0,-0.5) coordinate (In-Red) 
          +(0,0.5) coordinate (Out-Blue) 
          +(0.4,0.4) coordinate (Out-Red) 
         ++(-0.2,0) coordinate (interaction) 
          +(-0.1,0) coordinate (Out-Green) 
        ;
        \draw [thick ,red,directed] (In-Red) -- (origin) ;
        \draw [thick,red,enddirected]  (origin) -- (Out-Red);
        \draw [thick,blue,densely dashed,enddirected] (origin) -- (Out-Blue);
        \draw [thin,draw=white,double=Green, double distance=0.8pt] ($(origin)+(0.05,0)$) arc[x radius=2mm, y radius=1mm, start angle=45, end angle=-225 ];
        \draw[thick, Green, enddirected]  ($(origin)+(-0.1,-1.75mm)$) -- ($(origin)+(-1.2mm,-1.75mm)$);
        \draw [thick,black, densely dotted, dash expand off] ($(interaction)+(-0.05,0)$) -- (origin);
        \fill ([xshift=-1.75pt,yshift=-1.25pt]origin) rectangle ++ (3.5pt,2.5pt);
        \fill ([xshift=-1.75pt,yshift=-1.25pt]interaction) rectangle ++ (3.5pt,2.5pt);
	\end{tikzpicture}
    }}
}
\newcommand{\diagGammaOneGGammaPsi}{\ifTikzExternal{\tikzsetnextfilename{diagGammaOneGGammaPsi}}\vcenter{\hbox{\begin{tikzpicture}[scale=1,x={(1,0)},y={(0,1)}]
		\path 
         (0,0) coordinate (origin)
          +(-0.2,-0.2) coordinate (In-Red)  
         ++(0.5,0.5) coordinate (Emit-Blue)
          +(0,0.3) coordinate (Out-Blue)
         let \p1=(origin),\p2=(Emit-Blue) in (\x1,\y2) coordinate (interaction)
         (Emit-Blue)++(0.2,0.2) coordinate (Out-Red)
            ;
        \draw [thick, red, directed] (In-Red) -- (origin) ; 
        \draw [thick, red, directed] (origin) -- (Emit-Blue) ; 
        \draw [thick, red, enddirected] (Emit-Blue) -- (Out-Red) ; 
        \draw [thick, blue, directed] (origin) -- (interaction)  ; 
        \draw [thick, blue, enddirected, densely dashed] (Emit-Blue) -- (Out-Blue)  ; 
        \draw [thick, black, densely dotted, dash expand off] (Emit-Blue) -- (interaction); 
        \fill (origin) circle (1.4pt);
        \fill (Emit-Blue) circle (1.4pt);
     	\fill ([xshift=-2pt,yshift=-1.5pt]Emit-Blue) rectangle ++ (4pt,3pt);
     	\fill ([xshift=-2pt,yshift=-1.5pt]interaction) rectangle ++ (4pt,3pt);
    \end{tikzpicture}%
    }}%
}
\newcommand{\diagGammaTwoPsiGGammaPsi}{\ifTikzExternal{\tikzsetnextfilename{diagGammaTwoPsiGGammaPsi}}\vcenter{\hbox{\begin{tikzpicture}[scale=1,x={(1,0)},y={(0,1)}]
		\path 
         (0,0) coordinate (origin)
          +(-0.2,-0.2) coordinate (In-Red)  
         ++(0.25,0) coordinate (dashed)
         ++(0.5,0.5) coordinate (Emit-Blue)
          +(0,0.3) coordinate (Out-Blue)
         let \p1=(origin),\p2=(Emit-Blue) in (\x1,\y2) coordinate (interaction)
         (Emit-Blue)++(0.2,0.2) coordinate (Out-Red)
            ;
        \draw [thick, red, directed] (In-Red) -- (origin) ; 
        \draw [thick, black, densely dotted, dash expand off]  (origin) -- (dashed) node [pos=0.5, sloped, font=\scriptsize\bfseries, black ] {||} ; 
        \draw [thick, red, directed] (dashed) -- (Emit-Blue) ; 
        \draw [thick, red, enddirected] (Emit-Blue) -- (Out-Red) ; 
        \draw [thick, blue, directed] (origin) -- (interaction)  ; 
        \draw [thick, blue, enddirected, densely dashed] (Emit-Blue) -- (Out-Blue)  ; 
        \draw [thick, black, densely dotted, dash expand off] (Emit-Blue) -- (interaction); 
        \fill (origin) circle (1.4pt);
        \fill (dashed) circle (1.4pt);
        \fill (Emit-Blue) circle (1.4pt);
     	\fill ([xshift=-2pt,yshift=-1.5pt]Emit-Blue) rectangle ++ (4pt,3pt);
     	\fill ([xshift=-2pt,yshift=-1.5pt]interaction) rectangle ++ (4pt,3pt);
    \end{tikzpicture}%
    }}%
}
\newcommand{\vertexTwoBlueDashed}{\ifTikzExternal{\tikzsetnextfilename{vertexTwoBlueDashed}}{{\parbox{1.05cm}{{\begin{tikzpicture}[scale=1]
\coordinate (x1) at (0,0) ;
\coordinate (x2) at  (0,0.5) ;
\coordinate (x2a) at  (0,1) ;
\coordinate (x3) at  (0.5,0.5) ;
\coordinate (x4) at  (1,.5) ;
\coordinate (x3a) at  (0.5,1) ;
\coordinate (blank1) at (-0.07,0) ;
\coordinate (blank2) at  (0.57,0) ;
\draw[white] (blank1)--(blank2);
\draw [thick,red] (x1) -- (x2);
\draw [thick,red,enddirected]  (x3) -- (x3a);
\draw [thick,blue,densely dashed, enddirected]  (x2) -- (x2a);
\draw [thick,dotted] (x2) -- (x3) node [pos=0.5, sloped, font=\scriptsize\bfseries, black ] {||} ;
\draw [thick,Green,enddirected] (x3) -- (x4);
\fill (x2) circle (1.5pt);
\fill (x3) circle (1.5pt);
\end{tikzpicture}}}}}}
\newcommand{\diagGammaTwoPsiGammaTwoPsi}{\vcenter{\hbox{\ifTikzExternal{\tikzsetnextfilename{diagGammaTwoPsiGammaTwoPsi}}\begin{tikzpicture}[scale=1,x={(1,0)},y={(0,1)}]
		\path 
         (0,0) coordinate (origin)
          +(-0.2,-0.2) coordinate (In-Red)  
         ++(0.25,0) coordinate (dashed)
         ++(0.5,0.5) coordinate (Emit-Blue)
          +(0,0.3) coordinate (Out-Blue)
         ++(-0.3,0) coordinate (dashed2)
         ++(0,0) coordinate (Emit-Green)
         let \p1=(origin),\p2=(Emit-Green) in (\x1,\y2) coordinate (interaction)
         (Emit-Green)++(-0.2,0.2) coordinate (Out-Red)
            ;
        \draw [thick, red, directed] (In-Red) -- (origin) ; 
        \draw [thick, black, densely dotted, dash expand off]  (origin) -- (dashed) node [pos=0.5, sloped, font=\scriptsize\bfseries, black ] {||} ; 
        \draw [thick, red, directed] (dashed) -- (Emit-Blue) ; 
        \draw [thick, black, densely dotted, dash expand off]  (Emit-Blue) -- (dashed2) node [pos=0.5, sloped, font=\scriptsize\bfseries, black ] {||} ; 
        \draw [thick, red, enddirected] (Emit-Green) -- (Out-Red) ; 
        \draw [thick, blue, directed] (origin) -- (interaction)  ; 
        \draw [thick, blue, enddirected, densely dashed] (Emit-Blue) -- (Out-Blue)  ; 
        \draw [thick, Green, directed] (Emit-Green) -- (interaction); 
        \fill (origin) circle (1.4pt);
        \fill (dashed) circle (1.4pt);
        \fill (Emit-Blue) circle (1.4pt);
        \fill (dashed2) circle (1.4pt);
        \fill (Emit-Green) circle (1.4pt);
     	\fill ([xshift=-2pt,yshift=-1.5pt]interaction) rectangle ++ (4pt,3pt);
    \end{tikzpicture}%
    }}%
}
\newcommand{\diagSimplifyingTadpoleGammaPlus}{\vcenter{\hbox{\ifTikzExternal{\tikzsetnextfilename{diagSimplifyingTadpoleGammaPlus}}\begin{tikzpicture}[scale=1,x={(1,0)},y={(0,1)}]
		\path 
         (0,0) coordinate (In-Red)
         ++(0.2,0.2) coordinate (origin)  
         ++(0.25,0) coordinate (dashed)
         ++(0.3,0.3) coordinate (Emit-Blue)
          +(0,0.3) coordinate (Out-Blue)
         ++(0.25,0) coordinate (dashed2)
         ++(0.4,0.4) coordinate (Emit-Green)
         let \p1=(origin),\p2=(Emit-Green) in (\x1,\y2) coordinate (interaction)
         (Emit-Green)++(0.2,0.2) coordinate (Out-Red)
            ;
        \draw [thick, red, directed] (In-Red) -- (origin) ; 
        \draw [thick, black, densely dotted, dash expand off]  (origin) -- (dashed) node [pos=0.5, sloped, font=\scriptsize\bfseries, black ] {||} ; 
        \draw [thick, red, directed] (dashed) -- (Emit-Blue) ; 
        \draw [thick, black, densely dotted, dash expand off]  (Emit-Blue) -- (dashed2) node [pos=0.5, sloped, font=\scriptsize\bfseries, black ] {||} ; 
        \draw [thick, red, directed] (dashed2) -- (Emit-Green)  node [pos=0.7, sloped, font=\scriptsize\bfseries, black ] {||} ; 
        \draw [thick, red, enddirected] (Emit-Green) -- (Out-Red) ; 
        \draw [thick, blue, directed] (origin) -- (interaction)  ; 
        \draw [thick, blue, enddirected, densely dashed] (Emit-Blue) -- (Out-Blue)  ; 
        \draw [thick, Green, directed] (Emit-Green) -- (interaction); 
        \fill (origin) circle (1.4pt);
        \fill (dashed) circle (1.4pt);
        \fill (Emit-Blue) circle (1.4pt);
        \fill (dashed2) circle (1.4pt);
        \fill (Emit-Green) circle (1.4pt);
     	\fill ([xshift=-2pt,yshift=-1.5pt]interaction) rectangle ++ (4pt,3pt);
    \end{tikzpicture}%
    }}%
}
\newcommand{\diagSimplifyingTadpoleGammaMinus}{\vcenter{\hbox{\ifTikzExternal{\tikzsetnextfilename{diagSimplifyingTadpoleGammaMinus}}\begin{tikzpicture}[scale=1,x={(1,0)},y={(0,1)}]
		\path 
         (0,0) coordinate (In-Red)
         ++(0.2,0.2) coordinate (origin)  
         ++(0.25,0) coordinate (dashed)
         ++(0.3,0.3) coordinate (Emit-Blue)
          +(0,0.3) coordinate (Out-Blue)
         ++(0.25,0) coordinate (dashed2)
         ++(0.4,0.4) coordinate (Emit-Green)
         let \p1=(origin),\p2=(Emit-Green) in (\x1,\y2) coordinate (interaction)
         (Emit-Green)++(0.2,0.2) coordinate (Out-Red)
            ;
        \draw [thick, red, directed] (In-Red) -- (origin) ; 
        \draw [thick, black, densely dotted, dash expand off]  (origin) -- (dashed) node [pos=0.5, sloped, font=\scriptsize\bfseries, black ] {||} ; 
        \draw [thick, red, directed] (dashed) -- (Emit-Blue) ; 
        \draw [thick, black, densely dotted, dash expand off]  (Emit-Blue) -- (dashed2) node [pos=0.5, sloped, font=\scriptsize\bfseries, black ] {||} ; 
        \draw [thick, red, directed] (dashed2) -- (Emit-Green) ; 
        \draw [thick, red, enddirected] (Emit-Green) -- (Out-Red) ; 
        \draw [thick, blue, directed] (origin) -- (interaction)  ; 
        \draw [thick, blue, enddirected, densely dashed] (Emit-Blue) -- (Out-Blue)  ; 
        \draw [thick, Green, directed] (Emit-Green) -- (interaction) node [pos=0.3, sloped, font=\scriptsize\bfseries, black ] {||} ; 
        \fill (origin) circle (1.4pt);
        \fill (dashed) circle (1.4pt);
        \fill (Emit-Blue) circle (1.4pt);
        \fill (dashed2) circle (1.4pt);
        \fill (Emit-Green) circle (1.4pt);
     	\fill ([xshift=-2pt,yshift=-1.5pt]interaction) rectangle ++ (4pt,3pt);
    \end{tikzpicture}%
    }}%
}
\newcommand{\diagGammaOneGammaTwoPsi}{\vcenter{\hbox{\ifTikzExternal{\tikzsetnextfilename{diagGammaOneGammaTwoPsi}}\begin{tikzpicture}[scale=1,x={(1,0)},y={(0,1)}]
		\path 
         (0,0) coordinate (origin)
          +(-0.2,-0.2) coordinate (In-Red)  
         ++(0.7,0.7) coordinate (Emit-Blue)
          +(0,0.3) coordinate (Out-Blue)
         ++(-0.3,0) coordinate (dashed2)
         ++(0,0) coordinate (Emit-Green)
         let \p1=(origin),\p2=(Emit-Green) in (\x1,\y2) coordinate (interaction)
         (Emit-Green)++(-0.2,0.2) coordinate (Out-Red)
            ;
        \draw [thick, red, directed] (In-Red) -- (origin) ; 
        \draw [thick, red, directed] (origin) -- (Emit-Blue) ; 
        \draw [thick, black, densely dotted, dash expand off]  (Emit-Blue) -- (dashed2) node [pos=0.5, sloped, font=\scriptsize\bfseries, black ] {||} ; 
        \draw [thick, red, enddirected] (Emit-Green) -- (Out-Red) ; 
        \draw [thick, blue, directed] (origin) -- (interaction)  ; 
        \draw [thick, blue, enddirected, densely dashed] (Emit-Blue) -- (Out-Blue)  ; 
        \draw [thick, Green, directed] (Emit-Green) -- (interaction); 
        \fill (origin) circle (1.4pt);
        \fill (Emit-Blue) circle (1.4pt);
        \fill (dashed2) circle (1.4pt);
        \fill (Emit-Green) circle (1.4pt);
     	\fill ([xshift=-2pt,yshift=-1.5pt]interaction) rectangle ++ (4pt,3pt);
    \end{tikzpicture}%
    }}%
}
\newcommand{\diagSimplifyingTadpoleGammaOneGammaPlus}{\vcenter{\hbox{\ifTikzExternal{\tikzsetnextfilename{diagSimplifyingTadpoleGammaOneGammaPlus}}\begin{tikzpicture}[scale=1,x={(1,0)},y={(0,1)}]
		\path 
         (0,0) coordinate (In-Red)
         ++(0.2,0.2) coordinate (origin)  
         ++(0.3,0.3) coordinate (Emit-Blue)
          +(0,0.3) coordinate (Out-Blue)
         ++(0.25,0) coordinate (dashed2)
         ++(0.4,0.4) coordinate (Emit-Green)
         let \p1=(origin),\p2=(Emit-Green) in (\x1,\y2) coordinate (interaction)
         (Emit-Green)++(0.2,0.2) coordinate (Out-Red)
            ;
        \draw [thick, red, directed] (In-Red) -- (origin) ; 
        \draw [thick, red, directed] (origin) -- (Emit-Blue) ; 
        \draw [thick, black, densely dotted, dash expand off]  (Emit-Blue) -- (dashed2) node [pos=0.5, sloped, font=\scriptsize\bfseries, black ] {||} ; 
        \draw [thick, red, directed] (dashed2) -- (Emit-Green)  node [pos=0.7, sloped, font=\scriptsize\bfseries, black ] {||} ; 
        \draw [thick, red, enddirected] (Emit-Green) -- (Out-Red) ; 
        \draw [thick, blue, directed] (origin) -- (interaction)  ; 
        \draw [thick, blue, enddirected, densely dashed] (Emit-Blue) -- (Out-Blue)  ; 
        \draw [thick, Green, directed] (Emit-Green) -- (interaction); 
        \fill (origin) circle (1.4pt);
        \fill (Emit-Blue) circle (1.4pt);
        \fill (dashed2) circle (1.4pt);
        \fill (Emit-Green) circle (1.4pt);
     	\fill ([xshift=-2pt,yshift=-1.5pt]interaction) rectangle ++ (4pt,3pt);
    \end{tikzpicture}%
    }}%
}
\newcommand{\diagSimplifyingTadpoleGammaOneGammaMinus}{\vcenter{\hbox{\ifTikzExternal{\tikzsetnextfilename{diagSimplifyingTadpoleGammaOneGammaMinus}}\begin{tikzpicture}[scale=1,x={(1,0)},y={(0,1)}]
		\path 
         (0,0) coordinate (In-Red)
         ++(0.2,0.2) coordinate (origin)  
         ++(0.3,0.3) coordinate (Emit-Blue)
          +(0,0.3) coordinate (Out-Blue)
         ++(0.25,0) coordinate (dashed2)
         ++(0.4,0.4) coordinate (Emit-Green)
         let \p1=(origin),\p2=(Emit-Green) in (\x1,\y2) coordinate (interaction)
         (Emit-Green)++(0.3,0.3) coordinate (Out-Red)
            ;
        \draw [thick, red, directed] (In-Red) -- (origin) ; 
        \draw [thick, red, directed] (origin) -- (Emit-Blue) ; 
        \draw [thick, black, densely dotted, dash expand off]  (Emit-Blue) -- (dashed2) node [pos=0.5, sloped, font=\scriptsize\bfseries, black ] {||} ; 
        \draw [thick, red, directed] (dashed2) -- (Emit-Green) ; 
        \draw [thick, red, enddirected] (Emit-Green) -- (Out-Red) ; 
        \draw [thick, blue, directed] (origin) -- (interaction)  ; 
        \draw [thick, blue, enddirected, densely dashed] (Emit-Blue) -- (Out-Blue)  ; 
        \draw [thick, Green, directed] (Emit-Green) -- (interaction) node [pos=0.3, sloped, font=\scriptsize\bfseries, black ] {||} ; 
        \fill (origin) circle (1.4pt);
        \fill (Emit-Blue) circle (1.4pt);
        \fill (dashed2) circle (1.4pt);
        \fill (Emit-Green) circle (1.4pt);
     	\fill ([xshift=-2pt,yshift=-1.5pt]interaction) rectangle ++ (4pt,3pt);
    \end{tikzpicture}%
    }}%
}
\newcommand{\diagSimplifyingTadpoleGammaTwoPsi}{\vcenter{\hbox{\ifTikzExternal{\tikzsetnextfilename{diagSimplifyingTadpoleGammaTwoPsi}}\begin{tikzpicture}[scale=1,x={(1,0)},y={(0,1)}]
		\path 
         (0,0) coordinate (origin)
          +(-0.2,-0.2) coordinate (In-Red)  
         ++(0.3,0) coordinate (dashed)
         ++(0.7,0.7) coordinate (Emit-Blue)
          +(0,0.3) coordinate (Out-Blue)
         ++(0,0) coordinate (Emit-Green)
         let \p1=(origin),\p2=(Emit-Green) in (\x1,\y2) coordinate (interaction)
         (Emit-Green)++(0.2,0.2) coordinate (Out-Red)
            ;
        \draw [thick, red, directed] (In-Red) -- (origin) ; 
        \draw [thick, black, densely dotted, dash expand off]  (origin) -- (dashed) node [pos=0.5, sloped, font=\scriptsize\bfseries, black ] {||} ; 
        \draw [thick, red, directed] (dashed) -- (Emit-Blue) ; 
        \draw [thick, red, enddirected] (Emit-Green) -- (Out-Red) ; 
        \draw [thick, blue, directed] (origin) -- (interaction)  ; 
        \draw [thick, blue, enddirected, densely dashed] (Emit-Blue) -- (Out-Blue)  ; 
        \draw [thick, Green, directed] (Emit-Green) -- (interaction); 
        \fill (origin) circle (1.4pt);
        \fill (Emit-Blue) circle (1.4pt);
        \fill (dashed) circle (1.4pt);
        \fill (Emit-Green) circle (1.4pt);
     	\fill ([xshift=-2pt,yshift=-1.5pt]interaction) rectangle ++ (4pt,3pt);
    \end{tikzpicture}%
    }}%
}
\newcommand{\diagSimplifyingTadpoleGammaTwoPsiGammaPlus}{\vcenter{\hbox{\ifTikzExternal{\tikzsetnextfilename{diagSimplifyingTadpoleGammaTwoPsiGammaPlus}}\begin{tikzpicture}[scale=1,x={(1,0)},y={(0,1)}]
		\path 
         (0,0) coordinate (In-Red)
         ++(0.2,0.2) coordinate (origin)  
         ++(0.25,0) coordinate (dashed)
         ++(0.3,0.3) coordinate (Emit-Blue)
          +(0,0.3) coordinate (Out-Blue)
         ++(0.4,0.4) coordinate (Emit-Green)
         let \p1=(origin),\p2=(Emit-Green) in (\x1,\y2) coordinate (interaction)
         (Emit-Green)++(0.2,0.2) coordinate (Out-Red)
            ;
        \draw [thick, red, directed] (In-Red) -- (origin) ; 
        \draw [thick, black, densely dotted, dash expand off]  (origin) -- (dashed) node [pos=0.5, sloped, font=\scriptsize\bfseries, black ] {||} ; 
        \draw [thick, red, directed] (dashed) -- (Emit-Blue) ; 
        \draw [thick, red, directed] (Emit-Blue) -- (Emit-Green)  node [pos=0.7, sloped, font=\scriptsize\bfseries, black ] {||} ; 
        \draw [thick, red, enddirected] (Emit-Green) -- (Out-Red) ; 
        \draw [thick, blue, directed] (origin) -- (interaction)  ; 
        \draw [thick, blue, enddirected, densely dashed] (Emit-Blue) -- (Out-Blue)  ; 
        \draw [thick, Green, directed] (Emit-Green) -- (interaction); 
        \fill (origin) circle (1.4pt);
        \fill (Emit-Blue) circle (1.4pt);
        \fill (dashed) circle (1.4pt);
        \fill (Emit-Green) circle (1.4pt);
     	\fill ([xshift=-2pt,yshift=-1.5pt]interaction) rectangle ++ (4pt,3pt);
    \end{tikzpicture}%
    }}%
}
\newcommand{\diagSimplifyingTadpoleGammaTwoPsiGammaMinus}{\vcenter{\hbox{\ifTikzExternal{\tikzsetnextfilename{diagSimplifyingTadpoleGammaTwoPsiGammaMinus}}\begin{tikzpicture}[scale=1,x={(1,0)},y={(0,1)}]
		\path 
         (0,0) coordinate (In-Red)
         ++(0.2,0.2) coordinate (origin)  
         ++(0.25,0) coordinate (dashed)
         ++(0.3,0.3) coordinate (Emit-Blue)
          +(0,0.3) coordinate (Out-Blue)
         ++(0.4,0.4) coordinate (Emit-Green)
         ++(-0.95,0) coordinate (interaction)
         (Emit-Green)++(0.2,0.2) coordinate (Out-Red)
            ;
        \draw [thick, red, directed] (In-Red) -- (origin) ; 
        \draw [thick, black, densely dotted, dash expand off]  (origin) -- (dashed) node [pos=0.5, sloped, font=\scriptsize\bfseries, black ] {||} ; 
        \draw [thick, red, directed] (dashed) -- (Emit-Blue) ; 
        \draw [thick, red, directed] (Emit-Blue) -- (Emit-Green) ; 
        \draw [thick, red, enddirected] (Emit-Green) -- (Out-Red) ; 
        \draw [thick, blue, directed] (origin) -- (interaction)  ; 
        \draw [thick, blue, enddirected, densely dashed, dash expand off] (Emit-Blue) -- (Out-Blue)  ; 
        \draw [thick, Green, directed] (Emit-Green) -- (interaction) node [pos=0.3, sloped, font=\scriptsize\bfseries, black ] {||} ; 
        \fill (origin) circle (1.4pt);
        \fill (dashed) circle (1.4pt);
        \fill (Emit-Blue) circle (1.4pt);
        \fill (Emit-Green) circle (1.4pt);
     	\fill ([xshift=-2pt,yshift=-1.5pt]interaction) rectangle ++ (4pt,3pt);
    \end{tikzpicture}%
    }}%
}
\newcommand{\vertexIntPsiChiRelMean}{\ifTikzExternal{\tikzsetnextfilename{vertexIntPsiChiRelMean}}{\vcenter{\hbox{{\begin{tikzpicture}[scale=1]
\coordinate (x1a) at (0,0) ;
\coordinate (x1b) at  (0,0.5) ;
\coordinate (x1c) at  (0,1) ;
\coordinate (x2a) at  (0.5,0.5) ;
\coordinate (x2c) at  (-0.5,.5) ;
\coordinate (blank1) at (-0.07,0) ;
\coordinate (blank2) at  (0.57,0) ;
\draw[white] (blank1)--(blank2);
\draw [thick,white] (x1b)  --(x1c);
\draw [thick,blue] (x1a)  --(x1b);
\draw [thick,Green] (x2a) -- (x1b);
\fill ([xshift=-2pt,yshift=-1.5pt]x1b) rectangle ++ (4pt,3pt);
\end{tikzpicture}}}}}}
\newcommand{\vertexIntPsiChiRelFluct}{\ifTikzExternal{\tikzsetnextfilename{vertexIntPsiChiRelFluct}}{\vcenter{\hbox{{\begin{tikzpicture}[scale=1]
\coordinate (x1a) at (0,0) ;
\coordinate (x1b) at  (0,0.5) ;
\coordinate (x1c) at  (0,1) ;
\coordinate (x2a) at  (0.5,0.5) ;
\coordinate (x2c) at  (-0.5,.5) ;
\coordinate (blank1) at (-0.07,0) ;
\coordinate (blank2) at  (0.57,0) ;
\draw[white] (blank1)--(blank2);
\draw [thick,white] (x1b)  --(x1c);
\draw [thick,blue] (x1a)  --(x1b);
\draw [thick,Green,enddirected,densely dashed] (x1b) -- (x2c);
\draw [thick,Green] (x2a) -- (x1b);
\fill ([xshift=-2pt,yshift=-1.5pt]x1b) rectangle ++ (4pt,3pt);
\end{tikzpicture}}}}}}

\newlength{\bilderlength}

\begin{document}

\title{\bf\Large Field theories for Laplacian Growth}
\author{\bf\normalsize Paolo Pisapia{$^{1}$}, Assaf Shapira{$^{2}$}, Kay J\"org Wiese{$^{1}$}}
\date{\small {$^{1}$}CNRS-Laboratoire de Physique de l'Ecole Normale Sup\'erieure, PSL Research University, Sorbonne Universit\'e, Universit\'e Paris Cit\'e, 24 rue Lhomond, 75005 Paris, France,\\
\small {$^{2}$}Universit\'e Paris Cit\'e, CNRS, MAP5, 75006 Paris, France.
}

\maketitle


\begin{abstract}

\smallskip
Loop-erased random walks (LERW), the $O(n)$-model at ${n=-2}$ and Laplacian random walks (LRW) are three realizations of the same random process. While this equivalence holds on any graph, renormalization is possible only via the $O(-2)$-model. 
To generalize LRWs to $b$-LRWs or to Diffusion Limited Aggregation (DLA), a field theory directly on the Laplacian growth process is necessary. 
Here we construct an exact lattice action for LRWs and show that its perturbative expansion equals that of LERWs. We then generalize this approach to $b$-LRWs and DLA. 
\end{abstract}

\bigskip

\begin{multicols}{2}
\small 
\renewcommand{\contentsname}{\vspace*{-3.2ex}}

\tableofcontents
\end{multicols}

\begin{figure}[t]
\centerline{\fig{0.5}{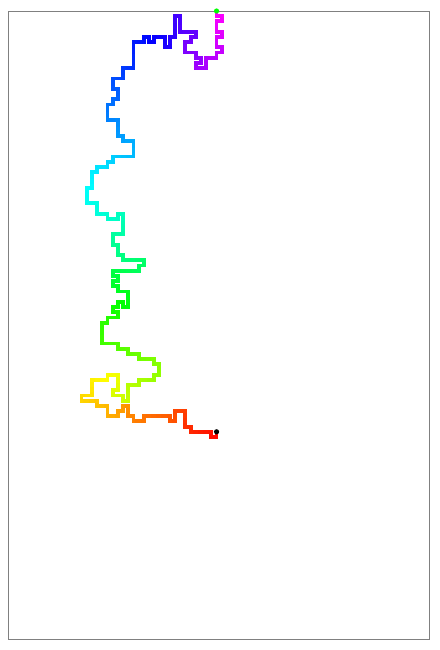}\hfill\fig{0.5}{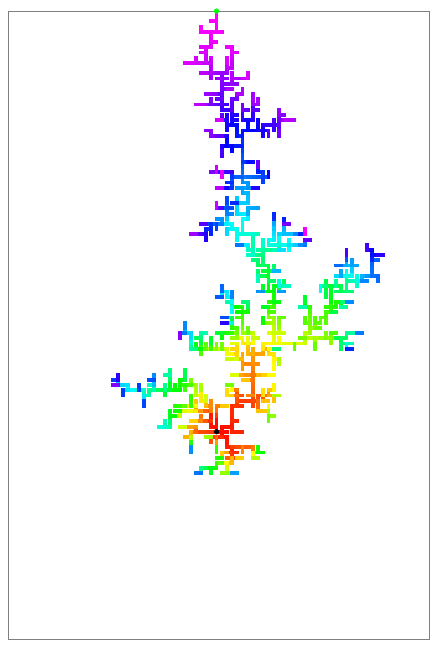}}
\caption{An example of a LRW (left) and a DLA cluster (right) grown in a box of size $80\times 120$. Particles are injected at the target (green dot at top center), and bounce from the walls. The seed is in the center at height $40$ (black dot). The process stops when the target is reached. The color shows time.}
\label{f:DLA}
\end{figure}

\section{Introduction}
\subsection{State of the Art}
Loop-erased random walks (LERWs) were introduced in 1980 by G.~Lawler \cite{Lawler1980} as a mathematically more tractable
alternative to self-avoiding polymers: one follows the trace of a random walker and erases from its trace 
a loop once it is formed. This leads to a non-selfintersecting path in all dimensions. 
In \cite{Lawler1980} Lawler further showed that loop-erased random walks can be reformulated as a growth process, 
termed Laplacian random walk (LRW): 
one solves the Laplace equation $\nabla^2 \Phi(x)=0$, with boundary condition $\Phi(x)=0$ on the 
already constructed curve, and $\Phi(x)=1$ at a prescribed target. Starting at the tip $x$ of the walk, the probability to advance to $y$ is proportional to $\Phi(y)$. The process stops when it has reached the target; see the left of Fig.~\ref{f:DLA}.

This process produces curves with a fractal dimension of $d_{\rm f}^{d=2}=5/4$ in dimension $d=2$, and was identified as a stochastic-L\"owner evolution \cite{Schramm2000,LawlerSchrammWerner2004,Cardy2005} with $\kappa=2$. 
In dimension $d=3$, no exact results are available. The fractal dimension of LERWs was measured numerically as
 $d_{\rm f}^{d=3} = 1.62400\pm 0.00005$ \cite{Wilson2010}.
It was later shown that the $O(n)$-model at ${n=-2}$, LERWs and LRWs are three manifestations of the same random process in all dimensions. 
This was first achieved via a heuristic mapping \cite{WieseFedorenko2018,WieseFedorenko2019}. 
A mathematically rigorous proof was presented in \cite{HelmuthShapira2020}, and later simplified \cite{ShapiraWiese2020}.
The latter formulation allowed the authors of \cite{KompanietsWiese2019} to evaluate 
the fractal dimension to $6$-loop order, $d_{\rm f}^{d=2} = 1.6243(10)$.
Finally, LERWs appear in so varied models as uniform spanning trees \cite{LawlerSchrammWerner2004}, Abelian sandpiles \cite{Dhar1990,MajumdarDhar1992} and the depinning of charge-density waves \cite{NarayanMiddleton1994,WieseFedorenko2018,WieseFedorenko2019}, and more, see Fig.~\ref{CDW-relations}.

While the equivalence between the three models above holds on any finite graphs and in the continuum limit, a renormalization-group treatment is yet available only via the formulation in terms of the $O(-2)$-model \cite{WieseFedorenko2018,WieseFedorenko2019,KompanietsWiese2019}. 
This is unsatisfactory, since there are several generalizations of Laplacian growth of a path, 
for which one needs a field theory directly on the growth process:
the first is the $b$-Laplacian random walk ($b$-LRW) where one takes the solution of the Laplace equation to the power of $b$ \cite{LyklemaEvertszPietronero1986}. In dimension $d=2$, conformal arguments give 
$d_{\rm f} = 1+\frac{3}{4(2b+1)}$ \cite{Hastings2002,Lawler2006}. A field theory for Laplacian growth would allow us to generalize this to dimension $d=3$.

An even more important generalization is the growth of clusters, known as diffusion-limited aggregation (DLA) \cite{WittenSander1981,WittenSander1983}. 
Here the idea is to place a seed for the cluster, start particles far away, allow them to diffuse until they hit the cluster 
upon which they attach. This process is repeated leading to clusters such as the one shown on the right of Fig.~\ref{f:DLA}. While standard numerical simulations stop after a fixed number $N$ of points was attached to the cluster, our stopping condition is that the cluster has reached the target. This kind of stopping condition is much more natural in the combinatorial approach we follow here.

As the diffusing particle solves the Laplace equation, the cluster can be obtained by the algorithm outlined above for Laplacian random walks, except that now the growth is not restricted to the tip of a path.
In dimension ${d=2}$, DLA has a fractal dimension of $d_{\rm f}^{\textsc{DLA}}|_{d=2}\approx 1.71$.
Key observations are that there are different fractal dimensions, depending on the observable and where in the cluster it is measured. This is known as multifractality. As we will not give results for DLA here, 
we refer the reader to the literature \cite{Meakin1983,Meakin1983b,NiemeyerPietroneroWiesmann1984,HalseyMeakinProcaccia1986,HalseyJensenKadanoffProcacciaShraiman1986,MeakinTolman1989,SanchezGuineaSanderHakimLouis1993,Halsey1994,MandelbrotVespignaniKaufman1995,Halsey2000,LopezPimentel2017,BergerProcacciaTurner2020,ProcacciaProcaccia2021,TentiHernandez-GuianceIrurzun2021}.
Apart from the beautiful growth process of Hastings-Levitov \cite{HastingsLevitov1998}, no predictive theory exists today, especially there is no analytic prediction for the fractal dimension of DLA beyond simple mean-field arguments \cite{WittenSander1981,WittenSander1983,BogoyavlenskiyChernova2000}. 
Furthermore, dielectric breakdown models are governed by the same equation, replacing 
the solution of the Laplace equation by its $b$-th power \cite{NiemeyerPietroneroWiesmann1984}, as for the $b$-Laplacian random walk.

In order to treat these more general models via field theory, one needs the latter defined directly as a growth process. 
The aim of this work is to achieve this. We first construct an exact lattice action for Laplacian random walks. It is a dynamical field theory, which evolves an ensemble of configurations. Starting with a seed, the process adds vertices according to the rules of Laplacian growth. It stops when the target is reached. 
When all configurations have reached the target, the result is the sum over all LRWs with their respective probabilities.
In a second step, we take the continuum limit. We check to 3-loop order that for LRWs we get the same diagrams as for LERWs. 

We then generalize our dynamic field theory to $b$-Laplacian random walks and DLA. The treatment of these two theories is relegated to future work. 

\begin{figure}[t]
\begin{center}\small
\ifTikzExternal{\tikzexternaldisable}
{\parbox{16.5cm}{{\begin{tikzpicture}
\node (t0) at (-1,4) {Eulerian Circuits} ;
\node (t1) at (4,2.5) {LERW} ;
\node (t2) at (-1,2.5) {Laplacian Walks} ;
\node (t3) at (4,4) {UST} ;
\node (t4) at (9,4) {ASM} ;
\node (t5) at (13,4) {Potts$|_{q\to 0}$} ;
\node (t6) at (9,2) {CDW} ;
\node (t7) at (9,0) {FRG at depinning} ;
\coordinate (t8bis) at (4,0) ;
\node (t8) at (4,0) {{\parbox[c][1.cm][c]{3.3cm}{\begin{center}2 fermions + 1 boson\\$\phi^{4}\big|_{n=-2}$\end{center}}}};
\node (t9) at (-1,0) {{\parbox{3.3cm}{\begin{center}spin system with \\ 2 fermions + 1 boson\end{center}}}};
\draw [thick,blue,<->] (t0) -- (t3);
\draw [thick,blue,<->] (t1) -- (t2);
\draw [thick,blue,<->] (t3) -- (t4);
\draw [thick,blue,<->] (t5) -- (t4);
\draw [thick,blue,<->] (t6) -- (t4);
\draw [thick,blue,<->] (t3) -- (t1);
\draw [thick,blue,->] (t6) -- (t7) ;
\draw [thick,blue,->] (t7) -- (t8);
\draw [thick,blue,->] (t1) -- (t8);
\draw [thick,blue,->] (t1) -- (t9);
\draw [thick,blue,->] (t9) -- (t8);
\path (t7) -- node [text width=1cm,midway,above ] {~\,SUSY} (t8);
\path (t7) -- node [text width=1cm,midway,below ] {~~\cite{WieseFedorenko2018}} (t8);
\path (t8) -- node [text width=0.8cm,midway,above ] {\cite{ShapiraWiese2020}} (t9);
\path (t1) -- node [text width=0.5cm,near end,above ] {\rotatebox{26}{~~\cite{HelmuthShapira2020}}} (t9);
\path (t1) -- node [text width=2cm,midway,right ] {KW \cite{KenyonWilson2015}} (t3);
\path (t1) -- node [text width=1cm,midway,above ] {L\;\cite{Lawler2006}} (t2);
\path (t0) -- node [text width=1cm,midway,above ] {L\;\cite{Kasteleyn1967}} (t3);
\path (t1) -- node [text width=2cm,pos=0.3,right ] {pert.\,\cite{WieseFedorenko2018}
\\exact~\cite{ShapiraWiese2020}} (t8bis);
\path (t4) -- node [text width=1cm,midway,above ] {MD\,\cite{MajumdarDhar1992}} (t5);
\path (t4) -- node [text width=3cm,midway,right ] {NM conj.\,\cite{NarayanMiddleton1994}} (t6);
\path (t6) -- node [text width=3cm,midway,right ] {FLW test\,\cite{FedorenkoLeDoussalWiese2008a}} (t7);
\path (t3) -- node [text width=.5cm,midway,above ] {M\;\cite{Majumdar1992}} (t4);
\end{tikzpicture}}}}
\end{center}
\vspace*{-1ex}
\caption{Relations between Laplacian walks, loop-erased random walks (LERW), uniform spanning trees (UST), Eulerian Circuits, the Abelian Sandpile Model (ASM), the Potts model in the limit of $q\to 0$, charge-density waves at depinning (CDW), mapping onto the FRG field theory at depinning, reducing to $\phi^4$-theory at $n=-2$, and equivalent to an interacting theory of 2 complex fermions and one complex boson.}
\label{CDW-relations}
\end{figure}
\ifTikzExternal{\tikzexternalenable}

\subsection{Outline}
This article is organized as follows. First, we pose some definitions in section \ref{Graph-theoretical definitions}.
We then review the mapping of loop-erased random walks onto the $O(n)$-model at ${n=-2}$
(Viennot's theorem, section \ref{s:Helmut-Shapira-Wiese theorem}), and its lattice field theory (section \ref{Action for LERW field theory}).
We show how this lattice field theory solves the Laplace equation with excluded vertices (section \ref{Solving the Laplace equation with excluded vertices}). 
This formulation allows us to give a new proof for the equivalence of LERWs and Laplacian Growth (Lawler's theorem \cite{Lawler1980}) in section \ref{Proof of Lawler's theorem from the lattice field theory}.

In section \ref{Laplacian Random Walks} we construct a field theory for Laplacian random walks. The first step is to define appropriate growth rates (section \ref{LRW as a growth process}). 
We then construct a dynamic lattice action, which propagates an ensemble of paths until they hit a prescribed target (section \ref{A first lattice field theory}). This is tested in section \ref{Test of lattice field theory}. 
Its continuum limit is constructed in section \ref{Continuum limit for L-LERW-1}. Propagators and vertices are discussed in section \ref{Propagators and vertices}, followed by power counting (section \ref{Power counting}).
The theory is simplified in two steps (sections \ref{A simpler lattice action} and \ref{Continuum limit for L-LERW-2}). 
The branching vertex is interpreted as the diffusion and drift of the tip of the curve in section \ref{Diffusion and drift}. The probability to pass through a given point is introduced in section \ref{Observable: passing through a point}.
This observable is perturbatively corrected (section \ref{1-loop corrections to the observer}).
In section \ref{Simplifying the chi-theory} we give a final simplification of our perturbative action. 
The latter is generalized to the $b$-Laplacian random walk in section \ref{b-Laplacian Random Walks}. 
 DLA is discussed in section \ref{DLA}. 
We start with its definition (section \ref{Definition}), followed by an example (section \ref{DLA-example}), and a static conjecture (section \ref{DLA-conjecture}) which we show to be incorrect. 
A proper dynamic theory is presented in section \ref{Lattice action for DLA as a growth process}, followed by a check on a small graph (section \ref{Numerical check}). This allows us to write an improved action in section \ref{DLA:Improved lattice action}. The analysis of its field theory will be discussed elsewhere \cite{PisapiaWiese2027}. 

In appendix \ref{a:Perturbative calculations} we study perturbation theory, showing which vertices are irrelevant, and how tadpoles cancel. Appendix \ref{app:pOperation Proof} shows how our field theory simplifies, which allows us to assert in appendix \ref{Observable including sub-loops up to 3-loop order} that the observer of LERWs is reproduced at least to 3-loop order.

\section{Models on a graph}
\label{Models on a graph}

\subsection{Graph-theoretical definitions}
\label{Graph-theoretical definitions}

\noindent{\underline{\bf Directed Graph}:} A {\em directed graph} $\ca G = \{\ca V, \ca E\}$ is a collection of {\em vertices} or {\em points in space} $\ca V= \{ v_1,..., v_n\}$ and {\em edges} $\ca E$, s.t.\ each edge $\vec e\in \ca E$ is an ordered pair of vertices $\vec e = \{v_i,v_j \}$, $1\le i,j\le n$. We note $\mathbb V(\ca G):= \ca V$, and $\mathbb E(\ca G):= \ca E$. We demand both $\ca V$ and $\ca E$ to contain only distinct elements. 

\smallskip
\noindent{\underline{\bf Metric of a directed Graph}:} A directed graph $\ca G$ can be endowed with a rate (metric) $\beta$, by defining for each edge $\vec e=(x,y)$, $\beta_{\vec e}\equiv \beta_{xy} > 0$. We assign a rate $\beta_{xy}=0$ to all pairs $( x, y ) \notin \ca E$. 

\smallskip
\noindent{\underline{\bf Undirected Graph}:} An undirected graph $\ca G$ is a collection of vertices $\ca V= \{ v_1,..., v_n\}$ and edges $\ca E$, s.t.\ each edge $ e\in \ca E$ is an unordered pair of vertices $ e = (v_i,v_j )$. Its metric is $\beta_{e}= \beta_{xy} = \beta_{yx}$. Pairs $( x,y) \in \ca E$ are {\em neighbors}. 

\smallskip
\noindent{\underline{\bf Uniform Graph}:} A uniform graph is an undirected graph for which all $\beta_{e}$ are equal, 
 $\beta_{e} = \beta ~~ {\forall e\in \ca E}$.

\smallskip
\noindent{\underline{\bf Subgraph}:} A {\em subgraph} $\ca G' = \{\ca V',\ca E' \} \subset \ca G= \{\ca V, \ca E\}$ is a graph with $\ca V' \subset \ca V$, and $\ca E' \subset \ca E$. 

\smallskip
\noindent{\underline{\bf Connected Graph}:} A graph $\ca G = \{\ca V, \ca E\}$ is {\em connected} if for each pair of vertices $\{x,y\}\subset \ca V$ there exists a {\em path} $( v_1= x, v_2, ..., v_n = y )$, s.t.\ $(v_i, v_{i+1}) \in \ca E$ for all $1\le i < n$. We will only consider connected graphs. 

\smallskip
\noindent{\underline{\bf Tree}:} A {\em tree} is a connected graph for which the sequence $( v_1= x, v_2, ..., v_n = y )$ is unique. 

\smallskip
\noindent{\underline{\bf Sub-Tree}:} A {\em sub-tree} of $\ca G$ is a tree $\ca T \subset \ca G$. 

\smallskip
\noindent{\underline{\bf Directed Tree}:} A {\em directed tree} is a tree for which each edge is directed. 

\smallskip
\noindent{\underline{\bf Rooted Tree}:} A {\em rooted tree} is a tree $\ca T =\{\ca V',\ca E'\}$, with root $\xr$, s.t.\ for each $x\in \ca V'$ there is a path $ \{ v_1= \xr, v_2, ..., v_n = x \}$ s.t.\ all edges $\{v_i, v_{i+1}\} \in \ca E'$ for all $1\le i < n$.
Choosing a tree $\ca T = \{ \ca V', \ca E' \} \subset \ca G$ as part of an undirected graph $\ca G$ endows it with a direction. 

\smallskip
\noindent{\underline{\bf Vertices of a tree}:} Define 
$\mathbb V(\ca T)$ to be the set of all vertices in $\ca T$. 

\smallskip
\noindent{\underline{\bf Random walk}:} A random walk on a graph $\ca G$ with start point (root) $\xr \in \ca V$ is a sequence 
$( v_1 = \xr, v_2, ... )$, with transition rates $\beta_{xy}$ to jump from $x$ to $y$, and $\lambda_x=m_x^2$ to be absorbed at site $x$. 
Denote by $r_x = \lambda_x+\sum_y \beta_{xy} $ the total rate at which the walk exits from vertex $x$, and 
$\hat \beta_{xy}$ the probability to move from $x$ to $y$, and $\hat \lambda_x$ the probability to be absorbed at $x$,
\be
\hat \beta_{xy}:= \frac{\beta_{xy}}{r_x}, \quad \hat \lambda_x := \frac{\lambda_x}{r_x}.
\ee 

\smallskip
\noindent{\underline{\bf Path}:} 
We define a {\em path} $\omega$ to be a sequence of vertices, denoted $\omega=(\omega_1=\xr,\dots,\omega_n)$. 
The probability $\ca P(\omega)$ that the random walk selects the path $\omega$ and then stops is
\begin{eqnarray}\label{1}
\ca P(\omega) &=& \hat \lambda_{\omega_n} q(\omega ) , \\
 q(\omega ) &=& \hat \beta_{\omega_1 \omega_2} \hat\beta_{\omega_2 \omega_3} \dots\hat\beta_{\omega_{n-1} \omega_n}. 
\end{eqnarray}

\smallskip
\noindent{\underline{\bf Weight function}:}
We call $q(\omega)$ the {\em weight function}. 
This weight factorizes: if $\omega^{(a)} = (\omega_1^{(a)},\dots,\omega_n^{(a)})$ and $\omega^{(b)} = (\omega_1^{(b)},\dots,\omega_m^{(b)})$, with $\omega_n^{(a)} = \omega_1^{(b)}$, then the composition 
 $ \omega := \omega^{(a)} \circ \omega^{(b)} = (\omega_1^{(a)},\dots,\omega_n^{(a)} =\omega_1^{(b)},\dots,\omega_m^{(b)})$
 of the paths $ \omega^{(a)}$ and $ \omega^{(b)} $ has weight $q(\omega) = q (\omega^{(a)}) q (\omega^{(b)}) $.

\smallskip
\noindent{\underline{\bf Loop-erased random walk}:}
Given a path $\omega=(\omega_1,\dots,\omega_n)$, we define the \emph{loop-erasure} procedure. The loop erasure is obtained by iterating the \emph{one-loop erasure}: look for the first time $i$ at which the path repeats a vertex, so ${\omega_i = \omega_j}$ for some $j<i$. The one-loop erasure of $\omega$ is the path $(\omega_1,\dots,\omega_j,\omega_{i+1},\dots,\omega_n)$. We then apply the one-loop erasure to this new path, and continue until the path has no repeating vertices. The resulting path is the {\em loop erasure} of $\omega$, denoted $\LE(\omega)$. It is self-avoiding, and when the initial path $\omega$ is a random walk, it is called the {\em loop-erased random walk} (LERW).
If $\gamma$ is a LERW ending at $x$, the probability to generate it is 
\be
\mathcal P(\gamma ) = \hat \lambda_x \sum_{\omega:\LE(\omega)=\gamma} q(\omega) \ .
\ee

\smallskip
\noindent{\underline{\bf Lattice Laplacian}:} Be $f(x)$ a function defined for $x\in\ca V$. 
Then
\be\label{lat-Laplace}
\Delta^\beta f(x):= \sum _y \beta_{xy} \left[f(y) - f(x)\right] 
\ee
is the lattice Laplacian induced by $\beta$, applied to $f$. When $\beta_{xy}$ is indicator of neighbors, this becomes the standard Laplacian and is denoted $\nabla^2 $.

\subsection{Viennot's theorem}
\label{s:Helmut-Shapira-Wiese theorem}
We now state Viennot's theorem, closely following \cite{ShapiraWiese2020}.

\smallskip
\noindent{\underline{\bf Loop}:}
A {\em loop} $C$ is a path $C=(\omega_1,\dots,\omega_{n-1},\omega_n=\omega_1)$ where the first and last points are identical. We also require all vertices to be distinct (except $\omega_1$ and $\omega_n$), so it cannot be decomposed into smaller loops. Loops obtained from each other via cyclic permutations (dropping the repeated vertex $\omega_n$) are considered identical.

\smallskip
\noindent{\underline{\bf Collection of disjoint loops}:}
 By a {\em collection of disjoint loops} we mean a set $L=\{C_1,C_2,\dots\}$, each of whose elements is a loop, and the intersection of any pair of loops in $L$ is empty. We denote the {\em set of all such collections} by $\cc L$.

\smallskip
In order to formulate the theorem, we fix a self-avoiding path $\gamma$.
We define the set $\cc L_\gamma$ to consist of the collections of disjoint loops in which no loop intersects $\gamma$.
Then Viennot's theorem can be written as ($|L|$ being the number of loops)
\begin{equation}\label{eq:Viennot}
\ca A(\gamma):=q(\gamma) \sum_{L\in \cc L_\gamma} (-1)^{|L|}\prod_{C\in L}q(C) =\quad \sum_{\mathclap{\omega:\LE(\omega)=\gamma}} q(\omega) \times \sum_{L\in \cc L}(-1)^{|L|}\prod_{C\in L}q(C).
\end{equation}
On the l.h.s.\ one sums over the ensemble of collections of loops which do not intersect $\gamma$, giving each collection a weight $(-1)^{|L|}\prod_{C\in L}q(C)$.
We later calculate this object using field-theory. The r.h.s.\ contains two factors. The first is the weight to find the LERW path $\gamma$, our object of interest. The second is the partition function 
\be\label{Z}
{\cal Z }:= \sum_{L\in \cc L}(-1)^{|L|}\prod_{C\in L}q(C)
\ee of the loop model on the left-hand side.
Assuming the walk to stop at $\xt$, this relation can be read as 
\be\label{Pgamma}
\mathcal P(\gamma ) = \hat \lambda_{\xt} \sum_{\mathclap{\omega:\LE(\omega)=\gamma}} q(\omega) = \hat \lambda_{\xt} \frac{q(\gamma) \sum_{L\in \cc L_\gamma} (-1)^{|L|}\prod_{C\in L}q(C)}{\cal Z}.
\ee
The probability to go from the root $\xr$ to the target $\xt$ is the sum over all non-selfintersecting paths $\gamma$ starting at $\xr$ and going to $\xt$, 
\be\label{Prt}
\ca P_{\xr ,\xt} = \sum_{\gamma= \{ \xr, ..., \xt\} } \quad ~\mathllap{\mathcal P(\gamma )} .
\ee
For a proof and examples see section 3 of \cite{ShapiraWiese2020}.

\subsection{Action for LERW field theory}
\label{Action for LERW field theory}

\noindent{\underline{\bf Shapira-Wiese action}:}
Here we give the action of Ref.~\cite{ShapiraWiese2020}. 
To this aim, consider an interacting field theory with a standard pair of complex conjugate fields $\{\chi_i(x)\}_{x\in \ca G}$, with $i=1$ bosonic and $i=2,3$ fermionic fields. 
The action on the graph $\ca G$ is
\begin{equation} \label{eq:S_boson}
\rme^{-\ca S'_{\ca G}} = \prod_{x\in\ca V} \Big[ \rme^{- \sum_i r_x \tilde\chi_i(x)\chi_i(x)} \Big(1 + \sum_{y \in \ca V} \sum_i\beta_{xy} \tilde\chi_i(y)\chi_i(x) \Big)\Big].
\end{equation}
We can eliminate $r_x$ by rescaling $\chi_i(x) \to \chi_i(x)/r_x$. This leads to 
\begin{equation} \label{eq:S_boson2}
\rme^{-\ca S_{\ca G}} = \prod_{x \in \ca V} \Big[ \rme^{- \sum_i \tilde\chi_i(x)\chi_i(x)} \Big(1 + \sum_{y\in \ca V}\sum_i {\hat \beta_{xy}} \tilde\chi_i(y)\chi_i(x) \Big)\Big].
\end{equation}
This should be compared to Eq.~(44) of \cite{ShapiraWiese2020}, with the caveat that the latter was written for a single bosonic field. 
Note that we use different normalizations to eliminate the factor of $r_x$ from the action\footnote{We mostly set $\lambda_x\to 0$. A constant $\lambda_x>0$ gives a massive field theory, useful for IR regularization.}.

This action is easy to understand: the Gaussian measure $\rme^{- \sum_i \tilde\chi_i(x)\chi_i(x)}$ implies that expectations factorize at each vertex $x$, and that 
\be
\begin{aligned}
\left< \chi_j(x)^m \tilde\chi_i(x)^n \right>_0 &= n! \delta_{m,n}\delta_{i,j} && \mbox{~(bosons)},\\
\left< \chi_j(x)^m \tilde\chi_i(x)^n \right>_0 &= \delta_{n,m}\delta_{i,j} \Theta(n<2) && \mbox{(fermions).}
\end{aligned}
\ee
The factor inside the big round brackets in \Eq{eq:S_boson2} is such that each vertex has an out-degree of maximally one, i.e.\ $\tilde\chi_i(x)$ appears at most with a power of one, and for a single index $i$. 
As a consequence, action \eq{eq:S_boson2} produces the paths for \Eq{Pgamma} with the proper weights. 

Finally note that on a finite graph, we can use $N$ complex bosons, to calculate the partition function $\ca Z$ or an observable $\ca O$ for any $N$. The result will be a polynomial in $N$, which can be continued analytically to $N=-1$. This construction with $N\to -1$ complex bosons, which corresponds to $n=2N\to -2$ real bosons, is equivalent to the one bosonic and two fermionic fields used in \Eq{eq:S_boson2}.
Since bosons are easier to handle, we will use this formulation for practical calculations, both on a finite graph as for the field theory. 

\smallskip
\noindent{\underline{\bf Transition probability and observable}:}
According to \Eqs{Pgamma} and \eq{Prt} 
\be\label{12}
\ca P_{\xr, \xt} = \frac{\left< \chi_1(\xt) \tilde \chi_1(\xr) \right> }{\ca Z}, 
\ee
where $\ca Z = \left< 1\right>$ is the partition function given in \Eq Z. If $\lambda_x=0$ for all $x$, then $\ca P_{\xr, \xt} =1$.

In order to assess whether a point $x$ belongs to a loop-erased random walk from $\xr$ to $\xt$ after erasure, we fix the three vertices $\xr, x$ and $\xt$ and consider the observable
\begin{equation}\label{eq:U}
\ca O(\xr, x, \xt):= \frac{ \left< \chi_2(\xt) \tilde \chi_2(x) \chi_1(x) \tilde \chi_1(\xr) \right>}{\ca Z} .
\end{equation}

\subsection{Solving the Laplace equation with excluded vertices}
\label{Solving the Laplace equation with excluded vertices}
To treat Laplacian growth models we want to solve the Laplace-equation
\bea \label{eq:laplace}
\nn \Delta^\beta_x \Phi_{\ca V | \ca E}(x)&=&0, \qquad ~\,x \in \ca V \setminus \ca E, \\
\Phi(x) &=& a_x, \qquad x \in \ca E.
\eea
We claim that similarly to \Eq{eq:S_boson2} this can be obtained from the action
\be \label{eq:S_boson4}
\rme^{-\ca S_{\ca V| \ca E}}= \prod_{x \in \ca V } \rme^{- \sum_i \tilde\chi_i(x)\chi_i(x)} \prod_{x \in \ca V \backslash \ca E} \Big(1 {+} \sum_{y\in \ca V} \sum_i {\hat\beta_{xy}} \tilde\chi_i(y)\chi_i(x) \Big). 
\ee
Then
\bea \label{potential-Phi}
\nn \Phi_{\ca V | \ca E}(x) &:=& \frac1{\ca Z_{\ca V| \ca E}} \sum_{y \in \ca E} \left< \tilde\chi_1(x) \chi_1 (y) \right> a_y \\
&=& \frac1{\ca Z_{\ca V| \ca E}} 
\int_{\tilde \chi,\chi } \rme^{-\ca S_{\ca V| \ca E}}\tilde\chi_1 (x) \left[\sum_y a_y \chi_1(y)\right]
\eea
solves \Eq{eq:laplace}.
To prove this, we first show
\be
\Delta^\beta_x \left< \tilde\chi_1(x) \chi_1 (y) \right> = 0 \quad \forall x\in \ca V\backslash \ca E, ~y \in \ca V. 
\ee
This is a result of the 2-point function being the sum over all LERWs from $x$ to $y$ in the $\ca O(-2)$-theory restricted to $\ca V\backslash \ca E$, or equivalently the sum over all random walks from $x$ to $y$ in the free theory. 
As a result, $\Phi_{\ca V | \ca E}$ satisfies the Laplace equation for $x\not \in \ca E$. 
Let us now evaluate it for $x \in \ca E$, 
\bea
\Phi_{\ca V | \ca E}(x) &=& \frac1{\ca Z_{\ca V| \ca E}} \sum_{y \in \ca E} \left< \tilde\chi_1(x) \chi_1 (y) \right> a_y \nn\\
&=& \frac1{\ca Z_{\ca V| \ca E}} \left< \tilde\chi_1(x) \chi_1 (x) \right> a_x 
= a_x.
\eea
We used that the action \eq{eq:S_boson4} has no  outgoing edge for $y \in \ca E$, thus the only non-vanishing contraction of $\chi_1(y)$ with $y \in \ca E$ is with $\tilde\chi_1(x)$, which leads to the second equality. What remains are loops in the theory \eq{eq:S_boson4}, i.e.\ $\ca Z_{\ca V| \ca E}$, proving the last equality.

To conclude this section, let us make a specific choice for $\ca E$ and $a_x$: we take $\ca E = \gamma \cup \ca T$, 
where $\gamma$ is a path, and $\ca T$ the target, specified above to $\ca T = \{ \xt \}$. We set 
 $a_x=0$ for $x\in \gamma$, and $a_x=1$ for $x\in \ca T$. We use the notation $\left< \cdot \right>_{\ca V\backslash \gamma} := \left< \cdot \right>_{\ca V' | \ca T}$ for $\ca V' := \ca V \setminus \gamma$.
Then 
\be
 \left< \tilde\chi_1(x) \chi_1 (\xt) \right>_{\ca V\backslash \gamma}=
 \left< \tilde\chi_1(x) \chi_1 (\xt) \right>_{\ca V| \ca E}, \qquad \mbox{and}\qquad 
\ca Z_{\ca V\backslash \gamma} = \ca Z_{\ca V| \ca E}.
\ee
Formula \eq{potential-Phi} can now be written as 
\be
\label{potential-Phi-2}
 \Phi_{\ca V | \ca E}(x) = \frac1{\ca Z_{\ca V| \ca E}} \left< \tilde\chi_1(x) \chi_1 (\xt) \right>_{\ca V| \ca E} = \frac1{\ca Z_{\ca V\backslash \gamma }} \left< \tilde\chi_1(x) \chi_1 (\xt) \right>_{\ca V \backslash \gamma} .
\ee

\subsection{Proof of Lawler's theorem from the lattice field theory}
\label{Proof of Lawler's theorem from the lattice field theory}
The field theory \eq{eq:S_boson2} calculates the probability for all paths $\gamma$ from $\xr$ to $\xt$. This allows us to evaluate the probability to have a path beginning with $\omega_1=(\xt, ...,v )$ and compare it to the probability for a path of one more step $\omega_2 = ( \xt, ..., v,x )$. 
Denote $\ca V_1 := \ca V\backslash (\omega_1 \backslash \{v\})$ and $\ca V_2 := \ca V \backslash \omega_1$.
Then 
\be
\ca P(\omega_1) = \sum_{\gamma \supset \omega_1} \ca P(\gamma) = q(\omega_1) \frac{\left< \tilde \chi_1(v) \chi_1(\xt) \right>_{\ca V_1} }{\ca Z} .
\ee
The same holds for $\omega_2$. 
The ratio of the two probabilities is given by 
\be\label{35}
 \frac{\ca P(\omega_2)}{\ca P(\omega_1)}
= \frac{\hat \beta_{vx}\int_\chi \rme^{-\ca S_{\ca V_2 }} \tilde\chi_1(x) \chi_1(\xt )}{\int_\chi \rme^{-\ca S_{\ca V_1}} \tilde\chi_1(v) \chi_1(\xt)} .
\ee
Let us write the denominator in terms of the action in the numerator, making explicit the additional factor in the interaction for vertex $v$, 
\bea
{\int_\chi \rme^{-\ca S_{\ca V_1}} \tilde\chi_1(v) \chi_1(\xt)}&=& \int_{\chi|\ca V_2} \int_{\chi_v}\rme^{-\ca S_{\ca V_2} - \sum_i \tilde\chi_i(v)\chi_i(v)} \Big(1 + \sum_{y\in \ca V}\sum_i {\hat \beta_{vy}} \tilde\chi_i(y)\chi_i(v) \Big) \tilde\chi_1(v) \chi_1(\xt) \nn \\
&=& \int_{\chi|\ca V_2} \rme^{-\ca S_{\ca V_2}} \sum_{y\in \ca V} {\hat \beta_{vy}} \tilde\chi_1(y)
 \chi_1(\xt)\nn\\
 &=& \sum_y \hat \beta_{v y} \left< \tilde\chi_1(y) \chi_1 (\xt) \right>_{\ca V_2} .
 \label{36}
\eea
Summing the numerator of the r.h.s.~of \Eq{35} over all possible $x$ (i.e.\ different choices of $\omega_2$) gives its denominator, 
\be\label{37}
\ca P (\omega_1) = \sum_{x\in \ca V} \ca P(\omega_1+\{x\}).
\ee
As expected, $\ca P(\omega_2)$ is {\em properly normalized}. 
With the definition given in \Eq{potential-Phi-2}, 
\Eq{35} can be written as 
\be \label{24}
 \frac{\ca P(\omega_2 )}{ \ca P(\omega_1)} = \frac{ \hat \beta_{vx} \Phi_{\ca V|\ca E}(x)}
{ \sum_{y\in \ca V} \hat \beta_{vy} \Phi_{\ca V|\ca E}(y)},
\ee
with $\ca E = \omega_1 \cup \ca T$. The boundary condition is $0$ on $\omega_1$ and $1$ on $\ca T$.

This is Lawler's theorem: it states that LERW configurations have the same probability as the Laplacian random walk defined by \Eq{24}.

\section{Laplacian random walks}
\label{Laplacian Random Walks}

\subsection{LRW as a growth process}
\label{LRW as a growth process}
In the last section we studied the generating function of LERWs, or equivalently LRWs starting at the root $\xr$ and stopping at the target $\xt$. The aim of this section is to reformulate LRWs as a dynamic process in time $t$. According to \Eq{24}, the probability to grow from the tip at $v$ is proportional to the solution of the Laplace equation at time $t$,
\be
\ca P _{v \to x}|_t = \frac{ \beta_{vx} \Phi_{\ca V| \ca E}(x,t)}{\ca N_v(t)} .
\ee
\Eq{24} uses the normalization $\ca N_v(t) = \sum_{y\in \ca V} \beta_{vy} \Phi_{\ca V| \ca E}(y,t)$.
When formulating LRWs as a growth process, it is not only difficult to calculate the inverse of $\ca N_v$ in a field theory but even futile. Instead we use transition rates in continuous time, 
\be\label{rate}
\rho _{v \to x}|_t = \gamma \beta_{vx} \Phi_{\ca V| \ca E}(x,t) .
\ee
As long as $\beta_{xy}=\beta_{yx}$, \Eq{rate} is also the appropriate rate for DLA: $\Phi_{\ca V| \ca E}(x) $ is the probability for a particle to be at $x$ without having been absorbed by $\ca E$, while $\gamma \beta_{xv}$ is the rate to then be absorbed at $v$. 
In addition, the rate \eq{rate} will allow us to address the temporal evolution of the cluster. 
For all these reasons, it is this rate we shall use in our field theory.

\subsection{A first lattice field theory $\ca L_1^{\textsc{LRW}}$}
\label{A first lattice field theory}
To construct a lattice field theory for the growth of Laplacian random walks, we introduce three pairs of complex fields. Each of these depends both on space $x$ (the vertex) and time $t$. The latter increases in steps of $\delta t=1$. The object we construct is the generating function $\mathbb G$, which encodes the ensemble of generated paths and their probabilities: each term is a probability times a product of (bosonic) fields for the path. 
The fields at time $t$ are: $\tilde \phi(x,t)$ indicating that the tip of the path is at $x$, and 
$\tilde \psi(x,t)$ indicating that the path passed through $x$ at an earlier time. 
Let us give an example ($\lambda_1$ and $\lambda_2$ are two numbers, and the root $\xr$ has two neighbors $x$ and $y$)
\bea\label{eq1}
\mathbb G (t) &=& \tilde \phi(\xr ,t), \\
\label{eq2}
\mathbb G(t{+}1) &=& \left[1- \lambda_1 - \lambda_2 \right] \tilde\phi(\xr ,t{+}1) + \lambda_1 \tilde\psi(\xr ,t{+}1) \tilde\phi(x,t{+}1) + \lambda_2 \tilde\psi(\xr ,t{+}1) \tilde \phi(y,t{+}1).
\eea
\Eq{eq1} represents the root from which we start. \Eq{eq2} represents the state of the system after one iteration, where the 
path has been prolonged from $\xr$ to $x$ and $y$ respectively. 
It is normalized, as can be verified by setting $\tilde \psi\to1$ and $\tilde \phi \to 1$. We can further include information about the edges used, by introducing a factor of $R_{xy}$ for an edge from $x$ to $y$, 
\bea
\mathbb G(t{+}1) &=& \left[1- \lambda_1 - \lambda_2 \right] \tilde\phi(\xr ,t{+}1) \nn\\
&& + \lambda_1 R_{\xr x} \tilde\psi(\xr ,t{+}1) \tilde\phi(x,t{+}1) + \lambda_2 R_{\xr y} \tilde\psi(\xr ,t{+}1) \tilde \phi(y,t{+}1).
\label{eq3}
\eea
The evolution of each term in $\mathbb G(t)$ is given by the following rules
\bea\label{24bis}
\tilde \phi(x,t) &\longrightarrow& \tilde \phi(x,t{+}1) + \gamma \sum_{y\in \ca V} \beta_{xy} \Big[ R_{x y} \tilde\phi(y,t{+}1) \tilde \psi(x,t{+}1) - \tilde\phi(x,t{+}1)\Big]\Phi_{\ca V| \ca E}(y,t),\qquad \\
\label{25}
\tilde \psi(x,t) &\longrightarrow& \tilde\psi(x,t{+}1).
\eea
The first term in \Eq{24bis} propagates a configuration without changing it. The second term is the 
growth from $x$ to $y$ with rate $\gamma \beta _{xy}\Phi_{\ca V| \ca E}(y,t)$ in a time interval $\delta t=1$. 
While the first term in the square brackets is the configuration with an additional vertex added, the second subtracts with the same weight the original configuration. This iteration works even for large $\gamma \delta t$ as long as one avoids numerical instabilities. 

The action which implements \Eqs{24bis}-\eq{25} contains $\tilde \psi$ and $\tilde \phi$, as well as their complex conjugates. It further contains 
\noindent $\tilde \chi (x,t)$ and $\chi(x,t)$ to evaluate $\Phi_{\ca V| \ca E}(y,t)$ on each time slice with an action close to \Eq{eq:S_boson2}. 
We claim that 
\be
\mathbb G(t{+}1) = \int_{\tilde \chi,\chi } \int_{\tilde \phi,\phi }\int_{\tilde \psi,\psi }
\rme^{-\ca L_1^{\textsc{LRW}}} \mathbb G(t)
\ee
with Lagrangian 
\bea \label{action-LRW-dynamic-1}
\rme^{-\ca L_1^{\textsc{LRW}}} &=& \prod_{x \in \ca V} \exp\!\Big( - \tilde \phi (x,t)\phi (x,t)- \tilde \psi (x,t)\psi (x,t) - \sum_{\alpha=1}^{n_{\alpha}}\sum_{j=1}^{n_\chi} \tilde\chi_j^\alpha (x,t)\chi_j^\alpha(x,t)\Big) \nn\\
& \times& \prod_{x \in \ca V\backslash \xt } \bigg\{ \prod_{\alpha=1}^{n_\alpha}\Big[ 1+ \sum_{y \in \ca V}\hat \beta_{xy} \sum_{j=1}^{n_\chi} \tilde \chi_j^\alpha (y,t)\chi_j^\alpha(x,t)\Big] + \tilde \phi (x,t{+}1)\phi(x,t)+ \tilde \psi (x,t{+}1)\psi (x,t) \nn\\
&& \qquad~~~~~ + 
 \gamma \sum_{y\in \ca V} { \beta_{xy}}
 \Big[ R_{xy} \tilde \phi (y,t{+}1)\tilde \psi(x,t{+}1) -\tilde \phi(x,t{+}1)\Big] \phi (x,t) \tilde \chi_2^1 (y,t) \bigg\} \nn\\
 & \times& \Big[ 1+ \tilde \phi(\xt ,t{+}1) \phi(\xt,t) + \chi_{2}^1 (\xt,t) \Big].
\eea
The first line is the Gaussian measure for each field, ensuring that $\langle \tilde \phi(x,t) \phi(x,t) \rangle_0 = 
\langle \tilde \psi(x,t) \psi(x,t)\rangle_0$ $=1$, $\langle \tilde \chi_i^\alpha(x,t) \chi_j^\beta(x,t)\rangle_0 =\delta_{ij}\delta^{\alpha \beta} $, and all other expectation values vanish. The additional replica index $\alpha$ will be explained below; for the moment the reader may think of setting $n_\alpha=1$ and $\alpha=1$ (effectively dropping the index).

The second line is a product over all vertices other than the target $\xt$. 
The first term comes from \Eq{eq:S_boson2}. The second term says that if 
 there is a field $\tilde \phi(x,t)$ it can be propagated to time $t{+}1$ by 
$\tilde \phi (x,t{+}1)\phi(x,t)$; an equivalent term is there to propagate $\tilde\psi(x,t)$. As only one term per site can be picked, the boundary conditions for $\chi$ are implemented, as it cannot touch the $\phi$ or $\psi$ fields. 

Let us postpone the discussion of the third line proportional to $\gamma$; the last line says that if a path arrives at the target $\xt$, it stays there. 
It also contains a field $ \chi_{2}^1 (\xt,t) $ for the evaluation of the Laplace equation. The way it is written ensures that once the target is reached and the task finished, we need to use $\tilde \phi (x,t{+}1)\phi(x,t)$, and as a result stop solving the Laplace equation, and thus growth. 

Let us now check normalizations for $\gamma=0$. If a vertex is occupied by $\tilde \phi$ or $\tilde \psi$, no outgoing line for $\tilde \chi$ exists. Therefore integrating over $\tilde \chi$ and $\chi$ at $n_\chi=-1$ gives the partition function $\ca Z_{\ca V\backslash \ca E}$, where the excluded vertices $\ca E$ are all those vertices for which there is either a $\tilde \phi$ or $\tilde \psi$, or which are the target $\xt$. 
Thus, in order to propagate the configuration properly, one needs to normalize by $\ca Z_{\ca V\backslash \ca E}$, as is done in \Eq{potential-Phi-2}. 
Let us now take $n_\alpha$ arbitrary; given the factorization over $\alpha$, the result is $(\ca Z_{\ca V\backslash \ca E})^{n_\alpha}$. 
In perturbative calculations, we can use the replica trick to set $n_\alpha\to 0$, eliminating the need for a normalization. 
This is not possible when working on a graph and performing the Wick contractions; in order to check the action \eq{action-LRW-dynamic-1}, we need to explicitly divide each term by its partition function $\ca Z_{\ca V\backslash \ca E}$, before propagating it. 

We can finally address the terms proportional to $\gamma$. Together with $ \chi_{2}^1 (\xt,t) $ the term $\tilde \chi_{2}^1 (y,t)$ evaluates the solution $\Phi_{\ca V| \ca E}(y,t)$ to the Laplace equation. If, and only if $\tilde \phi(x,t)$ is present, $\phi (x,t) $ can be Wick-contracted with it, which produces the terms in \Eq{24bis}.

\begin{figure}[t]
\ifTikzExternal{\tikzsetnextfilename{convPlot}}
\[{{\vcenter{\hbox{\begin{tikzpicture}[scale=2]
\coordinate (plot) at (0,0) ;
\coordinate (seed) at (-3,2) ;
\coordinate (1/11) at (3.2,-1.29) ;
\coordinate (2/11) at (3.2,-0.555) ;
\coordinate (3/11) at (3.2,0.165) ;
\coordinate (5/11) at (3.2,1.58) ;
\coordinate (R15) at (-3.5,1.75) ;
\coordinate (R12) at (-3.5,0.5) ;
\coordinate (R123) at (-3.5,-0.75) ;
\coordinate (R125) at (-3.5,-2) ;
\coordinate (R152) at (-2,-2.5) ;
\coordinate (R1523) at (0,-2.5) ;
\node at (plot) {\fig{0.88888888888888}{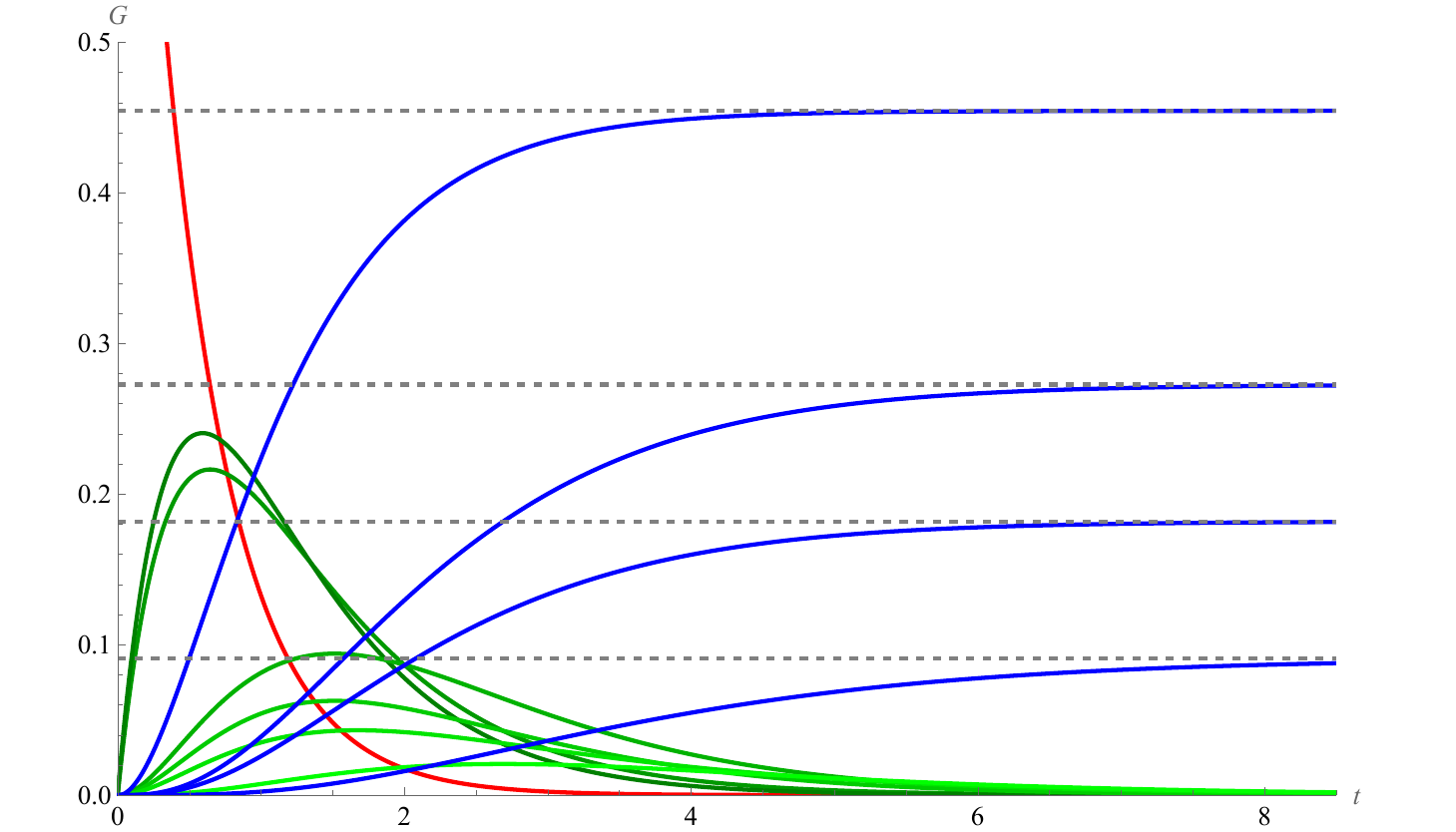}};
\node[font=\scriptsize, fill=white, inner sep=1pt] at ($(seed) + (-0.15,0.1)$) {$\mathbb G^{\textsc{lrw}}(t)$};
\node[anchor=west,font=\scriptsize, fill=white, inner sep=1pt] at ($(1/11) + (0,-0.7)$) {$\gamma t$};
\node[anchor=west] at (seed) {$\vcenter{\hbox{%
\begin{tikzpicture}[x=0.12mm,y=0.12mm]
 \draw [line width=0.2mm, gray] (-50.,0.) -- (-15.4508,-47.5528) ;
 \draw [line width=0.2mm, gray] (-15.4508,-47.5528) -- (40.4508,-29.3893) ;
 \draw [line width=0.2mm, gray] (40.4508,-29.3893) -- (40.4508,29.3893) ;
 \draw [line width=0.2mm, gray] (-15.4508,47.5528) -- (-15.4508,-47.5528) ;
 \draw [line width=0.2mm, gray] (-50.,0.) -- (-15.4508,47.5528) ;
 \draw [line width=0.2mm, gray] (-15.4508,47.5528) -- (40.4508,29.3893) ;
 \draw [line width=0.2mm, red] (-50.,0.) -- (-63.,0.) ;
 \fill(-50.,0.) circle (1.2pt);
 \fill(-15.4508,-47.5528) circle (1.2pt);
 \fill(40.4508,-29.3893) circle (1.2pt);
 \fill(40.4508,29.3893) circle (1.2pt);
 \fill(-15.4508,47.5528) circle (1.2pt);
\end{tikzpicture}%
 }}$};
\node[anchor=west] at (1/11) {$\frac{1}{11}~\vcenter{\hbox{%
 \begin{tikzpicture}[x=0.12mm,y=0.12mm]
 \draw [line width=0.2mm, gray] (-50.,0.) -- (-15.4508,-47.5528) ;
 \draw [line width=0.2mm, blue,directed] (-15.4508,-47.5528) -- (40.4508,-29.3893) ;
 \draw [line width=0.2mm, blue,directed] (40.4508,-29.3893) -- (40.4508,29.3893) ;
 \draw [line width=0.2mm, blue,directed] (-15.4508,47.5528) -- (-15.4508,-47.5528) ;
 \draw [line width=0.2mm, blue,directed] (-50.,0.) -- (-15.4508,47.5528) ;
 \draw [line width=0.2mm, gray] (-15.4508,47.5528) -- (40.4508,29.3893) ;
 \fill(-50.,0.) circle (1.2pt);
 \fill(-15.4508,-47.5528) circle (1.2pt);
 \fill(40.4508,-29.3893) circle (1.2pt);
 \fill(40.4508,29.3893) circle (1.2pt);
 \fill(-15.4508,47.5528) circle (1.2pt);
 \end{tikzpicture}%
 }}$};
\node[anchor=west] at (2/11) {$\frac{2}{11}~\vcenter{\hbox{%
 \begin{tikzpicture}[x=0.12mm,y=0.12mm]
 \draw [line width=0.2mm, blue,directed] (-50.,0.) -- (-15.4508,-47.5528) ;
 \draw [line width=0.2mm, gray] (-15.4508,-47.5528) -- (40.4508,-29.3893) ;
 \draw [line width=0.2mm, gray] (40.4508,-29.3893) -- (40.4508,29.3893) ;
 \draw [line width=0.2mm, blue,directed] (-15.4508,-47.5528) -- (-15.4508,47.5528) ;
 \draw [line width=0.2mm, gray] (-50.,0.) -- (-15.4508,47.5528) ;
 \draw [line width=0.2mm, blue,directed] (-15.4508,47.5528) -- (40.4508,29.3893) ;
 \fill(-50.,0.) circle (1.2pt);
 \fill(-15.4508,-47.5528) circle (1.2pt);
 \fill(40.4508,-29.3893) circle (1.2pt);
 \fill(40.4508,29.3893) circle (1.2pt);
 \fill(-15.4508,47.5528) circle (1.2pt);
 \end{tikzpicture}%
 }}$};
\node[anchor=west] at (3/11) {$\frac{3}{11}~\vcenter{\hbox{%
 \begin{tikzpicture}[x=0.12mm,y=0.12mm]
 \draw [line width=0.2mm, blue,directed] (-50.,0.) -- (-15.4508,-47.5528) ;
 \draw [line width=0.2mm, blue,directed] (-15.4508,-47.5528) -- (40.4508,-29.3893) ;
 \draw [line width=0.2mm, blue,directed] (40.4508,-29.3893) -- (40.4508,29.3893) ;
 \draw [line width=0.2mm, gray] (-15.4508,47.5528) -- (-15.4508,-47.5528) ;
 \draw [line width=0.2mm, gray] (-50.,0.) -- (-15.4508,47.5528) ;
 \draw [line width=0.2mm, gray] (-15.4508,47.5528) -- (40.4508,29.3893) ;
 \fill(-50.,0.) circle (1.2pt);
 \fill(-15.4508,-47.5528) circle (1.2pt);
 \fill(40.4508,-29.3893) circle (1.2pt);
 \fill(40.4508,29.3893) circle (1.2pt);
 \fill(-15.4508,47.5528) circle (1.2pt);
 \end{tikzpicture}%
 }}$};
\node[anchor=west] at (5/11) {$\frac{5}{11}~\vcenter{\hbox{%
\begin{tikzpicture}[x=0.12mm,y=0.12mm]
 \draw [line width=0.2mm, gray] (-50.,0.) -- (-15.4508,-47.5528) ;
 \draw [line width=0.2mm, gray] (-15.4508,-47.5528) -- (40.4508,-29.3893) ;
 \draw [line width=0.2mm, gray] (40.4508,-29.3893) -- (40.4508,29.3893) ;
 \draw [line width=0.2mm, gray] (-15.4508,47.5528) -- (-15.4508,-47.5528) ;
 \draw [line width=0.2mm, blue,directed] (-50.,0.) -- (-15.4508,47.5528) ;
 \draw [line width=0.2mm, blue,directed] (-15.4508,47.5528) -- (40.4508,29.3893) ;
 \fill(-50.,0.) circle (1.2pt);
 \fill(-15.4508,-47.5528) circle (1.2pt);
 \fill(40.4508,-29.3893) circle (1.2pt);
 \fill(40.4508,29.3893) circle (1.2pt);
 \fill(-15.4508,47.5528) circle (1.2pt);
\end{tikzpicture}%
}}$};
\node[anchor=east] at (R15) {$\vcenter{\hbox{%
\begin{tikzpicture}[x=0.12mm,y=0.12mm]
 \draw [line width=0.2mm, gray] (-50.,0.) -- (-15.4508,-47.5528) ;
 \draw [line width=0.2mm, gray] (-15.4508,-47.5528) -- (40.4508,-29.3893) ;
 \draw [line width=0.2mm, gray] (40.4508,-29.3893) -- (40.4508,29.3893) ;
 \draw [line width=0.2mm, gray] (-15.4508,47.5528) -- (-15.4508,-47.5528) ;
 \draw [line width=0.2mm, blue,directed] (-50.,0.) -- (-15.4508,47.5528) ;
 \draw [line width=0.2mm, gray] (-15.4508,47.5528) -- (40.4508,29.3893) ;
 \fill(-50.,0.) circle (1.2pt);
 \fill(-15.4508,-47.5528) circle (1.2pt);
 \fill(40.4508,-29.3893) circle (1.2pt);
 \fill(40.4508,29.3893) circle (1.2pt);
 \fill(-15.4508,47.5528) circle (1.2pt);
\end{tikzpicture}%
 }}$};
\draw[gray,enddirected] (R15) -- (-2.75,-0.05);
\node[anchor=east] at (R12) {$\vcenter{\hbox{%
\begin{tikzpicture}[x=0.12mm,y=0.12mm]
 \draw [line width=0.2mm, blue,directed] (-50.,0.) -- (-15.4508,-47.5528) ;
 \draw [line width=0.2mm, gray] (-15.4508,-47.5528) -- (40.4508,-29.3893) ;
 \draw [line width=0.2mm, gray] (40.4508,-29.3893) -- (40.4508,29.3893) ;
 \draw [line width=0.2mm, gray] (-15.4508,47.5528) -- (-15.4508,-47.5528) ;
 \draw [line width=0.2mm, gray] (-50.,0.) -- (-15.4508,47.5528) ;
 \draw [line width=0.2mm, gray] (-15.4508,47.5528) -- (40.4508,29.3893) ;
 \fill(-50.,0.) circle (1.2pt);
 \fill(-15.4508,-47.5528) circle (1.2pt);
 \fill(40.4508,-29.3893) circle (1.2pt);
 \fill(40.4508,29.3893) circle (1.2pt);
 \fill(-15.4508,47.5528) circle (1.2pt);
\end{tikzpicture}%
 }}$};
\draw[gray,enddirected] (R12) -- (-2.7,-0.25);
\node[anchor=east] at (R123) {$\vcenter{\hbox{%
\begin{tikzpicture}[x=0.12mm,y=0.12mm]
 \draw [line width=0.2mm, blue,directed] (-50.,0.) -- (-15.4508,-47.5528) ;
 \draw [line width=0.2mm, blue,directed] (-15.4508,-47.5528) -- (40.4508,-29.3893) ;
 \draw [line width=0.2mm, gray] (40.4508,-29.3893) -- (40.4508,29.3893) ;
 \draw [line width=0.2mm, gray] (-15.4508,-47.5528) -- (-15.4508,47.5528) ;
 \draw [line width=0.2mm, gray] (-50.,0.) -- (-15.4508,47.5528) ;
 \draw [line width=0.2mm, gray] (-15.4508,47.5528) -- (40.4508,29.3893) ;
 \fill(-50.,0.) circle (1.2pt);
 \fill(-15.4508,-47.5528) circle (1.2pt);
 \fill(40.4508,-29.3893) circle (1.2pt);
 \fill(40.4508,29.3893) circle (1.2pt);
 \fill(-15.4508,47.5528) circle (1.2pt);
\end{tikzpicture}%
 }}$};
\draw[gray,enddirected] (R123) -- (-2.1,-1.2);
\node[anchor=east] at (R125) {$\vcenter{\hbox{%
\begin{tikzpicture}[x=0.12mm,y=0.12mm]
 \draw [line width=0.2mm, blue,directed] (-50.,0.) -- (-15.4508,-47.5528) ;
 \draw [line width=0.2mm, gray] (-15.4508,-47.5528) -- (40.4508,-29.3893) ;
 \draw [line width=0.2mm, gray] (40.4508,-29.3893) -- (40.4508,29.3893) ;
 \draw [line width=0.2mm, blue,directed] (-15.4508,-47.5528) -- (-15.4508,47.5528) ;
 \draw [line width=0.2mm, gray] (-50.,0.) -- (-15.4508,47.5528) ;
 \draw [line width=0.2mm, gray] (-15.4508,47.5528) -- (40.4508,29.3893) ;
 \fill(-50.,0.) circle (1.2pt);
 \fill(-15.4508,-47.5528) circle (1.2pt);
 \fill(40.4508,-29.3893) circle (1.2pt);
 \fill(40.4508,29.3893) circle (1.2pt);
 \fill(-15.4508,47.5528) circle (1.2pt);
\end{tikzpicture}%
 }}$};
\draw[gray,enddirected] (R125) -- (-2.5,-1.55);
\node[anchor=west] at (R152) {$\vcenter{\hbox{%
\begin{tikzpicture}[x=0.12mm,y=0.12mm]
 \draw [line width=0.2mm, gray] (-50.,0.) -- (-15.4508,-47.5528) ;
 \draw [line width=0.2mm, gray] (-15.4508,-47.5528) -- (40.4508,-29.3893) ;
 \draw [line width=0.2mm, gray] (40.4508,-29.3893) -- (40.4508,29.3893) ;
 \draw [line width=0.2mm, blue,directed] (-15.4508,47.5528) -- (-15.4508,-47.5528) ;
 \draw [line width=0.2mm, blue,directed] (-50.,0.) -- (-15.4508,47.5528) ;
 \draw [line width=0.2mm, gray] (-15.4508,47.5528) -- (40.4508,29.3893) ;
 \fill(-50.,0.) circle (1.2pt);
 \fill(-15.4508,-47.5528) circle (1.2pt);
 \fill(40.4508,-29.3893) circle (1.2pt);
 \fill(40.4508,29.3893) circle (1.2pt);
 \fill(-15.4508,47.5528) circle (1.2pt);
\end{tikzpicture}%
 }}$};
\draw[gray,enddirected] (R152) -- (-2.1,-1.7);
\node[anchor=west] at (R1523) {$\vcenter{\hbox{%
\begin{tikzpicture}[x=0.12mm,y=0.12mm]
 \draw [line width=0.2mm, gray] (-50.,0.) -- (-15.4508,-47.5528) ;
 \draw [line width=0.2mm, blue,directed] (-15.4508,-47.5528) -- (40.4508,-29.3893) ;
 \draw [line width=0.2mm, gray] (40.4508,-29.3893) -- (40.4508,29.3893) ;
 \draw [line width=0.2mm, blue,directed] (-15.4508,47.5528) -- (-15.4508,-47.5528) ;
 \draw [line width=0.2mm, blue,directed] (-50.,0.) -- (-15.4508,47.5528) ;
 \draw [line width=0.2mm, gray] (-15.4508,47.5528) -- (40.4508,29.3893) ;
 \fill(-50.,0.) circle (1.2pt);
 \fill(-15.4508,-47.5528) circle (1.2pt);
 \fill(40.4508,-29.3893) circle (1.2pt);
 \fill(40.4508,29.3893) circle (1.2pt);
 \fill(-15.4508,47.5528) circle (1.2pt);
\end{tikzpicture}%
 }}$};
\draw[gray,enddirected] (R1523) -- (-1.15,-1.85);
 \end{tikzpicture}%
 }}}%
 }
\]
\caption{The contributions to $\mathbb G^{\textsc{LRW}}(t)$ of each path are plotted against $\gamma t$ to show how $\mathbb G^{\textsc{LRW}}(t=0) = \tilde\phi(1,0)$ evolves under the action \eq{action-LRW-dynamic-1}, and converges to $\mathbb G^{\textsc{LERW}} $ given in \Eq{G-grpah1}. It is obtained numerically, using $\gamma \delta t=0.01$; each curve is associated to the respective path. Starting from an empty graph with a seed (red curve), the graph is explored by intermediate paths (green curves) which finally reach the target, converging to the non-self-intersecting paths with LRW statistics (blue cruves). The same trajectories are obtained using $\ca L_2^{\textsc{lrw}}$.}
\label{f:evolution}
\end{figure}

\subsection{Test of lattice field theory} 
\label{Test of lattice field theory} 
We tested the lattice field theory on some simple graphs. Here is an example for graph $\ca G_1$ with five vertices; its root is at vertex $\xr=1$ (indicated with a red line), and the target is at vertex $\xt=4$ (indicated with a green line).
\be\label{graph-G1}
\ca G_1 = {\parbox{2.8cm}{\ifTikzExternal{\tikzsetnextfilename{tikzAux1}}\begin{tikzpicture}[x=0.2mm,y=0.2mm]
\draw [line width=0.15mm, gray] (-50.,0.) -- (-15.4508,-47.5528) ;
\draw [line width=0.15mm, gray] (-50.,0.) -- (-15.4508,47.5528) ;
\draw [line width=0.15mm, gray] (-15.4508,-47.5528) -- (40.4508,-29.3893) ;
\draw [line width=0.15mm, gray] (40.4508,-29.3893) -- (40.4508,29.3893) ;
\draw [line width=0.15mm, gray] (40.4508,29.3893) -- (-15.4508,47.5528) ;
\draw [line width=0.15mm, gray] (-15.4508,47.5528) -- (-15.4508,-47.5528) ;
\draw [line width=0.6mm, red] (-50.,0.) -- (-60.,0.) ;
\draw [line width=0.6mm, Green] (40.4508,29.3893) -- (48.541, 35.2671); 
\fill(-50.,0.) circle (1.5pt);
\node [font=\scriptsize] at (-62.5,0.) {1};
\fill(-15.4508,-47.5528) circle (1.5pt);
\node [font=\scriptsize] at(-19.3136,-59.441) {2};
\fill(40.4508,-29.3893) circle (1.5pt);
\node [font=\scriptsize] at(50.5636,-36.7366) {3};
\fill(40.4508,29.3893) circle (1.5pt);
\node [font=\scriptsize] at(51,40) {4};
\fill(-15.4508,47.5528) circle (1.5pt);
\node [font=\scriptsize] at(-19.3136,59.441) {5};
\end{tikzpicture}}} ~~.
\ee 
The generating function $\mathbb G$ of LERWs on $\ca G_1$ via the $O(-2)$ theory defined via action \eq{eq:S_boson2} reads
\be
\mathbb G ^{\textsc{LERW}}= \frac{3}{11}\parbox{20mm}{\ifTikzExternal{\tikzsetnextfilename{tikzAux2}}\begin{tikzpicture}[x=0.2mm,y=0.2mm]
\draw [line width=0.2mm, gray] (-50.,0.) -- (-15.4508,47.5528) ;
\draw [line width=0.2mm, gray] (40.4508,29.3893) -- (-15.4508,47.5528) ;
\draw [line width=0.2mm, gray] (-15.4508,47.5528) -- (-15.4508,-47.5528) ;
\draw [line width=0.3mm, blue,directed] (-50.,0.) -- (-15.4508,-47.5528) ;
\draw [line width=0.3mm, blue,directed] (-15.4508,-47.5528) -- (40.4508,-29.3893) ;
\draw [line width=0.3mm, blue,directed] (40.4508,-29.3893) -- (40.4508,29.3893) ;
\fill(-50.,0.) circle (1.5pt);
\fill(-15.4508,-47.5528) circle (1.5pt);
\fill(40.4508,-29.3893) circle (1.5pt);
\fill(40.4508,29.3893) circle (1.5pt);
\fill(-15.4508,47.5528) circle (1.5pt); 
\end{tikzpicture}
}
 + \frac{5}{11}\parbox{20mm}{\ifTikzExternal{\tikzsetnextfilename{tikzAux3}}{\begin{tikzpicture}[x=0.2mm,y=0.2mm]
\draw [line width=0.2mm, gray] (-50.,0.) -- (-15.4508,-47.5528) ;
\draw [line width=0.2mm, gray] (-15.4508,-47.5528) -- (40.4508,-29.3893) ;
\draw [line width=0.2mm, gray] (40.4508,-29.3893) -- (40.4508,29.3893) ;
\draw [line width=0.2mm, gray] (-15.4508,47.5528) -- (-15.4508,-47.5528) ;
\draw [line width=0.3mm, blue,directed] (-50.,0.) -- (-15.4508,47.5528) ;
\draw [line width=0.3mm, blue,directed] (-15.4508,47.5528) -- (40.4508,29.3893) ;
\fill(-50.,0.) circle (1.5pt);
\fill(-15.4508,-47.5528) circle (1.5pt);
\fill(40.4508,-29.3893) circle (1.5pt);
\fill(40.4508,29.3893) circle (1.5pt);
\fill(-15.4508,47.5528) circle (1.5pt);
\end{tikzpicture}}
}
 + \frac{2}{11}\parbox{20mm}{\ifTikzExternal{\tikzsetnextfilename{tikzAux4}}\begin{tikzpicture}[x=0.2mm,y=0.2mm]
\draw [line width=0.2mm, gray] (-50.,0.) -- (-15.4508,47.5528) ;
\draw [line width=0.2mm, gray] (-15.4508,-47.5528) -- (40.4508,-29.3893) ;
\draw [line width=0.2mm, gray] (40.4508,-29.3893) -- (40.4508,29.3893) ;
\draw [line width=0.3mm, blue,directed] (-50.,0.) -- (-15.4508,-47.5528) ;
\draw [line width=0.3mm, blue,directed] (-15.4508,-47.5528) -- (-15.4508,47.5528) ;
\draw [line width=0.3mm, blue,directed] (-15.4508,47.5528) -- (40.4508,29.3893) ;
\fill(-50.,0.) circle (1.5pt);
\fill(-15.4508,-47.5528) circle (1.5pt);
\fill(40.4508,-29.3893) circle (1.5pt);
\fill(40.4508,29.3893) circle (1.5pt);
\fill(-15.4508,47.5528) circle (1.5pt);
\end{tikzpicture}
}
 + \frac{1}{11}\parbox{20mm}{\ifTikzExternal{\tikzsetnextfilename{tikzAux5}}\begin{tikzpicture}[x=0.2mm,y=0.2mm]
\draw [line width=0.2mm, gray] (-50.,0.) -- (-15.4508,-47.5528) ;
\draw [line width=0.2mm, gray] (40.4508,29.3893) -- (-15.4508,47.5528) ;
\draw [line width=0.3mm, blue,directed] (-50.,0.) -- (-15.4508,47.5528) ;
\draw [line width=0.3mm, blue,directed] (-15.4508,-47.5528) -- (40.4508,-29.3893) ;
\draw [line width=0.3mm, blue,directed] (40.4508,-29.3893) -- (40.4508,29.3893) ;
\draw [line width=0.3mm, blue,directed] (-15.4508,47.5528) -- (-15.4508,-47.5528) ;
\fill(-50.,0.) circle (1.5pt);
\fill(-15.4508,-47.5528) circle (1.5pt);
\fill(40.4508,-29.3893) circle (1.5pt);
\fill(40.4508,29.3893) circle (1.5pt);
\fill(-15.4508,47.5528) circle (1.5pt);
\end{tikzpicture}
}.
\label{G-grpah1}
\ee
We numerically checked this against an LERW starting at $\xr$ and stopping at $\xt$.

To obtain the evolution via the lattice action, we evolve $\mathbb G(t=0) = \tilde\phi(1,0)$ with the action \eq{action-LRW-dynamic-1}. This is shown in Fig.~\ref{f:evolution}. One sees that configurations with one edge appear and are progressively replaced by configurations with more and more edges until only configurations that end at the target are left, and the process stops. 
We checked numerically to machine precision that 
\be
\lim_{t\to \infty} \mathbb G^{\textsc{LRW}}(t) = \mathbb G^{\textsc{LERW}}. 
\ee

\subsection{Continuum limit for $\ca L^{\textsc{LRW}}_1$}
\label{Continuum limit for L-LERW-1}
We now expand $\ca L_1^{\textsc{LRW}}$ as defined in \Eq{action-LRW-dynamic-1} in the fields, retaining up to four fields. 
Power counting done later will show that this is sufficient. 
\bea\label{action-LRW-dynamic-1-expanded}
\ca L_1^{\textsc{LRW}} &\simeq& \sum_{x\in \ca V} \big[ \tilde \phi(x,t)-\tilde \phi(x,t{+}1)\big] \phi(x,t) 
 + \big[ \tilde \psi(x,t)-\tilde \psi(x,t{+}1)\big] \psi(x,t) 
\nn\\
&+& \sum_{x,y\in \ca V} \sum_{j,\alpha}\hat \beta_{xy} \big[\tilde \chi_j^\alpha (x,t)-\tilde \chi_j^\alpha (y,t)\big] \chi_j^\alpha(x,t) \nn\\
&+& \sum_{x\in \ca V}g \bigg\{ \frac{1}2 \big[ \tilde \phi(x,t{+}1) \phi(x,t) +\tilde \psi(x,t{+}1)\psi(x,t)\big]^2 + \frac{1}2 \sum_{\alpha} \Big[\sum_j \sum_{ y\in\ca V} \hat \beta_{xy}\tilde \chi_j^\alpha (y,t) \chi_j^\alpha(x,t) \Big]^2 
\nn\\
&&\qquad~~~+ \sum_{j,\alpha} \big[ \tilde \phi(x,t{+}1) \phi(x,t) +\tilde \psi(x,t{+}1)\psi(x,t)\big] \sum_{ y\in\ca V} \hat \beta_{xy}\tilde \chi_j^\alpha (y,t) \chi_j^\alpha(x,t)\bigg\} \nn\\
&-& \gamma 
 \sum_{x,y\in \ca V} {   \beta_{xy}} \Big[ \tilde \phi (y,t{+}1) \tilde \psi(x,t{+}1) - \tilde \phi(x,t{+}1) \Big] \tilde \chi_2^1 (y,t) \phi (x,t) \nn\\
 &-& \chi_{2}^1 (\xt,t) .
\eea
The factor of $g$, which is $g=1$ in the microscopic theory, was introduced to facilitate bookkeeping, while 
$R_{xy}\to 1$ since it is no longer useful. 
The last line provides the target for the solution of the Laplace equation. 

In the next step, we interpret fields at time $t+1 $ as $t+\delta t$, and expand them around $t$. 
Similarly, fields at $y$ are Taylor expanded around $x$, where we used the Laplacian defined in \Eq{lat-Laplace}.

Once one reaches the target, all auxiliary fields $\tilde \psi$ are set to $1$. (Formally, this is done for final time $T\to \infty$.) Technically, this gives a vacuum expectation value to $\tilde \psi$. Similarly, the solution of the Laplace equation is obtained as $1$ minus corrections when paths intersect. This leads to the following definitions
\bea\label{delta-psi-def}
\tilde \psi(x,t) &=& 1 + \delta \tilde \psi(x,t),\\
\label{delta-chi-def}
\tilde \chi(x,t) &=& 1 + \delta \tilde \chi(x,t).
\eea
With this, the continuum limit of \Eq{action-LRW-dynamic-1-expanded} reads 
\bea\label{action-LRW-dynamic-1-continuum}
 \!\!\! {\ca L_1^{\textsc{LRW}}} &\simeq& \ds {\int_x} \tilde \phi(x,t) [\partial_t-\gamma\nabla^2] \phi(x,t)+ \tilde \psi(x,t) \partial_t \psi(x,t) - 
 \sum_{j,\alpha} \tilde \chi_j^\alpha (x,t)\nabla^2 \chi_j^\alpha(x,t) \nn\\
&&~~+g \bigg\{ \frac{1}2 \big[ \tilde \phi(x,t) \phi(x,t) +\tilde \psi(x,t)\psi(x,t)\big]^2 
+ \frac{1}2 \sum_{\alpha} \Big[\sum_j \tilde \chi_j^\alpha (x,t) \chi_j^\alpha(x,t) \Big]^2 
 \nn\\
&&\qquad ~~ + \sum_{j,\alpha} \big[ \tilde \phi(x,t) \phi(x,t) +\tilde \psi(x,t)\psi(x,t)\big] \tilde \chi_j^\alpha (x,t) \chi_j^\alpha(x,t) \bigg\} \nn\\
&& ~~- \gamma \Big\{  \tilde \phi (x,t) \tilde \chi_2^1 (x,t) \delta \tilde \psi(x,t) + \nabla^2\Big[\tilde \phi (x,t) \delta \tilde \chi_2^1 (x,t) \Big] \nn\\
&& \qquad ~ +  \delta\tilde \psi(x,t)\nabla^2 \Big[\tilde \phi (x,t) \tilde \chi_2^1 (x,t) \Big] {-} \tilde \phi(x,t) \nabla^2 \delta \tilde \chi_2^1(x,t)\Big\} \phi (x,t) \nn\\
 && ~~- \chi_{2}^1 (\xt,t) .
\eea
To arrive at this equation, we used \Eqs{delta-psi-def} and \eq{delta-chi-def}; the leading term $\tilde \psi=\tilde \chi=1$ gives 
the diffusion term $-\gamma \tilde \phi \nabla^2 \phi$ in the first line above. 
The term of order $\delta \tilde \psi(x,t)$ in the last curly brackets of \Eq{action-LRW-dynamic-1-continuum} is the leading term in the Taylor expansion of \Eq{action-LRW-dynamic-1-expanded}. What remains is proportional to $\delta \tilde \chi(x,t)$.

\subsection{Propagators and vertices}
\label{Propagators and vertices}
The free-theory propagators are
\begin{align}
\label{propRed}
R^{\phi}(x,t,y,t') =\propRed &= \left< \tilde \phi(x,t) \phi(y,t')\right>_0 = \int_k \rme^{ik(y-x)} \rme^{- k^2 (t'-t) \gamma}\Theta(t'>t), \\
\label{propBlue}
R^{\psi}(x,t,y,t') =\hspace{0.2cm}\propBlue \hspace{0.2cm}&=\left< \tilde \psi(x,t) \psi(y,t')\right>_0 = \delta_{xy} \Theta(t'>t), \\
\label{propGreen}
R^{\chi}(x,t,y,t') =\propGreen &=\left< \tilde \chi_i^\alpha (x,t) \chi_j^\beta(y,t') \right>_0 = \delta_{ij} \delta^{\alpha \beta} \delta_{tt'}\int_k \rme^{ik(x-y)} \frac1{k^2}.
\end{align}
Our convention is to draw time vertically from the bottom to the top, and space horizontally; the above symbols reflect this. 
All $\delta $-functions in space and time are Kronecker-$\delta$'s: the $\chi$-theory lives on a {\em single time-slice} $t$, while 
the $\psi$-theory propagates the occupation on a {\em single site}. 
This means one should think of the theory as discretized, with 
\be
 \delta_{tt'}^2 = \delta_{tt'} ,\qquad\delta_{xy}^2 = \delta_{xy}.
 \ee
To get more compact expressions, we may write $ R^{\phi}(x,t,y,t')=: R_{y-x,t'-t}$, 
$R^{\psi}(x,t,y,t') = \Theta(t'>t)\delta_{xy}$, $R^{\chi}(x,t,y,t') =: C_{x-y} \delta_{tt'} $.
 
The interaction terms are (see also appendix \ref{a:notations})
\bea\label{relevant-g-terms}
 \ca L_1^{g} &=& \frac{g}2 \left[\, \vertexIntPhi + \vertexIntPsi + \vertexIntChi \, \right]
 + g \left[ \,\vertexIntPhiPsi + \vertexIntPhiChi + \vertexIntPsiChi \,\right],
 \\
 \label{relevant-gamma-terms}
-\ca L_1^\gamma &=& \gamma_1 \vertexOneDashed +\gamma_2^+ \vertexTwoRel + \gamma_2^{+\psi} \vertexTwoBlueDashed - \gamma_2^-\,\vertexTwoMinus, \qquad \gamma_1 = \gamma_2^+= \gamma_2^{+\psi}=\gamma_2^- = \gamma.
\eea
We use dashed lines to denote the fluctuating parts $\delta \tilde \chi$ and $\delta \tilde\psi$, and dots to separate parts of the vertex. 
Even though all $\gamma$-terms are equal, we added indices to be able to refer to them as $\gamma_1$, $\gamma_2^+$, $\gamma_2^{+\psi}$ and $\gamma_2^-$ terms or vertices. 
The extra $\gamma_2^{+\psi}$ vertex will be dropped thanks to power counting. It generates tadpoles that cancel, see appendix \ref{a:tadpoles}.
To see these tadpole cancelations, we need the $g\gamma$ vertices coming from the expansion of the logarithm: at second order in the expansion, there is a cross term between $\gamma_1$ and $\psi \tilde\psi+ \sum_{\alpha,j}\chi_j^\alpha \tilde\chi_j^\alpha$.
These vertices are
\begin{equation}\label{eq:gGammaVertices}
	\ca L_1^{g\gamma} = g\gamma^{\psi} \vertexGGammaPsi + g\gamma^{\chi} \vertexGGammaChi.
\end{equation}

\subsection{Power counting}
\label{Power counting}
Power counting is confusing. There are two sources for this: 
\begin{enumerate}
\itemsep0em 
\item[({\em i})] the theory contains both a static part, living on a single time slice (the theory of $\chi$ and $\tilde \chi$), and a dynamic part ($\tilde \phi$, $\phi$ and $\tilde \psi$, $\psi$). While $\phi$ diffuses, $\psi$ does not evolve in space; 
\item[({\em ii})] both $\tilde \chi= 1+\delta \tilde \chi$ and $\tilde \psi=1+\delta \tilde \psi$ contain a background field $1$, and a fluctuating part, which scale differently.
\end{enumerate}
Let us start from \Eq{action-LRW-dynamic-1-continuum}.
First, the $\chi$-theory is instantaneous, i.e.\ does not contain time. We assign (in momentum units)
\be\label{dim-chibarchi}
0= \bigg[\int_x \sum_{j,\alpha} \tilde \chi_j^\alpha  \nabla^2 \chi_j^\alpha\bigg] \quad \Longrightarrow \quad 
\left[\tilde \chi \chi\right] := \left[\tilde \chi \chi\right]_\mu = d-2 .
\ee
The coupling $g$ behaves as a standard $\phi^4$ coupling, 
\be\label{dim-g}
0 = \left[ g \int_x \left(\tilde \chi \chi\right) ^2 \right] \quad \Longrightarrow \quad [g]= 4-d =:\epsilon.
\ee
There is diffusion of $\phi$, so 
\be
0 = \bigg[ \int_{x,t}\tilde \phi \partial_t \phi \bigg] =\bigg[ \gamma \int_{x,t} \tilde \phi \nabla^2 \phi\bigg] \quad \Longrightarrow \quad[ t] = -2, ~[x]=-1, ~ [\gamma]=0, ~ \big[ \tilde \phi \phi\big] = d. 
\ee
The action \eq{action-LRW-dynamic-1-continuum} propagates $\psi$, 
\be\label{222}
0 = \bigg[ \int_{x,t}\tilde \psi \partial_t \psi \bigg] \quad\Longrightarrow \quad\big[ \delta \tilde \psi \psi\big] = d . 
\ee
Above, we found the dimensions of $\tilde \phi \phi$, $\tilde \psi \psi$ and $\tilde \chi \chi$. How this is distributed over the single fields is at our disposal. 
For consistency of scaling for $\tilde \chi$, it is easiest to consider the limit of $\tilde \psi \to 1$, 
\be
\bigg[\gamma\int_{x,t}\left\{ \tilde \psi \nabla^2 (\tilde \phi  \tilde \chi ) - \tilde \phi \nabla^2 \tilde \chi\right\} \phi  \bigg] = 0 \quad\Longrightarrow \quad\big[ \tilde \chi \big] = \big[ \delta \tilde \chi \big] = 0. 
\ee
Consistency between the $\gamma$-terms with and without derivatives implies 
\be\label{224}
 \delta \tilde \psi(x,t) \sim \nabla^2 \quad\Longrightarrow \quad\big[ \delta \tilde \psi \big] = 2.
\ee
\Eqs{222} and \eq{224} imply that 
\be
 [\psi] = d-2. 
\ee
Thus both $\tilde \psi$ and $\psi$ should not appear as additional factors for couplings deemed marginal in $d=4$.
The same argument applies to $\tilde \phi$, $\phi$ and $\tilde \chi$, $\chi$. 
Considering \Eq{224}, we expect that when we wish to keep only relevant diagrams, we can remove
$\delta\tilde \psi$ from the interaction, leading to a simplification for the green-blue interaction,
\be
 \vertexIntPsiChi \to \vertexIntPsiChiRel . 
\ee
The $\gamma$ vertex simplifies even more, 
\bea
\label{relevant-gamma-vertices}
-\ca L_1^\gamma &=& 
\gamma_1 \vertexOneDashed +\gamma_2^+ \vertexTwoRel + \gamma_2^{+\psi} \vertexTwoBlueDashed - \gamma_2^-\,\vertexTwoMinus \nn\\
&\to& \gamma_1\, \vertexOneDashed +\gamma_2^+ \,\vertexTwoRel - \gamma_2^-\,\vertexTwoMinus \nn \\
&=& \gamma_1\, \vertexOneDashed + \gamma_{\textsc l} \vertexTwoBred + 2\gamma_{\textsc d} \vertexTwoCred.
\eea
The first line shows all vertices; in the second line, those irrelevant by power counting are suppressed. The third line is 
an exact rewriting, with a natural interpretation, see section \ref{Diffusion and drift}. 

Difficulties in the perturbative calculations with the action \eq{action-LRW-dynamic-1-continuum} are shown in appendix \ref{a:Perturbative calculations}. What is problematic are the interactions between $\phi$ and $\psi$, as well as $\psi$ with itself. 
In the evolution by the lattice action these are irrelevant, as a single site cannot be occupied by more than one ``particle''. 
In the perturbative expansion in appendix \ref{a:LRW1-perturbative} and \ref{a:Tracer-tracer interactions} we see chains of these interactions suppressing the latter. 
Another problem is the plethora of tadpoles one has to handle for the current action, even though they cancel in the end. This is discussed in appendix \ref{a:tadpoles}.

For these reasons, in the next section we construct a lattice action that in the continuum limit 
gives no tadpole diagrams, and retains only vertices relevant for renormalization.

\subsection{A simpler lattice action $\ca L_2^{\textsc{LRW}}$}
\label{A simpler lattice action}
While the lattice action \eq{action-LRW-dynamic-1} does the job, it induces many interactions in its perturbative treatment and many tadpoles, which cancel in the end but complicate the calculations. We therefore wish to construct an action where these terms are absent by construction. The key is to realize that one can add many terms which do not change the outcome: an example is a term which annihilates two $\tilde \phi$-particles at the same lattice site, i.e.\ $\tilde \phi(x,t)^2$. As this never appears, whether the lattice action contains it or not is irrelevant. We can use this property to simplify the lattice action with the goal of reducing or simplifying as much as possible the interaction vertices. 
The simplest working action we found is 
\bea \label{action-LRW-dynamic-2}
&&\!\!\rme^{-\ca L_2^{\textsc{LRW}}} 
= \prod_{x \in \ca V\backslash \xt } \Bigg\{ \exp\!\Big( \big[ \tilde \phi (x,t{+}1) {-} \tilde \phi (x,t)\big]\phi (x,t) 
 {-} \tilde \psi (x,t) \psi (x,t) { -} \sum_{\alpha=1}^{n_{\alpha}}\sum_j \tilde\chi_j^\alpha (x,t)\chi_j^\alpha(x,t)\Big)
 \nn\\
& & \times
\bigg[ \prod_{\alpha=1}^{n_\alpha} \Big( 1+ \sum_{y \in \ca V}\hat \beta_{xy} \sum_{j=1}^{n_\chi} \tilde \chi_j^\alpha (y,t)\chi_j^\alpha(x,t)\Big)-1 \nn\\
&& \qquad~~~ + \exp\!\Big( \tilde \psi (x,t{+}1) \psi (x,t) {+} \gamma\sum_{y\in \ca V} { \beta_{xy}}
 \big[ R_{xy} \tilde \phi (y,t{+}1)\tilde \psi(x,t{+}1) {-}\tilde \phi(x,t{+}1)\big] \phi (x,t) \tilde \chi_2^1 (y,t) \Big) \bigg] \Bigg\} \nn\\
 & & \times \exp\!\Big( {-} \tilde \phi (\xt,t) \phi (\xt,t) { -} \sum_{\alpha=1}^{n_{\alpha}}\sum_j \tilde\chi_j^\alpha (\xt,t)\chi_j^\alpha(\xt,t) \Big) \Big[1+\tilde \phi (\xt ,t{+}1)\phi (\xt,t) + \chi_{2}^1 (\xt,t) \Big].
\eea
The action is split into contributions for a generic point $x\in \ca V\backslash \xt$ (first three lines), and a contribution for the target $\xt$ (last line). 
This action now propagates any number of $\phi$ or $\psi$-particles. 
The part for $\chi$ is unchanged from \Eq{action-LRW-dynamic-1}. Since $\gamma$ should be thought of as being small, and 
only one $\psi$ and $\chi$ particle can exist per site, we consider the linear terms in the expansion of the exponential on the third line.
Unlike in \Eq{action-LRW-dynamic-1} there is no explicit exclusion of $\phi$, $\psi$ and $\chi$. 
However, our process does not produces these occurrences.

\subsection{Continuum limit of simplified lattice theory $\ca L^{\textsc{LRW}}_2$}
\label{Continuum limit for L-LERW-2}
Following the procedure of section \ref{Continuum limit for L-LERW-1}, we now expand $\ca L_2^{\textsc{LRW}}$ as defined in \Eq{action-LRW-dynamic-2} in the fields, retaining up to four fields. In a second step, we interpret fields at time $t+1 $ as $t+\delta t$, and expand them around $t$. 
Similarly, fields at $y$ are Taylor expanded around $x$. 
This gives (with target terms implicit)
\be \label{L-LRW-2}
\begin{array}{rcl}
 \ca L_2^{\textsc{LRW}} &\simeq& \ds {\int_x} \tilde \phi(x,t) [\partial_t-\gamma\nabla^2] \phi(x,t)+ \tilde \psi(x,t) \partial_t \psi(x,t) - 
 \sum_{j,\alpha} \tilde \chi_j^\alpha (x,t)\nabla^2 \chi_j^\alpha(x,t) \\
&+& \ds g \int_x \bigg\{ \sum_{j,\alpha} \tilde \psi(x,t)\psi(x,t) \tilde \chi_j^\alpha (x,t) \chi_j^\alpha(x,t)+ \frac{1}2 \sum_{\alpha} \Big[\sum_j \tilde \chi_j^\alpha (x,t) \chi_j^\alpha(x,t) \Big]^2 \bigg\} \\
&-& \ds
 \gamma \int _x \left\{ \tilde \phi (x,t) \tilde \chi^1_2 (x,t) \delta \tilde \psi(x,t) + \nabla^2\Big[\tilde \phi (x,t) \delta \tilde \chi^1_2 (x,t) \Big] - \tilde \phi(x,t) \nabla^2 \delta \tilde \chi^1_2(x,t)\right\} \phi (x,t) .
 \end{array} 
\ee
The free propagators are as given in \Eqs{propRed} to \eq{propGreen}.
There are much fewer interaction terms than in \Eq{relevant-g-terms}, namely 
\be \label{g-terms-2}
 \ca L_2^{g} = g \,\vertexIntPsiChi + \frac g2 \,\vertexIntChi \ .
\ee
The $\gamma$-terms are unchanged, 
\be
-\ca L_2^\gamma = \gamma\, \vertexOneDashed +\gamma\,\vertexTwoRel- \gamma\,\vertexTwoMinus = \gamma \,\vertexOneDashed +\gamma \vertexTwoBred + 2\gamma\,\vertexTwoCred.
\ee
To more easily refer to these terms, we write 
\be
-\ca L_2^\gamma = \gamma_1\, \vertexOneDashed +\gamma_2^+\,\vertexTwoRel- \gamma_2^{-}\,\vertexTwoMinus
\ee
and refer to the different terms as the $\gamma_1$, $\gamma_2^+$ and $\gamma_2^-$ terms. (The signs in the last two terms refer to the sign in the perturbative expansion of $- \ca L_2^{ \gamma}$.)

\subsection{Diffusion and drift}
\label{Diffusion and drift}
Both $\tilde \phi$ and $\phi$ appear linearly in the Lagrangian \eq{L-LRW-2} and the action $\ca S_2^{\textsc{LRW}}=\int_t \ca L_2^{\textsc{LRW}}$. 
Therefore, the variations w.r.t.~$\phi$ and $\tilde \phi$ lead to exact equations of motion. 
If we are not interested in the trace left, we can set $\tilde \psi=1$; accordingly the first equation of motion (EOM) is 
\bea\label{backward-Fokker-Planck}
\frac{\delta\ca S_2^{\textsc{LRW}} }{\delta \phi(x,t)} &=& 0, \\
-\partial_t \tilde \phi(x,t) &\simeq& \gamma \left[ \nabla^2\Big( \tilde \phi (x,t) \tilde \chi^1_2 (x,t) \Big) - \tilde \phi(x,t) \nabla^2 \tilde \chi^1_2(x,t) \right] \nn\\
&=& \gamma \left[ \tilde \chi^1_2 (x,t) \nabla^2 \tilde \phi (x,t) + 2 \nabla \tilde \phi (x,t) \nabla \tilde \chi^1_2 (x,t)\right] .
\eea
This is the backward Fokker-Planck equation for $\tilde \phi(x,t)$ (see e.g.~Eq.~(949) of \cite{Wiese2021} ), with 
\bea
\label{D-def}
D(x,t)&:=& \gamma \left< \tilde \chi^1_2(x,t)\chi_2^1 (\xt,t) \right> ,\\
\label{F-def}
F(x,t)&:=& 2 \gamma \nabla_x \left< \tilde \chi^1_2(x,t)\chi^1_2 (\xt,t) \right>.
\eea
This gives the probability to observe the path at time $t$ at $x$, while arriving at a later time $t'$ at $y$. 
\be
P_{\rm b}(x,t) := \left< \tilde \phi(x,t) \phi(y,t') \right> .
\ee
The second EOM is (again at $\tilde \psi=1$)
\bea
\frac{\delta \ca S_2^{\textsc{LRW}} }{\delta \tilde \phi(x,t)} &=& 0 ,\\
\partial_t \phi(x,t) &\simeq& \gamma \Big[ \tilde \chi_2^1(x,t) \nabla^2 \phi (x,t) 
- \phi (x,t) \nabla^2 \tilde \chi_2^1(x,t) \Big].
\label{64}
\eea
Define the probability to observe the path at time $t$ in $x$, given it started at time $t=0$ in $y$, 
\be
P_{\rm f}(x,t):= \left< \phi(x,t) \tilde \phi(y,0) \right> .
\ee
With the definitions in \Eqs{D-def} and \eq{F-def}, \Eq{64} can be rewritten as 
\bea\label{forward-Fokker-Planck}
\partial_t P_{\rm f}(x,t) &=& D (x,t) \nabla^2 P_{\rm f}(x,t) - P_{\rm f}(x,t) \nabla^2 D(x,t ) \nn \\
&= & \nabla^2\big[ D (x,t) P_{\rm f}(x,t) \big] - \nabla \left[ 2\nabla D(x,t) P_{\rm f}(x,t)\right] . 
\eea
This is the forward Fokker-Planck equation for $P_{\rm f}(x,t)$ given diffusion coefficient $D$ and drift (force) $F$ (see e.g.~Eq.~(944) of \cite{Wiese2021}). 

Both the backward Fokker-Planck equation \eq{backward-Fokker-Planck} and its forward version \eq{forward-Fokker-Planck} encode the same random process. They teach us that diffusion coefficient $D$ and drift $F$ are related, 
\be
F(x,t) = 2 \nabla D(x,t). 
\ee
The drift points in the direction where the potential $\Phi$ grows.

\subsection{Observable: passing through a point}
\label{Observable: passing through a point}
What we are mostly interested in is the configuration of the trace left by the LRW at time $t=T$. 
This trace is encoded in $\tilde \psi(x,T)$. Suppose we initialize the path at $t=0$ in $x=0$; then the generating functional is
\be
\left< \rme^{\int_x \lambda_x\psi(x,T) + \rho_x \phi(x,T) } \right> := 
\int_{\tilde \chi \chi}\int_{\tilde \phi \phi} \int_{\tilde \psi \psi} \rme^{-\int_t \ca L(t)} \tilde \phi(0,0) \rme^{\int_x \lambda_x\psi(x,T) + \rho_x \phi(x,T) }.
\ee 
If we want to know the probability that the path passed through $y$, and went on to $z$ at time $t=T$, we need to evaluate 
\bea
 \left< \ca O(y,z|T) \right> &:=& \frac{\delta }{\delta \lambda_y} \frac{\delta }{\delta \rho_z} \left< \rme^{\int_x \lambda_x\psi(x,T) + \rho_x \phi(x,T) } \right>\Big|_{\lambda,\rho=1} \nn\\
&=& \int_{\tilde \chi \chi}\int_{\tilde \phi \phi} \int_{\tilde \psi \psi} \rme^{-\int_t \ca L(t)}
\rme^{\int_x \psi(x,T) + \phi(x,T) } \tilde \phi(0,0) \psi(y,T) \phi(z,T) .
\label{9}
\eea
At the tree level, the only contribution is provided by the $\gamma_1$ vertex. Evaluating \Eq{9} step by step, we obtain 
\bea\label{56}
\left< \ca O(y,z|T)\right>^{\rm tree} &=& \gamma \left< \tilde \phi(0,0) \psi(y,T) \phi(z,T) 
 \int_{x,t}\vertexOneDashed {\!\!\!\!\!\!{}_x\,\,\,\,\,\,} \chi^1_2(\xt,t) \right>_{\!\!0} \nn\\
&=& \gamma\int_{x,t}
\left<\tilde \phi(0,0) \phi(x,t) \right>_{\!0} 
\left< \tilde \chi^1_2 (x,t) \chi^1_2(\xt,t)\right>_{\!0} 
\left< \delta \tilde \psi(x,t) \psi(y,T)\right>_{\!0} 
\left<\tilde \phi(x,t) \phi(z,T) \right>_{\!0} \nn\\
&=& \gamma\int_{x,t} R_{x,t} \times C_{x-\xt} \times \Theta(T>t)\delta_{x,y} \times R_{x-z,T-t} \nn\\
&=& \gamma\int_{t} R_{y,t} \times C_{y-\xt} \times R_{y-z,T-t} .
\eea
In the evaluation of $ \left< \tilde \chi^1_2 (x,t) \chi^1_2(\xt,t)\right>_0$ we only kept the leading term. 
\Eq{56} can further be simplified by doing a Laplace transform w.r.t.\ to $\gamma T$, and a Fourier transform w.r.t.\ $y$ and $z$. Assuming that the target is uniformly distributed, $ C_{y-\xt}$ is a constant we put to $1$. This yields
\bea\label{57}
{{{\parbox{2.7cm}{{\begin{tikzpicture}[scale=1]
\node (k+p+q) at (-1.05,0) {$\!{}_{k+p+q}$};
\node (k) at (-.15,1.05) {${}_{k}$};
\node (p) at (0.67,1.) {${}_{p}$};
\node (q) at (0.8,0.5) {${}_{q=0}$};
\coordinate (x1) at (-.5,0) ;
\coordinate (x2) at (0,0.5) ;
\coordinate (x2a) at (.5,1) ;
\coordinate (x2b) at (0,1) ;
\coordinate (x2c) at (0.5,.5) ;
\draw [thick,red,directed] (x1) -- (x2) ;
\draw [thick,blue,directed] (x2) -- (x2b);
\draw [thick,red,directed] (x2) -- (x2a);
\draw [thick,Green,enddirected] (x2) -- (x2c);
\fill (x1) circle (1.5pt);
\fill (x2) circle (1.5pt);
\fill (x2a) circle (1.5pt);
\fill (x2b) circle (1.5pt);
\end{tikzpicture}}}}}}
&=& \left< \ca O_\mu(p,k)\right>^{\rm tree} 
 = \int_{y,z} \int_{\gamma T}\rme^{i (k y+pz) -\mu \gamma T}
\left< \ca O(y,z|T)\right>^{\rm tree} \nn\\
&=& \gamma \int_{y,z} \int_{t,\gamma T}\rme^{i (k y+pz) -\mu \gamma T} R_{y,t}R_{y-z,T-t} C_{y-\xt} = \frac{1}{(k+p)^2 +\mu} \frac{1}{p^2 +\mu}.
\eea
We use the rule that momenta follow the arrows. 
The explicit $\gamma$ dependence has disappeared, canceling with the $\gamma$ from the exponential in \Eq{propRed} of the first red propagator. The second one is canceled explicitly in the definition of the Laplace transform.

\subsection{Perturbative corrections to the observable}
\label{1-loop corrections to the observer}
The tree diagram in \Eqs{56}-\eq{57} has three 1-loop corrections. 
With the momentum conventions in \Eq{57}, the first one is 
\be
 \diagOne \to \diagOnesimp =: \ca D_1.
\ee
The fluctuating part $\delta \tilde \chi$ cannot be used, thus $\tilde \chi \to 1$. 
(We show later in appendix \ref{app:pOperation Proof} that this is exact to all orders.)
This diagram is evaluated as 
\bea\label{13}
\diagOnesimp &=& \gamma^2 (-g)\int_q \int_{ t} \rme^{-[(k+p+q)^2 +\mu] \gamma t } \frac1{q^2}\vertexOneDashed \nn\\
&=& - \gamma g \int_q \frac1{(k{+}p{+}q)^2{+}\mu}\frac1{q^2} \vertexOneDashed
\xrightarrow{\mu\to 0}-\gamma g \,\propdiagExt_{k+p} \vertexOneDashed.
\eea
There are two more contributions, 
\be
 \diagOneB \to \diagOneBsimp =: \ca D_2 ~, \qquad 
- \diagOneC \to -\diagOneCsimp =: \ca D_3.
\ee
They evaluate to 
\begin{align}\label{15}
\diagOneBsimp &= \gamma^3 (-g)\int_q \int_{ t,t'} \rme^{-[(k+p+q)^2+\mu] \gamma t -[(p{+}q)^2+\mu]\gamma t'
} \frac{-(p{+}q)^2}{q^2} \vertexOneDashed\nn\\
& 
=\gamma g \int_q \frac1{(k{+}p{+}q)^2{+}\mu} \frac{(p{+}q)^2}{(p{+}q)^2{+}\mu}\frac1{q^2}
\xrightarrow{\mu\to 0}  \gamma g \,\propdiagExt_{k+p}\vertexOneDashed.
\end{align}

\begin{align}
\diagOneCsimp &= -\gamma^3 (-g)\int_q \int_{ t,t'} \rme^{-[(k{+}p{+}q)^2+\mu] \gamma t -[(p{+}q)^2{+}\mu]\gamma t'
} \frac{-q^2}{q^2} \vertexOneDashed\nn\\
& 
=- \gamma g \int_q \frac1{(k{+}p{+}q)^2{+}\mu}    \frac1{(p{+}q)^2{+}\mu} 
\xrightarrow{\mu\to 0}  -\gamma g \,\propdiagExt_{k} \vertexOneDashed.
\end{align}
We see that the derivatives have canceled against a propagator, leaving us with a factor of $(-1)$. 
This can be written both in momentum and in position space, 
\be
(-k^2) C(k) = -1 \quad \Longleftrightarrow \quad \nabla^2 C(x-y) = - \delta^d (x-y).
\ee 
Graphically, this yields
\be
\diagOneCsimp = - \diagOneCsimpSimp\ . 
\ee
\Eqs{13} and \eq{15} show that in the limit of $\mu \to 0$, for all possible external momenta
\be\label{D1+D2=0}
\ca D_1+ \ca D_2 = \diagOnesimp + \diagOneBsimp = 0 .
\ee
If we want to know what is the probability that starting at $x=0$ the trace passes through $y$, before going to a uniformly distributed target, then we can evaluate these diagrams at $p=0$, and obtain (again for $\mu \to 0$)
\be
\ca D_2+ \ca D_3 \big|_{p=0} = 0. 
\ee
This cancelation, if valid to higher loop-orders, would be very useful, as it takes out a pair of diagrams with derivatives, leaving us with a perturbative expansion in $\gamma_1$ only. While this is correct at leading order (1 loop), if we insert $\ca D_1$ to $\ca D_3$ into a larger diagram, then the subtle change in momentum dependence leads to different diagrams.
We show in section \ref{s:Cancellations of diagrams} and appendix \ref{app:pOperation Proof} one effective cancelation scheme. 
 
After time-integration, these diagrams give
\be\label{21}
{\diagOnesimp} + \diagOneBsimp - \diagOneCsimp = \diagOneCsimpSimp \; \to {\LERWdiagOneBlack} \ .
\ee
The diagram we show on the r.h.s.\ is the diagram remaining in $\phi^4$-theory, i.e.\ for LERWs. 
Colors are chosen to make this equivalence suggestive.

\subsection{Simplifying the $\chi$-theory: $\ca L_3^{\textsc{LRW}}$}
\label{Simplifying the chi-theory}
In the calculations of the preceding sections, we saw that the $\chi$-theory is only used to evaluate a 2-point function, namely $\left< \tilde \chi_2^1(x,t) \chi_2^1(\xt,t)\right>$. We know that this 2-point function is the same as that of a free theory. Therefore, we can simplify our theory one step further to 
\be \label{L-LERW-3}
\!\!\!{\begin{array}{rcl}
 \ca L_3^{\textsc{LRW}} &\simeq& \ds {\int_x} \tilde \phi(x,t) [\partial_t-\gamma\nabla^2] \phi(x,t)+ \tilde \psi(x,t) \partial_t \psi(x,t) - 
 \sum_{\alpha=1}^{n_\alpha} \tilde \chi^{\a} (x,t)\nabla^2 \chi^\a (x,t) \\
&+& \ds g \sum_{\alpha=1}^{n_\alpha} \int_x \tilde \psi(x,t)\psi(x,t) \tilde \chi^\a (x,t) \chi^\a (x,t) \\
&-& \ds
 \gamma \int _x\left\{ \tilde \phi (x,t) \tilde \chi^1 (x,t) \delta \tilde \psi(x,t) + \nabla^2\Big[\tilde \phi (x,t) \delta \tilde \chi^1 (x,t) \Big] - \tilde \phi(x,t) \nabla^2 \delta \tilde \chi^1(x,t)\right\} \phi (x,t) .
 \end{array} }
\ee
This theory only has a single vector field $\chi^\a$ with $n_\alpha=0$ components.
The purpose of this index is to eliminate loops made of $\sum_{\alpha=1}^{n_\alpha} \tilde \chi^{\a} (x,t)\nabla^2 \chi^\a (x,t)$, as for self-avoiding polymers. 
The vertices reduce to 
\bea \label{g-terms-3}
 \ca L_{3}^g &=& g \,\vertexIntPsiChi \to g \, \vertexIntPsiChiRel, 
\\
\label{relevant-gamma-vertices-3}
-\ca L_3^\gamma &=& \gamma_1\, \vertexOneDashed +\gamma_2^+\,\vertexTwoRel- \gamma_2^-\,\vertexTwoMinus = 
\gamma_1\, \vertexOneDashed +\gamma_{\textsc l}\, {\vertexTwoBred} + 2\gamma_{\textsc d}\, {\vertexTwoCred}.
\eea

\subsection{More cancelations of diagrams}
\label{s:Cancellations of diagrams}
In Section \ref{1-loop corrections to the observer} we saw that some diagrams cancel exactly. 
This happens more generally. While the precise explanation is given in appendix \ref{app:pOperation Proof}, we try to give the general idea here. The only surviving diagrams that contribute to the observable $\ca O(y,z|T)$ are those that contain no independent sub-loop (a red loop which does not emit a blue line, see definitions \ref{+sub-loop} and \ref{-sub-loop} in appendix \ref{app:pOperation Proof}), no $\gamma_2^+$, and as interaction only $g^1$. In particular, all diagrams containing $\gamma_2^+$, $\gamma_1^{\delta \chi}$ or $g^{\delta \chi}$ (in the notations of appendix \ref{a:notations}) vanish.

Let us give some examples: after time-integration, pairs of diagrams containing the following pattern cancel 
\begin{eqnarray}
	\interactingGammaOneBis &+& \GammaOneSubbedByGammaTwopbis \simeq 0, \\
	\interactingGvertexbis &+& \GvertexSubbedbyGammaTwoBis \simeq 0 .
\end{eqnarray}
A special case of the first identity is \Eq{D1+D2=0}. The second identity first shows up at 2-loop order. 

With just $\gamma_1^{1}$, $\gamma_2^-$ and $g^{1}$ vertices (see \Eq{L-A} for the definition), one can draw diagrams containing \emph{independent sub-loops}
without a $\gamma_1^{1}$ inside the loop, see the first diagram in \Eq{89} below. These cancel against the second type of diagrams in \Eq{89}, 
\be\label{89}
\GammaTwominusMakingLoop+ \GammaTwoplusMakingLoop \simeq 0
\ee
These simplifications allow us to get contributions to the observable by using $\gamma_1^{1}$, $\gamma_2^-$ and $g^{1}$ vertices only, subject to not forming any independent sub-loop.
For example, the first diagram below is not allowed due to an independent sub-loop (highlighted), while the second has no independent sub-loop and is allowed,
\begin{equation}
	\NOTallowedTwoLoop = \mbox{not allowed, } \quad 
	\allowedTwoLoop = \mbox{allowed.} 
\end{equation}
How these cancelations work in general, and how this allows one to show that our perturbation theory for LRWs agrees with that of LERWs is shown in Appendix \ref{app:pOperation Proof}.

\section{$b$-Laplacian random walks}
\label{b-Laplacian Random Walks}
The action \eq{L-LERW-3} is simple enough, such that we can generalize it to the $b$-Laplacian random walk. 
In the latter, instead of weighting the rate to go to a site adjacent to the tip with the solution of the Laplace equation, it is weighted by its $b$-th power. In the corresponding field theory, we solve the Laplace equation by $b$ independent $\chi$-fields, and use as weight the product over these $b$ fields.
\bea \nn
 \ca L^{b\textsc{-LRW}} &\simeq& \ds {\int_x} \tilde \phi(x,t) [\partial_t-\gamma\nabla^2] \phi(x,t)+ \tilde \psi(x,t) \partial_t \psi(x,t) - 
\sum_{{\alpha=1}}^{n_\alpha} \sum_{\mathfrak b=1}^b \tilde \chi^\alpha_{ \mathfrak b} (x,t)\nabla^2 \chi^\alpha_{ \mathfrak b} (x,t) \\
 \label{L-b-LRW} 
&+& \ds g \int_x \tilde \psi(x,t)\psi(x,t) \sum_{{\alpha=1}}^{n_\alpha} \sum_{\mathfrak b=1}^b\tilde \chi^\alpha_{ \mathfrak b} (x,t) \chi^\alpha_{ \mathfrak b} (x,t) \\
&-& \ds
 \gamma \int _x \left\{ \tilde \phi (x,t) \tilde \bchi (x,t) \delta \tilde \psi(x,t) + \nabla^2\Big[\tilde \phi (x,t) \delta \tilde \bchi (x,t) \Big] - \tilde \phi(x,t) \nabla^2 \delta \tilde \bchi(x,t)\right\} \phi (x,t) , \nn \\
\!\!\! \tilde \bchi(x,t) &=& \prod_{{\mathfrak b=1}}^b \tilde \chi_{\mathfrak b}^1(x,t), \qquad \delta \tilde \bchi (x,t) = \tilde \bchi (x,t)-1,
 \qquad \tilde \chi_{\mathfrak b}^1(x,t) = 1+ \delta \tilde \chi_{\mathfrak b}^1(x,t).
\eea
The last terms in the first line and the second line are the generalization of a single $\chi^{\alpha}$-field to $b$ copies $\chi_{\mathfrak b}^{\alpha}$, indexed by $\mathfrak b=1,...,b$.
When the index $b$ is omitted, we mean the first index $\mathfrak b=1$.

The effective potential steering the Laplacian random walk is $\tilde\bchi(x,t)$. When acting with the Laplacian on it, one gets a new type of interaction, since (with $b=2$ for simplification)
\be
\nabla^2\bchi(x,t) = \chi_1(x,t) \nabla^2 \chi_2(x,t) +\chi_2(x,t) \nabla^2\chi_1(x,t) +2 \nabla\chi_1(x,t) \nabla \chi_2(x,t).
\ee
It is the last vertex that complicates the diagrammatics for the $b$-Laplacian random walk \cite{PisapiaWiese2026}.

\section{Diffusion Limited Aggregation}
\label{DLA}
\subsection{Definition}\label{Definition}
Diffusion limited aggregation is a process where particles start a random walk at the target $\xt$ before hitting the seed or another point in the already constructed cluster, upon which they are placed at the last vertex before hitting it. The process stops when the cluster has reached the target. The probabilities for the different configurations may as for Laplacian random walks be obtained recursively, solving the Laplace equation. The only difference is that the cluster can now grow from any point, not only the tip of a path. 
We start with an example.

\subsection{Example}
\label{DLA-example}
Let us start our analysis by studying DLA growth on graph $\ca G_1$ defined in \Eq{graph-G1}. We find for the generating function
\bea\label{DLA-on-G1}
 \mathbb G^{\textsc{dla}} &=&
 \frac{9}{77}\parbox{20mm}{\ifTikzExternal{\tikzsetnextfilename{tikzAux6}}\begin{tikzpicture}[x=0.2mm,y=0.2mm]
\draw [line width=0.2mm, gray] (-50.,0.) -- (-15.4508,47.5528) ;
\draw [line width=0.2mm, gray] (40.4508,29.3893) -- (-15.4508,47.5528) ;
\draw [line width=0.2mm, gray] (-15.4508,47.5528) -- (-15.4508,-47.5528) ;
\draw [line width=0.3mm, blue,directed] (-50.,0.) -- (-15.4508,-47.5528) ;
\draw [line width=0.3mm, blue,directed] (-15.4508,-47.5528) -- (40.4508,-29.3893) ;
\draw [line width=0.3mm, blue,directed] (40.4508,-29.3893) -- (40.4508,29.3893) ;
\fill(-50.,0.) circle (1.5pt);
\fill(-15.4508,-47.5528) circle (1.5pt);
\fill(40.4508,-29.3893) circle (1.5pt);
\fill(40.4508,29.3893) circle (1.5pt);
\fill(-15.4508,47.5528) circle (1.5pt);
\end{tikzpicture}
}
 + \frac{30}{77}\parbox{20mm}{\ifTikzExternal{\tikzsetnextfilename{tikzAux7}}\begin{tikzpicture}[x=0.2mm,y=0.2mm]
\draw [line width=0.2mm, gray] (-50.,0.) -- (-15.4508,-47.5528) ;
\draw [line width=0.2mm, gray] (-15.4508,-47.5528) -- (40.4508,-29.3893) ;
\draw [line width=0.2mm, gray] (40.4508,-29.3893) -- (40.4508,29.3893) ;
\draw [line width=0.2mm, gray] (-15.4508,47.5528) -- (-15.4508,-47.5528) ;
\draw [line width=0.3mm, blue,directed] (-50.,0.) -- (-15.4508,47.5528) ;
\draw [line width=0.3mm, blue,directed] (-15.4508,47.5528) -- (40.4508,29.3893) ;
\fill(-50.,0.) circle (1.5pt);
\fill(-15.4508,-47.5528) circle (1.5pt);
\fill(40.4508,-29.3893) circle (1.5pt);
\fill(40.4508,29.3893) circle (1.5pt);
\fill(-15.4508,47.5528) circle (1.5pt);
\end{tikzpicture}
}
 + \frac{32}{231}\parbox{20mm}{\ifTikzExternal{\tikzsetnextfilename{tikzAux8}}\begin{tikzpicture}[x=0.2mm,y=0.2mm]
\draw [line width=0.2mm, gray] (-15.4508,-47.5528) -- (40.4508,-29.3893) ;
\draw [line width=0.2mm, gray] (40.4508,-29.3893) -- (40.4508,29.3893) ;
\draw [line width=0.2mm, gray] (-15.4508,47.5528) -- (-15.4508,-47.5528) ;
\draw [line width=0.3mm, blue,directed] (-50.,0.) -- (-15.4508,-47.5528) ;
\draw [line width=0.3mm, blue,directed] (-50.,0.) -- (-15.4508,47.5528) ;
\draw [line width=0.3mm, blue,directed] (-15.4508,47.5528) -- (40.4508,29.3893) ;
\fill(-50.,0.) circle (1.5pt);
\fill(-15.4508,-47.5528) circle (1.5pt);
\fill(40.4508,-29.3893) circle (1.5pt);
\fill(40.4508,29.3893) circle (1.5pt);
\fill(-15.4508,47.5528) circle (1.5pt);
\end{tikzpicture}
}
 + \frac{20}{231}\parbox{20mm}{\ifTikzExternal{\tikzsetnextfilename{tikzAux9}}\begin{tikzpicture}[x=0.2mm,y=0.2mm]
\draw [line width=0.2mm, gray] (-50.,0.) -- (-15.4508,47.5528) ;
\draw [line width=0.2mm, gray] (-15.4508,-47.5528) -- (40.4508,-29.3893) ;
\draw [line width=0.2mm, gray] (40.4508,-29.3893) -- (40.4508,29.3893) ;
\draw [line width=0.3mm, blue,directed] (-50.,0.) -- (-15.4508,-47.5528) ;
\draw [line width=0.3mm, blue,directed] (-15.4508,-47.5528) -- (-15.4508,47.5528) ;
\draw [line width=0.3mm, blue,directed] (-15.4508,47.5528) -- (40.4508,29.3893) ;
\fill(-50.,0.) circle (1.5pt);
\fill(-15.4508,-47.5528) circle (1.5pt);
\fill(40.4508,-29.3893) circle (1.5pt);
\fill(40.4508,29.3893) circle (1.5pt);
\fill(-15.4508,47.5528) circle (1.5pt);
\end{tikzpicture}
}\nn\\
 &+& \frac{4}{77}\parbox{20mm}{\ifTikzExternal{\tikzsetnextfilename{tikzAux10}}\begin{tikzpicture}[x=0.2mm,y=0.2mm]
\draw [line width=0.2mm, gray] (-50.,0.) -- (-15.4508,-47.5528) ;
\draw [line width=0.2mm, gray] (-15.4508,-47.5528) -- (40.4508,-29.3893) ;
\draw [line width=0.2mm, gray] (40.4508,-29.3893) -- (40.4508,29.3893) ;
\draw [line width=0.3mm, blue,directed] (-50.,0.) -- (-15.4508,47.5528) ;
\draw [line width=0.3mm, blue,directed] (-15.4508,47.5528) -- (-15.4508,-47.5528) ;
\draw [line width=0.3mm, blue,directed] (-15.4508,47.5528) -- (40.4508,29.3893) ;
\fill(-50.,0.) circle (1.5pt);
\fill(-15.4508,-47.5528) circle (1.5pt);
\fill(40.4508,-29.3893) circle (1.5pt);
\fill(40.4508,29.3893) circle (1.5pt);
\fill(-15.4508,47.5528) circle (1.5pt);
\end{tikzpicture}
}
 + \frac{25}{462}\parbox{20mm}{\ifTikzExternal{\tikzsetnextfilename{tikzAux11}}\begin{tikzpicture}[x=0.2mm,y=0.2mm]
\draw [line width=0.2mm, gray] (40.4508,29.3893) -- (-15.4508,47.5528) ;
\draw [line width=0.2mm, gray] (-15.4508,47.5528) -- (-15.4508,-47.5528) ;
\draw [line width=0.3mm, blue,directed] (-50.,0.) -- (-15.4508,-47.5528) ;
\draw [line width=0.3mm, blue,directed] (-50.,0.) -- (-15.4508,47.5528) ;
\draw [line width=0.3mm, blue,directed] (-15.4508,-47.5528) -- (40.4508,-29.3893) ;
\draw [line width=0.3mm, blue,directed] (40.4508,-29.3893) -- (40.4508,29.3893) ;
\fill(-50.,0.) circle (1.5pt);
\fill(-15.4508,-47.5528) circle (1.5pt);
\fill(40.4508,-29.3893) circle (1.5pt);
\fill(40.4508,29.3893) circle (1.5pt);
\fill(-15.4508,47.5528) circle (1.5pt);
\end{tikzpicture}
}
 + \frac{19}{462}\parbox{20mm}{\ifTikzExternal{\tikzsetnextfilename{tikzAux12}}\begin{tikzpicture}[x=0.2mm,y=0.2mm]
\draw [line width=0.2mm, gray] (-50.,0.) -- (-15.4508,47.5528) ;
\draw [line width=0.2mm, gray] (40.4508,29.3893) -- (-15.4508,47.5528) ;
\draw [line width=0.3mm, blue,directed] (-50.,0.) -- (-15.4508,-47.5528) ;
\draw [line width=0.3mm, blue,directed] (-15.4508,-47.5528) -- (40.4508,-29.3893) ;
\draw [line width=0.3mm, blue,directed] (-15.4508,-47.5528) -- (-15.4508,47.5528) ;
\draw [line width=0.3mm, blue,directed] (40.4508,-29.3893) -- (40.4508,29.3893) ;
\fill(-50.,0.) circle (1.5pt);
\fill(-15.4508,-47.5528) circle (1.5pt);
\fill(40.4508,-29.3893) circle (1.5pt);
\fill(40.4508,29.3893) circle (1.5pt);
\fill(-15.4508,47.5528) circle (1.5pt);
\end{tikzpicture}
}
 + \frac{1}{77}\parbox{20mm}{\ifTikzExternal{\tikzsetnextfilename{tikzAux13}}\begin{tikzpicture}[x=0.2mm,y=0.2mm]
\draw [line width=0.2mm, gray] (-50.,0.) -- (-15.4508,-47.5528) ;
\draw [line width=0.2mm, gray] (40.4508,29.3893) -- (-15.4508,47.5528) ;
\draw [line width=0.3mm, blue,directed] (-50.,0.) -- (-15.4508,47.5528) ;
\draw [line width=0.3mm, blue,directed] (-15.4508,-47.5528) -- (40.4508,-29.3893) ;
\draw [line width=0.3mm, blue,directed] (40.4508,-29.3893) -- (40.4508,29.3893) ;
\draw [line width=0.3mm, blue,directed] (-15.4508,47.5528) -- (-15.4508,-47.5528) ;
\fill(-50.,0.) circle (1.5pt);
\fill(-15.4508,-47.5528) circle (1.5pt);
\fill(40.4508,-29.3893) circle (1.5pt);
\fill(40.4508,29.3893) circle (1.5pt);
\fill(-15.4508,47.5528) circle (1.5pt);
\end{tikzpicture}
}\nn\\
 &+& \frac{25}{462}\parbox{20mm}{\ifTikzExternal{\tikzsetnextfilename{tikzAux14}}\begin{tikzpicture}[x=0.2mm,y=0.2mm]
\draw [line width=0.2mm, gray] (40.4508,-29.3893) -- (40.4508,29.3893) ;
\draw [line width=0.2mm, gray] (-15.4508,47.5528) -- (-15.4508,-47.5528) ;
\draw [line width=0.3mm, blue,directed] (-50.,0.) -- (-15.4508,-47.5528) ;
\draw [line width=0.3mm, blue,directed] (-50.,0.) -- (-15.4508,47.5528) ;
\draw [line width=0.3mm, blue,directed] (-15.4508,-47.5528) -- (40.4508,-29.3893) ;
\draw [line width=0.3mm, blue,directed] (-15.4508,47.5528) -- (40.4508,29.3893) ;
\fill(-50.,0.) circle (1.5pt);
\fill(-15.4508,-47.5528) circle (1.5pt);
\fill(40.4508,-29.3893) circle (1.5pt);
\fill(40.4508,29.3893) circle (1.5pt);
\fill(-15.4508,47.5528) circle (1.5pt);
\end{tikzpicture}
}
 + \frac{19}{462}\parbox{20mm}{\ifTikzExternal{\tikzsetnextfilename{tikzAux15}}\begin{tikzpicture}[x=0.2mm,y=0.2mm]
\draw [line width=0.2mm, gray] (-50.,0.) -- (-15.4508,47.5528) ;
\draw [line width=0.2mm, gray] (40.4508,-29.3893) -- (40.4508,29.3893) ;
\draw [line width=0.3mm, blue,directed] (-50.,0.) -- (-15.4508,-47.5528) ;
\draw [line width=0.3mm, blue,directed] (-15.4508,-47.5528) -- (40.4508,-29.3893) ;
\draw [line width=0.3mm, blue,directed] (-15.4508,-47.5528) -- (-15.4508,47.5528) ;
\draw [line width=0.3mm, blue,directed] (-15.4508,47.5528) -- (40.4508,29.3893) ;
\fill(-50.,0.) circle (1.5pt);
\fill(-15.4508,-47.5528) circle (1.5pt);
\fill(40.4508,-29.3893) circle (1.5pt);
\fill(40.4508,29.3893) circle (1.5pt);
\fill(-15.4508,47.5528) circle (1.5pt);
\end{tikzpicture}
}
 + \frac{1}{77}\parbox{20mm}{\ifTikzExternal{\tikzsetnextfilename{tikzAux16}}\begin{tikzpicture}[x=0.2mm,y=0.2mm]
\draw [line width=0.2mm, gray] (-50.,0.) -- (-15.4508,-47.5528) ;
\draw [line width=0.2mm, gray] (40.4508,-29.3893) -- (40.4508,29.3893) ;
\draw [line width=0.3mm, blue,directed] (-50.,0.) -- (-15.4508,47.5528) ;
\draw [line width=0.3mm, blue,directed] (-15.4508,-47.5528) -- (40.4508,-29.3893) ;
\draw [line width=0.3mm, blue,directed] (-15.4508,47.5528) -- (-15.4508,-47.5528) ;
\draw [line width=0.3mm, blue,directed] (-15.4508,47.5528) -- (40.4508,29.3893) ;
\fill(-50.,0.) circle (1.5pt);
\fill(-15.4508,-47.5528) circle (1.5pt);
\fill(40.4508,-29.3893) circle (1.5pt);
\fill(40.4508,29.3893) circle (1.5pt);
\fill(-15.4508,47.5528) circle (1.5pt);
\end{tikzpicture}
}.
\eea
How were the weights calculated? 
The tricky part is that each final configuration can be constructed in different orders. As an example consider
\be
\frac{32}{231}\parbox{20mm}{\ifTikzExternal{\tikzsetnextfilename{tikzAux17}}\begin{tikzpicture}[x=0.2mm,y=0.2mm]
\draw [line width=0.2mm, gray] (-15.4508,-47.5528) -- (40.4508,-29.3893) ;
\draw [line width=0.2mm, gray] (40.4508,-29.3893) -- (40.4508,29.3893) ;
\draw [line width=0.2mm, gray] (-15.4508,47.5528) -- (-15.4508,-47.5528) ;
\draw [line width=0.3mm, blue,directed] (-50.,0.) -- (-15.4508,-47.5528) ;
\draw [line width=0.3mm, blue,directed] (-50.,0.) -- (-15.4508,47.5528) ;
\draw [line width=0.3mm, blue,directed] (-15.4508,47.5528) -- (40.4508,29.3893) ;
\fill(-50.,0.) circle (1.5pt);
\fill(-15.4508,-47.5528) circle (1.5pt);
\fill(40.4508,-29.3893) circle (1.5pt);
\fill(40.4508,29.3893) circle (1.5pt);
\fill(-15.4508,47.5528) circle (1.5pt);
\end{tikzpicture}
} =\frac{20}{231} \parbox{20mm}{\ifTikzExternal{\tikzsetnextfilename{tikzAux18}}\begin{tikzpicture}[x=0.2mm,y=0.2mm]
\draw [line width=0.2mm, gray] (-15.4508,-47.5528) -- (40.4508,-29.3893) ;
\draw [line width=0.2mm, gray] (40.4508,-29.3893) -- (40.4508,29.3893) ;
\draw [line width=0.2mm, gray] (-15.4508,47.5528) -- (-15.4508,-47.5528) ;
\draw [line width=0.3mm, blue,directed] (-50.,0.) -- (-15.4508,-47.5528) ;
\draw [line width=0.3mm, blue,directed] (-50.,0.) -- (-15.4508,47.5528) ;
\draw [line width=0.3mm, blue,directed] (-15.4508,47.5528) -- (40.4508,29.3893) ;
\fill(-50.,0.) circle (1.5pt);
\fill(-15.4508,-47.5528) circle (1.5pt);
\fill(40.4508,-29.3893) circle (1.5pt);
\fill(40.4508,29.3893) circle (1.5pt);
\fill(-15.4508,47.5528) circle (1.5pt);
\node at (-40,30) {\color{red}\small 2};
\node at (-40,-30) {\color{red}\small 1};
\node at (30,45) {\color{red}\small 3};
\end{tikzpicture}
}+\frac{4}{77}\parbox{20mm}{\ifTikzExternal{\tikzsetnextfilename{tikzAux19}}\begin{tikzpicture}[x=0.2mm,y=0.2mm]
\draw [line width=0.2mm, gray] (-15.4508,-47.5528) -- (40.4508,-29.3893) ;
\draw [line width=0.2mm, gray] (40.4508,-29.3893) -- (40.4508,29.3893) ;
\draw [line width=0.2mm, gray] (-15.4508,47.5528) -- (-15.4508,-47.5528) ;
\draw [line width=0.3mm, blue,directed] (-50.,0.) -- (-15.4508,-47.5528) ;
\draw [line width=0.3mm, blue,directed] (-50.,0.) -- (-15.4508,47.5528) ;
\draw [line width=0.3mm, blue,directed] (-15.4508,47.5528) -- (40.4508,29.3893) ;
\fill(-50.,0.) circle (1.5pt);
\fill(-15.4508,-47.5528) circle (1.5pt);
\fill(40.4508,-29.3893) circle (1.5pt);
\fill(40.4508,29.3893) circle (1.5pt);
\fill(-15.4508,47.5528) circle (1.5pt);
\node at (-40,30) {\color{red}\small 1};
\node at (-40,-30) {\color{red}\small 2};
\node at (30,45) {\color{red}\small 3};
\end{tikzpicture}
}.
\ee
Here the numbers in red indicate in which order the edges were added. Each order is a {\em history}. 
For each history, we can evaluate its probability by solving the Laplace equation. 
This is how we arrived at the probabilities in \Eq{DLA-on-G1}. 
We checked our procedure with a Monte Carlo simulation, starting the particles at the target, and recording the graphs including their history. 
It is important to realize that while DLA is {\em history dependent}, LRW is {\em history independent}, since each path can only be grown in a single order. 
The question therefore arises whether a simple static theory can be constructed. We believe this to be impossible. Below we show that a generalization of the action \eq{eq:S_boson2} for LERWs to DLA gives the wrong weights.

\begin{figure}[t]
\centerline{\fig{0.5}{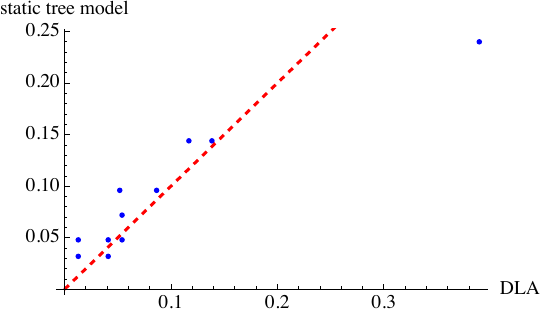}}
\caption{Plot of the weights in \Eq{conjecture-on-G1} against those in \Eq{DLA-on-G1}. Equal weights are indicated by a densely dashed red line.}
\label{compare-conjecture}
\end{figure}

\subsection{A conjecture}
\label{DLA-conjecture}
A natural conjecture is to take the action \eq{eq:S_boson2}, and allow for branching. This is a theory of trees in a background of loops with weight $-1$ per oriented loop. Its action reads
\bea \label{action-1-DLAc}
\!\!\!\rme^{-\ca S_{\ca G}^{\textsc{TL}}} \!&=& \prod_{x \in \ca V} \rme^{- \sum_i \tilde \phi_i(x)\phi_i(x) - \sum_j \tilde \chi_j(x)\chi_j(x)} 
 \Big\{ 1{+}\! \sum_{i}\! \prod_{y\in \ca V} \Big[ 1{+} {\beta_{xy}} \tilde \phi_i(y)\Big] \phi_i(x) {+}\!\sum_{y\in \ca V} \sum_{j} {\beta_{xy}} \tilde \chi_j(y)\chi_j(x) \Big\} \nn \\
 &\equiv & \prod_{x \in \ca V} \rme^{- \sum_i \tilde \phi_i(x)\phi_i(x) - \sum_j \tilde \chi_j(x)\chi_j(x)} 
 \Big\{ 1 {+} \!\sum_{i} \rme^{\sum_{y \in \ca V}\beta_{xy} \tilde \phi_i(y)} \phi_i(x) {+}\!\sum_{y\in \ca V} \sum_{j} {\beta_{xy}} \tilde \chi_j(y)\chi_j(x) \Big\} .
\eea
Here $i=1,...,0$ denotes the components of the cluster; having the index run to $0$ ensures that we get a single connected object.
The second index
$j=1,...,-1$ denotes the complex boson background (equivalent to one complex fermion) that solves the Laplace equation. The first and second line are equal, since even if more $\tilde\phi$ are offered in the second, they cannot be contracted as the out-degree of any vertex is either zero or one. We can choose whichever gives the better field theory.

We may conjecture that 
$
\langle \tilde \phi_1(\xr) \phi_1 (\xt) \rangle 
$
is the probability to have a DLA seeded at $\xr$ and ending at $\xt$. (In a simulation $\xt$ is the point where particles are injected.) 
We studied this conjecture on graph $\ca G_1$. Our first concern is the normalization; if we take the normalization of the $\ca O(-2)$ theory, 
then all LRW configurations are given as before, while branching trees are extra. Instead we normalize such that the prefactors add up to $1$. 
\bea
\label{conjecture-on-G1}
\mathbb G &=& \frac{18}{125}\parbox{20mm}{\ifTikzExternal{\tikzsetnextfilename{tikzAux20}}\begin{tikzpicture}[x=0.2mm,y=0.2mm]
\draw [line width=0.2mm, gray] (-50.,0.) -- (-15.4508,47.5528) ;
\draw [line width=0.2mm, gray] (40.4508,29.3893) -- (-15.4508,47.5528) ;
\draw [line width=0.2mm, gray] (-15.4508,47.5528) -- (-15.4508,-47.5528) ;
\draw [line width=0.3mm, blue,directed] (-50.,0.) -- (-15.4508,-47.5528) ;
\draw [line width=0.3mm, blue,directed] (-15.4508,-47.5528) -- (40.4508,-29.3893) ;
\draw [line width=0.3mm, blue,directed] (40.4508,-29.3893) -- (40.4508,29.3893) ;
\fill(-50.,0.) circle (1.5pt);
\fill(-15.4508,-47.5528) circle (1.5pt);
\fill(40.4508,-29.3893) circle (1.5pt);
\fill(40.4508,29.3893) circle (1.5pt);
\fill(-15.4508,47.5528) circle (1.5pt);
\end{tikzpicture}
}
 + \frac{36}{125}\parbox{20mm}{\ifTikzExternal{\tikzsetnextfilename{tikzAux21}}\begin{tikzpicture}[x=0.2mm,y=0.2mm]
\draw [line width=0.2mm, gray] (-50.,0.) -- (-15.4508,-47.5528) ;
\draw [line width=0.2mm, gray] (-15.4508,-47.5528) -- (40.4508,-29.3893) ;
\draw [line width=0.2mm, gray] (40.4508,-29.3893) -- (40.4508,29.3893) ;
\draw [line width=0.2mm, gray] (-15.4508,47.5528) -- (-15.4508,-47.5528) ;
\draw [line width=0.3mm, blue,directed] (-50.,0.) -- (-15.4508,47.5528) ;
\draw [line width=0.3mm, blue,directed] (-15.4508,47.5528) -- (40.4508,29.3893) ;
\fill(-50.,0.) circle (1.5pt);
\fill(-15.4508,-47.5528) circle (1.5pt);
\fill(40.4508,-29.3893) circle (1.5pt);
\fill(40.4508,29.3893) circle (1.5pt);
\fill(-15.4508,47.5528) circle (1.5pt);
\end{tikzpicture}
}
 -\frac{6}{125}\parbox{20mm}{\ifTikzExternal{\tikzsetnextfilename{tikzAux22}}\begin{tikzpicture}[x=0.2mm,y=0.2mm]
\draw [line width=0.2mm, gray] (-50.,0.) -- (-15.4508,-47.5528) ;
\draw [line width=0.2mm, gray] (40.4508,-29.3893) -- (40.4508,29.3893) ;
\draw [line width=0.2mm, gray] (-15.4508,47.5528) -- (-15.4508,-47.5528) ;
\draw [line width=0.3mm, blue,directed] (-50.,0.) -- (-15.4508,47.5528) ;
\draw [line width=0.3mm, blue,directed] plot [smooth,tension=1] coordinates { (-15.4508,-47.5528) (14.0451,-43.2263) (40.4508,-29.3893) };
\draw [line width=0.3mm, blue,directed] plot [smooth,tension=1] coordinates { (40.4508,-29.3893) (10.9549,-33.7158) (-15.4508,-47.5528) };
\draw [line width=0.3mm, blue,directed] (-15.4508,47.5528) -- (40.4508,29.3893) ;
\fill(-50.,0.) circle (1.5pt);
\fill(-15.4508,-47.5528) circle (1.5pt);
\fill(40.4508,-29.3893) circle (1.5pt);
\fill(40.4508,29.3893) circle (1.5pt);
\fill(-15.4508,47.5528) circle (1.5pt);
\end{tikzpicture}
}
 + \frac{18}{125}\parbox{20mm}{\ifTikzExternal{\tikzsetnextfilename{tikzAux23}}\begin{tikzpicture}[x=0.2mm,y=0.2mm]
\draw [line width=0.2mm, gray] (-15.4508,-47.5528) -- (40.4508,-29.3893) ;
\draw [line width=0.2mm, gray] (40.4508,-29.3893) -- (40.4508,29.3893) ;
\draw [line width=0.2mm, gray] (-15.4508,47.5528) -- (-15.4508,-47.5528) ;
\draw [line width=0.3mm, blue,directed] (-50.,0.) -- (-15.4508,-47.5528) ;
\draw [line width=0.3mm, blue,directed] (-50.,0.) -- (-15.4508,47.5528) ;
\draw [line width=0.3mm, blue,directed] (-15.4508,47.5528) -- (40.4508,29.3893) ;
\fill(-50.,0.) circle (1.5pt);
\fill(-15.4508,-47.5528) circle (1.5pt);
\fill(40.4508,-29.3893) circle (1.5pt);
\fill(40.4508,29.3893) circle (1.5pt);
\fill(-15.4508,47.5528) circle (1.5pt);
\end{tikzpicture}
}
 + \frac{12}{125}\parbox{20mm}{\ifTikzExternal{\tikzsetnextfilename{tikzAux24}}\begin{tikzpicture}[x=0.2mm,y=0.2mm]
\draw [line width=0.2mm, gray] (-50.,0.) -- (-15.4508,47.5528) ;
\draw [line width=0.2mm, gray] (-15.4508,-47.5528) -- (40.4508,-29.3893) ;
\draw [line width=0.2mm, gray] (40.4508,-29.3893) -- (40.4508,29.3893) ;
\draw [line width=0.3mm, blue,directed] (-50.,0.) -- (-15.4508,-47.5528) ;
\draw [line width=0.3mm, blue,directed] (-15.4508,-47.5528) -- (-15.4508,47.5528) ;
\draw [line width=0.3mm, blue,directed] (-15.4508,47.5528) -- (40.4508,29.3893) ;
\fill(-50.,0.) circle (1.5pt);
\fill(-15.4508,-47.5528) circle (1.5pt);
\fill(40.4508,-29.3893) circle (1.5pt);
\fill(40.4508,29.3893) circle (1.5pt);
\fill(-15.4508,47.5528) circle (1.5pt);
\end{tikzpicture}
} \nn\\
& +& 
 \frac{12}{125}\parbox{20mm}{\ifTikzExternal{\tikzsetnextfilename{tikzAux25}}\begin{tikzpicture}[x=0.2mm,y=0.2mm]
\draw [line width=0.2mm, gray] (-50.,0.) -- (-15.4508,-47.5528) ;
\draw [line width=0.2mm, gray] (-15.4508,-47.5528) -- (40.4508,-29.3893) ;
\draw [line width=0.2mm, gray] (40.4508,-29.3893) -- (40.4508,29.3893) ;
\draw [line width=0.3mm, blue,directed] (-50.,0.) -- (-15.4508,47.5528) ;
\draw [line width=0.3mm, blue,directed] (-15.4508,47.5528) -- (-15.4508,-47.5528) ;
\draw [line width=0.3mm, blue,directed] (-15.4508,47.5528) -- (40.4508,29.3893) ;
\fill(-50.,0.) circle (1.5pt);
\fill(-15.4508,-47.5528) circle (1.5pt);
\fill(40.4508,-29.3893) circle (1.5pt);
\fill(40.4508,29.3893) circle (1.5pt);
\fill(-15.4508,47.5528) circle (1.5pt);
\end{tikzpicture}
} + \frac{9}{125}\parbox{20mm}{\ifTikzExternal{\tikzsetnextfilename{tikzAux26}}\begin{tikzpicture}[x=0.2mm,y=0.2mm]
\draw [line width=0.2mm, gray] (40.4508,29.3893) -- (-15.4508,47.5528) ;
\draw [line width=0.2mm, gray] (-15.4508,47.5528) -- (-15.4508,-47.5528) ;
\draw [line width=0.3mm, blue,directed] (-50.,0.) -- (-15.4508,-47.5528) ;
\draw [line width=0.3mm, blue,directed] (-50.,0.) -- (-15.4508,47.5528) ;
\draw [line width=0.3mm, blue,directed] (-15.4508,-47.5528) -- (40.4508,-29.3893) ;
\draw [line width=0.3mm, blue,directed] (40.4508,-29.3893) -- (40.4508,29.3893) ;
\fill(-50.,0.) circle (1.5pt);
\fill(-15.4508,-47.5528) circle (1.5pt);
\fill(40.4508,-29.3893) circle (1.5pt);
\fill(40.4508,29.3893) circle (1.5pt);
\fill(-15.4508,47.5528) circle (1.5pt);
\end{tikzpicture}
}
 + \frac{6}{125}\parbox{20mm}{\ifTikzExternal{\tikzsetnextfilename{tikzAux27}}\begin{tikzpicture}[x=0.2mm,y=0.2mm]
\draw [line width=0.2mm, gray] (-50.,0.) -- (-15.4508,47.5528) ;
\draw [line width=0.2mm, gray] (40.4508,29.3893) -- (-15.4508,47.5528) ;
\draw [line width=0.3mm, blue,directed] (-50.,0.) -- (-15.4508,-47.5528) ;
\draw [line width=0.3mm, blue,directed] (-15.4508,-47.5528) -- (40.4508,-29.3893) ;
\draw [line width=0.3mm, blue,directed] (-15.4508,-47.5528) -- (-15.4508,47.5528) ;
\draw [line width=0.3mm, blue,directed] (40.4508,-29.3893) -- (40.4508,29.3893) ;
\fill(-50.,0.) circle (1.5pt);
\fill(-15.4508,-47.5528) circle (1.5pt);
\fill(40.4508,-29.3893) circle (1.5pt);
\fill(40.4508,29.3893) circle (1.5pt);
\fill(-15.4508,47.5528) circle (1.5pt);
\end{tikzpicture}
}
 + \frac{6}{125}\parbox{20mm}{\ifTikzExternal{\tikzsetnextfilename{tikzAux28}}\begin{tikzpicture}[x=0.2mm,y=0.2mm]
\draw [line width=0.2mm, gray] (-50.,0.) -- (-15.4508,-47.5528) ;
\draw [line width=0.2mm, gray] (40.4508,29.3893) -- (-15.4508,47.5528) ;
\draw [line width=0.3mm, blue,directed] (-50.,0.) -- (-15.4508,47.5528) ;
\draw [line width=0.3mm, blue,directed] (-15.4508,-47.5528) -- (40.4508,-29.3893) ;
\draw [line width=0.3mm, blue,directed] (40.4508,-29.3893) -- (40.4508,29.3893) ;
\draw [line width=0.3mm, blue,directed] (-15.4508,47.5528) -- (-15.4508,-47.5528) ;
\fill(-50.,0.) circle (1.5pt);
\fill(-15.4508,-47.5528) circle (1.5pt);
\fill(40.4508,-29.3893) circle (1.5pt);
\fill(40.4508,29.3893) circle (1.5pt);
\fill(-15.4508,47.5528) circle (1.5pt);
\end{tikzpicture}
}\nn\\
& +& \frac{6}{125}\parbox{20mm}{\ifTikzExternal{\tikzsetnextfilename{tikzAux29}}\begin{tikzpicture}[x=0.2mm,y=0.2mm]
\draw [line width=0.2mm, gray] (40.4508,-29.3893) -- (40.4508,29.3893) ;
\draw [line width=0.2mm, gray] (-15.4508,47.5528) -- (-15.4508,-47.5528) ;
\draw [line width=0.3mm, blue,directed] (-50.,0.) -- (-15.4508,-47.5528) ;
\draw [line width=0.3mm, blue,directed] (-50.,0.) -- (-15.4508,47.5528) ;
\draw [line width=0.3mm, blue,directed] (-15.4508,-47.5528) -- (40.4508,-29.3893) ;
\draw [line width=0.3mm, blue,directed] (-15.4508,47.5528) -- (40.4508,29.3893) ;
\fill(-50.,0.) circle (1.5pt);
\fill(-15.4508,-47.5528) circle (1.5pt);
\fill(40.4508,-29.3893) circle (1.5pt);
\fill(40.4508,29.3893) circle (1.5pt);
\fill(-15.4508,47.5528) circle (1.5pt);
\end{tikzpicture}
}
+\frac{4}{125}\parbox{20mm}{\ifTikzExternal{\tikzsetnextfilename{tikzAux30}}\begin{tikzpicture}[x=0.2mm,y=0.2mm]
\draw [line width=0.2mm, gray] (-50.,0.) -- (-15.4508,47.5528) ;
\draw [line width=0.2mm, gray] (40.4508,-29.3893) -- (40.4508,29.3893) ;
\draw [line width=0.3mm, blue,directed] (-50.,0.) -- (-15.4508,-47.5528) ;
\draw [line width=0.3mm, blue,directed] (-15.4508,-47.5528) -- (40.4508,-29.3893) ;
\draw [line width=0.3mm, blue,directed] (-15.4508,-47.5528) -- (-15.4508,47.5528) ;
\draw [line width=0.3mm, blue,directed] (-15.4508,47.5528) -- (40.4508,29.3893) ;
\fill(-50.,0.) circle (1.5pt);
\fill(-15.4508,-47.5528) circle (1.5pt);
\fill(40.4508,-29.3893) circle (1.5pt);
\fill(40.4508,29.3893) circle (1.5pt);
\fill(-15.4508,47.5528) circle (1.5pt);
\end{tikzpicture}
}
 + \frac{4}{125}\parbox{20mm}{\ifTikzExternal{\tikzsetnextfilename{tikzAux31}}\begin{tikzpicture}[x=0.2mm,y=0.2mm]
\draw [line width=0.2mm, gray] (-50.,0.) -- (-15.4508,-47.5528) ;
\draw [line width=0.2mm, gray] (40.4508,-29.3893) -- (40.4508,29.3893) ;
\draw [line width=0.3mm, blue,directed] (-50.,0.) -- (-15.4508,47.5528) ;
\draw [line width=0.3mm, blue,directed] (-15.4508,-47.5528) -- (40.4508,-29.3893) ;
\draw [line width=0.3mm, blue,directed] (-15.4508,47.5528) -- (-15.4508,-47.5528) ;
\draw [line width=0.3mm, blue,directed] (-15.4508,47.5528) -- (40.4508,29.3893) ;
\fill(-50.,0.) circle (1.5pt);
\fill(-15.4508,-47.5528) circle (1.5pt);
\fill(40.4508,-29.3893) circle (1.5pt);
\fill(40.4508,29.3893) circle (1.5pt);
\fill(-15.4508,47.5528) circle (1.5pt);
\end{tikzpicture}
}.
\eea
Note that the diagrams are the same as in \Eq{DLA-on-G1}, given in the same order, and that the second diagram appears twice, once by itself, once with a loop on the lower two vertices. 
Comparison shows that the weights are different, even though there is some measure of correlation, as can be seen in Fig.~\ref{compare-conjecture}. While the action \eq{action-1-DLAc} is not a theory for DLA, it would still be interesting to analyze it, and understand what physical situation it might describe.

\begin{figure}[t!]
\ifTikzExternal{\tikzsetnextfilename{DLAconvPlot}}
 \[{{\vcenter{\hbox{\begin{tikzpicture}[scale=1.565,x={(1,0)},y={(0,1)}]
\coordinate (plot) at (-0.7,-0.24) ;
\coordinate (Ylabel) at (0,5.2);
\coordinate (Xlabel) at (8.2,-0.1);
\coordinate (seed) at (1,5.2) ;
\coordinate (30/77graph) at ($(8.3,5.8)$) ;
\coordinate (30/77) at ($(8.15,{12.45*30/77})$) ;
\coordinate (32/231graph) at ($(8.3,5.05)$) ;
\coordinate (32/231) at ($(8.15,{12.45*32/231})$) ;
\coordinate (9/77graph) at ($(8.45,4.3)$) ;
\coordinate (9/77) at ($(8.15,{12.45*9/77})$) ;
\coordinate (20/231graph) at ($(8.45,3.55)$) ;
\coordinate (20/231) at ($(8.15,{12.45*20/231})$) ;
\coordinate (25/462graph) at ($(8.45,2.5)$) ;
\coordinate (25/462) at ($(8.13,{12.45*25/462})$) ;
\coordinate (4/77graph) at ($(8.63,1.4)$) ;
\coordinate (4/77) at ($(8.15,{12.45*4/77})$) ;
\coordinate (19/462graph) at ($(8.35,0.25)$) ;
\coordinate (19/462) at ($(8.15,{12.45*19/462})$) ;
\coordinate (1/77graph) at (8.3,-0.17) ;
\coordinate (1/77) at ($(8.15,{12.45*1/77})$) ;
\coordinate (R15) at (-0.25,3) ;
\coordinate (R12) at (-0.25,2) ;
\coordinate (Rmany) at (-0.75,-0.17) ;
\coordinate (R125) at (-3.5,-2) ;
\coordinate (R152) at (-2,-2.5) ;
\coordinate (R1523) at (0,-2.5) ;
\fill (0,0) circle (1pt);
\node[anchor=base west] at (plot) {\fig{0.86}{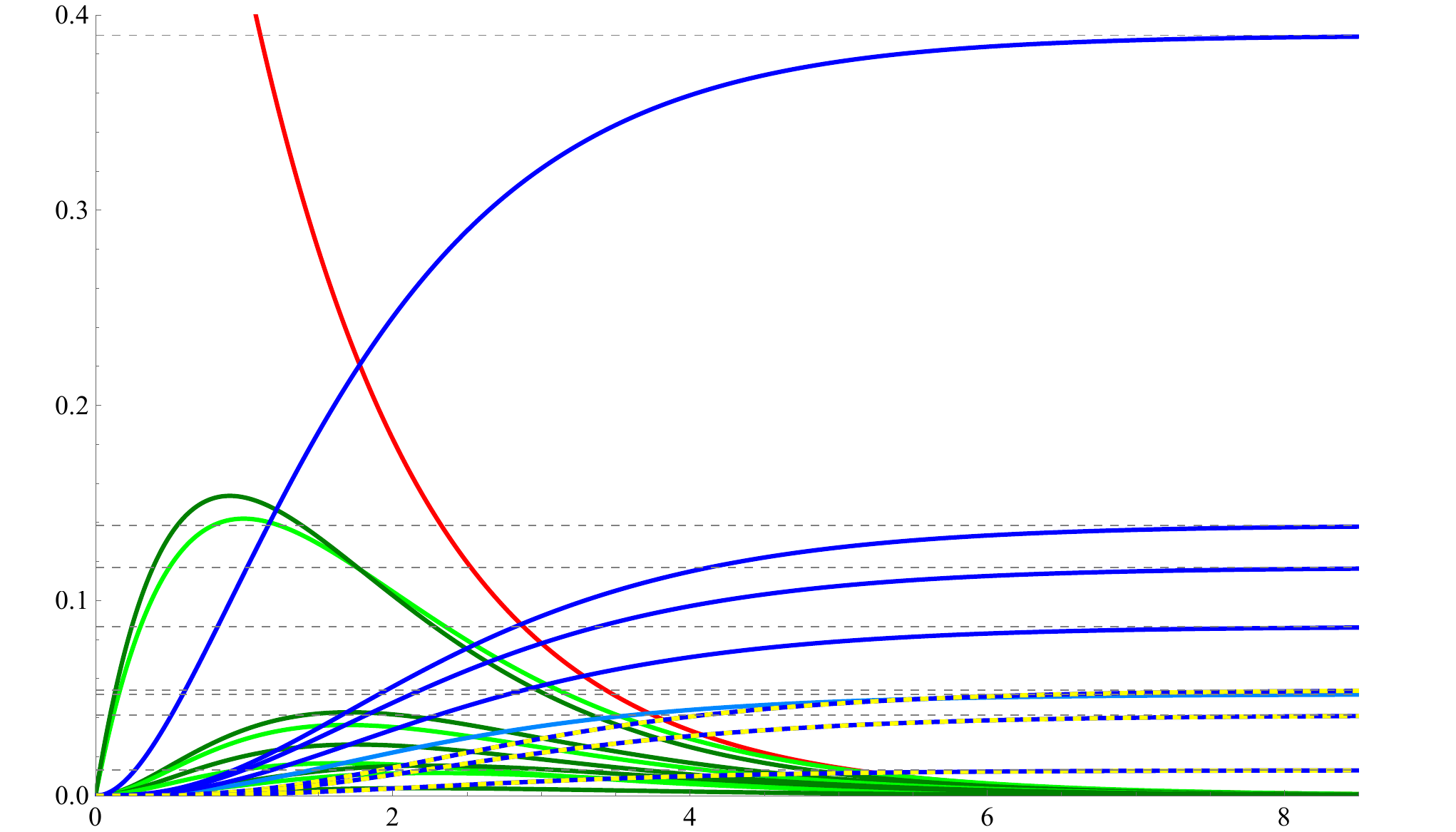}};
\node[font=\scriptsize, fill=white, inner sep=1pt] at (Ylabel) {$\mathbb G^{\textsc{dla}}(t)$};
\node[anchor=west,font=\scriptsize, fill=white, inner sep=1pt] at (Xlabel) {$\gamma t$};
\node[anchor=base west] at (seed) {$\vcenter{\hbox{%
\begin{tikzpicture}[x=0.11mm,y=0.11mm]
 \draw [line width=0.2mm, gray] (-50.,0.) -- (-15.4508,-47.5528) ;
 \draw [line width=0.2mm, gray] (-15.4508,-47.5528) -- (40.4508,-29.3893) ;
 \draw [line width=0.2mm, gray] (40.4508,-29.3893) -- (40.4508,29.3893) ;
 \draw [line width=0.2mm, gray] (-15.4508,47.5528) -- (-15.4508,-47.5528) ;
 \draw [line width=0.2mm, gray] (-50.,0.) -- (-15.4508,47.5528) ;
 \draw [line width=0.2mm, gray] (-15.4508,47.5528) -- (40.4508,29.3893) ;
 \draw [line width=0.2mm, red] (-50.,0.) -- (-63.,0.) ;
 \fill(-50.,0.) circle (1.2pt);
 \fill(-15.4508,-47.5528) circle (1.2pt);
 \fill(40.4508,-29.3893) circle (1.2pt);
 \fill(40.4508,29.3893) circle (1.2pt);
 \fill(-15.4508,47.5528) circle (1.2pt);
\end{tikzpicture}%
 }}$};
\node[anchor=west, inner sep=2pt] at (30/77graph) {$\frac{30}{77}~\vcenter{\hbox{%
 \begin{tikzpicture}[x=0.11mm,y=0.11mm]
 \draw [line width=0.2mm, gray] (-50.,0.) -- (-15.4508,-47.5528) ;
 \draw [line width=0.2mm, gray] (-15.4508,-47.5528) -- (40.4508,-29.3893) ;
 \draw [line width=0.2mm, gray] (40.4508,-29.3893) -- (40.4508,29.3893) ;
 \draw [line width=0.2mm, gray] (-15.4508,47.5528) -- (-15.4508,-47.5528) ;
 \draw [line width=0.2mm, blue,directed] (-50.,0.) -- (-15.4508,47.5528) ;
 \draw [line width=0.2mm, blue,directed] (-15.4508,47.5528) -- (40.4508,29.3893) ;
 \fill(-50.,0.) circle (1.2pt);
 \fill(-15.4508,-47.5528) circle (1.2pt);
 \fill(40.4508,-29.3893) circle (1.2pt);
 \fill(40.4508,29.3893) circle (1.2pt);
 \fill(-15.4508,47.5528) circle (1.2pt);
\end{tikzpicture}%
}}$};
\draw[gray,enddirected] (30/77graph) -- (30/77);
\node[anchor=west, inner sep=2pt] at (32/231graph) {$\frac{32}{231}~\vcenter{\hbox{%
 \begin{tikzpicture}[x=0.11mm,y=0.11mm]
\draw [line width=0.2mm, blue,directed] (-50.,0.) -- (-15.4508,-47.5528) ;
\draw [line width=0.2mm, gray](-15.4508,-47.5528) -- (40.4508,-29.3893) ;
\draw [line width=0.2mm, gray] (40.4508,-29.3893) -- (40.4508,29.3893) ;
\draw [line width=0.2mm, gray] (-15.4508,47.5528) -- (-15.4508,-47.5528) ;
\draw [line width=0.2mm, blue,directed] (-50.,0.) -- (-15.4508,47.5528) ;
\draw [line width=0.2mm, blue,directed] (-15.4508,47.5528) -- (40.4508,29.3893) ;
\fill(-50.,0.) circle (1.2pt);
\fill(-15.4508,-47.5528) circle (1.2pt);
\fill(40.4508,-29.3893) circle (1.2pt);
\fill(40.4508,29.3893) circle (1.2pt);
\fill(-15.4508,47.5528) circle (1.2pt);
 \end{tikzpicture}
}}$};
\draw[gray,enddirected] (32/231graph) -- (32/231);
\node[anchor=west, inner sep=2pt] at (9/77graph) {$\frac{9}{77}~\vcenter{\hbox{%
 \begin{tikzpicture}[x=0.11mm,y=0.11mm]
\draw [line width=0.2mm, blue,directed] (-50.,0.) -- (-15.4508,-47.5528) ;
\draw [line width=0.2mm, blue,directed] (-15.4508,-47.5528) -- (40.4508,-29.3893) ;
\draw [line width=0.2mm, blue,directed] (40.4508,-29.3893) -- (40.4508,29.3893) ;
\draw [line width=0.2mm, gray] (-15.4508,47.5528) -- (-15.4508,-47.5528) ;
\draw [line width=0.2mm, gray] (-50.,0.) -- (-15.4508,47.5528) ;
\draw [line width=0.2mm, gray] (-15.4508,47.5528) -- (40.4508,29.3893) ;
\fill(-50.,0.) circle (1.2pt);
\fill(-15.4508,-47.5528) circle (1.2pt);
\fill(40.4508,-29.3893) circle (1.2pt);
\fill(40.4508,29.3893) circle (1.2pt);
\fill(-15.4508,47.5528) circle (1.2pt);
 \end{tikzpicture}
}}$};
\draw[gray,enddirected] (9/77graph) -- (9/77);
\node[anchor=west, inner sep=2pt] at (20/231graph) {$\frac{20}{231}~\vcenter{\hbox{%
 \begin{tikzpicture}[x=0.11mm,y=0.11mm]
 \draw [line width=0.2mm, blue,directed] (-50.,0.) -- (-15.4508,-47.5528) ;
 \draw [line width=0.2mm, gray] (-15.4508,-47.5528) -- (40.4508,-29.3893) ;
 \draw [line width=0.2mm, gray] (40.4508,-29.3893) -- (40.4508,29.3893) ;
 \draw [line width=0.2mm, blue,directed] (-15.4508,-47.5528) -- (-15.4508,47.5528) ;
 \draw [line width=0.2mm, gray] (-50.,0.) -- (-15.4508,47.5528) ;
 \draw [line width=0.2mm, blue,directed] (-15.4508,47.5528) -- (40.4508,29.3893) ;
 \fill(-50.,0.) circle (1.2pt);
 \fill(-15.4508,-47.5528) circle (1.2pt);
 \fill(40.4508,-29.3893) circle (1.2pt);
 \fill(40.4508,29.3893) circle (1.2pt);
 \fill(-15.4508,47.5528) circle (1.2pt);
 \end{tikzpicture}%
 }}$};
\draw[gray,enddirected] (20/231graph) -- (20/231);
\node[anchor=west, inner sep=2pt] at (25/462graph) {$
 \begin{cases}
 \frac{25}{462}\vcenter{\hbox{%
 \begin{tikzpicture}[x=0.11mm,y=0.11mm]
 \draw [line width=0.2mm, blue,directed] (-50.,0.) -- (-15.4508,-47.5528) ;
 \draw [line width=0.2mm, blue,directed] (-15.4508,-47.5528) -- (40.4508,-29.3893) ;
 \draw [line width=0.2mm, gray] (40.4508,-29.3893) -- (40.4508,29.3893) ;
 \draw [line width=0.2mm, gray] (-15.4508,-47.5528) -- (-15.4508,47.5528) ;
 \draw [line width=0.2mm, blue,directed] (-50.,0.) -- (-15.4508,47.5528) ;
 \draw [line width=0.2mm, blue,directed] (-15.4508,47.5528) -- (40.4508,29.3893) ;
 \fill(-50.,0.) circle (1.2pt);
 \fill(-15.4508,-47.5528) circle (1.2pt);
 \fill(40.4508,-29.3893) circle (1.2pt);
 \fill(40.4508,29.3893) circle (1.2pt);
 \fill(-15.4508,47.5528) circle (1.2pt);
 \end{tikzpicture}%
 }}
 \\[10pt]
 \frac{25}{462}\vcenter{\hbox{%
 \begin{tikzpicture}[x=0.11mm,y=0.11mm]
 \draw [line width=0.2mm, blue,directed] (-50.,0.) -- (-15.4508,-47.5528) ;
 \draw [line width=0.2mm, blue,directed] (-15.4508,-47.5528) -- (40.4508,-29.3893) ;
 \draw [line width=0.2mm, blue,directed] (40.4508,-29.3893) -- (40.4508,29.3893) ;
 \draw [line width=0.2mm, gray] (-15.4508,-47.5528) -- (-15.4508,47.5528) ;
 \draw [line width=0.2mm, blue,directed] (-50.,0.) -- (-15.4508,47.5528) ;
 \draw [line width=0.2mm, gray] (-15.4508,47.5528) -- (40.4508,29.3893) ;
 \fill(-50.,0.) circle (1.2pt);
 \fill(-15.4508,-47.5528) circle (1.2pt);
 \fill(40.4508,-29.3893) circle (1.2pt);
 \fill(40.4508,29.3893) circle (1.2pt);
 \fill(-15.4508,47.5528) circle (1.2pt);
 \end{tikzpicture}%
 }}
 \end{cases}
 $};
\draw[gray,enddirected] (25/462graph) -- (25/462);
\node[anchor=west, inner sep=2pt] at (4/77graph) {$\frac{4}{77}~\vcenter{\hbox{%
 \begin{tikzpicture}[x=0.11mm,y=0.11mm]
\draw [line width=0.2mm, gray] (-50.,0.) -- (-15.4508,-47.5528) ;
\draw [line width=0.2mm, gray] (-15.4508,-47.5528) -- (40.4508,-29.3893) ;
\draw [line width=0.2mm, gray] (40.4508,-29.3893) -- (40.4508,29.3893) ;
\draw [line width=0.2mm, blue,directed] (-15.4508,47.5528) -- (-15.4508,-47.5528) ;
\draw [line width=0.2mm, blue,directed] (-50.,0.) -- (-15.4508,47.5528) ;
\draw [line width=0.2mm, blue,directed] (-15.4508,47.5528) -- (40.4508,29.3893) ;
\fill(-50.,0.) circle (1.2pt);
\fill(-15.4508,-47.5528) circle (1.2pt);
\fill(40.4508,-29.3893) circle (1.2pt);
\fill(40.4508,29.3893) circle (1.2pt);
\fill(-15.4508,47.5528) circle (1.2pt);
 \end{tikzpicture}
}}$};
\draw[gray,enddirected] (4/77graph) -- (4/77);
\node[anchor=west, inner sep=2pt] at (19/462graph) {$
 \begin{cases}
 \frac{19}{462}\vcenter{\hbox{%
 \begin{tikzpicture}[x=0.11mm,y=0.11mm]
 \draw [line width=0.2mm, blue,directed] (-50.,0.) -- (-15.4508,-47.5528) ;
 \draw [line width=0.2mm, blue,directed] (-15.4508,-47.5528) -- (40.4508,-29.3893) ;
 \draw [line width=0.2mm, gray] (40.4508,-29.3893) -- (40.4508,29.3893) ;
 \draw [line width=0.2mm, blue,directed] (-15.4508,-47.5528) -- (-15.4508,47.5528) ;
 \draw [line width=0.2mm, gray] (-50.,0.) -- (-15.4508,47.5528) ;
 \draw [line width=0.2mm, blue,directed] (-15.4508,47.5528) -- (40.4508,29.3893) ;
 \fill(-50.,0.) circle (1.2pt);
 \fill(-15.4508,-47.5528) circle (1.2pt);
 \fill(40.4508,-29.3893) circle (1.2pt);
 \fill(40.4508,29.3893) circle (1.2pt);
 \fill(-15.4508,47.5528) circle (1.2pt);
 \end{tikzpicture}%
 }}
 \\[10pt]
 \frac{19}{462}\vcenter{\hbox{%
 \begin{tikzpicture}[x=0.11mm,y=0.11mm]
 \draw [line width=0.2mm, blue,directed] (-50.,0.) -- (-15.4508,-47.5528) ;
 \draw [line width=0.2mm, blue,directed] (-15.4508,-47.5528) -- (40.4508,-29.3893) ;
 \draw [line width=0.2mm, blue,directed] (40.4508,-29.3893) -- (40.4508,29.3893) ;
 \draw [line width=0.2mm, blue,directed] (-15.4508,-47.5528) -- (-15.4508,47.5528) ;
 \draw [line width=0.2mm, gray] (-50.,0.) -- (-15.4508,47.5528) ;
 \draw [line width=0.2mm, gray] (-15.4508,47.5528) -- (40.4508,29.3893) ;
 \fill(-50.,0.) circle (1.2pt);
 \fill(-15.4508,-47.5528) circle (1.2pt);
 \fill(40.4508,-29.3893) circle (1.2pt);
 \fill(40.4508,29.3893) circle (1.2pt);
 \fill(-15.4508,47.5528) circle (1.2pt);
 \end{tikzpicture}%
 }}
 \end{cases}
 $};
\draw[gray,enddirected] (19/462graph) -- (19/462);
\node[anchor=north east] at (1/77graph) {$\aoverbrace[L9U1R]{{\textstyle\frac{1}{77}}~\vcenter{\hbox{%
 \begin{tikzpicture}[x=0.11mm,y=0.11mm]
\draw [line width=0.2mm, gray] (-50.,0.) -- (-15.4508,-47.5528) ;
\draw [line width=0.2mm, blue,directed] (-15.4508,-47.5528) -- (40.4508,-29.3893) ;
\draw [line width=0.2mm, blue,directed] (40.4508,-29.3893) -- (40.4508,29.3893) ;
\draw [line width=0.2mm, blue,directed] (-15.4508,47.5528) -- (-15.4508,-47.5528) ;
\draw [line width=0.2mm, blue,directed] (-50.,0.) -- (-15.4508,47.5528) ;
\draw [line width=0.2mm, gray] (-15.4508,47.5528) -- (40.4508,29.3893) ;
\fill(-50.,0.) circle (1.2pt);
\fill(-15.4508,-47.5528) circle (1.2pt);
\fill(40.4508,-29.3893) circle (1.2pt);
\fill(40.4508,29.3893) circle (1.2pt);
\fill(-15.4508,47.5528) circle (1.2pt);
 \end{tikzpicture}
}},%
{\textstyle\frac{1}{77}}~\vcenter{\hbox{%
 \begin{tikzpicture}[x=0.11mm,y=0.11mm]
\draw [line width=0.2mm, gray] (-50.,0.) -- (-15.4508,-47.5528) ;
\draw [line width=0.2mm, blue,directed] (-15.4508,-47.5528) -- (40.4508,-29.3893) ;
\draw [line width=0.2mm, gray] (40.4508,-29.3893) -- (40.4508,29.3893) ;
\draw [line width=0.2mm, blue,directed] (-15.4508,47.5528) -- (-15.4508,-47.5528) ;
\draw [line width=0.2mm, blue,directed] (-50.,0.) -- (-15.4508,47.5528) ;
\draw [line width=0.2mm, blue,directed] (-15.4508,47.5528) -- (40.4508,29.3893) ;
\fill(-50.,0.) circle (1.2pt);
\fill(-15.4508,-47.5528) circle (1.2pt);
\fill(40.4508,-29.3893) circle (1.2pt);
\fill(40.4508,29.3893) circle (1.2pt);
\fill(-15.4508,47.5528) circle (1.2pt);
 \end{tikzpicture}
}}}$};
\draw[gray,enddirected] ($(1/77graph) + (-0.445,-0.05)$) .. controls (8.25,-0.15) and (8.25,0.12) .. (1/77); 
%
\node[anchor=east, inner sep=1pt] at (R15) {$\vcenter{\hbox{%
\begin{tikzpicture}[x=0.11mm,y=0.11mm]
 \draw [line width=0.2mm, gray] (-50.,0.) -- (-15.4508,-47.5528) ;
 \draw [line width=0.2mm, gray] (-15.4508,-47.5528) -- (40.4508,-29.3893) ;
 \draw [line width=0.2mm, gray] (40.4508,-29.3893) -- (40.4508,29.3893) ;
 \draw [line width=0.2mm, gray] (-15.4508,47.5528) -- (-15.4508,-47.5528) ;
 \draw [line width=0.2mm, blue,directed] (-50.,0.) -- (-15.4508,47.5528) ;
 \draw [line width=0.2mm, gray] (-15.4508,47.5528) -- (40.4508,29.3893) ;
 \fill(-50.,0.) circle (1.2pt);
 \fill(-15.4508,-47.5528) circle (1.2pt);
 \fill(40.4508,-29.3893) circle (1.2pt);
 \fill(40.4508,29.3893) circle (1.2pt);
 \fill(-15.4508,47.5528) circle (1.2pt);
\end{tikzpicture}%
 }}$};
\draw[gray,enddirected] (R15) -- (0.85,1.95);
\node[anchor=east, inner sep=1pt] at (R12) {$\vcenter{\hbox{%
\begin{tikzpicture}[x=0.11mm,y=0.11mm]
 \draw [line width=0.2mm, blue,directed] (-50.,0.) -- (-15.4508,-47.5528) ;
 \draw [line width=0.2mm, gray] (-15.4508,-47.5528) -- (40.4508,-29.3893) ;
 \draw [line width=0.2mm, gray] (40.4508,-29.3893) -- (40.4508,29.3893) ;
 \draw [line width=0.2mm, gray] (-15.4508,47.5528) -- (-15.4508,-47.5528) ;
 \draw [line width=0.2mm, gray] (-50.,0.) -- (-15.4508,47.5528) ;
 \draw [line width=0.2mm, gray] (-15.4508,47.5528) -- (40.4508,29.3893) ;
 \fill(-50.,0.) circle (1.2pt);
 \fill(-15.4508,-47.5528) circle (1.2pt);
 \fill(40.4508,-29.3893) circle (1.2pt);
 \fill(40.4508,29.3893) circle (1.2pt);
 \fill(-15.4508,47.5528) circle (1.2pt);
\end{tikzpicture}%
 }}$};
\draw[gray,enddirected] (R12) -- (0.9,1.8);
\node[anchor=west] at (-0.5,0.27) {$\Big\{$} ;
\node[anchor=north west] at (Rmany) {$\aoverbrace[LU5R]{\vcenter{\hbox{%
\begin{tikzpicture}[x=0.11mm,y=0.11mm]
 \draw [line width=0.2mm, blue,directed] (-50.,0.) -- (-15.4508,-47.5528) ;
 \draw [line width=0.2mm, gray] (-15.4508,-47.5528) -- (40.4508,-29.3893) ;
 \draw [line width=0.2mm, gray] (40.4508,-29.3893) -- (40.4508,29.3893) ;
 \draw [line width=0.2mm, gray] (-15.4508,-47.5528) -- (-15.4508,47.5528) ;
 \draw [line width=0.2mm, blue,directed] (-50.,0.) -- (-15.4508,47.5528) ;
 \draw [line width=0.2mm, gray] (-15.4508,47.5528) -- (40.4508,29.3893) ;
 \fill(-50.,0.) circle (1.2pt);
 \fill(-15.4508,-47.5528) circle (1.2pt);
 \fill(40.4508,-29.3893) circle (1.2pt);
 \fill(40.4508,29.3893) circle (1.2pt);
 \fill(-15.4508,47.5528) circle (1.2pt);
\end{tikzpicture}%
 }}%
 ,
 \vcenter{\hbox{%
\begin{tikzpicture}[x=0.11mm,y=0.11mm]
 \draw [line width=0.2mm, blue,directed] (-50.,0.) -- (-15.4508,-47.5528) ;
 \draw [line width=0.2mm, blue,directed] (-15.4508,-47.5528) -- (40.4508,-29.3893) ;
 \draw [line width=0.2mm, gray] (40.4508,-29.3893) -- (40.4508,29.3893) ;
 \draw [line width=0.2mm, gray] (-15.4508,-47.5528) -- (-15.4508,47.5528) ;
 \draw [line width=0.2mm, gray] (-50.,0.) -- (-15.4508,47.5528) ;
 \draw [line width=0.2mm, gray] (-15.4508,47.5528) -- (40.4508,29.3893) ;
 \fill(-50.,0.) circle (1.2pt);
 \fill(-15.4508,-47.5528) circle (1.2pt);
 \fill(40.4508,-29.3893) circle (1.2pt);
 \fill(40.4508,29.3893) circle (1.2pt);
 \fill(-15.4508,47.5528) circle (1.2pt);
\end{tikzpicture}%
 }}%
 ,
 \vcenter{\hbox{%
\begin{tikzpicture}[x=0.11mm,y=0.11mm]
 \draw [line width=0.2mm, blue,directed] (-50.,0.) -- (-15.4508,-47.5528) ;
 \draw [line width=0.2mm, gray] (-15.4508,-47.5528) -- (40.4508,-29.3893) ;
 \draw [line width=0.2mm, gray] (40.4508,-29.3893) -- (40.4508,29.3893) ;
 \draw [line width=0.2mm, blue,directed] (-15.4508,-47.5528) -- (-15.4508,47.5528) ;
 \draw [line width=0.2mm, gray] (-50.,0.) -- (-15.4508,47.5528) ;
 \draw [line width=0.2mm, gray] (-15.4508,47.5528) -- (40.4508,29.3893) ;
 \fill(-50.,0.) circle (1.2pt);
 \fill(-15.4508,-47.5528) circle (1.2pt);
 \fill(40.4508,-29.3893) circle (1.2pt);
 \fill(40.4508,29.3893) circle (1.2pt);
 \fill(-15.4508,47.5528) circle (1.2pt);
\end{tikzpicture}%
 }}%
 ,
 \vcenter{\hbox{%
\begin{tikzpicture}[x=0.11mm,y=0.11mm]
 \draw [line width=0.2mm, gray] (-50.,0.) -- (-15.4508,-47.5528) ;
 \draw [line width=0.2mm, gray] (-15.4508,-47.5528) -- (40.4508,-29.3893) ;
 \draw [line width=0.2mm, gray] (40.4508,-29.3893) -- (40.4508,29.3893) ;
 \draw [line width=0.2mm, blue,directed] (-15.4508,47.5528) -- (-15.4508,-47.5528) ;
 \draw [line width=0.2mm, blue,directed] (-50.,0.) -- (-15.4508,47.5528) ;
 \draw [line width=0.2mm, gray] (-15.4508,47.5528) -- (40.4508,29.3893) ;
 \fill(-50.,0.) circle (1.2pt);
 \fill(-15.4508,-47.5528) circle (1.2pt);
 \fill(40.4508,-29.3893) circle (1.2pt);
 \fill(40.4508,29.3893) circle (1.2pt);
 \fill(-15.4508,47.5528) circle (1.2pt);
\end{tikzpicture}%
 }}%
 ,
 \vcenter{\hbox{%
\begin{tikzpicture}[x=0.11mm,y=0.11mm]
 \draw [line width=0.2mm, blue,directed] (-50.,0.) -- (-15.4508,-47.5528) ;
 \draw [line width=0.2mm, blue,directed] (-15.4508,-47.5528) -- (40.4508,-29.3893) ;
 \draw [line width=0.2mm, gray] (40.4508,-29.3893) -- (40.4508,29.3893) ;
 \draw [line width=0.2mm, gray] (-15.4508,47.5528) -- (-15.4508,-47.5528) ;
 \draw [line width=0.2mm, blue,directed] (-50.,0.) -- (-15.4508,47.5528) ;
 \draw [line width=0.2mm, gray] (-15.4508,47.5528) -- (40.4508,29.3893) ;
 \fill(-50.,0.) circle (1.2pt);
 \fill(-15.4508,-47.5528) circle (1.2pt);
 \fill(40.4508,-29.3893) circle (1.2pt);
 \fill(40.4508,29.3893) circle (1.2pt);
 \fill(-15.4508,47.5528) circle (1.2pt);
\end{tikzpicture}%
 }}%
 ,
 \vcenter{\hbox{%
\begin{tikzpicture}[x=0.11mm,y=0.11mm]
 \draw [line width=0.2mm, blue,directed] (-50.,0.) -- (-15.4508,-47.5528) ;
 \draw [line width=0.2mm, blue,directed] (-15.4508,-47.5528) -- (40.4508,-29.3893) ;
 \draw [line width=0.2mm, gray] (40.4508,-29.3893) -- (40.4508,29.3893) ;
 \draw [line width=0.2mm, blue,directed] (-15.4508,-47.5528) -- (-15.4508,47.5528) ;
 \draw [line width=0.2mm, gray] (-50.,0.) -- (-15.4508,47.5528) ;
 \draw [line width=0.2mm, gray] (-15.4508,47.5528) -- (40.4508,29.3893) ;
 \fill(-50.,0.) circle (1.2pt);
 \fill(-15.4508,-47.5528) circle (1.2pt);
 \fill(40.4508,-29.3893) circle (1.2pt);
 \fill(40.4508,29.3893) circle (1.2pt);
 \fill(-15.4508,47.5528) circle (1.2pt);
\end{tikzpicture}%
 }}%
 ,
 \vcenter{\hbox{%
\begin{tikzpicture}[x=0.11mm,y=0.11mm]
 \draw [line width=0.2mm, gray] (-50.,0.) -- (-15.4508,-47.5528) ;
 \draw [line width=0.2mm, blue,directed] (-15.4508,-47.5528) -- (40.4508,-29.3893) ;
 \draw [line width=0.2mm, gray] (40.4508,-29.3893) -- (40.4508,29.3893) ;
 \draw [line width=0.2mm, blue,directed] (-15.4508,47.5528) -- (-15.4508,-47.5528) ;
 \draw [line width=0.2mm, blue,directed] (-50.,0.) -- (-15.4508,47.5528) ;
 \draw [line width=0.2mm, gray] (-15.4508,47.5528) -- (40.4508,29.3893) ;
 \fill(-50.,0.) circle (1.2pt);
 \fill(-15.4508,-47.5528) circle (1.2pt);
 \fill(40.4508,-29.3893) circle (1.2pt);
 \fill(40.4508,29.3893) circle (1.2pt);
 \fill(-15.4508,47.5528) circle (1.2pt);
\end{tikzpicture}%
 }}%
 }$} ;
\draw[gray,enddirected] ($(Rmany) + (0.27,-0.05) $) .. controls (-0.7,-0.09) and (-0.7,0.18) .. (-0.4,0.27) ;
 \end{tikzpicture}%
}}}%
}
\]
\caption{The contributions to the process $\mathbb G^{\textsc{dla}}(t)$ are plotted against time $\gamma t$ to show the convergence of the weights in the field theory \eq{action-DLA-dynamic-1}. It is obtained by using $\gamma=0.01$. Each curve is associated with a tree. Starting from an empty graph with a seed (red curve), the graph is explored by intermediate trees (green curves) that finally reach the target, converging to the possible trees with DLA statistics (blue curves). To make overlapping curves visible, some are in yellow dashed.}
\label{DLAconvPlot}
\end{figure}

\subsection{Lattice action for DLA as a growth process}
\label{Lattice action for DLA as a growth process}
After the failure of our conjecture in the last section, we now write down a dynamic action following 
what we have done in \Eq{action-LRW-dynamic-1} for Laplacian random walks. This theory contains no longer a tracer field $\psi$, thus is seemingly simpler than the corresponding expression for LRWs. 
\bea \label{action-DLA-dynamic-1}
\!\!\!\rme^{-\ca L_1^{\textsc{DLA}}} &=& \prod_{x \in \ca V} \exp\!\Big( - \tilde \phi (x,t)\phi (x,t) - \sum_{\alpha=1}^{n_{\alpha}}\sum_{j=1}^{n_\chi} \tilde\chi_j^\alpha (x,t)\chi_j^\alpha(x,t)\Big) \nn\\
& \times& \prod_{x \in \ca V\backslash \xt } \bigg\{ \prod_{\alpha=1}^{n_\alpha}\Big[ 1+ \sum_{y \in \ca V}\hat \beta_{xy} \sum_{j=1}^{n_\chi} \tilde \chi_j^\alpha (y,t)\chi_j^\alpha(x,t)\Big] + \tilde \phi (x,t{+}1)\phi(x,t) \nn\\
&& \qquad~~~~~ + 
 \gamma \sum_{y\in \ca V} { \beta_{xy}}
 \Big[ R_{xy} \tilde \phi (y,t{+}1)\tilde \phi(x,t{+}1) -\tilde \phi(x,t{+}1)\Big] \phi (x,t) \tilde \chi_2^1 (y,t) \bigg\} \nn\\
 & \times& \Big[ 1+ \tilde \phi(\xt ,t{+}1) \phi(\xt,t) + \chi_{2}^1 (\xt,t) \Big].
\eea

\subsection{Numerical check}
\label{Numerical check}
Fig.~\ref{DLAconvPlot} shows how this action calculates the DLA weights in \Eq{DLA-on-G1}. Starting with a seed at the leftmost vertex (marked in red), evolution progresses over several intermediate graphs to converge to the weights given in \Eq{DLA-on-G1}.

\subsection{Improved lattice action}
\label{DLA:Improved lattice action}
As we did before for LRWs, the action \eq{action-DLA-dynamic-1} can be improved to contain fewer interaction vertices. 
\bea\label{action-DLA-dynamic-2}
\lefteqn{\rme^{-\ca L_2^{\textsc{DLA}}} 
= \prod_{x \in \ca V\backslash \xt } \Bigg\{ \exp\!\Big( - \tilde \phi (x,t) \phi (x,t) 
 { -} \sum_{\alpha=1}^{n_{\alpha}}\sum_{j=1}^{n_\chi} \tilde\chi_j^\alpha (x,t)\chi_j^\alpha(x,t)\Big)}
 \nn\\
& \times & 
\bigg[ \prod_{\alpha=1}^{n_\alpha} \Big( 1+ \sum_{y \in \ca V}\hat \beta_{xy} \sum_{j=1}^{n_\chi} \tilde \chi_j^\alpha (y,t)\chi_j^\alpha(x,t)\Big)-1 \nn\\
&& + \exp\!\Big( \tilde \phi (x,t{+}1) \phi (x,t) {+} \gamma\sum_{y\in \ca V} { \beta_{xy}}
 \big[ R_{xy} \tilde \phi (y,t{+}1)\tilde \phi(x,t{+}1) {-}\tilde \phi(x,t{+}1)\big] \phi (x,t) \tilde \chi_2^1 (y,t) \Big) \bigg] \Bigg\} \nn\\
 & \times& \exp\!\Big( {-} \tilde \phi (\xt,t) \phi (\xt,t) { -} \sum_{\alpha=1}^{n_{\alpha}}\sum_{j=1}^{n_\chi} \tilde\chi_j^\alpha (\xt,t)\chi_j^\alpha(\xt,t) \Big) \Big[1+\tilde \phi (\xt ,t{+}1)\phi (\xt,t) + \chi_{2}^1 (\xt,t) \Big].
\eea
The continuum limit is similar to \Eq{L-LERW-3}. We will analyze this action and the ensuing field theory in forthcoming work \cite{PisapiaWiese2027}.

\section{Conclusion and Outlook}
In this work we have shown how to construct a lattice action for LRWs and DLA. It propagates the generating function from a given initial configuration. Each term evolves via Laplacian growth until a path or cluster has reached a prescribed target. Alternatively, one can study an ensemble of fixed elapsed times or its Laplace transform. In a second step we went to the continuum limit. We showed how LERW configurations emerge in the large-time limit of Laplacian growth. This formulation is simple enough to allow for a perturbative renormalization-group analysis: first for 
the $b$-Laplacian random walk, which was previously unattainable via the renormalization group. 
We also have preliminary result for DLA which has eluded a proper theoretical treatment since its introduction forty years ago. We will report on these advances in forthcoming publications.

\section*{Acknowledgements}
We thank F.~David, D.~Gaby, G.~Lawler, A.~Nahum, and I.~Procaccia for stimulating discussions.

\appendix

\section*{Appendices}

\section{Perturbative calculations with action $\ca L_1^{\textsc{LRW}}$}
\label{a:Perturbative calculations}

\subsection{Notations}
\label{a:notations}
Throughout the paper we employ the following conventions: full lines for propagators, 
dotted lines to separate interaction vertices so that they are easier to read, dashed lines for the fluctuating part of the fields ($\delta \tilde \chi$ and $\delta\tilde\psi$), and dash-dotted lines for omitted propagators or parts of a diagram. 

Let us start with the vertices we keep after simplifications. We always give $- \ca L $ to avoid additional minus signs in the perturbative expansion; the weight is $\rme^{-\int_t\ca L}$.
The simplified action \eq{g-terms-3}-\eq{relevant-gamma-vertices-3} is
\bea \label{g-terms-3+}
-\ca L_{3}^g &=& - g \,\vertexIntPsiChi \to -g \, \vertexIntPsiChiRel, 
\\
\label{relevant-gamma-vertices-3+}
-\ca L_3^\gamma &=& \gamma_1\, \vertexOneDashed +\gamma_2^+\,\vertexTwoRel- \gamma_2^-\,\vertexTwoMinus = \gamma_1\, \vertexOneDashed +\gamma_{\textsc l}\,\vertexTwoBred + 2\gamma_{\textsc d}\,\vertexTwoCred.
\eea
(The indices $_{\textsc l}$ and $_{\textsc d}$ indicate Laplacian and drift.)
Expanding $\tilde\chi=1+\delta\tilde\chi$, we get 
\begin{equation}
\gamma_1\, \vertexOneDashed = \gamma_1^1\, \vertexOneDashedmean + \gamma_1^{\delta \chi}\, \vertexOneDashedFluctuating, \qquad
- g \vertexIntPsiChiRel = - g^1\, \vertexIntPsiChiRelMean - g^{\delta \chi}\, \vertexIntPsiChiRelFluct . 
\end{equation}
In proposition \ref{A=all} only the following vertices contribute to class $\mathbb A$
\be\label{L-A}
-\ca L_{\mathbb A} =
 \gamma_1^1\, \vertexOneDashedmean -\gamma_2^-\,\vertexTwoMinus - g^1\, \vertexIntPsiChiRelMean. \end{equation}
For class $\mathbb B$ there are in addition
\begin{equation}%
-\ca L_{\mathbb B}=	 \gamma_1^{\delta \chi}\, \vertexOneDashedFluctuating + \gamma_2^+ \vertexTwoRel - g^{\delta \chi} \vertexIntPsiChiRelFluct.
\end{equation}%
Similarly there are irrelevant vertices that exist in action \eq{action-LRW-dynamic-1-continuum}, but which we exclude through power counting in section \ref{Power counting} 
\begin{equation}
\label{106}
-\ca L_{\rm{irrel}}=	\gamma_2^{+\psi} \vertexTwoBlueDashed - g\gamma^{\psi} \vertexGGammaPsi -g\gamma^{\chi} \vertexGGammaChi.
\end{equation}
Their 1-loop contributions are shown to cancel in section \ref{a:tadpoles}.
Finally, some interactions are suppressed by summing chains of them in appendices \ref{a:LRW1-perturbative} and \ref{a:Tracer-tracer interactions},
\begin{equation}
-\ca L_{\phi\psi}^g = - g_{\phi\psi} \RedBlueInt - g_{\psi\psi}\,\BlueBlueInt\ .
\end{equation}

\subsection{Tip-tracer interactions}
\label{a:LRW1-perturbative}
Interactions between $\phi$ (the tip, red) and $\psi$ (the tracer, blue) are present in action \eq{action-LRW-dynamic-1-continuum},
\begin{equation}
 - g_{\phi\psi} \RedBlueInt .
\end{equation}
This interaction is unphysical, as the tracer and the tip never interact directly in the lattice action. Instead, their interaction is mediated by the explorer (green line), which checks if a site is occupied or not before moving the tip there. 
Consistent with this observation, the interaction should be irrelevant. To see this, note that 
 one can stack an arbitrary number of $g_{\phi\psi}$ vertices on top of each other, creating a ``chain of tadpoles''
\begin{align}
 - g_{\phi\psi} \RedBlueInt
 + \big( {-} g_{\phi\psi} \big)^2 \RedBlueIntChainOne
 + \big( {-} g_{\phi\psi} \big)^3 \RedBlueIntChainTwo
 + ... \nn
 \simeq &\, {-} g_{\phi\psi} \RedBlueInt
 \Bigg[ 1 {+} \Big( {-} g_{\phi\psi} \tadpole
 \Big)%
 {+} \Big({-} g_{\phi\psi} \tadpole 
 \Big)^2%
 {+} ...%
 \Bigg]%
 \\
 =& - g_{\phi\psi} \RedBlueInt \frac{1}{1+g_{\phi\psi} \tadpole} .
\end{align}
As the tadpole diagram is very large, the effective interaction vanishes. 

\begin{table}[ht!]%
\[\renewcommand{\arraystretch}{2}%
\begin{array}{>{\displaystyle}c|>{\displaystyle}c|>{\displaystyle}c|>{\displaystyle}c|>{\displaystyle}c|>{\displaystyle}c|>{\displaystyle}c|>{\displaystyle}c}
\hline\hline
\multirow{2}{*}{1st} & \multirow{2}{*}{2nd} & \multirow{2}{*}{3rd} & \multirow{2}{*}{diagrams} & \multirow{2}{*}{integrated diagrams} & \multicolumn{3}{c}{\text{result}}
\\
\cline{6-8}
 & & & & & \text{\begin{minipage}{1.7cm}{\begin{center}vaccuum bubble\end{center}}\end{minipage}} & \text{~tadpole~} & \text{\begin{minipage}{1.5cm}{\begin{center}2-point loop\end{center}}\end{minipage}}
\\
\hline\hline
 	-g\gamma^{\chi} &‑&‑&\diagGGammaChi & -g \gamma^{\chi} \tadpole \vertexOneDashedGreenLess &‑& -\tadpole&‑ 
\\[5pt]
\hline
 \gamma_1 & {-}g \gamma^\psi &‑&\diagGammaOneGGammaPsi & {-}g \frac{\gamma_1 \gamma^\psi}{\gamma} \tadpole \vertexOneDashedGreenLess &‑& -\tadpole&‑
\\[5pt]
\hline
	\gamma_2^{+\psi}& {-}g \gamma^\psi &‑&\diagGammaTwoPsiGGammaPsi & g \frac{\gamma_2^{+\psi} \gamma^\psi}{\gamma} \bubble \vertexOneDashedGreenLess & \bubble &‑&‑
\\[5pt]
\hline
\multirow[c]{3}{*}[-8pt]{$\gamma_1$} & \multirow[c]{3}{*}[-8pt]{$\gamma_2^{+\psi}$} &-g &\diagGammaOneGammaTwoPsi & g \frac{\gamma_1 \gamma_2^{+\psi} }{\gamma} \bigg[\tadpole + \banana p^2 \bigg] \vertexOneDashedGreenLess^{p} &‑&\hphantom{-}\tadpole& \hphantom{-}\banana p^2 
\\[5pt]
	& & {(-g )}\gamma_2^+ &\diagSimplifyingTadpoleGammaOneGammaPlus & -g \frac{\gamma_1 \gamma_2^{+\psi} \gamma_2^+}{\gamma^2} \bigg[\tadpole + \banana p^2 \bigg] \vertexOneDashedGreenLess^{p} &‑& -\tadpole& -\banana p^2
\\[5pt]
	& & {(-g )}({-} \gamma_2^-) &\diagSimplifyingTadpoleGammaOneGammaMinus & g \frac{\gamma_1 \gamma_2^{+\psi} \gamma_2^- }{\gamma^2} \tadpole \vertexOneDashedGreenLess &‑& \hphantom{-}\tadpole& ‑
\\[5pt]
\hline
	\multirow[c]{6}{*}[-14pt]{$\gamma_2^{+\psi}$}& \multirow[c]{3}{*}[-8pt]{$\gamma_2^{+\psi}$} &{-g }&\diagGammaTwoPsiGammaTwoPsi & -g \frac{\gamma_2^{+\psi} \gamma_2^{+\psi} }{\gamma} \bigg[\bubble + \tadpole p^2 \bigg] \vertexOneDashedGreenLess^{p} &-\bubble& -\tadpole p^2& ‑ 
	\\[5pt]
	& & {(-g )}\gamma_2^+ &\diagSimplifyingTadpoleGammaPlus & g \frac{\gamma_2^{+\psi} \gamma_2^{+\psi} \gamma_2^+}{\gamma^2} \bigg[\bubble + \tadpole p^2 \bigg] \vertexOneDashedGreenLess^{p} &\hphantom{-}\bubble&\hphantom{-} \tadpole p^2& ‑
\\[5pt]
	& & {(-g )}({-} \gamma_2^- ) &\diagSimplifyingTadpoleGammaMinus & -g \frac{\gamma_2^{+\psi} \gamma_2^{+\psi} \gamma_2^- }{\gamma^2} \bubble \vertexOneDashedGreenLess &-\bubble& ‑ & ‑
\\[5pt] \cline{2-8}
	& \multirow[c]{3}{*}[-8pt]{$\gamma_1$} &{-g }&\diagSimplifyingTadpoleGammaTwoPsi & g \frac{\gamma_1 \gamma_2^{+\psi} }{\gamma} \tadpole \vertexOneDashedGreenLess & ‑ & \hphantom{-}\tadpole & ‑
	\\[5pt]
	& & {(-g )}\gamma_2^+ &\diagSimplifyingTadpoleGammaTwoPsiGammaPlus & -g \frac{\gamma_1 \gamma_2^{+\psi} \gamma_2^+}{\gamma^2} \tadpole \vertexOneDashedGreenLess & ‑ & -\tadpole & ‑
\\[5pt]
	& & {(-g )}({-}\gamma_2^-) &\diagSimplifyingTadpoleGammaTwoPsiGammaMinus & g \frac{\gamma_2^+\gamma_1 \gamma_2^-}{\gamma^2} \tadpole \vertexOneDashedGreenLess & ‑ & \hphantom{-}\tadpole & ‑
\\ 
\hline\hline
\end{array}\]
\caption{All diagrams originating from irrelevant vertices in the Lagrangian $\ca L_1^{\rm LRW}$ defined in \Eq{action-LRW-dynamic-1}. First three columns: vertices used, ordered in time; the $g$ always occurs with the third $\gamma$. Fourth column: diagram obtained using those vertices; fifth column: result after time-integration and simplifications; last column: summary. 
As is evident from the last column, the sum of all diagrams vanishes.}
\label{tadpole-table}
\end{table}

\subsection{Tracer-tracer interactions}
\label{a:Tracer-tracer interactions}
\begin{equation}
 - g_{\psi\psi}\,\BlueBlueInt
\end{equation}
This blue-blue interaction should be irrelevant as the red-blue interaction (section \ref{a:LRW1-perturbative}), since two blue lines can never be produced at the same site, and while they propagate in time, they do not propagate in space.
To see this, stack $g_{\psi\psi}$ vertices on top of each other to create a ``chain of bubbles''
\begin{align}\nn
	- g_{\psi\psi}\BlueBlueInt
 + \big( {-} g_{\psi\psi} \big)^2~ \BlueBlueIntChainOne
 + \big( {-} g_{\psi\psi} \big)^3~ \BlueBlueIntChainTwo
 + ...%
 \simeq& - g_{\psi\psi} \,\BlueBlueInt 
 \Bigg[ 1 {+} \Big( {-} g_{\psi\psi} \bubble \Big) {+} \Big({-} g_{\psi\psi} \bubble \Big)^2%
 {+} ...%
 \Bigg]%
 \\
 =& - g_{\psi\psi}\,\BlueBlueInt\,
 \frac{1}{1+g_{\psi\psi}\bubble }.
\end{align}
The diagram is again very large, 
\be
\bubble := \int d^d k .
\ee
Thus the blue-blue interaction is also irrelevant. 

\subsection{Tadpole and bubble cancelations}
\label{a:tadpoles}
There are many tadpole and bubble diagrams appearing in the perturbative expansion obtained from the Lagrangian $\ca L_1^{\textsc{LRW}}$. 
While tedious to calculate, they all cancel. To illustrate this, all such 1-loop diagrams are collected in table \ref{tadpole-table}. 
Used are the less relevant vertices $\gamma_2^{+\psi}$ from \Eq{relevant-gamma-terms} and $g\gamma^{\chi}$, $g \gamma^\psi$ from \Eq{eq:gGammaVertices}; they are collected in \Eq{106}.


\section{Simplifications and their proof}\label{app:pOperation Proof}
Here we treat the Lagrangian $\ca L_3^{\textsc{LRW}}$ with the vertices in \Eqs{g-terms-3+}-\eq{relevant-gamma-vertices-3+}. 
The probability 
\be
\left< \ca O(y,\xt)\right> := \lim_{T\to \infty} \left< \ca O(y,\xt|T)\right>
\ee that a path emanating from $\xr=0$ passes through $y$ before ending in $\xt$ was considered at tree level in section \ref{Observable: passing through a point}, 
and to 1-loop order in section \ref{1-loop corrections to the observer}. An important test of our theory 
is that $\left< \ca O(y,\xt)\right>$ has the same perturbative expansion as for LERWs. As shown above, at tree and 1-loop order this is the case. Here we show how this continues to higher orders.

What we will prove first is that the sum over all diagrams can be reduced to the sum over a subset $\mathbb A$, which contains only single-interacting vertices of the $\gamma_2^-$ type, and the similarly non-interacting $\gamma_1^1$, without any independent sub-loop. Its complement $\mathbb B $ contains diagrams that cancel in pairs.
In a second step, we verify to 3-loop order that diagrams in class $\mathbb A$ are equivalent to diagrams for LERWs. 

To formalize the proof, we use the following conventions: 
the red line goes from $(0,0)$ to $(\xt,T)$. When we say \emph{before} or \emph{after}, we use time in the red line as a reference; this also sets the order of the vertices. When the red line continues in green to form a loop, the order continues along the direction of the green line.
\begin{definition}
 An \emph{interacting vertex} $\Gamma$ is a vertex whose outgoing green line interacts. If the green line interacts only once, $\Gamma$ is said to be \emph{single-interacting}, otherwise it is \emph{multi-interacting}. Multi-interacting vertices are denoted $ \Gamma^{{(n)}}$, where $n$ is the number of interactions. For single-interacting vertices, $n=1$ will be omitted.
\end{definition}
\noindent
Examples of a multi-interacting $\gamma_2^+$ and a multi-interacting $\gamma_2^-$ vertex with $n=2$ are
\begin{equation}
 \gamma_2^{+(2)} = \interactingVertexGAmmaPlusExample , 
 \qquad \gamma_2^{-(2)} =\interactingVertexExample \;.
\end{equation}
A multi-interacting vertex is a $\gamma$-vertex followed by a sequence of $g$-vertices. In the previous example $ \gamma_2^{-(2)}= \gamma_2^{-} g^2$.
\begin{definition}\label{+sub-loop}
 Let $\mathbb B_{\rm L}^+$ be the set of diagrams that contains the ordered sequence of vertices
\begin{equation}
 L^+(k,n)=\underset{\LplusExample}{\gamma_1^1 \to \big( \gamma_2^- \big)^{k} \to \gamma_2^{+(n+1)}}.
\end{equation}
This sequence is said to be an \emph{independent plus sub-loop}.  (Note that all $\gamma_2^-$ vertices are single-interacting.)
\end{definition}
\noindent 
Note that we use $(n+1)$ for the last $\gamma_2^+$ vertex to highlight that there are $n$ intermediate interactions, while the last one is used to complete the loop: it interacts with the $\gamma_1$-vertex at the beginning of the sequence. For example 
\begin{equation}
L^+(2,1)=\independetLoopEG
.\end{equation} 
Analogously, we define $\mathbb B_{\rm L}^-$ by using $ \gamma_2^-$ as the last vertex.
\begin{definition}\label{-sub-loop}
Let $\mathbb B_{\rm L}^-$ be the set of diagrams that contains the following ordered sequence of vertices
\begin{equation}
 L^-(k,n)=\underset{\LplusExample}{\gamma_1^1 \to \big( \gamma_2^- \big)^{k} \to \gamma_2^{-(n+1)}}.
\end{equation}
 This sequence is said to be an \emph{independent minus sub-loop}. 
\end{definition}
\noindent
An example is 
\be
 L^-(1,2)= \independetLoopMinusEG .
\ee
To show cancelation of all diagrams in $\mathbb B$, we first show cancelation between $\mathbb B_{\rm L}^+$ and $\mathbb B_{\rm L}^-$.

\begin{definition}
The set of all diagrams with sub-loops is $\mathbb B_{\rm L}:= \mathbb B_{\rm L}^+\cup \mathbb B_{\rm L}^-$. We say a diagram $\ca D$ has a sub-loop if $\ca D\in \mathbb B_{\rm L}$. 
\end{definition}
\begin{proposition}
Distinct sub-loops have no common vertex.
\end{proposition}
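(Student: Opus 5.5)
The plan is to use the rigid structure of a sub-loop $L^\pm(k,n)=\gamma_1^1\to(\gamma_2^-)^k\to\gamma_2^{\pm(n+1)}$. This structure is fixed by Definitions \ref{+sub-loop} and \ref{-sub-loop}, together with the fact that each vertex of $\ca L_3^{\textsc{LRW}}$ carries a definite set of legs. I would argue by contradiction: suppose two distinct sub-loops $L_1$ and $L_2$ in a diagram $\ca D$ share a vertex $v$. Each sub-loop is determined by its initial $\gamma_1^1$ vertex and by the green line that closes the loop back onto it. So the key step is to show that a shared vertex forces the initial vertices to coincide, and hence forces the two sub-loops to be the same.

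First, I would classify the role $v$ can play in a sub-loop. It is either the opening $\gamma_1^1$, one of the single-interacting $\gamma_2^-$ in the middle, or the closing $\gamma_2^{\pm(n+1)}$. A $\gamma_1^1$ vertex emits a $\delta\tilde\psi$ (blue) line but no green line. A $\gamma_2$ vertex emits exactly one green line. These leg counts fix the possible roles. A $\gamma_1^1$ can only be an opener, since the interior and closing vertices are $\gamma_2$-type. A single-interacting $\gamma_2^-$ can be an interior vertex or a closer with $n=0$. A $\gamma_2^{\pm(n+1)}$ with $n\ge1$ can only be a closer.

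Second, I would pin down the opener from any vertex of the sub-loop by following lines. The closing vertex's outgoing green line, through its last $g^1$ interaction, meets the blue line emitted by the opener. That blue line is unique, because a blue line emanates from exactly one $\gamma_1$ vertex. So each closer determines exactly one opener, and therefore exactly one sub-loop. Each interior $\gamma_2^-$ lies on the red line between the opener and the closer, and the ordering convention (time along the red line, continued along the green line) fixes which stretch it belongs to. The consequences in each case are then as follows:
\begin{itemize}
\item If $v$ is the opener of both sub-loops, the red stretch after it and the unique closing green line coincide, so $L_1=L_2$.
\item If $v$ closes both sub-loops, they have the same opener and so coincide.
\item If $v$ is interior to $L_1$ and the closer of $L_2$, the green line from $v$ must interact exactly once (single-interacting). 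In $L_2$ that one interaction closes onto the blue line of $L_2$'s opener. In $L_1$ the vertex $v$ is an ordinary $\gamma_2^-$, and its single interaction is with some blue line.
\end{itemize}

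The main obstacle is the case where a $\gamma_2^-$ is interior in one sub-loop and closing in the other, or interior in both with different openers. Here I would use the ordering. The opener of $L_2$ precedes $v$, and $L_2$ requires that every vertex strictly between them is a single-interacting $\gamma_2^-$ with no $\gamma_1^1$ in between. But $L_1$'s interior also requires that the stretch from its opener to $v$ contains only $\gamma_2^-$. Both openers are $\gamma_1^1$ vertices preceding $v$ with no $\gamma_1^1$ between them and $v$, so they must be the same vertex, namely the last $\gamma_1^1$ before $v$. Given the same opener, the closer is the first vertex after it that is not a single-interacting $\gamma_2^-$ whose green line ends elsewhere, i.e.\ the first vertex whose green line closes onto the opener's blue line. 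That vertex is unique, so $L_1=L_2$.

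I expect the delicate point to be making "the closer is the first vertex that closes onto the opener" rigorous. It requires checking that an interior $\gamma_2^-$ of $L_1$ cannot itself close onto $L_1$'s opener's blue line, since otherwise the loop would have ended earlier. I would settle this by taking, in the definition, the minimal such sequence.
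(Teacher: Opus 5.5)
Your argument is correct and is an expanded version of the paper's one-line proof, which rests on the same observation: a sub-loop is a contiguous stretch of the red line containing exactly one $\gamma_1^1$, at its start. Two overlapping sub-loops must therefore share their opener and hence coincide. The point you flag as delicate needs no change to the definition: the opener emits a single $\delta\tilde\psi$ leg, which is contracted only once, so only one green line can close onto it and the closer is automatically fixed by the opener.
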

\begin{proof}
By definition, a sub-loop contains a single $\gamma_1^1$ vertex at its start. 
\end{proof}

\begin{proposition}\label{prop:independent loops}
 Every independent plus sub-loop is canceled by an independent minus sub-loop, and \emph{vice versa}. The canceling pairs are
\begin{equation}\label{112}
\begin{split}
 L^+(k,n) &\leftrightarrow L^-(k{-}1,n{+}1),\quad k>0,\\
 L^+(0,n) &\leftrightarrow L^-(n,0).
 \end{split}
\end{equation}
\end{proposition}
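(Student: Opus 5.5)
Both sub-loops in a pair have the same set of external blue legs and the same number of interactions, so the plan is to show that they become the same integrand with opposite signs. Count vertices first. $L^+(k,n)$ contains $\gamma_1^1$, $k$ single-interacting $\gamma_2^-$ vertices, and one $\gamma_2^+$ whose green line interacts $n+1$ times, the last interaction closing onto the blue line emitted at $\gamma_1^1$. That gives $k+n$ external blue lines plus the closing one. $L^-(k-1,n+1)$ has $k-1$ single $\gamma_2^-$ and a final $\gamma_2^-$ with $n+2$ interactions: again $k+n$ external interactions plus closure. So both carry the same powers of $\gamma$ and $g$. The sign of $\gamma_2^-$ in $-\ca L_3^\gamma$ is negative while $\gamma_2^+$ is positive. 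The pair therefore differs by exactly one factor of $(-1)$, since one $\gamma_2^+$ is traded for one $\gamma_2^-$. The same count holds for $L^+(0,n)\leftrightarrow L^-(n,0)$.

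Next I would compute what each green-line segment does after time integration. It helps to remember the two vertex structures.
\begin{itemize}
\item At a $\gamma_2^-$ vertex the derivative $\nabla^2$ acts on the outgoing green $\delta\tilde\chi$, while the red line continues from the same point.
\item At $\gamma_2^+$ the Laplacian acts on the outgoing red line, which at this order is the product $\tilde\phi\,\delta\tilde\chi$ with the derivative on the red propagator.
\end{itemize}
Using $(-k^2)C(k)=-1$, i.e.\ $\nabla^2 C=-\delta$, a $\gamma_2^-$ whose green line hits a $g^1$ vertex contracts that green propagator to a point. The red line then effectively passes through the interaction point. This is exactly the mechanism of $\diagOneCsimp=-\diagOneCsimpSimp$. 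For $\gamma_2^+$, the $\nabla^2$ on the red propagator following the vertex combines with the time integral $\gamma\int_t \rme^{-\gamma t (q^2+\mu)}$ to give $-q^2/(q^2+\mu)\to -1$ as $\mu\to 0$. This is the mechanism behind $\ca D_1+\ca D_2=0$, and it collapses the red segment between $\gamma_2^+$ and the preceding vertex.

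Then I would treat the two cases of \Eq{112}. For $k>0$, collapsing the red segment in $L^+(k,n)$ moves the $\gamma_2^+$ emission point onto the previous $\gamma_2^-$ vertex (the $k$-th). In $L^-(k-1,n+1)$, by contrast, the first green interaction of the last vertex collapses onto its emission point. In both cases one obtains the same integrated diagram: a red path from $\gamma_1^1$ through $k-1$ single interactions, followed by a green chain of $n+1$ interactions returning to $\gamma_1^1$. The two contributions come with opposite signs, and this is the identity illustrated by the figures \verb|independetLoopEGsimpIntegrated| and \verb|independetLoopMinusEGsimpIntegrated|. For $k=0$, the collapse brings $\gamma_2^+$ onto $\gamma_1^1$ itself. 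In $L^-(n,0)$ the $n$ single $\gamma_2^-$ vertices, each contracting onto its interaction, give the same geometry, as in \Eq{89} and \GammaTwointeractingPrecedingGammaOneIntegratedbis. Finally, the correspondence $L^\pm$ is a bijection on diagrams because sub-loops share no vertex (the preceding proposition), so the pairing extends to the whole of $\mathbb B_{\rm L}$.

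The main obstacle is the collapse identity in the limit $\mu\to0$ inside larger diagrams. The factor $q^2/(q^2+\mu)$ must become $1$ with the right momentum, and the red propagator being cut must carry exactly the momentum that the derivative acts on. As noted after $\ca D_2+\ca D_3$, momentum routing can spoil naive cancellations. I would therefore work in time and position space, where $\nabla^2$ acting on $R$ plus $\partial_t$ gives a $\delta$ in time, and check that the time orderings integrate over identical domains. I would also track the combinatorial symmetry factors of the ordered $\gamma_2^-$ vertices, to make sure they match between $L^+$ and $L^-$.
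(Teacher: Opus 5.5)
Your route is the paper's: contract every $\gamma_2^-$ green line through $\nabla^2 C=-\delta$, collapse the red segment entering $\gamma_2^+$ by time integration, and compare the two resulting integrated loops. The problem is your first paragraph, and that paragraph is where the sign of the cancellation is decided.

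By your own vertex list, $L^+(k,n)$ consists of $\gamma_1^1$, $k$ vertices $\gamma_2^-$ and one $\gamma_2^+$. By contrast, $L^-(k{-}1,n{+}1)$ consists of $\gamma_1^1$ and $k$ vertices $\gamma_2^-$, the last one multi-interacting. The powers of $g$ agree (both are $k{+}n{+}1$), but $L^+$ has one more power of $\gamma$. Nothing is ``traded'': the vertex coefficients give $(-1)^k$ in both members. For $k=0$ the mismatch is larger. $L^-(n,0)$ contains $n{+}1$ vertices $\gamma_2^-$ (the last one closes the loop), hence $n$ more powers of $\gamma$ than $L^+(0,n)$. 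So the two members are not the same integrand before time integration, and the relative sign does not come from the vertex coefficients. In fact, if you multiply your claimed coefficient-sign difference by the collapse factor $-1$ that you derive yourself, the two members come out with the \emph{same} sign and do not cancel.

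The consistent bookkeeping uses only your second paragraph. Every $\gamma_2^-$ contributes $(-1)\times(-1)=+1$, its coefficient times $\nabla^2C=-\delta$, identically in both members. The one unmatched factor is the $\gamma_2^+$ collapse, $\gamma\int_t (-q^2)\,\rme^{-\gamma t (q^2+\mu)} = -q^2/(q^2+\mu)\to -1$. This factor both supplies the relative minus sign and absorbs the extra $\gamma$, since one extra vertex means one extra time integral. The result is $L^+\simeq -X$ and $L^-\simeq +X$ for a common integrated loop $X$, as in the paper.

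Two smaller points. First, the Laplacian of $\gamma_2^+$ acts on the red propagator \emph{entering} the vertex, not the one following it; compare the factor $-(p+q)^2$ in $\ca D_2$. Second, for $k=0$ the integrated loops are not literally the same geometry. In $L^-(n,0)$ the loop is made of $n{+}1$ time-integrated red propagators $1/(q^2+\mu)$, while in $L^+(0,n)$ it is made of green propagators $1/q^2$. The identification therefore holds only for $\mu\to 0$, and you should state this explicitly rather than fold it into ``the same geometry''.
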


\begin{proof}
 We need to find a one-to-one correspondence between independent plus and minus sub-loops, and show that these pairs cancel.
 Take an independent plus sub-loop $L^+(k,n)$, and let $I=\{1,\dots,k,{k+1} \dots,{k+n} \}$ be the labels of the interacting incoming blue lines, ordered by following first the red line for $i=1,\dots,k$, and then the green line for $i=k+1,\dots,k+n$. Examples are given in \Eqs{116} and \eq{114} below.

Let us verify these cancelations, first for $k>0$. We can simplify $L^+(k,n)$ by acting with the Laplacian inside the loop, and then integrating over time, 
\begin{equation}\label{116}
L^+(k,n)=
\independetLoopEGlabeled
\simeq
-\independetLoopEGsimp 
\simeq
-\independetLoopEGsimpIntegrated.
\end{equation}
According to \Eq{112} its partner is
\begin{equation}
\label{114}
L^-(k{-}1,n{+}1)=
\independetLoopMinusEGlabeled
\simeq\independetLoopMinusEGsimp
\simeq \independetLoopMinusEGsimpIntegrated.
\end{equation}
Consider the remaining case $k=0$, for which the first diagram reads 
\begin{equation}
L^+(0,n)=
\GammaTwointeractingPrecedingGammaOnebis
\simeq
-\GammaTwointeractingPrecedingGammaOneIntegratedbis .
\end{equation}
Its partner is
\begin{equation}\label{117}
L^-(n,0)=\GammaTwointeractingPrecedingGammaOneSubbedbis
\simeq
\GammaTwointeractingPrecedingGammaOneSubbedIntegratedbis.
\end{equation}
Pairs of diagrams vanish due to the sign difference between the $\gamma_2^+$ and the $\gamma_2^-$ vertices. 
\end{proof}

\begin{proposition}\label{cor:Bl}
 There is a one-to-one correspondence and cancelation between diagrams in $\mathbb B_{\rm L}^+$ and $\mathbb B_{\rm L}^-$.
\end{proposition}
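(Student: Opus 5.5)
The plan is to lift the local pairing of Proposition \ref{prop:independent loops} to whole diagrams. First I define a canonical sub-loop in each diagram of $\mathbb B_{\rm L}$, then an involution $\iota$ on $\mathbb B_{\rm L}$ that replaces only that sub-loop by its partner from \Eq{112}. Finally I show that $\ca D$ and $\iota(\ca D)$ give opposite contributions. By the preceding proposition, distinct sub-loops share no vertex. This makes ``the sub-loops of $\ca D$'' well defined, and they are ordered along the red line by their starting $\gamma_1^1$ vertex. For $\ca D\in\mathbb B_{\rm L}$ I take the first (earliest) sub-loop $L$ and define $\iota(\ca D)$ by substituting $L$ with its partner: $L^+(k,n)\mapsto L^-(k{-}1,n{+}1)$ for $k>0$, $L^+(0,n)\mapsto L^-(n,0)$, and the inverse maps for minus sub-loops. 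Everything else in $\ca D$ stays untouched: the blue lines entering the interactions labelled $1,\dots,k+n$, the rest of the red line, and the remaining vertices.

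Well-definedness comes first. I check that \Eq{112} is a bijection between the index sets $\{(k,n)\}$ of plus and minus sub-loops. Its image is $\{L^-(k',n')\}$ with $n'\ge1$ from the first line and $n'=0$ from the second, which covers each pair exactly once. In both cases the total number $k+n$ of interacting incoming blue lines is preserved, and so is their labelled order $I$. Hence the external legs of the sub-loop, and so the way it is glued into $\ca D$, are the same for partners. Next I argue that $\iota(\ca D)$ still lies in $\mathbb B_{\rm L}$ and has the same first sub-loop position, so that $\iota$ is an involution. For this I use that the substitution rearranges only vertices inside the window between the $\gamma_1^1$ and the closing $\gamma_2$. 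It therefore creates or destroys no other $\gamma_1^1$-initiated sub-loop, neither earlier nor later.

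Cancellation then follows from the computation in the proof of Proposition \ref{prop:independent loops}. After acting with the Laplacian and integrating the internal times, both members of a pair reduce to the same integrated loop with the same incoming blue lines. The two reduced expressions differ only by the relative sign between the $\gamma_2^+$ and $\gamma_2^-$ vertices, as in \Eqs{116}--\eq{117}. This local identity holds with the external legs kept arbitrary. Because the rest of the diagram factorizes from the time integrals inside the sub-loop, the identity holds inside any larger diagram, so $\ca D+\iota(\ca D)\simeq 0$. Summing over orbits of $\iota$ then gives the claim.

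The main obstacle is the factorization and stability step: showing that the internal time integration of a sub-loop is not affected by vertices of the surrounding diagram that occur in the same red time window. The paper warns at 1 loop that embedding changes momentum dependence. I would handle this by choosing the first sub-loop as the one whose interior contains no $\gamma_1^1$. If other vertices can interleave, I would then show the identity pointwise in the intermediate times before integration, using only $\nabla^2 C=-\delta$.
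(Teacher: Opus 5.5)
Your proposal is correct and follows the paper's own argument. The paper likewise pairs diagrams by applying Proposition~\ref{prop:independent loops} to the independent sub-loop encountered first along the red line, i.e.\ the one whose starting $\gamma_1^1$ comes earliest; your involution, your check that \Eq{112} is a bijection preserving the $k+n$ labelled blue legs, and your argument that the first sub-loop stays in place simply make explicit what the paper leaves implicit. The factorization obstacle you flag does not arise, and the pointwise-in-time fallback would not work anyway: a sub-loop is by definition a consecutive run of red-line vertices whose only legs besides the red line are blue lines emitted earlier (propagators $\delta_{xy}\Theta(t'>t)$), while the $\gamma_2^+$ collapse of the red propagator holds only after time integration in the limit $\mu\to 0$, not pointwise.
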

\begin{proof}
 The proof follows from Proposition \ref{prop:independent loops} by matching diagrams according to the independent sub-loop encountered first when following the red line from its starting point. By encounter we mean reaching the first $\gamma$-vertex of the loop. 
\end{proof}

\begin{definition}
 Consider diagrams without independent sub-loops. They can be divided into subsets according to the order in which interacting vertices appear,
\begin{itemize}
 \item $\mathbb A :=$ \{diagrams with no independent sub-loop, and where the only interacting vertices are single-interacting $\gamma_2^-$ vertices\},
 \item $\mathbb B_1 :=$ \{diagrams with no independent sub-loops and whose first interacting vertex that is not a single-interacting $\gamma_2^-$ is $\gamma_1^{\delta \chi}$\}, 
 \item $\mathbb B_2^- :=$ \{diagrams with no independent sub-loops and whose first interacting vertex that is not a single-interacting $\gamma_2^-$ is a multi-interacting $\gamma_2^-$\},
 \item $\mathbb B_2^+ :=$ \{diagrams with no independent sub-loops and whose first interacting vertex that is not a single-interacting $\gamma_2^-$ is $\gamma_2^+$\}.
\end{itemize}
\end{definition}
\noindent 
\begin{proposition}\label{prop:Ddecomposition}
The set $\mathbb D$ of all diagrams can be decomposed into disjoint sets 
\be
\mathbb D = \mathbb A \,\dot \cup\, \mathbb B_{\rm L}^+ \,\dot \cup \, \mathbb B_{\rm L}^- \,\dot \cup \, \mathbb B_1 \,\dot \cup\, \mathbb B_2^+ \dot \cup\,\mathbb B_2^-.
\ee 
\end{proposition}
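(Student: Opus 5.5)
The decomposition is a case distinction on two binary questions: does a diagram contain an independent sub-loop, and if not, what is the first interacting vertex (in the red-line ordering fixed above) that is not a single-interacting $\gamma_2^-$. I will check that these questions partition $\mathbb D$ and that each cell is exactly one of the six named sets.

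First split $\mathbb D$ into $\mathbb B_{\rm L}$ and its complement. The complement is, by definition, the set of diagrams with no independent sub-loop, which is precisely the common hypothesis of $\mathbb A$, $\mathbb B_1$, $\mathbb B_2^\pm$. Inside $\mathbb B_{\rm L}$ I must show $\mathbb B_{\rm L}^+\cap\mathbb B_{\rm L}^-=\emptyset$. The sets as literally defined ("contains the sequence $L^\pm$") could both contain a diagram that has one plus and one minus sub-loop. So I would use the refinement implicit in Proposition \ref{cor:Bl}: classify a diagram in $\mathbb B_{\rm L}$ by the \emph{first} independent sub-loop encountered along the red line. By the proposition that distinct sub-loops share no vertex, this first sub-loop is well defined, since its starting $\gamma_1^1$ vertices are distinct and totally ordered in time. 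Its closing vertex is either $\gamma_2^{+(n+1)}$ or $\gamma_2^{-(n+1)}$, never both, so $\mathbb B_{\rm L}^\pm$ become disjoint and together cover $\mathbb B_{\rm L}$.

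Next take a diagram $\ca D\notin\mathbb B_{\rm L}$. By the list of vertices in \Eq{L-A} and its $\mathbb B$ counterpart, the interacting $\gamma$-vertices (those whose green line meets a $g$-vertex) are of type $\gamma_1^{\delta\chi}$, $\gamma_2^+$, or $\gamma_2^-$. A $\gamma_1^1$ vertex carries no green line and so is never interacting. A $g$-vertex is not itself counted as interacting; it is attached to the $\gamma$-vertex whose green line it sits on, so $g^{\delta\chi}$ vertices get absorbed into multi-interacting vertices. Each $\gamma_2^-$ is either single- or multi-interacting. If every interacting vertex is a single-interacting $\gamma_2^-$, then $\ca D\in\mathbb A$. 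Otherwise there is a unique first interacting vertex that is not a single-interacting $\gamma_2^-$. It is exactly one of $\gamma_1^{\delta\chi}$, multi-interacting $\gamma_2^-$, or $\gamma_2^+$, which places $\ca D$ in exactly one of $\mathbb B_1$, $\mathbb B_2^-$, $\mathbb B_2^+$.

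The main obstacle is making this ordering well defined, both the total order of vertices along the red line continued into green loops and the rule assigning each $g$-vertex to a unique $\gamma$-vertex. I would settle it by following each green line from its emitting $\gamma$-vertex, using the convention stated before the definitions. I would also check that a $\gamma_2^+$ without interaction (a non-interacting vertex) does not need a class of its own. If a non-interacting $\gamma_2^+$ is not allowed to exist, because its green line must terminate on a $g$ or on the target, I would note this so that no diagram falls outside the six sets.
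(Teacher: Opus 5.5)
Your proposal is correct and follows the paper's proof: you split off the diagrams with an independent sub-loop, then classify the rest by the first vertex along the red line that is neither $\gamma_1^1$ nor a single-interacting $\gamma_2^-$, which places each diagram in exactly one of $\mathbb A$, $\mathbb B_1$, $\mathbb B_2^-$ or $\mathbb B_2^+$. Your extra step of assigning diagrams in $\mathbb B_{\rm L}$ to $\mathbb B_{\rm L}^\pm$ according to the first sub-loop encountered is a real improvement in care. The paper's proof only asserts that such a diagram lies in $\mathbb B_{\rm L}^+$ or $\mathbb B_{\rm L}^-$, and uses that convention only implicitly in Proposition~\ref{cor:Bl}, which is what makes the union disjoint.
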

\begin{proof}
Take a diagram $\ca D$. 
If it has a sub-loop, it is either in $\mathbb B_{\rm L}^+$ or $\mathbb B_{\rm L}^-$.
Otherwise follow the red line from its starting point. If there are only $\gamma_1^1$ and single-interacting $\gamma_2^-$ vertices along this line, then $\ca D \in \mathbb A$. 
Else consider the first vertex other than $\gamma_1^1$ or a single-interacting $\gamma_2^-$. It either is a $\gamma_1^{\delta \chi}$, a multi-interacting $\gamma_2^-$ or a $\gamma_2^+$, which puts $\ca D$ into $\mathbb B_1$, $\mathbb B_2^-$ or $\mathbb B_2^+$ respectively.
\end{proof}

\begin{proposition}\label{prop:Bplus}
 There is a one-to-one correspondence and cancelation between diagrams in $\mathbb B_1 \cup \mathbb B_2^-$ and diagrams in $\mathbb B_2^+$.
\end{proposition}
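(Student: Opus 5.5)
The plan is to build an explicit involution on $\mathbb B_1\cup\mathbb B_2^-\cup\mathbb B_2^+$ that acts only at the first ``offending'' vertex. This vertex is the first interacting vertex along the red line that is not a single-interacting $\gamma_2^-$. Everything before it belongs to the $\mathbb A$-type prefix, and everything after it is left untouched. The pairing is modeled on the two local identities of section \ref{s:Cancellations of diagrams}. The first identity (generalizing \Eq{D1+D2=0}) pairs a $\gamma_1^{\delta\chi}$ vertex whose green line interacts with a $\gamma_1^1$ vertex followed immediately by a $\gamma_2^+$ vertex that emits the same green line. The second identity pairs a multi-interacting $\gamma_2^{-}$ with $\gamma_2^-$ followed by $\gamma_2^+$. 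In both cases the $\gamma_2^+$ vertex inherits the whole sequence of $g$-interactions of the original green line, with its first interaction of type $g^{\delta\chi}$ or $g^1$ according to the case.

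Concretely, I map $\ca D\in\mathbb B_2^+$ by looking at the red vertex $v$ immediately preceding the first $\gamma_2^+$ and removing the $\gamma_2^+$ vertex. Its green line, carrying $n$ interactions, is reattached to $v$. If $v$ is a $\gamma_1^1$ vertex, it becomes $\gamma_1^{\delta\chi}$, giving a diagram in $\mathbb B_1$. If $v$ is a single-interacting $\gamma_2^-$, it becomes a multi-interacting $\gamma_2^{-(n+1)}$ whose first interaction is the old one and whose remaining interactions are those of the removed green line, giving a diagram in $\mathbb B_2^-$. The inverse map takes the first offending vertex of a diagram in $\mathbb B_1\cup\mathbb B_2^-$, splits off its ``extra'' interactions, and inserts a $\gamma_2^+$ carrying them right after it. For $\gamma_1^{\delta\chi}$ all interactions are split off and the vertex reverts to $\gamma_1^1$; for $\gamma_2^{-(m)}$ the first interaction is kept and the other $m-1$ are moved. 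I will check that both maps preserve the class conditions. No independent sub-loop may be created or destroyed; here Proposition \ref{prop:Ddecomposition} and the definition of $L^\pm$ are used, since a $\gamma_2^+$ completing a loop back to a preceding $\gamma_1^1$ would be an independent plus sub-loop and is already excluded. The prefix must stay in class-$\mathbb A$ form, and the new vertex must indeed be the first offending one.

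For the cancellation I compute the local subdiagram after time integration, exactly as in the 1-loop computation of $\ca D_1+\ca D_2$ and in the proof of Proposition \ref{prop:independent loops}. The $\gamma_2^+$ vertex carries the Laplacian acting on the red propagator segment between $v$ and itself. Integrating over the intermediate time turns $-(p+q)^2/[(p+q)^2+\mu]$ into $-1$ as $\mu\to0$. This collapses the inserted segment and places the green emission at $v$ with opposite sign. I will then verify that the resulting vertex factor reproduces exactly the one coming from $\gamma_1^{\delta\chi}$, i.e.\ $\tilde\chi=1+\delta\tilde\chi$, or from the extra interactions of $\gamma_2^{-(m)}$. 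All couplings are equal to $\gamma$, so only signs and symmetry factors need to be matched.

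The main obstacle is this last matching. I must show that the momentum routing of the reattached green line really agrees on both sides, and that $g^1$ versus $g^{\delta\chi}$ at the first green interaction is assigned consistently. The bijectivity also needs care, because the first offending vertex must be well-defined after the map: inserting a $\gamma_2^+$ could in principle make an earlier vertex offending. I would first test the construction on the explicit 2-loop pair quoted in section \ref{s:Cancellations of diagrams}, and only then write the general argument by induction on the number of interactions $n$.
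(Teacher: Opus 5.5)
Your construction is essentially the paper's proof. It uses the same insertion of a $\gamma_2^+$ right after the first offending vertex ($\gamma_1^{\delta\chi}\to\gamma_1^1\,\gamma_2^{+}$, resp.\ $\gamma_2^{-(m)}\to\gamma_2^{-}\,\gamma_2^{+(m-1)}$ keeping the first interaction), the same inverse obtained by inspecting the vertex preceding the first $\gamma_2^+$, and the same cancellation $\ca D'\simeq-\ca D$ from the time integration that collapses the Laplacian-decorated red segment, as in \Eq{119} and \Eq{123}. Since these local identities hold for any number of interactions, no induction on $n$ is needed.

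The class-preservation checks you announce are left implicit in the paper, and your sub-loop remark only covers a $\gamma_2^+$ closing onto the preceding $\gamma_1^1$. Consider the case where the \emph{first} interaction of the split $\gamma_2^{-(m)}$ absorbs the blue line of the $\gamma_1^1$ that opens the run of single-interacting $\gamma_2^-$ vertices just before it. With the literal Definitions \ref{+sub-loop} and \ref{-sub-loop}, the image then contains an $L^-(k,0)$ and lies in $\mathbb B_{\rm L}^-$ rather than $\mathbb B_2^+$. The paper's argument passes over this edge case as well.
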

\begin{proof}
 To prove this, we need to find a one-to-one correspondence and then show that corresponding diagrams cancel.

We start by considering a diagram $\ca D \in \mathbb B_1$, and call $\Gamma$ its first interacting $\gamma_1^{\delta\chi}$ vertex. To construct a diagram $\ca D' \in \mathbb B_2^+$ that cancels it, take $\ca D$ and add a $\gamma_2^+$ vertex after $\Gamma$. Then use the outgoing green line from $\gamma_2^+$ to replace the interacting green line of $\Gamma$. By definition, $\ca D' \in \mathbb B_2^+$. 
This operation can graphically be written as 
\begin{equation}\label{119}
\ca D = \interactingGammaOneBis
\longrightarrow~~
\ca D' = \GammaOneSubbedByGammaTwop
\simeq
- \interactingGammaOne 
.\end{equation}
As is evident from this equation, $\ca D'$ equals $-\ca D$ after time integration, thus $\ca D + \ca D' = 0$. 

We have identified a canceling diagram by starting from $\mathbb B_1$. 
Now consider a diagram $\ca D \in \mathbb B_2^-$. Denote $\Gamma$ its first multi-interacting $\gamma_2^-$ vertex and $\Gamma'$ the first interacting $g$ vertex after $\Gamma$, following the green line. To construct $\ca D' $ take $\ca D$ and add a $\gamma_2^+$ vertex after $\Gamma$. Then use its outgoing green line to replace the interacting green line from $\Gamma'$.
As a result, $\Gamma$ is single-interacting, the next vertex is $\gamma_2^+$, and therefore $\ca D' \in \mathbb B_2^+$. With the corresponding integrations 
\begin{equation}\label{123}
\ca D = \interactingGvertex \simeq
 \interactingGvertexIntegrated
\longrightarrow~~
\ca D' = \GvertexSubbedbyGammaTwo
\simeq
-\interactingGvertexIntegrated
.\end{equation}
As before, $\ca D'$ equals $-\ca D$ after integration, and therefore $\ca D + \ca D' = 0$. 
Let us stress that the order of the interaction vertices is the same in both diagrams, which is important for proper cancelations. 
We thus identified the canceling diagrams by starting from $\mathbb B_2^-$.

Next we need to identify the canceling diagram $\ca D$ starting from $\ca D' \in \mathbb B_2^+$. 
 To this end, look at the vertex preceding the first interacting $\gamma_2^+$ vertex in $\ca D'$. There are two possibilities: it either is a non-interacting $\gamma_1^1$, or a single-interacting $\gamma_2^-$. In the former case, $\ca D$ can be identified by time integration with a diagram in $\mathbb B_1$, see \Eq{119} read from right to left. In the latter case, $\ca D$ can be obtained by ``gluing back'' the green line from $\gamma_2^+$ to the preceding $\gamma_2^- $, see \Eq{123} read from right to left.
 
To complete the proof we have to show that the procedure is a bijection. This is insured since in our construction we used the same \Eqs{119} or \eq{123} both forwards and backwards. 
\end{proof}

\begin{proposition}\label{A=all}
 The observable $\left< \ca O(y,\xt)\right>$ is given by the diagrams in $\mathbb A$.
\end{proposition}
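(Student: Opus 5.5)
The plan is to assemble the statement from the decomposition and cancelation results already established. By Proposition \ref{prop:Ddecomposition}, the perturbative expansion of $\left< \ca O(y,\xt)\right>$ is a sum over $\mathbb D$, which splits into the disjoint classes $\mathbb A$, $\mathbb B_{\rm L}^\pm$, $\mathbb B_1$ and $\mathbb B_2^\pm$. I would write
\be
\left< \ca O(y,\xt)\right> = \sum_{\ca D\in\mathbb A}\ca D + \Big[\sum_{\ca D\in \mathbb B_{\rm L}^+}\ca D+\sum_{\ca D\in \mathbb B_{\rm L}^-}\ca D\Big] + \Big[\sum_{\ca D\in \mathbb B_1\cup\mathbb B_2^-}\ca D+\sum_{\ca D\in \mathbb B_2^+}\ca D\Big]
\ee
and show that each bracket vanishes after time integration and the limit $T\to\infty$.

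The first bracket vanishes by Proposition \ref{cor:Bl}, which gives a bijection $\mathbb B_{\rm L}^+\leftrightarrow\mathbb B_{\rm L}^-$ under which paired diagrams are equal and opposite. The second bracket vanishes by Proposition \ref{prop:Bplus}, which gives the bijection $\mathbb B_1\cup\mathbb B_2^-\leftrightarrow \mathbb B_2^+$ with $\ca D+\ca D'\simeq 0$. Because both maps are bijections, every diagram outside $\mathbb A$ is used exactly once, so no double counting occurs and no term is left over.

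Two checks are needed before the sums can be rearranged this way.

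\textbf{Ordering of vertices.} Each pair must appear with the same combinatorial weight, including the $1/n!$ from expanding $\rme^{-\int_t\ca L}$ and the time-ordering of vertices. I would argue this from the fact that both cancelation constructions insert exactly one $\gamma_2^+$ at a fixed position relative to a specified vertex. Moreover, the order of all other interaction vertices is preserved, as stressed in the proof of Proposition \ref{prop:Bplus}, so the symmetry factors agree.

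\textbf{Vertex set.} The vertex set entering $\mathbb D$ must be exactly that of $\ca L_3^{\textsc{LRW}}$, namely $\ca L_{\mathbb A}$ together with $\ca L_{\mathbb B}$. The irrelevant vertices \eq{106} and the $\phi\psi$, $\psi\psi$ chains are already removed by power counting, by Table \ref{tadpole-table}, and by Appendices \ref{a:LRW1-perturbative}--\ref{a:Tracer-tracer interactions}.

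The main obstacle I expect is interference between the two pairings. Proposition \ref{prop:Bplus} is stated only for diagrams with no independent sub-loop. I therefore need to confirm that inserting or removing a $\gamma_2^+$ never creates or destroys an independent sub-loop, so that the map stays inside $\mathbb B_1\cup\mathbb B_2^-\cup\mathbb B_2^+$.

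To check this, I would look at the green line produced by the inserted $\gamma_2^+$. It replaces an interacting green line that already ended on a blue line emitted before $\Gamma$. Any loop it closes therefore contains the original $\Gamma$, which is a $\gamma_1^{\delta\chi}$ or a multi-interacting $\gamma_2^-$. Such a loop does not match the pattern $\gamma_1^1\to(\gamma_2^-)^k\to\gamma_2^{\pm(n+1)}$, so no independent sub-loop is created.

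If this argument fails in some configuration, I would fall back on an ordering argument instead. Pair first by the earliest sub-loop, as in Proposition \ref{cor:Bl}. Then apply Proposition \ref{prop:Bplus} only to the remaining diagrams, which have no sub-loops at all. This sequential pairing is well defined because the two classes are disjoint by construction.
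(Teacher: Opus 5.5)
Your skeleton is exactly the paper's proof: the disjoint decomposition of Proposition~\ref{prop:Ddecomposition}, followed by the two pairings of Propositions~\ref{cor:Bl} and~\ref{prop:Bplus}. The paper's proof is a one-line appeal to those two propositions.

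The trouble is the extra step you declare necessary, namely that inserting the $\gamma_2^+$ never creates an independent sub-loop. That step is false, and your justification uses the wrong vertex type. In $\ca D'$ the vertex $\Gamma$ is no longer a $\gamma_1^{\delta\chi}$ or a multi-interacting $\gamma_2^-$. It has become a $\gamma_1^1$, respectively a single-interacting $\gamma_2^-$, which is precisely a building block of sub-loops.

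Loops closed by the \emph{inserted} $\gamma_2^+$ are indeed impossible, but for a different reason. In the $\mathbb B_1$ case such a loop would have to close on $\Gamma$'s own blue line, which the green line never meets. In the $\mathbb B_2^-$ case the same closure would already be a sub-loop of $\ca D$. The truncated $\Gamma$, however, can close a loop itself.

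Concretely, take $\ca D\in\mathbb B_2^-$ with red line $\gamma_1^1(x_0)\to\gamma_1^1(x_1)\to\Gamma=\gamma_2^{-(2)}$, and the observed $\gamma_1$ placed elsewhere. Let the green line of $\Gamma$ hit first the blue line emitted at $x_1$ and then the one emitted at $x_0$. $\ca D$ has no sub-loop, since Definitions~\ref{+sub-loop}--\ref{-sub-loop} require the closing interaction to be the last one. In $\ca D'$, however, the truncated $\Gamma$ interacts only with the blue line of $x_1$, so $\gamma_1^1(x_1)\to\Gamma$ is an $L^-(0,0)$. Hence $\ca D'\in\mathbb B_{\rm L}^-$, where Proposition~\ref{cor:Bl} already pairs it with an $L^+(0,0)$ diagram. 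So $\ca D'$ is used twice, and your rearrangement of the sums is not justified. Your fallback of ordering the pairings does not help, because the map of Proposition~\ref{prop:Bplus} still leaves $\mathbb B_2^+$.

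Dually, $\mathbb B_2^+$ contains diagrams with no partner. An example is $\gamma_1^1(x_0)\to\gamma_1^1(x_1)\to\gamma_2^{+(2)}$ with the same ordered interactions. Its would-be preimage in $\mathbb B_1$ has a $\gamma_1^{\delta\chi}$ whose green line ends on its own blue line, and that vanishes since $R^\psi=\delta_{xy}\Theta(t'>t)$. The paper's proof of Proposition~\ref{prop:Bplus} (``therefore $\ca D'\in\mathbb B_2^+$'') passes over this case as well. Citing it as a black box therefore reproduces the paper's argument but does not settle the point you raised.

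One way to close the gap is to pair the two orphan families directly. Take a $\mathbb B_2^-$ diagram $\gamma_1^1\to(\gamma_2^-)^j\to\gamma_2^{-(m)}$ whose last vertex hits the blue line of that $\gamma_1^1$ at its \emph{first} interaction. It cancels against $\gamma_1^1\to\gamma_2^{+(j+m)}$ carrying the same ordered interactions. After time integration the first gives a red loop and the second gives minus the identical green loop, exactly as in the pairing $L^+(0,n)\leftrightarrow L^-(n,0)$ of Proposition~\ref{prop:independent loops}. The remaining diagrams of $\mathbb B_1\cup\mathbb B_2^-$ are then in bijection with the remaining diagrams of $\mathbb B_2^+$ via the construction of Proposition~\ref{prop:Bplus}, and your bookkeeping goes through.
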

\begin{proof}
The proof follows from the cancelations in propositions \ref{cor:Bl} and \ref{prop:Bplus}, which imply that there are no surviving diagrams other than those in $\mathbb A$.
\end{proof}

\begin{checking}
First, 
to 3-loop order, the observable $\left< \ca O(y,\xt)\right>$ calculated with all vertices from action \eq{L-LERW-3} agrees with the result 
obtained for LERWs in \cite{WieseFedorenko2018}. 
Second, we checked proposition \ref{A=all} to 3-loop order: diagrams in class $\mathbb A$ agree with the result for LERWs in \cite{WieseFedorenko2018}. This is shown in appendix \ref{Observable including sub-loops up to 3-loop order}.
\end{checking}

\section{Observable up to 3-loop order}
\label{Observable including sub-loops up to 3-loop order}
\subsection{1 loop}
In this appendix we give all diagrams that one can construct for $\left< \ca O(y,\xt)\right>$ from action \eq{L-A}. This includes sub-loops as defined in definitions \ref{+sub-loop} and \ref{-sub-loop}. Sub-loops are highlighted; diagrams with sub-loops are to be excluded. We then check that the 2-loop and 3-loop results agree with \cite{WieseFedorenko2018}. For one loop this is done in \Eq{21}.

\subsection{2 loops}

\bea
&&{\parbox{16mm}{

}}.
\eea
This equation shows all diagrams constructed from $\gamma_1^1$ ($\gamma_1$ without outgoing green line) , $\gamma_2^-$ (one outgoing green line with a Laplacian acting on it), and $g^1$ (interaction $g$ without outgoing green line). 
Subloops are highlighted; the corresponding diagrams do not contribute to $\ca O$.
After time integration, the remaining ones are in $\phi^4$ notation
\be {{\parbox{2.1cm}{{%
}}}}~ =\propdiagTwo+ 2 \LHnoExternal \ .
\ee
Multiplying with $(-g)^2$, this agrees with the order $g_0^2$ term in Eq.~(D1) of \cite{WieseFedorenko2018}.

\subsection{3 loops}
\bea
 &&
\hphantom{+}{\vcenter{\hbox{%
%
}}}\qquad 
\eea
Discarding diagrams with sub-loops (highlighted), and integrating over time, we get
\be
 2\;{\parbox{12.2mm}{%
}~.
\ee
Multiplying with $(-g)^3$, 
this agrees with the order $g_0^3$ term in Eq.~(D1) of \cite{WieseFedorenko2018}.


\ifx\doi\undefined
\providecommand{\doi}[2]{\href{http://dx.doi.org/#1}{#2}}
\else
\renewcommand{\doi}[2]{\href{http://dx.doi.org/#1}{#2}}
\fi
\providecommand{\link}[2]{\href{#1}{#2}}
\providecommand{\arxiv}[1]{\href{http://arxiv.org/abs/#1}{#1}}
\providecommand{\hal}[1]{\href{https://hal.archives-ouvertes.fr/hal-#1}{hal-#1}}
\providecommand{\mrnumber}[1]{\href{https://mathscinet.ams.org/mathscinet/search/publdoc.html?pg1=MR&s1=#1&loc=fromreflist}{MR#1}}

\end{document}